\providecommand{\tabularnewline}{\\}
\theoremstyle{plain}
\newtheorem{assumption}{Assumption}
\newtheorem{condition}{Condition}
\newtheorem{theorem}{Theorem}
\newtheorem{lem}{Lemma}
\theoremstyle{remark}
\newtheorem{remark}{Remark}
\DeclareMathOperator*{\argmax}{arg\,max}
\begin{document}


\title{\Large \textsc{A jackknife bias correction for nonlinear network data models with fixed effects}\thanks{David W. Hughes: dw.hughes@bc.edu. I am
extremely grateful for the support and advice of Whitney Newey, Anna
Mikusheva, and Alberto Abadie. I would also like to thank the editor, anonymous referees,  Isaiah Andrews,
Victor Chernozhukov, Ivan Fernandez-Val, Claire Lazar Reich, Ben Deaner, Sylvia Klosin, and 
many seminar participants, for helpful feedback and suggestions.}}
\author{David W. Hughes\\
\small Boston College}

\date{}
\maketitle

\begin{abstract}
I introduce a new method for bias correction of dyadic models with agent-specific
fixed effects, including the dyadic link formation model with homophily and degree heterogeneity. 
The proposed approach uses a jackknife procedure to deal with the incidental parameters problem. 
The method can be applied to both directed and undirected networks, allows for non-binary
outcome variables, and can be used to bias correct estimates of average effects and counterfactual outcomes. 
I also show how the jackknife can be used to bias correct fixed-effect averages over functions that depend on multiple nodes, e.g. triads
or tetrads in the network. As an example, I implement specification
tests for dependence across dyads, such as reciprocity or transitivity.
Finally, I demonstrate the usefulness of the estimator in an application
to a gravity model for import/export relationships across countries. \\

\emph{Keywords: dyadic network, fixed effects, incidental parameters, jackknife}

\end{abstract}



\section{Introduction}

Networks are common in both economic and social contexts, and it is
important to understand the factors that play a role in both the formation
and strength of links between agents. The econometric analysis
of networks faces a number of challenges that have received much attention
in recent literature (see \citet{dePaula2020} and \citet{Graham2020a}
for reviews of this literature). One common modeling approach is to
assume a dyadic network structure, in which decisions are made
bilaterally between agents, but allow for linking decisions to depend
on unobserved agent-specific heterogeneity. These models are common
in practice since they are straightforward to implement while still
being able to capture important aspects of observed networks. Controlling
for agent-specific heterogeneity is important since in many real world
networks agents vary significantly in the number and strength of connections
made. Ignoring this heterogeneity can lead to large biases in estimated
effects.

In this paper, I consider the estimation of dyadic models in which
the presence of unobserved heterogeneity is accounted for by two sets
of agent-specific fixed effects -- a sender and a receiver effect. Allowing for heterogeneity using fixed effects is appealing as it does not require strong
assumptions about the unobserved component as in random effects models. The large number of fixed effect parameters (proportional
to the square root of the sample size) creates an incidental parameters
problem \citep{Neyman1948}. This paper proposes a jackknife approach
to bias correction that is based on combining the full-sample estimate with an average over various subsample estimators. I demonstrate the consistency and asymptotic normality of the jackknife estimator under asymptotic sequences in which a single network grows in size. To do this, I extend the asymptotic expansions for two-way fixed effects models derived by \citet{Fernandez-Val2016} to higher-order. In the binary outcome setting, the assumptions imply that the network is `dense', in the sense that the degree of each agent remains proportional to the total number of agents. 

The jackknife approach proposed here has a number of advantages over existing methods of bias correction. Importantly, the jackknife is easily accessible to applied researchers, since it relies only on the ability to estimate the parameter of interest and does not require any special estimation algorithms or the derivation or estimation of any additional quantities. I derive results for a general set of M-estimators, so that the method is easily adaptable to a range of settings, including models for binary, count or continuous outcome variables, as well as maximum likelihood or nonlinear least-squares estimators. 
Additionally, in comparison to conditional likelihood methods, which avoid estimation of the fixed effect parameters, the jackknife bias correction estimates all parameters, allowing for the construction of average effects, and other objects that depend on the fixed effects.
I also extend the jackknife method to the bias correction of averages over functions of multiple observations
(e.g.\ triads or tetrads in the network). These types of averages are useful for estimating network measures, such as clustering coefficients, as well as for
constructing specification test statistics, such as tests for the
presence of certain strategic interactions like reciprocity or transitivity.

As a demonstration of the jackknife in an empirical setting, I estimate
a model for the formation of country-level trade relationships, producing bias-corrected estimates of both model parameters and marginal effects. Gravity models have
been a workhorse model in the trade literature for many years, and
the importance of including country-specific fixed effects is well
known \citep{Anderson2003}. 
I also conduct tests for the presence of excess transitivity in both a country-level trade network, and a network 
of professional friendships. 
The impact of bias correction on the transitivity test statistic is large.
While the uncorrected statistic would imply strong evidence of excess transitivity in the trade network, the bias-corrected statistic is close to zero (for the professional network I find evidence of strong transitivity in line with common theories of social network formation).

There are several alternative approaches to address the incidental
parameters problem in the setting in which the outcome is a binary variable that denotes the presence of a link between two agents. \citet{Graham2017}, \citet{Charbonneau2017},
and \citet{Jochmans2018} all consider versions of this model in which
the latent disturbances follow a logistic distribution, and use conditioning
arguments to remove dependence on the fixed effects. The conditioning
approach has the advantage of being applicable under certain \emph{sparse
network} asymptotic sequences, where the degree of agents remains bounded asymptotically, but is limited to models in which sufficient
statistics for the fixed effects exist, and is not able to recover
counterfactuals or average effects. \citet{Yan2019} also studies
the logistic model and provides asymptotic results for the incidental
parameters. \citet{Graham2017} also considers an analytical correction
for the logistic model, while \citet{Dzemski2019} derives the analytical
correction for a probit model. Similarly to the jackknife, the analytical bias correction approach
is limited to \emph{dense network} sequences, but is 
able to recover average effects. The advantage
of the jackknife correction relative to an analytical approach is
that it provides an off-the-shelf procedure that researchers may apply
to new settings, without the need to first derive bias expressions.
\citet{Candelaria2020}, \citet{Toth2017}, and \citet{Gao2020} study
identification of the common parameters without a known parametric
form for the disturbance term, while \citet{Zeleneev2020} allows
for nonparametric structure in the unobserved heterogeneity term.

Several papers also consider tests of the dyadic network formation model. \citet{Dzemski2019} 
provides a test for the presence of excess transitivity in a probit model, and derives an analytical bias correction
for the test statistic. \citet{Graham2020d} derive the locally best similar test for a class
of alternatives in a logit model, using conditioning arguments. Jackknife bias-correction of such network statistics has not 
been previously considered in the literature, and I demonstrate that a range of such statistics, 
including that of \citet{Dzemski2019}, can be bias corrected using the jackknife, extending the set of tests available 
to researchers, as well as the range of models they can be applied to. 

The network jackknife extends previous results on jackknife bias correction
in panel data. \citet{Hahn2004} introduced a jackknife correction
for panel estimators with individual fixed effects; \citet{Dhaene2015}
present a split-sample version of this idea; and, \citet{Fernandez-Val2016}
develop a general framework that allows for both time and individual
fixed effects. Analogously to the panel data setting, the network jackknife is constructed
by forming `leave-out' estimates that exclude certain subsets of the
data. \citet{Fernandez-Val2016} derive the asymptotic distribution of a split-sample jackknife bias correction
that involves computing parameter estimates in various half-samples (i.e.\ dropping half of all senders or receivers). 
Although such split-sample corrections have the same asymptotic distribution as other bias-corrected estimators, 
in finite samples they can remove less bias and have much larger variance, particularly when the network contains many agents with few links, as is shown in simulations (see Section 5).\footnote{\citet{Hahn2024} show that the higher-order bias and variance of the split-sample bias-corrected estimator is larger 
than that of the analytical or leave-one-out jackknife in a panel data setting.} \citet{Cruz-Gonzalez2017} suggest
a jackknife approach for network data based on a leave-one-out approach
that drops a single agent at a time, although the formal properties have not been derived. I propose a different approach
to jackknifing network data that is based on a novel partitioning
of the data set and formally establish consistency and asymptotic normality. This is done by extending the asymptotic expansions of \citet{Fernandez-Val2016} to higher order. 
In addition, I introduce a weighted version of the jackknife estimator, that down-weights noisy leave-out 
estimates and improves the finite-sample performance of the estimator. Simulations suggest that the  jackknife proposed here (particularly the weighted version) is more robust
to settings with meaningful levels of unobserved heterogeneity and in networks with lower density.
The method is likely to be relevant in other settings, such as panel data models with so-called `rare outcomes', in which some leave-out samples may lead to substantially noisier estimates than others. 

The rest of the paper is organized as follows. Section 2 introduces
the network model, Section 3 discusses implementation of the jackknife procedure
for the estimation of model parameters, while Section 4 discusses
estimation of average effects, and the construction of specification
tests. In Section 5 I report simulation results that investigate the finite-sample properties of the bias correction, while in Section 6 I implement the method using a country-level trade network.\footnote{The appendix in this paper contains proofs of results based on the first-order asymptotic expansions of the estimators. Further details, including additional simulation results and the full higher-order expansions are contained in a supplementary appendix that is available at \url{https://arxiv.org/abs/2203.15603}.}

\section{The dyadic linking model}

The researcher observes a network of $N$ agents, which may be individuals, firms, countries, and so on. 
For each of the $N(N-1)/2$ unordered pairs of agents $\{i,j\}$, called a \emph{dyad}, we observe the outcome of two
linking decisions $Y_{ij}$ and $Y_{ji}$. The network may be directed, in which case $Y_{ij}\ne Y_{ji}$
in general, or undirected, in which case the two outcomes are equal. Following the literature, I term $i$ the `sender' and $j$ the
`receiver' in link $Y_{ij}$.
The outcome variable $Y_{ij}$ may be binary, capturing the
presence (or absence) of a link between two agents, or may represent a measure of the strength of
the link between agents, in which case $Y_{ij}$ is non-binary (e.g.\ the value of
exports from country $i$ to country $j$ in a particular year, or
the number of times agents $i$ and $j$ interacted in some period).

The researcher additionally observes a set of link-specific covariates $X_{ij}$, which
capture determinants of linking decisions such as \emph{homophily}, the tendency for agents
to link with other agents that are similar to themselves. 
Agents are endowed with two fixed effects, a `sender' effect $\alpha_{i}$ and a `receiver' effect $\gamma_{i}$,
which allow for \emph{degree heterogeneity}.
The fixed effects will be treated as unknown parameters in the model, 
and the researcher does not need to specify their distribution or how they relate to the covariates.

The data $(Y_{ij},X_{ij})$ are assumed to be drawn from some model, which is left unspecified except 
that the model parameters $(\beta_{0},\alpha_{0},\gamma_{0})$ are known to be solutions to the population maximization problem
\begin{align}
&\max_{(\beta,\alpha,\gamma)\in\mathbb{R}^{\dim\beta+2N}}\bar{E}\big[\mathcal{L}(\beta,\alpha,\gamma)\big], \nonumber\\
\mathcal{L}(\beta,\alpha,\gamma)=\frac{1}{N-1}&\sum_{i}\sum_{j\ne i}\ell(Y_{ij},X_{ij},\beta,\alpha_{i}+\gamma_{j})-\frac{b}{2N}\big(\sum_{i}\alpha_{i}-\sum_{i}\gamma_{i}\big)^{2},\label{eq:objective}
\end{align}
where $\bar{E}$ represents expectation conditional on covariates and fixed effects, and $\ell$ is a known function that
is maximized in expectation at the true parameters. The objective function in (\ref{eq:objective}) covers many 
common estimators such as MLE/ quasi-MLE, in which $\ell$ is a log-likelihood, and nonlinear least squares estimators, in which $\ell$ captures squared deviations of outcomes from their conditional expectations. In particular, this includes the commonly used directed link formation model, in which
\begin{align}
Y_{ij} & =\mathds{1}\{X_{ij}'\beta+\alpha_{i}+\gamma_{j}-\varepsilon_{ij}\geq0\}, \label{eq:formation}
\end{align}
where $\varepsilon_{ij}$ follows a known distribution $F$. This is an extension of the linking model of \citet{Holland1981}
and has been used extensively in empirical literature.

The unobserved effects are assumed to enter in an additively separable
manner, i.e.\ as $\alpha_{i}+\gamma_{j}$. This assumption is common in the literature, and is
consistent with a model in which utility is transferable between directly connected agents \citep{Bloch2007}.
The additive separability does rule out some other kinds of interactions, notably homophily based on unobservables.\footnote{\citet{Zeleneev2020} studies a model that allows for general forms of interactions in the individual-specific effects. \citet{Chen2021} derive expansions for models with interactive fixed effects. The expansions derived in that paper share a similar structure to the expansions in \citet{Fernandez-Val2016}, which form the basis for the proofs in this paper, and so it is hypothesized that the jackknife approach may be valid in that setting also.}
Identification of the two
sets of fixed effects parameters requires a normalization, for which
we choose $\sum_{i}\alpha_{i}=\sum_{i}\gamma_{i}$.
The term $\frac{b}{2N}\big(\sum_{i}\alpha_{i}-\sum_{i}\gamma_{i}\big)^{2}$
is a penalty term intended to impose this normalization on the fixed
effect parameters, where $b>0$ is an arbitrary constant.\footnote{In practice the constraint could simply be eliminated by substituting it into the objective. I follow \citet{Fernandez-Val2016} in choosing
this normalization as it is convenient in proofs.} 

Throughout, we let $\ell_{ij}(\beta,\pi_{ij})=\ell(Y_{ij},X_{ij},\beta,\pi_{ij})$, where $\pi_{ij} = \alpha_{i}+\gamma_{j}$ is the additive index for the fixed effects. Denote by $X$ the vector of link-specific covariates $X_{ij}$, and $\phi'=(\alpha',\gamma')$ the $2N\times 1$ vector of fixed effect parameters. We denote derivatives of the function $\ell$ with respect
to parameters by $\partial_{\beta}\ell_{ij}(\beta,\pi_{ij})=\partial\ell_{ij}(\beta,\pi_{ij})/\partial\beta$,
$\partial_{\pi^{q}}\ell_{ij}(\beta,\pi_{ij})=\partial^{q}\ell_{ij}(\beta,\pi_{ij})/\partial\pi_{ij}^{q}$
etc. When evaluating these objects at the true parameter values, we
simply write $\partial_{\pi^{q}}\ell_{ij}$ and so on. The corresponding derivatives of the objective function are 
denoted by $\partial_\beta\mathcal{L}=\frac{1}{N-1}\sum_i\sum_{j\ne i} \partial_{\beta}\ell_{ij}$, 
$\partial_{\phi\phi'}\mathcal{L}=\frac{1}{N-1}\sum_i\sum_{j\ne i} \partial_{\phi\phi'}\ell_{ij}$ etc. Finally, we denote 
expectations of objects (conditional on the fixed effects) by $\partial_\beta\bar{\mathcal{L}}=\bar{E}[\partial_\beta\mathcal{L}]$, and deviations from 
these expectations as $\partial_\beta\tilde{\mathcal{L}}=\partial_\beta\mathcal{L}-\bar{E}[\partial_\beta\mathcal{L}]$.

The asymptotic framework is one in which a single network of $N$
agents grows in size, i.e. $N\to\infty$. We will impose two key assumptions on the structure of the network.\footnote{Additional regularity conditions and notation are discussed in Section \ref{sec:beta_asymptotics}}

\begin{assumption}
\label{assu:network}
(i) For all dyads $\{i,j\}$, it is assumed that outcomes are drawn independently from some distribution, conditional on covariates and fixed effects
\begin{align} \label{eq:cond_ind}
	(Y_{ij}, Y_{ji}) \vert X,\phi \sim f(Y_{ij}, Y_{ji} \vert X,\phi)
\end{align}

(ii) Let $\mathcal{B}_{\varepsilon}$ be an $\varepsilon$-neighborhood that contains $(\beta_0,\alpha_{0i},\gamma_{0j})$ for all $i,j$ and $N$.\footnote{As stated in \citet{Fernandez-Val2016}, $\mathcal{B}_{\varepsilon}$ could be chosen as the Cartesian product
	of the $\epsilon$-ball around $\beta_{0}$ with the interval $[\pi_{\min},\pi_{\max}]$,
	where $\pi_{\min}\leq\pi_{ij}-\varepsilon$ and $\pi_{\max}\geq\pi_{ij}+\varepsilon$
	for all $(i,j)$.} 
For some $\varepsilon>0$, there exist constants $b_{min}$ and $b_{max}$ such that for
all $(\beta,\pi_{ij})\in\mathcal{B}_{\varepsilon,ij}$
\[
0<b_{min}\leq-\bar{E}[\partial_{\pi^{2}}\ell_{ij}(\beta,\pi_{ij})]\leq b_{max}
\]
 a.s. uniformly over $i,j$ and $N$.
\end{assumption}

Assumption \ref{assu:network} (i) implies that linking decisions are bilateral in nature. 
This allows for dependence between the two links within a pair of agents (e.g.\ allowing for reciprocity in links),
but implies the linking decision between $i$ and $j$ is independent of that
between $i$ and $k$ for instance. Importantly, this independence
is conditional on the covariates and agent-specific fixed effects, so that $Y_{ij}$ 
and $Y_{ik}$ are still unconditionally correlated through the covariates and shared 
sender effect $\alpha_{i}$. Note that the distribution from which outcomes are drawn need not
be fully specified, so long as the model parameters are solutions to the maximization (\ref{eq:objective}). 
In many settings, the inclusion of
fixed effects will be important in establishing the plausibility of
conditional independence. Assumption \ref{assu:network} (i) may not be appropriate
in situations where linking decisions are strategic, but the dyadic model presented here 
nonetheless represents an important baseline model, and can be used to construct tests for
the presence of strategic interactions against the null hypothesis
of conditional independence. We discuss examples of such tests in Section
\ref{subsec:avg_effects}.

Assumption \ref{assu:network} (ii) is used to ensure that all parameters are point identified. This requires sufficient variation
in the outcomes across both dimensions i.e., variation in $Y_{ij}$
over $j$ (for fixed sender $i$) and over $i$ (for fixed receiver
$j$). 
In the binary outcome model in (\ref{eq:formation}), we have
\[
	-\bar{E}[\partial_{\pi^{2}}\ell_{ij}(\beta,\pi_{ij})]=\frac{f(X_{ij}'\beta + \pi_{ij})^{2}}{F(X_{ij}'\beta + \pi_{ij})\big(1-F(X_{ij}'\beta + \pi_{ij})\big)},
\]
where $f(u) = dF(u)/du$, so that Assumption \ref{assu:network} (ii) requires that the probability of any link forming is bounded away from 0 and 1, and that the density $f$ is bounded away from zero over the support of $X_{ij}'\beta + \alpha_i + \gamma_j$. This implies that the model generates
a \emph{dense network}, one in which the number of links formed by
each node is proportional to $N$. The
assumption may not be reasonable in all empirical settings; however, 
it seems a natural condition when we hope to estimate 
parameters that depend on the unobserved heterogeneity, such as marginal 
effects or counterfactual outcomes.\footnote{Section \ref{sec:Simulations} reports  
simulations that investigate the robustness of the estimator to sparsity
in finite samples.} For other models the requirements may be weaker, for example in the nonlinear least-squares estimator with $\bar{E}[Y_{ij}] = G(X_{ij}'\beta + \alpha_i + \gamma_j)$, Assumption \ref{assu:network} (ii) simply requires that $G'(X_{ij}'\beta + \alpha_ i+ \gamma_j)$ is non-zero, and has bounded second moment for all $(i,j)$.


\section{Jackknife bias correction}

\subsection{Jackknife for dyadic data}

As is well known, nonlinear estimators with fixed effects suffer from an incidental
parameters problem \citep{Neyman1948}. In total, the model contains $\dim(\beta)+2N$ parameters to be estimated,
from the $N(N-1)$ observations $(Y_{ij},X_{ij}')$; we refer to these ordered pairs of agents using the shorthand $(i,j)$. The incidental parameters generate an asymptotic bias in the network model analogous to the panel setting with both $N$ and $T$
growing to infinity at the same rate. Similarly to the panel case, it can be shown that the maximizer 
of the sample objective (\ref{eq:objective}) has first order bias of the form 
\begin{align}\label{eq:bias_N}
\bar{E}[\widehat{\beta}]-\beta\approx\frac{B}{N-1},
\end{align}
so that the asymptotic distribution of the estimator is not centered around zero, i.e.\ 
$N(\widehat{\beta}-\beta_{0}) \overset{d}{\longrightarrow}\mathcal{N}\big(B,V\big)$.

Motivated by the form of this bias, a jackknife bias-corrected estimator can be constructed as a linear combination of the full-sample parameter estimate and an average of `leave-out' estimators that exclude certain observations in the
data set.
To give some intuition for the jackknife correction, note that the bias in (\ref{eq:bias_N}) 
is inversely proportional to the number of elements in the summation (\ref{eq:objective}) that depend 
on a particular fixed effect, e.g.\ there are $(N-1)$ observations that are functions of any particular $\alpha_i$.\footnote{This is analogous to the $1/T$ bias in the panel data model with individual effects.} 
Suppose we drop observations from our data set in such a way
that for every agent $i$ we exclude one observation in which $i$
is the sender, and one observation in which $i$ is the receiver
(recall that a single observation is a directed $(i,j)$ pair,
of which we observe $N(N-1)$ in total). The estimator that uses 
this `leave-out' sample, $\tilde{\beta}$, has a first-order bias given by
\[
\bar{E}[\tilde{\beta}]-\beta\approx\frac{B}{N-2},
\]
since $\tilde{\beta}$ is estimated using only $N-2$
observations per fixed effect. Taking advantage of the fact that the estimate $\tilde{\beta}$ has
a larger bias than the full-sample estimate $\widehat{\beta}$ by
the factor $\frac{N-1}{N-2}$, we can construct a new estimator $\widehat{\beta}_{jack}=(N-1)\widehat{\beta}-(N-2)\tilde{\beta}$
which has no asymptotic bias.

To describe the construction of these leave-out estimators, we first
define a partition of the $N(N-1)$ observations of directed pairs
$(i,j)$ into $N-1$ distinct sets 
\[
\mathcal{I}_{k}= \big\{ (s,s+k):s=1,\dots,N-k \big\} \bigcup \big\{  (N-k+s,s):s=1,\dots,k  \big\},
\]
for $k=1,\dots,N-1$. We define the $k$-th `leave-out sample' to be the set of observations that \emph{excludes} the observations $(i,j)\in\mathcal{I}_k$.

\begin{figure} \centering
\caption{Diagram of the sets $\mathcal{I}_{k}$ for $k=1,2,3$.}
\includegraphics[width=0.7\paperwidth]{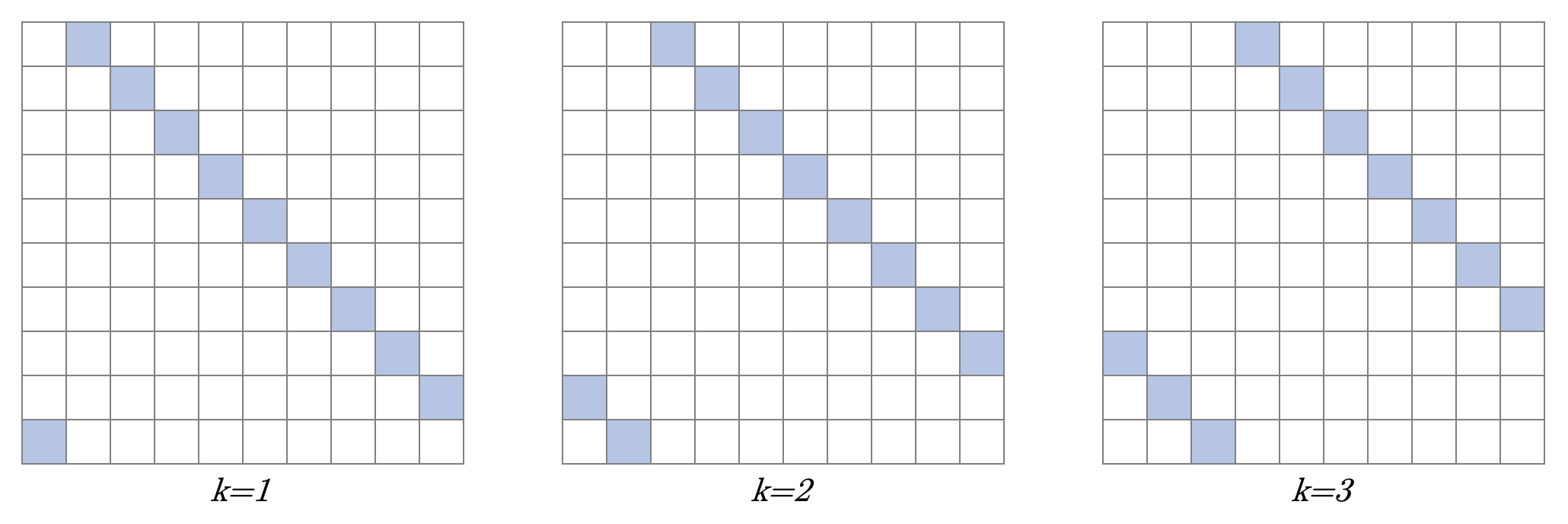}
\caption*{\small Observation $(i,j)$ in the network is represented by the corresponding position in 
each matrix. The shaded squares are the observations contained in the leave-out sets $\mathcal{I}_k$}
\label{fig:leaveout_sets-1}
\end{figure}

Figure \ref{fig:leaveout_sets-1} represents the structure of the
first three sets $\mathcal{I}_{k}$ for a network of $N=10$ agents.
Observations are ordered in an $N\times N$ matrix so that the $(i,j)$
cell represents the corresponding observation in the network (the
diagonal elements are empty since there are no $(i,i)$ observations).\footnote{In principle, the jackknife method presented here can be applied to settings with self-links, so long as these are treated identically to other observations in the network. In this case, the set of self-links forms an $N$-th set $\mathcal{I}_N$, and the jackknife estimator would use the linear combination $N\widehat{\beta}-(N-1)\tilde{\beta}$.}
The leave-out sets take diagonal slices from this data matrix. As is clear from Figure \ref{fig:leaveout_sets-1},
constructing the sets this way ensures that each $\mathcal{I}_{k}$ contains exactly
one observation related to each sender and receiver fixed effect,
i.e. there is one observation taken from every row and every column of the data matrix.

Let $1_{ij}^{k}=\mathds{1}\{(i,j)\not\in\mathcal{I}_{k}\}$ be an
indicator variable that is equal to one whenever the observation $(i,j)$
is included in the $k$-th leave-out sample (i.e. not in $\mathcal{I}_k$). The $k$-th leave-out
estimates are obtained by maximizing the leave-out sample objective function, i.e.\
\[
(\widehat{\beta}_{(k)},\widehat{\alpha}_{(k)},\widehat{\gamma}_{(k)}) = \argmax_{(\beta,\alpha,\gamma)\in\mathbb{R}^{\dim(\beta)+2N}}\frac{1}{N-2}\sum_{i}\sum_{j\ne i}\ell_{ij}(\beta,\alpha_i + \gamma_j)\times 1_{ij}^{k},
\]
subject to $\sum_{i}\alpha_{i}=\sum_{i}\gamma_{i}$. We can then construct the jackknife bias-corrected estimator
\begin{equation}
\widehat{\beta}_{J}=(N-1)\widehat{\beta}-(N-2)\frac{1}{N-1}\sum_{k=1}^{N-1}\widehat{\beta}_{(k)}.\label{eq:jack}
\end{equation}

The construction of the leave-out estimators is analogous to jackknife
bias correction in the panel data setting; however, the structure
of the jackknife proposed here is new. The procedure relies on dropping
sets of observations that contain a single observation related to
every sender fixed effect $\alpha_{i}$ as well as every receiver
fixed effect $\gamma_{i}$. In this way, the bias from both types
of fixed effects can be addressed simultaneously, while holding the
distribution of fixed effects constant across the leave-out samples.
This is in contrast to an approach which drops all observations from
a single agent, which removes that agent's fixed effects from the
sample and alters the distribution of unobserved heterogeneity.
We provide simulation evidence that the method proposed here is more robust
to networks that have greater levels of unobserved heterogeneity or lower average degree.

\begin{remark}
While asymptotic bias could be removed using only a single leave-out estimate $\widehat{\beta}_{(k)}$, the resulting estimator would have larger variance than (\ref{eq:jack}). In Section \ref{sec:beta_asymptotics} it is shown that averaging across all $(N-1)$ leave-out estimates results in an asymptotically unbiased estimator with the same asymptotic variance as the MLE. This result relies on the fact that each observation is excluded from exactly one of the leave-out estimates, ensuring balanced treatment of the observations.\footnote{In the panel data setting, \citet{Dhaene2015} note that forming a
jackknife using overlapping subpanels (across time) results in an
inflation of the asymptotic variance, since some time periods are
used more that others.}
\end{remark}

\begin{remark}
The construction of the sets $\mathcal{I}_k$ depends on the labeling of
nodes, which is assumed to be independent of the observed data. One could remove this arbitrariness by computing the jackknife estimate for a number of different random node labelings and averaging the results.\footnote{Averaging across estimators that
depend on random sample partitions is common; see, for example, \citet{Hirano2017} who provide a formal argument in the case
of model selection.} In simulations (see Supplementary Appendix D for details), I find that in dense networks, this averaging has almost no impact on the distribution of the estimator. In networks featuring many agents with few links, there can be some advantage to computing the jackknife over even a small number of random node labelings in order to rule out the possibility of outlier cases, although this does not seem to be necessary for the weighted jackknife (introduced in the next section).
\end{remark}

\begin{remark}
The jackknife can be applied to undirected networks for which the objective function takes the form $\mathcal{L}(\beta,\alpha)=\frac{2}{N-1}\sum_i\sum_{j>i}\ell(Y_{ij},X_{ij},\beta,\alpha_i+\alpha_j)$. 
	In this case, we construct a directed network in which $(Y_{ij},X_{ij})=(Y_{ji},X_{ji})$ and then estimate the model using the directed network data with both sender and receiver fixed effects as in (\ref{eq:objective}).\footnote{Estimates using the `directed' network data are numerically identical to estimates from the undirected network, although standard errors should be rescaled by $1/\sqrt{2}$ to account for the doubling in sample size.}\footnote{The structure of the leave-out sets makes uses of the symmetry of the dyadic network. Given this, the approach is not suitable for settings that do not feature this symmetry such as bipartite networks or those in which data is only observed for a subset of dyads. Alternative jackknife corrections may be possible in these settings, although this is left for future work.}
\end{remark}

The jackknife estimator requires $N$ estimations of the model, and
so may be computationally intensive for large networks, although speed
may be improved by computing the leave-out estimates in parallel,
and using good starting values such as the full sample estimates.
As an alternative, a `leave-$l$-out' version of the jackknife can also be used,
which reduces the number of additional estimations of the model by
dropping $l$ observations per fixed effect, as opposed to just one.
To describe the estimator, let $N_{l}=\frac{N-1}{l}$ (we assume here
that $N-1$ is divisible by $l$ for simplicity). We can construct a partition of the 
data into $N_{l}$ non-overlapping sets $\mathcal{I}_{k}^{l}$, defined by combining 
$l$ of the original sets as follows $\mathcal{I}_{k}^{l}=\cup_{j=0}^{l-1}\mathcal{I}_{k+jN_{l}}$,
for $k=1,\dots,N_{l}$. This results in $N_l$ 
leave-$l$-out samples, with corresponding estimates $\hat{\beta}_{l,(k)}$,
which are the estimates from using all observations except those in
the $k$-th set $\mathcal{I}_{k}^{l}$. A jackknife
bias-corrected estimate can then be constructed as
\begin{equation}
\hat{\beta}_{J,l}=\frac{N-1}{l}\hat{\beta}-\frac{N-1-l}{l}\frac{1}{N_{l}}\sum_{k=1}^{N_{l}}\hat{\beta}_{l,(k)}.\label{eq:jack_l}
\end{equation}

\begin{remark}

The leave-$l$-out jackknife bias correction has the same asymptotic
variance as the standard leave-one-out jackknife and the full-sample
estimator. However, there may be some finite-sample efficiency loss,
particularly when $l$ is large or when the network is not sufficiently
dense. \citet{Hahn2024} show in the panel case that the leave-one-out
jackknife has smaller higher-order variance and bias than the split-sample jackknife
(i.e. its variance to $O(T^{-1})$ is smaller), and it is likely that
the same result applies here, although this is beyond the scope of
the present paper.

\end{remark}


\subsection{A weighted jackknife}

In finite samples, it is possible for some of the leave-out samples to produce outlier estimates that have an unduly large 
impact on the average of the $(N-1)$ leave-out estimates. In very dense settings, in which each agent has a large number of
links, this is unlikely to be the case. However, if $N$ is small, or the network has low density, then it is possible for some 
leave-out samples to generate very noisy estimates of $\beta$. In order to reduce the impact of these potential outlier estimates,
I propose taking a weighted average of the estimates $\widehat{\beta}_{(k)}$ as a way to 
improve the performance of the jackknife. Define the weights
\[
\widehat{W}_{(k)}=-\frac{1}{N}\Big(\partial_{\beta\beta'}\widehat{\mathcal{L}}_{(k)}-(\partial_{\beta\phi'}\widehat{\mathcal{L}}_{(k)})(\partial_{\phi\phi'}\widehat{\mathcal{L}}_{(k)})^{-1}(\partial_{\phi\beta'}\widehat{\mathcal{L}}_{(k)})\Big),
\]
where  $\partial_{\beta\phi'}\widehat{\mathcal{L}}$ etc.\ represent derivatives of the objective function with respect to the parameters. The weighted-jackknife estimator is given by
\begin{equation}
\widehat{\beta}_{wJ}=(N-1)\widehat{\beta}-(N-2)\sum_{k=1}^{N-1}\big( \widehat{W}_{J}^{-1}\widehat{W}_{(k)} \big)\widehat{\beta}_{(k)},\label{eq:jack_weighted}
\end{equation}
where $\widehat{W}_{J}=\sum_{k=1}^{N-1}\widehat{W}_{(k)}$.

The weights $\widehat{W}_{(k)}$ are the Hessian for $\beta$, after
concentrating out the fixed effects in the leave-out sample. In the special case that $\mathcal{L}$
is a log-likelihood function, the weights are estimators of the Fisher information for
$\beta$, in which case we are using an inverse variance weighting scheme that 
down-weights leave-out samples that produce particularly noisy estimates
of the common parameters. The weighting scheme is equally applicable
to non-likelihood settings, although it no longer carries the inverse
variance interpretation. 
In the following section it is shown that the weighted version of the jackknife
has the same asymptotic distribution as the standard jackknife under dense network asymptotics. However, in simulations (see Section \ref{sec:Simulations} for more details), the weighting is shown to significantly improve the performance of the estimator in sparser networks.


\subsection{Asymptotic analysis for the common parameters \label{sec:beta_asymptotics}}

Before stating the asymptotic result for the jackknife estimator for $\beta$, I first  state some regularity conditions that are assumed to hold, in addition to Assumption \ref{assu:network}. The conditions are standard for 
M-estimators and follow the assumptions made in \citet{Fernandez-Val2016}.
\begin{assumption}[Regularity conditions]
\label{assu:beta_jack}Let 
$\mathcal{B}_{\varepsilon}$ be a subset of $\mathbb{R}^{\dim\beta+1}$
that contains an $\varepsilon$-neighborhood of $(\beta{}_{0},\pi_{0,ij})$
for all $(i,j)$ and $N$, for some $\varepsilon>0$.

(i) For all $i,j$ and $N$ we have that $\bar{E}[\partial_{\beta}\ell_{ij}]=\bar{E}[\partial_{\pi}\ell_{ij}]=0$.
For all $N$, the objective function $\mathcal{L}$ is strictly concave
over $\mathbb{R}^{\dim\beta+2N}$, and the matrix $-\partial_{\phi\phi'}\bar{\mathcal{L}}$
is positive definite.

(ii) There exists an $\varepsilon>0$ such that, for all $(i,j)$, the function $\ell_{ij}(\beta,\pi)$
is six times continuously differentiable with respect to $(\beta,\pi)$ over $\mathcal{B}_{\varepsilon}$
almost surely, with derivatives that are 
bounded in absolute value uniformly over $(\beta,\pi)\in\mathcal{B}_{\varepsilon}$
by a function $M(Z_{ij})>0$ a.s., where $\max_{i,j}\bar{E}[M(Z_{ij})^{16}]\leq C$ and $Z_{ij}'=(Y_{ij},X_{ij}')$.

(iii) The quantities
\begin{align*}
W  & =\textup{plim}_{N\to\infty} \bigg[-\frac{1}{N}\big(\partial_{\beta\beta'}\bar{\mathcal{L}}-(\partial_{\beta\phi'}\bar{\mathcal{L}})(\partial_{\phi\phi'}\bar{\mathcal{L}})^{-1}(\partial_{\phi\beta'}\bar{\mathcal{L}})\big) \bigg]\\
\Omega  & =\textup{plim}_{N\to\infty} \bar{E}\big[\big(\partial_{\beta}\mathcal{L}-(\partial_{\beta\phi'}\bar{\mathcal{L}})(\partial_{\phi\phi'}\bar{\mathcal{L}})^{-1}(\partial_{\phi}\mathcal{L})\big)\big(\partial_{\beta}\mathcal{L}-(\partial_{\beta\phi'}\bar{\mathcal{L}})(\partial_{\phi\phi'}\bar{\mathcal{L}})^{-1}(\partial_{\phi}\mathcal{L})\big)'\big].
\end{align*}
exist for $W$ and $\Omega$ positive definite matrices.
\end{assumption}

Assumption \ref{assu:beta_jack}(i) 
requires that the true parameters $(\beta_{0},\alpha_{0},\gamma_{0})$
are solutions to the first-order conditions of the objective function.
Concavity of the objective function ensures that these are unique solutions. 
This is satisfied in many common nonlinear
models, including the class of regression models with log-concave
densities (as well as censored and truncated versions of these models),
which includes probit, logit, ordered probit, Tobit, gamma and beta
models among others (see \citet{Pratt1981}, \citet{Newey1994a}).

Assumption \ref{assu:beta_jack}(ii) provides basic smoothness conditions for the objective
function, which are required to ensure
the validity of the asymptotic expansions. Analysis of the jackknife requires
higher order expansions than are required for characterization of
the analytical bias and first-order asymptotic properties of the estimator,
and so Assumption \ref{assu:beta_jack}(ii) is somewhat stronger than the equivalent
assumption employed in \citet{Fernandez-Val2016}. 
Finally, Assumption \ref{assu:beta_jack}(iii) ensures that the asymptotic variance of $\widehat{\beta}$ is
non-degenerate. 

To describe the estimator of this variance, first define $D_{\beta}\ell_{ij}=\partial_{\beta}\ell_{ij}-(\partial_{\beta\phi'}\bar{\mathcal{L}})(\partial_{\phi\phi'}\bar{\mathcal{L}})^{-1}(\partial_{\phi}\ell_{ij})$.
This term is the score for $\beta$ after partialling out the fixed
effect parameters. Estimators of the variance terms can be created
in the usual way by plugging in estimates of the model parameters,
i.e.
\begin{equation}
\begin{aligned}\widehat{W} & =-\frac{1}{N}\big(\partial_{\beta\beta'}\widehat{\mathcal{L}}-(\partial_{\beta\phi'}\widehat{\mathcal{L}})(\partial_{\phi\phi'}\widehat{\mathcal{L}})^{-1}(\partial_{\phi\beta'}\widehat{\mathcal{L}})\big),\\
\widehat{\Omega} & =\frac{1}{N(N-1)}\sum_{i}\sum_{j<i}(\widehat{D_{\beta}\ell_{ij}}+\widehat{D_{\beta}\ell_{ji}})^{2},
\end{aligned}
\label{eq:var_est}
\end{equation}
where the terms $\partial_{\beta\beta'}\widehat{\mathcal{L}}$, $\widehat{D_{\beta}\ell_{ij}}$
etc. are evaluated at the estimates $\widehat{\beta}$, $\widehat{\alpha}$,
$\widehat{\gamma}$. Note that the estimator $\widehat{\Omega}$
allows for correlation between the $Y_{ij}$ and $Y_{ji}$ outcomes.

We now state the main theorem of the paper, on the asymptotic distribution
of the jackknife bias-corrected estimator.
\begin{theorem}[Jackknife estimation of $\beta$]
\label{thm:beta}Let Assumptions \ref{assu:network} and  \ref{assu:beta_jack} hold. Then, for
$\widehat{\beta}_{J}$ the jackknife bias-corrected estimator (\ref{eq:jack}), and $\widehat{\beta}_{wJ}$ the weighted-jackknife estimator (\ref{eq:jack_weighted})
\begin{align*}
N(\widehat{\beta}_{J}-\beta_{0})\overset{d}{\longrightarrow}\mathcal{N}(0,V), \\
N(\widehat{\beta}_{wJ}-\beta_{0})\overset{d}{\longrightarrow}\mathcal{N}(0,V)
\end{align*}
where $V =W^{-1}\Omega W^{-1}$. In addition, the estimator of the asymptotic variance $\widehat{V}=\widehat{W}^{-1}\widehat{\Omega}\widehat{W}^{-1}$, for $\widehat{W}$
and $\widehat{\Omega}$ the plug-in estimators shown in (\ref{eq:var_est}), is consistent, i.e. $\widehat{V}\overset{p}{\longrightarrow} V$.
\end{theorem}

\begin{proof}
	See Appendix \ref{sec:App_B} for proof.
\end{proof}

The jackknife estimator is asymptotically normally distributed and
unbiased. It also has the same asymptotic variance as the non-bias-corrected
estimator based on maximization of (\ref{eq:objective}), which is the usual sandwich
form one and is easily computed. In the case of maximum likelihood
we will have that $W=\Omega$ so that the variance
simplifies to $V=W^{-1}$, although in general this will not
be true, for example the researcher may wish to allow for correlation
between $Y_{ij}$ and $Y_{ji}$ by clustering 
at the dyad level as in (\ref{eq:var_est}).

When the objective function (\ref{eq:objective}) is a log-likelihood, it is interesting to consider what the jackknife estimates when the 
model is in fact misspecified. The asymptotic result relies on the existence of a set of `pseudo-true'
 parameter values that set the first-order conditions of (\ref{eq:objective}) to zero. In this sense, we can expect the jackknife to provide an asymptotically 
 unbiased estimate of the same `pseudo-true' parameter values that are the probability limit of the MLE \citep{White1982}. This is a well-defined
 object of interest that is well understood and may still be of economic interest.
 In the Supplementary Appendix (section D) I show the results of simulations that compare 
 the jackknife to the CMLE estimator of \citet{Jochmans2018} under both a correctly specified (logit) and misspecified (probit) DGP. In the 
 misspecified case it is clear that the estimators are estimating very different quantities. Although both quantities may be of some interest, 
 the interpretation of network parameters under misspecification may be a relevant consideration in the choice of estimator in applied settings.


\section{\label{subsec:avg_effects}Estimating average effects}

In addition to estimation of the common parameter $\beta$, researchers
may also be interested in estimating certain averages over the distribution
of exogenous regressors and fixed effects. An important advantage
of the jackknife bias correction over methods based on conditioning
on sufficient statistics (e.g. \citet{Graham2017}, \citet{Jochmans2018})
is that we are able to construct asymptotically unbiased 
estimators for these averages. In the network setting, these averages may be over functions that 
depend not just on a single observation, but on dyads, triads (groups of three nodes),
or other network patterns. As an example, we focus on how these objects
can be used to construct tests of the assumption of independent link
formation (Assumption \ref{assu:network} (i)), but they may have wider
relevance in empirical work.

\subsection{Jackknife fixed-effect averages}

Here we describe a general fixed effect average parameter that covers each of the cases discussed above. 
Let $\lambda$ be a set of observations in the network; for example,
$\lambda=\{(i,j)\}$ contains a single observation, while $\lambda=\{(i,j),(j,k),(k,i)\}$ collects a sequence of
link outcomes between three nodes. Let the number of unique observations contained in $\lambda$ be $r$, 
and let the set of all possible $\lambda$ formed by permuting the nodes for a network
of size $N$ be $\Lambda_{N}$. We consider averages of the form
\begin{equation}
\begin{aligned}
\delta = E\big[ \Delta(\beta_0,\phi_0) \big], \quad 
\overline{\Delta} = \bar{E}\big[ \Delta(\beta_0,\phi_0) \big] \\
\Delta(\beta_0,\phi_0) =\frac{1}{\vert\Lambda_{N}\vert}\sum_{\lambda}m(Y_{\lambda},X_{\lambda},\beta_0,\pi_{0,\lambda}),
\end{aligned}
\label{eq:avg_eff2}
\end{equation}
where $m$ is a given function of interest, and $Y_{\lambda}=\{Y_{ij}\}_{(i,j)\in\lambda}$, $X_{\lambda}=\{X_{ij}\}_{(i,j)\in\lambda}$,
and $\pi_{\lambda}=\{\alpha_{i}+\gamma_{j}\}_{(i,j)\in\lambda}$ collect
the outcomes, covariates and fixed effects for the observations in
$\lambda$. As earlier, we use the shorthand $m_\lambda =  m(Y_{\lambda},X_{\lambda},\beta_0,\pi_{0,\lambda})$. 
We distinguish between the average in the observed sample 
$\Delta$, the expectation of this average conditional on  covariates and 
fixed effects $\overline{\Delta}$, and a population expectation $\delta$. Each object may be of interest in particular applications and the choice does not affect the estimator, but does impact the asymptotic variance, which we discuss further in Section \ref{sec:avg_asymptotics}.

Estimators of the averages in (\ref{eq:avg_eff2}) suffer from an incidental parameters bias, but may be bias corrected using the jackknife. 
The asymptotic bias in the plug-in estimator $\widehat{\Delta}=\Delta(\widehat{\beta},\widehat{\phi})$
arises not just from bias in the common parameter estimates, but also from averaging over a nonlinear function of noisy
fixed effect estimates, as well as correlation between the fixed-effect estimation error and the average effect terms $m_\lambda$. The jackknife takes care of each of these sources of bias in a single step. Define the leave-out estimates as
\begin{equation}
\widehat{\Delta}_{(k)}=\frac{N-1}{N-r-1}\frac{1}{\vert\Lambda_{N}\vert}\sum_{\lambda}m(Y_{\lambda},X_{\lambda},\widehat{\beta}_{(k)},\widehat{\pi}_{\lambda,(k)})\times 1_{\lambda}^{k}. \label{eq:leaveout_m2}
\end{equation}
where $1_{\lambda}^{k}=\prod_{(i,j)\in\lambda}\mathds{1}\{ (i,j) \not\in \mathcal{I}_k \}$
is an indicator that is equal to one whenever all of the observations
in $\lambda$ are included in the $k$-th leave-out sample, but zero when one or more 
of the observations are in $\mathcal{I}_{k}$. The 
leave-out estimate $\widehat{\Delta}_{(k)}$ replaces parameter estimates with their corresponding leave-out estimators 
$\widehat{\beta}_{(k)}$ and $\widehat{\pi}_{\lambda,(k)} = \widehat{\alpha}_{(k),i} + \widehat{\gamma}_{(k),j}$, and takes the 
average only over sets $\lambda$ for which all $(i,j)\in\lambda$ are contained in the corresponding leave-out sample. 
The factor $\frac{N-1}{N-r-1}$ accounts for the fact that $\lambda$ is dropped from the average
whenever any of the $r$ observations it contains are dropped.\footnote{When $m$ does not depend on $Y_\lambda$ and so $\bar{\Delta}=\Delta$ (e.g.\ marginal effects), the leave-out estimates may instead be constructed as $\widehat{\Delta}_{(k)}=\frac{1}{\vert\Lambda_{N}\vert}\sum_{\lambda}m(Y_{\lambda},X_{\lambda},\widehat{\beta}_{(k)},\widehat{\pi}_{\lambda,(k)})$.} The jackknife bias-corrected estimator is 
\begin{equation}
\widehat{\Delta}_{J}=(N-1)\widehat{\Delta}-(N-2)\frac{1}{N-1}\sum_{k}\widehat{\Delta}_{(k)}.\label{eq:avg_jack}
\end{equation}
In Section \ref{sec:avg_asymptotics}, we show that the bias-corrected estimator
is asymptotically normal and unbiased. Before presenting the asymptotic result, we briefly discuss some examples of 
the usefulness of such average effects.

\paragraph{Example: Marginal effects}
Researchers are often interested in computing marginal effects of covariates, rather than in the coefficients $\beta$. For example, in the 
link formation model in (\ref{eq:formation}), we may be interested in the average derivative with respect to the $l$-th covariate, for which we 
have $\lambda=\{(i,j)\}$ and $m_{ij} = \beta_l \partial F(X_{ij}'\beta+\alpha_{i}+\gamma_{j}))$, where $\partial F$ is the derivative of the 
CDF $F$.

\paragraph{Example: Clustering coefficients}

Clustering coefficients are common measures of the degree to which agents tend 
to cluster together in a network (see for example \citealp{Jackson2008}). There are many definitions for 
such network measures that can be expressed in the form of averages as in (\ref{eq:avg_eff2}), or ratios of such averages. For example, in a model of directed link formation as in (\ref{eq:formation}), we may be 
interested in the expected number of transitive triangles in the network (defined as a set of links between three agents $i\to j$, $i\to k$, $k\to j$). 

In a dyadic link formation model in which $P(Y_{ij} = 1 \vert X,\phi) = F(X_{ij}'\beta + \alpha_i + \gamma_j)=p_{ij}$ say, the expected fraction of transitive triangles that will form is
\begin{equation} \label{eq:triangle}
T=\frac{1}{N(N-1)(N-2)}\sum_{i=1}^{N}\sum_{j\ne i}\sum_{k\not\in\{i,j\}}p_{ij}p_{ik}p_{kj},
\end{equation}
which is of the form (\ref{eq:avg_eff2}), with $\lambda=\{(i,j),(i,k),(k,j)\}$ and $m_\lambda = p_{ij}p_{ik}p_{kj}$. This can be constructed under various counterfactuals.

\paragraph{Example: Specification testing}

Another useful application of fixed-effect averages is in specification testing. \citet{Dzemski2019} proposes a test for the presence of strategic interactions that is based on comparison of the observed fraction of transitive triangles 
to the expected fraction of triangles that would form in a dyadic model, i.e. (\ref{eq:triangle}). 
The test statistic is
\[
\widehat{T}=\frac{1}{N(N-1)(N-2)}\sum_{i=1}^{N}\sum_{j\ne i}\sum_{k\not\in\{i,j\}}\Big(Y_{ij}Y_{ik}Y_{kj}-\widehat{p}_{ij}\widehat{p}_{ik}\widehat{p}_{kj}\Big),
\]
where $\widehat{p}_{ij}=\Phi(X_{ij}'\widehat{\beta} +\widehat{\alpha}_i + \widehat{\gamma}_j)$, for $\Phi$ the standard normal CDF. 
Under the null hypothesis that the dyadic model is correct, the test statistic  is mean zero when evaluated at $(\beta_0,\phi_0)$. In contrast, failure of the conditional independence assumption (Assumption \ref{assu:network} (i) in this paper), as would occur in the presence of strategic decision making by agents, will result in the test staistic converging in probability to a non-zero limit. 
\citet{Dzemski2019} derived an analytical bias correction for this statistic (and another based on cyclic triangles) for the case of a probit model. This statistic can also be jackknife bias-corrected as shown in this paper. In addition, the jackknife allows for specification tests like this to be extended to a range of other models, including models with non-binary outcomes, as well as a range of other statistics. For example, we could consider test statistics based on the covariance between  $(Y_{ij} - \bar{E}[Y_{ij} ])$ and some network statistic $S_{ij}$
\begin{equation}
T=\frac{1}{N(N-1)}\sum_{i}\sum_{j\ne i}(Y_{ij}-\bar{E}[Y_{ij} ])S_{ij}.\label{eq:stat_T}
\end{equation}
The choice of the statistic $S_{ij}$ may be motivated by the exact type of deviation from conditional 
independence the researcher is interested in testing. For example, $S_{ij}=Y_{ji}$ gives a test of reciprocity, i.e. the tendency of agents to be more likely to form a directed link if the corresponding link in the opposing direction already exists.\footnote{Note that reciprocity is allowed for under Assumption \ref{assu:network}, so that a rejection of the null hypothesis of no reciprocity does not affect interpretation of the dyadic model estimates.} 
Using $S_{ij}=\frac{1}{N-2}\sum_{k\not\in\{i,j\}}Y_{ik}Y_{kj}$ gives a test for transitivity, similar to the one presented above. \citet{Graham2020d} consider a similar set of tests for a logit model, where sufficient statistics for the fixed effect parameters 
exist, and derive some optimal test statistics. It should be noted that these statistics test for the presence of strategic interactions, conditional on the specified model being correct, and in general could reject due to any form of misspecification (for example, estimating a logit specification when the truth is a probit).\footnote{\citet{auerbach2022} proposes a two-sample Kolmogorov-Smirnov test for network data that can be used to test whether a network was generated by a particular parametric model.}
It would seem useful to consider the properties of 
such test statistics in a broader set of models, although this is beyond the scope of this paper. We note only that many interesting statistics are likely to have the form of the general fixed effect averages 
considered in this paper, and so the jackknife bias correction may prove useful in forming such tests.


\subsection{Asymptotic analysis for fixed effect averages \label{sec:avg_asymptotics}}

Here we present asymptotic results for the general fixed-effect averages. We can decompose the difference between our jackknife estimate and the population average into three sources
\begin{equation}
\widehat{\Delta}_{J}-\delta=(\widehat{\Delta}_{J}-\Delta)+(\Delta-\overline{\Delta})+(\overline{\Delta}-\delta).\label{eq:avg_decomp}
\end{equation}
The first term, $\widehat{\Delta}_{J}-\Delta$, represents variation
caused by parameter estimation error. The next term, $\Delta-\overline{\Delta}$, is variation
of the sample outcomes $m_{\lambda}$ around their conditional expectations
$\bar{m}_{\lambda}= \bar{E}[m_{\lambda}]$. In the case that $m$ does not
depend on outcomes $Y_{\lambda}$, as for marginal effects, we will have $m_{\lambda}=\bar{m}_{\lambda}$
and this second term will vanish. Finally, $\overline{\Delta}-\delta$
captures differences in the distribution of covariates and fixed effects
in the observed network, relative to the population. In the case that
the full network is observed, or whenever $\overline{\Delta}=0$, as
is the case for specification tests discussed above, we will have that $\overline{\Delta}=\delta$. In addition to Assumptions \ref{assu:network} and \ref{assu:beta_jack}, we 
will impose regularity conditions on the choices of $\lambda$ and $m$.

\begin{assumption}[Regularity conditions for fixed-effect averages]
\label{assu:averageeffects}Let $\lambda$ be a set of $r$ observations
$(i,j)$ corresponding to an ordered set of $p$ distinct agents. Let
$\mathcal{B}_{\varepsilon}$ be a subset of $\mathbb{R}^{\dim\beta+r}$
that contains an $\varepsilon$-neighborhood of $(\beta{}_{0},\pi_{0,\lambda})$
for all $N$ and $\lambda$, for some $\varepsilon>0$.

(i) The set $\Lambda_{N}$ contains every $\lambda$ corresponding to the set of $\frac{N!}{(N-p)!}$ permutations of $p$ agents.

(ii) The function $m$ depends on $(\alpha,\gamma)$ only through
$\pi_{\lambda}=\{\alpha_{i}+\gamma_{j}\}_{(i,j)\in\lambda}$. There exists an $\varepsilon>0$ such that, for
all $\lambda$, $m(Z_{\lambda},\beta,\pi_{\lambda})$ is six times continuously differentiable 
with respect to $\beta$ and $\pi_\lambda$ over $\mathcal{B}_{\varepsilon}$
a.s., with derivatives that are
bounded in absolute value uniformly over $(\beta,\pi_{\lambda})\in\mathcal{B}_{\varepsilon}$
by a function $M(Z_{\lambda})>0$ a.s., where $\max_{\lambda}\bar{E}[M(Z_{\lambda})^{9}]\leq C$ and $Z_\lambda'=(Y_\lambda',X_\lambda')$.

(iii) We have that $0<\min_{\lambda}Var(m_{\lambda})\leq\max_{\lambda}Var(m_{\lambda})<\infty$
uniformly over $N$.
\end{assumption}
Assumption \ref{assu:averageeffects} (i) restricts $m$ to be a function
of a fixed number of edges in the network, and assumes
that $\Delta$ is an average of all possible arrangements of the
nodes in $\lambda$. This ensures that sufficient averaging occurs over both dimensions 
of the fixed effects. For example, an average of the form $\frac{1}{N}\sum_{j\ne i}m(X_{ij},\alpha_{i},\gamma_{j})$
is not allowed since we are only averaging over the receiver dimension,
while holding $i$ fixed. Assumption \ref{assu:averageeffects} (ii) is analogous to Assumption
\ref{assu:beta_jack} (ii), and imposes differentiability
and moment conditions on $m$. Finally, (iii) ensures
that the unconditional second moments of $m$ are well defined.

\subsubsection*{Inference on the conditional average $\overline{\Delta}$}\label{sec:inf_delta1}

The asymptotic distribution of $\widehat{\Delta}_J$ depends on the choice of target parameter,
either a conditional or population average. To describe the asymptotic variance when the target parameter is $\overline{\Delta}$, define, for $\theta'=(\beta',\alpha',\gamma')$,
\begin{align*}
	h_{ij}&=-N(\partial_{\theta}\overline{\Delta})(\partial_{\theta\theta'}\bar{\mathcal{L}})^{-1}\big((\partial_{\theta}\ell_{ij})+(\partial_{\theta}\ell_{ji})\big) \\
	s_{ij}&=\frac{(N-p)!}{(N-2)!}\sum_{\lambda\in\bar{\Lambda}_{ij}}(m_{\lambda}-\bar{m}_{\lambda}),\qquad \bar{\Lambda}_{ij}=\{\lambda:(i,j)\in\lambda\text{ or }(j,i)\in\lambda\}
\end{align*}
whenever $E\big[(m_{\lambda}-\bar{m}_{\lambda})(m_{\lambda'}-\bar{m}_{\lambda'})\big]\ne0$
for sets $\lambda$ and $\lambda'$ that share exactly one observation in common, and $s_{ij}=0$ otherwise. The asymptotic variance of 
the jackknife estimator is
\begin{align}\label{eq:V_Delta}
	V_{\Delta}=\lim_{N\to\infty}\frac{1}{N(N-1)}\sum_{i}\sum_{j<i}\bar{E}\big[(h_{ij}+s_{ij})^{2}\big]
\end{align}

The following theorem states the asymptotic result for the jackknife bias-corrected estimator
of the conditional fixed effect average $\overline{\Delta}$.
\begin{theorem}[Jackknife estimator for fixed-effect averages]
\label{thm:avg_eff}Let Assumptions \ref{assu:network}, \ref{assu:beta_jack} and \ref{assu:averageeffects}
hold, and let $\widehat{\Delta}_{J}$ be the jackknife bias-corrected estimator (\ref{eq:avg_jack}). 
Then
\[
N(\widehat{\Delta}_{J}-\overline{\Delta}) \overset{d}{\longrightarrow} \mathcal{N}(0,V_{\Delta}).
\]
Let $\widehat{V}_{\Delta}$ be the plug-in estimator
for $V_{\Delta}$ that replaces the unknown $\theta$ with estimates
$\widehat{\theta}$. Then $\widehat{V}_{\Delta} \overset{p}{\longrightarrow} V_{\Delta}$.
\end{theorem}
Some explanation for the form of the variance may be useful. The two
terms $h_{ij}$ and $s_{ij}$ relate to the first two components of
(\ref{eq:avg_decomp}). The term $h_{ij}$ captures 
variation from the estimation of the parameters $\theta=(\beta,\alpha',\gamma')$, that is, $N(\widehat{\Delta}_{J}-\Delta) =\frac{1}{N-1}\sum_{i}\sum_{j\ne i}h_{ij}+o_{p}(1)$. Whenever $m$ is not a function of $Y_{ij}$, as 
for marginal effects, 
this will be the only term in the variance, and is equal to the standard delta-method variance formula.

Whenever $m_\lambda \ne \bar{m}_\lambda$, the term $s_{ij}$ represents variation from the difference between the 
average effect term and its conditional mean, $\Delta-\overline{\Delta}$. 
This term is a U-statistic (conditional on $X$ and fixed effects) and its variance is dominated by covariances between
sets  $\lambda$ and $\lambda'$ that share exactly one observation
in common. Although $\lambda$ and $\lambda'$ are correlated whenever they share at least one common dyad,  
there are an order of magnitude fewer combinations that share two or more dyads 
and so these represent smaller order contributions that do not appear
in the asymptotic variance. In settings where $E\big[(m_{\lambda}-\bar{m}_{\lambda})(m_{\lambda'}-\bar{m}_{\lambda'})\big]=0$
for $\lambda$ and $\lambda'$ that share exactly one observation
in common, $\Delta-\overline{\Delta}$ is a degenerate U-statistic
and its variance is asymptotically of smaller order than the variance
from parameter estimation, i.e. $\widehat{\Delta}_{J}-\Delta$, and so may be ignored asymptotically.

\subsubsection*{Inference on the population average $\delta$}

When the object of interest is the unconditional average effect $\delta$, the convergence
of the estimator will be dominated by variation from the third component
in (\ref{eq:avg_decomp}), $\overline{\Delta}-\delta$. To describe
the average in this setting, it is useful to use its U-statistic
representation. We will additionally assume that $X_{ij}=h(X_{i},X_{j})$, 
a condition that appears in other work on dyadic models, for example
in \citet{Graham2017}. When $X_{ij}$ measures the similarity (or
difference) between $i$ and $j$ in some measure, we will commonly
have $X_{ij}=d(X_{i}-X_{j})$, for $d$ some distance function. Alternatively,
if $X_{ij}$ captures common membership in some group, we may have
$X_{ij}=X_{i}X_{j}$ where $X_{i}$ is an indicator for $i$'s membership.

We may write $\overline{\Delta}-\delta$ as
\begin{align} \label{eq:U_stat}
\overline{\Delta}-\delta =\binom{N}{p}^{-1} \sum_{\tau}u_{\tau}, \qquad u_{\tau}=\frac{1}{p!}\sum_{\lambda\in\tau}(\bar{m}_{\lambda} -E[m_\lambda])
\end{align}
where $\tau$ represents a set of $p$ unique agents and $u_\tau$ is a sum over the $p!$ $\lambda$ that share the same set of agents (but in different order).
The term $u_{\tau}$ is a mean-zero symmetric function of $\{X_{i},\alpha_{i},\gamma_{i}\}$
for $p$ agents $i$. Assuming that the $\{X_{i},\alpha_{i},\gamma_{i}\}$
are i.i.d. over agents, $\overline{\Delta}-\delta$ is a U-statistic
of order $p$ and we may apply standard theory on such statistics
to compute its asymptotic distribution.

\begin{theorem}[Population average effects]
\label{thm:avg_pop}Let Assumptions \ref{assu:network}, \ref{assu:beta_jack} and \ref{assu:averageeffects}
hold. Additionally, assume that $X_{ij}=h(X_{i},X_{j})$ where $X_{i}$
is an observed agent-specific characteristic, and also that $(\alpha_{i},\gamma_{i},X_{i})$
is i.i.d. over $i$. Let $\Sigma_{1}=Cov\big(u_{\tau},u_{\tau'}\big)$, 
for $\tau,\tau'$ such that $\tau\cap\tau'=\{i\}$. Then, for $V_{\delta}=p^{2}\Sigma_{1}$
\[
\sqrt{N}(\widehat{\Delta}_{J}-\delta) \overset{d}{\longrightarrow}\mathcal{N}(0,V_{\delta}).
\]
Additionally, the variance estimator $\widehat{V}_{\delta}$ in (\ref{eq:V_delta_hat})
is consistent, i.e. $\widehat{V}_{\delta} \overset{p}{\longrightarrow} V_{\delta}$.
\end{theorem}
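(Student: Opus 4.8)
The plan is to exploit the three-way decomposition in (\ref{eq:avg_decomp}) and to show that, at the $\sqrt{N}$ scale appropriate for the population target $\delta$, only the last component $\bar{\Delta}_{N}-\delta$ survives, and that this component is a non-degenerate U-statistic. First I would scale (\ref{eq:avg_decomp}) by $\sqrt{N}$ and dispose of the first two terms. Theorem \ref{thm:avg_eff} already establishes that $N(\widehat{\Delta}_{J}-\bar{\Delta}_{N})\Rightarrow\mathcal{N}(0,V_{\Delta})$, so that the combined estimation and $Y$-sampling error $(\widehat{\Delta}_{J}-\Delta_{N})+(\Delta_{N}-\bar{\Delta}_{N})=\widehat{\Delta}_{J}-\bar{\Delta}_{N}=O_{p}(N^{-1})$; consequently $\sqrt{N}(\widehat{\Delta}_{J}-\bar{\Delta}_{N})=O_{p}(N^{-1/2})=o_{p}(1)$. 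Thus these two components are asymptotically negligible relative to the agent-level sampling error, and by Slutsky it suffices to derive the limit of $\sqrt{N}(\bar{\Delta}_{N}-\delta)$.

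The core step is to establish a U-statistic central limit theorem for $\bar{\Delta}_{N}-\delta$. Using the representation derived just before the theorem, $\bar{\Delta}_{N}-\delta={N\choose p}^{-1}\sum_{\tau}u_{\tau}$ with $u_{\tau}=\frac{1}{p!}\sum_{\lambda\in\tau}\tilde{m}_{\lambda}$. I would first verify that $u_{\tau}$ is a genuine symmetric kernel of $p$ i.i.d. arguments. This is where the extra hypotheses enter: under $X_{ij}=h(X_{i},X_{j})$ the covariate block $X_{\lambda}$ is a function only of the characteristics $\{X_{i}\}$ of the $p$ agents in $\lambda$; additive separability makes $\pi_{\lambda}$ a function only of $\{\alpha_{i},\gamma_{i}\}$ for those agents; and conditional independence (Assumption \ref{assu:network} (i)) ensures that $\bar{m}_{\lambda}=E[m_{\lambda}\vert X,\phi]$ depends on the data only through the primitives $(X_{i},\alpha_{i},\gamma_{i})$ of the agents appearing in $\lambda$. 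Since these primitives are assumed i.i.d. across $i$, $u_{\tau}$ is a symmetric function of $p$ i.i.d. draws, and $\bar{\Delta}_{N}-\delta$ is a mean-zero U-statistic of fixed order $p$ whose kernel has a finite second moment by Assumption \ref{assu:averageeffects} (ii)--(iii). Applying the classical Hoeffding CLT then yields $\sqrt{N}(\bar{\Delta}_{N}-\delta)\Rightarrow\mathcal{N}(0,p^{2}\Sigma_{1})$, where $\Sigma_{1}$ is the variance of the first H\'{a}jek projection, equivalently $\Sigma_{1}=Cov(u_{\tau},u_{\tau'})$ for $\tau\cap\tau'=\{i\}$, provided $\Sigma_{1}>0$. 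Combining with the negligibility of the first two components gives $\sqrt{N}(\widehat{\Delta}_{J}-\delta)\Rightarrow\mathcal{N}(0,V_{\delta})$ with $V_{\delta}=p^{2}\Sigma_{1}$.

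For consistency of $\widehat{V}_{\delta}$, I would estimate $\Sigma_{1}$ by the sample variance of an empirical analogue of the H\'{a}jek projection: for each agent $i$ form $\widehat{g}_{i}$ by averaging the plug-in kernel $\widehat{u}_{\tau}$ over all index sets $\tau$ containing $i$ (and centering), then set $\widehat{\Sigma}_{1}=N^{-1}\sum_{i}\widehat{g}_{i}^{2}$ and $\widehat{V}_{\delta}=p^{2}\widehat{\Sigma}_{1}$. The plug-in kernel replaces $(\beta_{0},\alpha_{0},\gamma_{0})$ with $(\widehat{\beta},\widehat{\alpha},\widehat{\gamma})$ and, when $m$ depends on $Y$, replaces $\bar{m}_{\lambda}$ with the model-implied conditional expectation evaluated at the estimates. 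Consistency then follows from the uniform consistency of the parameter estimates together with the smoothness and moment bounds of Assumption \ref{assu:averageeffects} (ii), which let me replace $\widehat{u}_{\tau}$ by $u_{\tau}$ up to $o_{p}(1)$, and a law of large numbers for the projection averages.

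The main obstacle, and the step I would spend the most care on, is the reduction of $\bar{m}_{\lambda}$ to a symmetric function of i.i.d. agent-level primitives: it is exactly here that the additional structural assumptions ($X_{ij}=h(X_{i},X_{j})$ and i.i.d. $(\alpha_{i},\gamma_{i},X_{i})$) are indispensable, and without them $\bar{\Delta}_{N}-\delta$ would not be a U-statistic of i.i.d. draws and the $\sqrt{N}$ limit would not follow from standard theory. A secondary technical point is confirming that the incidental-parameter and $Y$-sampling contributions truly sit at the smaller $O_{p}(N^{-1})$ order inherited from Theorem \ref{thm:avg_eff}, so that they are genuinely dominated by the $O_{p}(N^{-1/2})$ U-statistic term; this ordering is precisely what makes the population-average limit independent of the estimation and $Y$-sampling variance that governed the conditional-average limit in Theorem \ref{thm:avg_eff}.
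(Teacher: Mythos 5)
Your proposal is correct and follows essentially the same route as the paper: both reduce the problem to the U-statistic representation $\bar{\Delta}_{N}-\delta={N\choose p}^{-1}\sum_{\tau}u_{\tau}$, invoke the classical Hoeffding/H\'{a}jek-projection CLT for fixed-order U-statistics of i.i.d.\ agent-level primitives, and estimate $\Sigma_{1}$ via the squared per-agent projection averages. Your explicit verification that $\sqrt{N}(\widehat{\Delta}_{J}-\bar{\Delta}_{N})=o_{p}(1)$ follows from Theorem \ref{thm:avg_eff}, and that $\bar{m}_{\lambda}$ reduces to a symmetric function of the i.i.d.\ draws $(X_{i},\alpha_{i},\gamma_{i})$, merely makes precise what the paper leaves implicit in the discussion preceding the theorem.
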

The convergence rate in Theorem \ref{thm:avg_pop} is slower than
the rate in Theorem \ref{thm:avg_eff}. While $m_{\lambda}$ and $m_{\lambda'}$
are \emph{conditionally} independent when $\lambda$ and $\lambda'$
share no dyads in common, the two are \emph{unconditionally }independent
only when they share no agents in common. One can think of the effective observation level here as an 
agent $i$, rather than a link $(i,j)$, and hence $\sqrt{N}$ is the standard convergence rate. Similarly to Theorem \ref{thm:avg_eff}, 
the variance is dominated
by covariances between sets $\lambda$ that share exactly one node
in common, $\Sigma_{1}$. A consistent estimator for this quantity is given by
\begin{align} 
\widehat{V}_{\delta} =\frac{1}{N}\sum_{i}\tilde{\mu}_{i}^{2}, \qquad
\tilde{\mu}_{i} =\frac{(N-p)!}{(N-1)!}\sum_{\lambda:i\in\lambda}(\widehat{\bar{m}}_{\lambda}-\widehat{\Delta}),
\label{eq:V_delta_hat}
\end{align}
where $\widehat{\bar{m}}_{\lambda}$ is a plug-in estimator for $\bar{m}_{\lambda}$.

Since the rate of convergence in Theorem \ref{thm:avg_pop} is $N^{-1/2}$,
there is in fact no asymptotic bias generated by the incidental parameters. 
Nonetheless, the bias 
correction step is still recommended as it is likely to improve
the finite sample properties of inference by correctly centering 
confidence sets, with little or no cost in terms of additional
variance. In the panel data setting, \citet{Fernandez-Val2016} report
such improvements in simulations.


\section{\label{sec:Simulations}Simulations}

Here I demonstrate the effectiveness of the jackknife in simulations, by simulating the estimation of a probit model of network formation.\footnote{Supplementary Appendix D reports the results of additional simulations, including those from a logit specification. The results for the logit model match those presented here for the probit specification.}
This allows for comparison with the analytical correction of \citet{Dzemski2019}, in addition to some alternative jackknife methods. The design is taken from that paper and has
also been used in a number of other network papers. The binary outcome
$Y_{ij}$ is determined by
\[
Y_{ij}=1\{\beta X_{ij}+\alpha_{i}+\gamma_{j}>\varepsilon_{ij}\}
\]
where $\beta=1$ and $\varepsilon_{ij}\sim \mathcal{N}(0,1)$. Individual $i$
is characterized by the binary scalar $X_{i}=1-2\times1\{i\text{ is odd}\}$,
and the homophily variable is given by $X_{ij}=X_{i}X_{j}$, i.e.
it is one for pairs with the same sign and minus one for pairs with
opposing signs. The fixed effects are given by an equally spaced 
sequence from $C_{N}^{l}$ to $C_{N}^{u}$, with $\alpha_i = \gamma_i$.\footnote{The DGP sets sender and receiver fixed effects to be equal, but this is not imposed in the estimation.} The value of
$(C_{N}^{l},C_{N}^{u})$ is intended to control the sparsity of the
network, and I consider four choices, shown in Table \ref{tab:settings}. 
The first setting generates a dense network in which around half of all links are formed.
Subsequent settings feature increasingly sparse networks in which
some nodes remain well connected, while others make few links. In
the sparsest setting only around 3 per cent of all links are
formed and the networks feature large numbers of disconnected agents. 

\renewcommand{\arraystretch}{1.2}
\begin{table}
\centering \small
\caption{\label{tab:settings}Fixed effect distributions}
\begin{tabular}{lcccccccc}
\hline 
$(C_{N}^{l},C_{N}^{u})$ & Density & Connected & Min & 1st quart.& Median & 3rd quart. & Max \tabularnewline
\hline 
$(\pm \log\log N)$   & 0.50 & 50.0 & 8.6 & 16.9 & 24.1 & 31.7 & 40.3 \\
$(-\log\log N, 0)$    & 0.19 & 49.9 & 2.0 & 6.3 & 9.2 & 12.5 & 18.9 \\
$(-\log^{1/2}N, 0)$ & 0.12 & 45.5 & 0.1 & 2.4 & 4.9 & 8.4 & 15.2 \\
$(-\log N, 0)$          & 0.03 & 20.7 & 0.0 & 0.0 & 0.5 & 2.6 & 8.2 \\
\hline 
\end{tabular}
\vspace{2px}
\caption*{\footnotesize *Values are averages over 1000 simulations in a network of $N=50$  nodes. Density is the proportion of all potential links formed; `Connected' counts the number of agents that have at least one link in each direction; the remaining statistics are the min, max and quartiles of the out-degree across the 50 agents ($\alpha_i=\gamma_i$ so using in-degree gives the same results).} 
\end{table}

I report results for the uncorrected maximum likelihood estimator, the analytical bias correction, the jackknife and weighted jackknife presented in this paper, as well as two additional jackknife bias corrections that have been previously proposed.
Specifically, \citet{Cruz-Gonzalez2017} suggest two
jackknife bias corrections for dyadic network models: (1) a split-sample correction based on dividing the sample into halves along first the sender and then the receiver dimension; and (2) a leave-one-agent-out jackknife, based on subsamples of the data which drop all observations from a particular agent in the network. Details for the construction of these estimators are given in \citet{Cruz-Gonzalez2017}.

\begin{table}[H]
\centering \small
\caption{\label{tab:Sim1_mean}Simulation results - estimation of $\beta$}
\begin{tabular}{lccccccccc}
\hline 
 & Bias & SD & RMSE & Rej(5\%) &  & Bias & SD & RMSE & Rej(5\%)\tabularnewline
\hline 
 & \multicolumn{4}{c}{$-\log(\log n),\log(\log n)$} &  & \multicolumn{4}{c}{$-\log(\log n),0$}\tabularnewline
\cline{2-5}\cline{7-10}
MLE & 0.060 & 0.044 & 0.075 & 0.273 &  & 0.074 & 0.062 & 0.096 & 0.249\tabularnewline
BC & -0.001 & 0.040 & 0.040 & 0.038 &  & 0.007 & 0.057 & 0.057 & 0.047\tabularnewline
Jackknife & -0.009 & 0.039 & 0.040 & 0.041 &  & -0.011 & 0.057 & 0.058 & 0.054\tabularnewline
Jackknife (weighted) & -0.006 & 0.039 & 0.040 & 0.037 &  & -0.001 & 0.056 & 0.056 & 0.049\tabularnewline
Jackknife (agent) & -0.012 & 0.039 & 0.041 & 0.042 &  & -0.021 & 0.067 & 0.071 & 0.069\tabularnewline
Jackknife (split) & -0.016 & 0.042 & 0.045 & 0.070 &  & -0.037 & 0.100 & 0.107 & 0.119\tabularnewline
\hline 
 & \multicolumn{4}{c}{$-(\log n)^{1/2},0$} &  & \multicolumn{4}{c}{$-\log n,0$}\tabularnewline
\cline{2-5}\cline{7-10}
MLE & 0.110 & 0.126 & 0.167 & 0.216 &  & 0.613 & 0.654 & 0.896 & 0.067\tabularnewline
BC & 0.022 & 0.112 & 0.114 & 0.038 &  & -21.692 & 41.259 & 46.614 & 0.013\tabularnewline
Jackknife & -0.040 & 0.163 & 0.168 & 0.078 &  & -0.513 & 1.098 & 1.212 & 0.503\tabularnewline
Jackknife (weighted) & 0.008 & 0.143 & 0.143 & 0.054 &  & 0.295 & 0.533 & 0.610 & 0.046\tabularnewline
Jackknife (agent) & -0.096 & 0.262 & 0.279 & 0.137 &  & -1.323 & 1.661 & 2.123 & 0.640\tabularnewline
Jackknife (split) & -0.223 & 0.316 & 0.387 & 0.410 &  & -1.560 & 1.476 & 2.148 & 0.658\tabularnewline
\hline 
\end{tabular}
\caption*{\footnotesize *Estimators are the MLE, analytical bias correction (BC) (Dzemski, 2019), jackknife (J), and weighted jackknife (WJ).  Bias, standard deviation, root mean squared error, and rejection rate for a 5\% nominal size test across 1000 simulations.}
\end{table}

Table \ref{tab:Sim1_mean} presents the bias, standard deviation, root-mean-squared error and rejection rate (for a standard t-test at the 5\% level) for each estimator of $\beta$ over 1000 simulations. It is clear that the MLE estimates are substantially biased, which results in over-rejection of the t-test in all of the simulation designs. The top left panel of the table reports results from the densest design. In this setting, all of the bias-corrected estimators substantially reduce bias, have standard deviations below that of the MLE (and subsequently much lower RMSE) and have rejection rates close to the nominal level. In other designs, as the network becomes less dense, there are clear differences between the bias corrections. In particular, the split-sample bias correction removes less bias and has much larger standard deviation than other methods -- the split-sample estimator actually has larger RMSE than the MLE in three designs. Unless the original network is sufficiently dense, the half-sample networks can be very sparse, and hence quite noisy. This feature of split-sample jackknife estimators has been noted elsewhere for panel data model (see \citet{Fernandez-Val2018, Hahn2024}).\footnote{\citet{Hahn2024} demonstrate that split-sample jackknife estimators have larger high-order variance in a panel data model with individual fixed effects. We would expect such a result to also be true here, although this is beyond the scope of this paper.} 
Comparing the other jackknife estimators, we see that the jackknife method proposed in this paper is more successful at removing bias in the first three simulation designs than the leave-one-agent-out style jackknife, particularly in the third design, in which many observations only contain a few links. The weighted jackknife performs extremely well in each of the simulation designs, being close to unbiased in the first three designs, and having rejection rates at or below the nominal level in all cases. It is clear that the weighting of the leave-out samples becomes more valuable as the network design becomes sparser. The bias-corrected estimator performs similarly well -- we emphasize that the value of the jackknife correction is that it can be applied to a wide range of models and objects of interest for which analytical corrections are not derived or built into existing software packages. Finally, it should be noted that each of the bias corrections relies on dense network asymptotics. Although the jackknife performs well, even in relatively sparse settings, its performance is clearly not guaranteed in very sparse networks, as shown in the fourth simulation design.

Next, we investigate the performance of the jackknife correction for the transitivity test statistic described in (\ref{eq:stat_T}), using $S_{ij} = \frac{1}{N-2}\sum_{k\not\in\{i,j\}} Y_{ik}Y_{kj}$.  Table \ref{tab:sim_stat} reports the results for the MLE statistic, the jackknife and the weighted jackknife.\footnote{The weighted jackknife for the transitivity statistic is analgous to the version described for $\widehat{\beta}$, but with $\widehat{W}_{(k)}=\frac{1}{N(N-1)}\sum_i\sum_{j<i} \widehat{h}_{(k),ij}^2$, where $ h_{(k),ij}$ is as defined at the start of section \ref{sec:inf_delta1} and evaluated at the leave-out parameter estimates.}
Following \citet{Dzemski2019}, I report rejection rates both for the statistic as described in Theorem \ref{thm:avg_eff}, as well as a version that uses a bootstrap estimator for the standard error of the statistic.\footnote{The bootstrap samples are generated according to $Y^*_{ij} = 1\{ U_{ij} \leq \widehat{p}_{ij} \}$, where $U_{ij}$ is a uniformly distributed random variable.} In each of the designs, the bias for MLE version of the statistic is substantial (up to 8 standard deviations) so that we essentially reject with probability one. In contrast, the jackknife version of the statistic has low bias and in the denser simulation designs has rejection rates close to the nominal 5\% (the bootstrap standard errors make a small improvement in coverage for these first three designs). As for inference about the common parameter, the weighted version of the jackknife performs even better than the standard version, particularly for the less dense simulation designs, with approximately correct coverage in all designs (interestingly, the bootstrap does not appear to improve inference for this estimator).  In all, it is clear that the incidental parameters bias issue is quite severe when it comes to the estimation of these statistics, making bias correction important.

\begin{table}[H]
	\centering \small
	\caption{Simulation results - transitivity statistic}  \label{tab:sim_stat}
	\begin{tabular}{lccccccc}
		\hline 
		& Bias/SD & Rej (5\%) & Rej(b) (5\%) &  & Bias/SD & Rej (5\%) & Rej(b) (5\%)\tabularnewline
		\hline 
		& \multicolumn{3}{c}{$-\log(\log n),\log(\log n)$} &  & \multicolumn{3}{c}{$-\log(\log n),0$}\tabularnewline
		\cline{2-4}\cline{6-8}
		MLE & -7.039 & 1.000 & 1.000 &  & -8.134 & 1.000 & 1.000\tabularnewline
		Jackknife & -0.084 & 0.064 & 0.044 &  & -0.058 & 0.074 & 0.064\tabularnewline
		Jackknife (weighted) & -0.094 & 0.064 & 0.044 &  & 0.362 & 0.062 & 0.072\tabularnewline
		\hline 
		& \multicolumn{3}{c}{$-(\log n)^{/12},0$} &  & \multicolumn{3}{c}{$-\log n,0$}\tabularnewline
		\cline{2-4}\cline{6-8}
		MLE & -6.697 & 1.000 & 1.000 &  & -4.578 & 0.998 & 1.000\tabularnewline
		Jackknife & -0.117 & 0.080 & 0.078 &  & -0.675 & 0.180 & 0.208\tabularnewline
		Jackknife (weighted) & 0.248 & 0.064 & 0.072 &  & -0.226 & 0.074 & 0.100\tabularnewline
		\hline 
	\end{tabular}
	\caption*{\footnotesize *Estimators are the MLE, jackknife (J), and weighted jackknife (WJ).  Bias relative to standard deviation, and rejection rate for a 5\% nominal size test across 500 simulations. Rejection rates are shown for the t-statistics using standard errors obtained by either the formula for $\widehat{V}_\Delta$ described in Theorem \ref{thm:avg_eff} (Rej) or the bootstrap (Rej(b)).}
\end{table}


\section{Empirical example}

I apply the jackknife procedure to a country-level trade network, using data on relationships between 136 
countries, taken from \citet{Santos2006} (additional details on the data and its construction can be found in that
paper).\footnote{In Supplementary Appendix D, I also estimate a zero-inflated negative binomial model, as in \cite{Burger2009}, using the same data.} The outcome variable is a binary indicator for the presence 
of a trading relationship, which is predicted using several covariates that capture homophily
in trade relationships: \emph{log distance}, the log
of the distance between the capitals of the countries; \emph{border},
an indicator of whether the countries share a common border; \emph{language},
an indicator for whether the countries share a language; \emph{colony},
and indicator for whether either country had colonized the other at
some point in history; and \emph{trade agreement}, an indicator for
the presence of a joint preferential trade agreement between the two
countries. This same model is also estimated in \citet{Jochmans2018}. The network is dense, with around half of all potential trade links forming; however, there 
is significant heterogeneity in both in and out degree. For example, the 10th percentile of out degree is just 
26, while the 90th percentile is 133.

Table \ref{tab:gravity} presents the estimates of the model using a probit specification. 
The signs  and general magnitudes of the coefficients agree with those estimated elsewhere (e.g.\ \citet{Helpman2008}, \citet{Jochmans2018}), and so here we focus on the 
effect of the bias correction. The bias correction has the largest impact on the geographic distance coefficient, which is reduced by 0.74 of a standard deviation, while the impact is smaller on other coefficients. 
We find that the effect of bias correction is less significant here than in \citet{Jochmans2018} who uses a conditional logit estimator to correct for the incidental parameters bias.\footnote{Similar results to those presented in Table \ref{tab:gravity} were found using a logit specification as well as using the analytical bias correction. We also present results of simulations calibrated to this example in the supplementary appendix.}

\renewcommand{\arraystretch}{1.3}
\begin{table}[H]
\centering \small
\caption{\label{tab:gravity}Gravity model estimates}
\begin{tabular}{lrrrrcrrrr}
\hline 
 & \multicolumn{4}{c}{Parameter estimates} & & \multicolumn{4}{c}{Marginal effects}  \\ \cline{2-5}  \cline{7-10} 
 & \multicolumn{1}{c}{MLE} & \multicolumn{1}{c}{Jack} & \multicolumn{1}{c}{\; SE \;} & \multicolumn{1}{c}{Bias/SE} &
  & \multicolumn{1}{c}{MLE} & \multicolumn{1}{c}{Jack} & \multicolumn{1}{c}{\; SE \;} & \multicolumn{1}{c}{Bias/SE} \tabularnewline
\hline 
log distance 	  & -0.730 & -0.710 & 0.027 & -0.744 && -0.109 & -0.109 & 0.004 & 0.024 \\ 
language 		  & 0.320 & 0.311 & 0.051 & 0.174 && 0.048 & 0.048 & 0.008 & -0.003 \\ 
colony 		  & 0.300 & 0.292 & 0.053 & 0.150 && 0.045 & 0.045 & 0.008 & -0.008 \\ 
border       & -0.655 & -0.635 & 0.116 & -0.176 & &  -0.094 & -0.094 & 0.016 & 0.017 \\
trade agree.	  & 1.105 & 1.069 & 0.169 & 0.213 && 0.171 & 0.171 & 0.026 & 0.022 \\ 
\hline 
\end{tabular}
\vspace{2px}
\caption*{\footnotesize *Bias/SE is ratio of the difference in the Jackknife and MLE estimates to the standard error.} 
\end{table}

In addition to bias correction for the parameters of the model, the jackknife method allows bias correction of the marginal effects. For the (continuous) log distance variable, I compute the marginal effect as the average derivative $\varphi(\beta'X_{ij} + \alpha_i + \gamma_j)\beta_{dist}$, while for the binary variables I compute the average of the difference between the 
fitted probabilities setting that variable to one versus zero, $\Phi(\beta'X^{(1)}_{ij} + \alpha_i + \gamma_j) - \Phi(\beta'X^{(0)}_{ij} + \alpha_i + \gamma_j)$.
The results are contained in the right panel of Table \ref{tab:gravity}. In general, the magnitude of the incidental parameters bias appears smaller when considering marginal effects, this has been noted elsewhere, for example \citealp{Fernandez-Val2009}.

Finally, I also conduct tests for the presence of transitivity. Here I compute tests for the trade network discussed above, 
as well as for a professional network of attorneys from \citet{lazega2001}. In this network the outcome is an indicator for friendship between 71 attorneys at a law firm; transitivity is often suggested to play an important role in social networks such as this. 
The network is sparser than the trade one, with only 11 per cent of all potential friendships 
forming. I include covariates for whether the attorneys are of the same gender, professional status, or work in the same office, as well as measures of 
the difference in their ages and tenure. I compute transitivity test statistics based on (\ref{eq:stat_T}) using $S_{ij} = \frac{1}{N-2}\sum_{k\not\in \{i,j\}} Y_{ik} Y_{kj}$. 
Following \citet{Dzemski2019}, I compute standard errors for the statistics using a parametric bootstrap procedure, although using $\widehat{V}_\Delta$ gives identical conclusions. The results of the transitivity tests are presented in Table \ref{tab:statistic} -- the table presents t-statistics (ratio of $T$ to its bootstrap standard error) that are asymptotically standard normal and so can be compared against the usual critical values. 

\renewcommand{\arraystretch}{1.3}
\begin{table}[H]
\centering \small
\caption{\label{tab:statistic}Transitivity test statistics}
\begin{tabular}{lrr}
\hline 
 & \multicolumn{1}{c}{MLE} & \multicolumn{1}{c}{Jackknife} \tabularnewline
\hline 
trade network    & -8.85 & 0.70 \\
professional network  & 14.90 & 21.03 \\
\hline 
\end{tabular}
\end{table}

In both networks, jackknife bias correction has a large and important impact on the size of the t-statistics. As seen in the simulations, the more complicated average effects that make up the test statistics appear to suffer from the incidental parameters bias to a much greater extent than the simple average effects. In the 
case of the trade network, the uncorrected test statistic would suggest a strong rejection of the null hypothesis of a dyadic network formation model. 
In contrast, the bias-corrected t-statistic is close to zero, and suggests that the dyadic model does a good job of capturing the 
observed transitivity. For the professional network the statistic becomes larger after bias correction, so that we do reject 
the dyadic model. This aligns with common findings that social networks tend to exhibit high degrees of transitivity and clustering.


\section{Conclusion}

This paper presents a new method for bias correcting nonlinear dyadic
network models with fixed effects. I provide a novel formulation
of the jackknife method that applies to networks with both sender
and receiver fixed effects. The jackknife method provides an `off-the-shelf'
procedure for bias correction that is easy to apply, and applicable
to a wide set of models. It allows for discrete multivalued and continuous
outcome variables, and is able to obtain estimates of average effects
and counterfactual outcomes. In simulations, I show that the jackknife performs well, even in
relatively low density networks, and outperforms previous suggestions
for jackknife procedures.

In addition, I show how the jackknife can be used to bias correct
averages of functions that depend on multiple observations, including
dyads, triads, and tetrads in the network. These averages can be used
to produce a wide array of test statistics, including tests for the presence of strategic
interactions in the network, such as reciprocity or transitivity.

\bibliographystyle{myplainnat} 
\bibliography{Network}  

\newpage

\appendix
\let\clearpage\relax


\section{Asymptotic expansions}
The results in this paper are based on asymptotic expansions of the
objective function, following those derived by \cite{Fernandez-Val2016} (FW16), but 
extended to higher order.\footnote{In the supplementary appendix it is shown that the 
	expansions in FW16 apply to the model studied here; this follows  \cite{Dzemski2019}, who verified 
	the conditions for a dyadic probit model.}
Derivations of higher-order terms and their bounds are quite long
and largely similar to the derivations in FW16, and so are provided
in the supplementary appendix, which is available at \url{https://arxiv.org/abs/2203.15603}. This appendix contains analysis based
on the first-order expansions and focuses on results related to the
jackknife, which are of most interest.

\subsection*{Notation}

Here we collect the notation used throughout the paper and in the
appendix for easy reference.
Directed pairs of agents (observations) are referred to as $(i,j)$,
while unordered pairs (dyads) are referred to using $\{i,j\}$. We use $\ell_{ij}$ as shorthand for the function $\ell(Z_{ij},\beta_{0},\pi_{0,ij})$,
with $Z_{ij}=(Y_{ij},X_{ij})$ and $\pi_{0,ij}=\alpha_{0i}+\gamma_{0j}$. 

Partial derivatives of the objective function are shown using subscripts,
so that $\partial_{\beta}\mathcal{L}(\beta,\phi)$ denotes $\partial\mathcal{L}(\beta,\phi)/\partial\beta$
and so on, where $\phi'=(\alpha',\gamma')$. When functions are evaluated
at $\beta_{0},\phi_{0}$ the dependence on these arguments is dropped.
We use a bar to refer to expectations conditional on covariates and fixed effects,
e.g. $\partial_{\beta}\bar{\mathcal{L}}=\bar{E}[\partial_{\beta}\mathcal{L}(\beta_{0},\phi_{0})]$,
and a tilde for deviations from these expectations, $\partial_{\beta}\tilde{\mathcal{L}}=\partial_{\beta}\mathcal{L}-\partial_{\beta}\bar{\mathcal{L}}$.

Let $\mathcal{S}(\beta,\phi)=\partial_{\phi}\mathcal{L}(\beta,\phi)$,
and $\mathcal{S}_{\beta}(\beta,\phi)=\partial_{\beta}\mathcal{L}(\beta,\phi)$
be the first derivatives of the objective function. We also write
$\mathcal{H}(\beta,\phi)=-\partial_{\phi\phi'}\mathcal{L}(\beta,\phi)$
for the negative of the Hessian with respect to the fixed effects,
a $2N\times2N$ matrix. We use $\lVert\cdot\rVert$ to refer to the
Euclidean norm for 
vectors (norms for higher dimensional objects are discussed in the
supplementary appendix).

For describing fixed effect averages, we let $\lambda$ stand for
a set of $r$ observations that depends on $p$ ordered agents,with
$\Lambda_{N}$ the set of all $\lambda$ that can be formed by
selecting $p$ ordered agents from the $N$ total agents. We let
$m_{\lambda}$ denote a function that depends on the observations
in set $\lambda$, and $Z_{\lambda}$ be the vector containing $Z_{ij}$
for each $(i,j)\in\lambda$.

\subsection{Asymptotic expansion}

The next lemma gives the asymptotic expansion for the estimated fixed
effects and common parameters. It is largely a restatement of Theorem
B.1 and B.3 in \cite{Fernandez-Val2016} (here the remainder term
is split into two parts, which are dealt with separately in the jackknife proof). 
The full expressions for the remainder terms are provided in Supplementary Appendix E. 
\begin{lem}
	\label{lem:exp_beta1}Let Assumptions 1 and 2 hold. Then,
	\[
	N\bar{W}_{N}(\widehat{\beta}-\beta_{0})=U^{(0)}+U^{(1)}+R_{\beta}+\tilde{R}_{\beta}
	\]
	where
	\begin{align*}
		U^{(0)} & =(\partial_{\beta}\mathcal{L})+(\partial_{\beta\phi'}\bar{\mathcal{L}})\bar{\mathcal{H}}^{-1}\mathcal{S}\\
		U^{(1)} & =(\partial_{\beta\phi'}\tilde{\mathcal{L}})\bar{\mathcal{H}}^{-1}\mathcal{S}-(\partial_{\beta\phi'}\bar{\mathcal{L}})\bar{\mathcal{H}}^{-1}\tilde{\mathcal{H}}\bar{\mathcal{H}}^{-1}\mathcal{S}\\
		& +\frac{1}{2}\sum_{g=1}^{\dim\phi}\big((\partial_{\beta\phi\phi_{g}}\bar{\mathcal{L}})+(\partial_{\beta\phi'}\bar{\mathcal{L}})\bar{\mathcal{H}}^{-1}(\partial_{\phi\phi'\phi_{g}}\bar{\mathcal{L}})\big)\big[\bar{\mathcal{H}}^{-1}\mathcal{S}\big]_{g}\bar{\mathcal{H}}^{-1}\mathcal{S}
	\end{align*}
	and the remainders satisfy $\lVert R_{\beta}\rVert=o_{p}(1)$
	and $\lVert\tilde{R}_{\beta}\rVert=o_{p}(N^{-1})$.
	Further, the parameter estimates are consistent, i.e.
	\[
	\lVert\widehat{\phi}-\phi_{0}\rVert  \overset{p}{\longrightarrow} 0 \qquad
	\lVert\widehat{\beta}-\beta_{0}\rVert \overset{p}{\longrightarrow} 0
	\]
\end{lem}
%


\section{\label{sec:App_B}Jackknife results for common parameter}

Here we allow for a more general construction of the leave-out sets,
but impose two important conditions.
\begin{condition}
	\label{cond:losets}Let $\mathcal{I}_{k}$ for $k=1,\dots N_{k}$
	be a partition of the observations in a network of size $N$, that is fixed and not dependent 
	on the data. Define
	$1_{ij}^{k}=1\{(i,j)\not\in\mathcal{I}_{k}\}$ as an indicator that
	the observation $(i,j)$ is included in the $k$-th leave-out sample.
	We impose the following conditions:
	
	(i) $\sum_{k=1}^{N_{k}}(1-1_{ij}^{k})=1$ for all $(i,j)$
	
	(ii) $\sum_{j\ne i}(1-1_{ij}^{k})=\sum_{i\ne j}(1-1_{ij}^{k})=l$
	for all $i$, $j$, and $k$, for some fixed $l$.
\end{condition}
Condition \ref{cond:losets} imposes two important constraints on the
sets $\mathcal{I}_{k}$. First, that every observation $(i,j)$ appears in exactly one of the $\mathcal{I}_k$, 
so that all observations are used equally in the jackknife, which is important for showing that
the asymptotic variance of the estimator is not affected by the jackknife.
Second, that each $\mathcal{I}_k$ contains exactly $l$ observations related to each fixed effect, 
so that they are affected
equally in the leave-out sets. We assume in the proofs below
that $l=1$, i.e. that we are using the leave-one-out jackknife as
this simplifies notation. All of the results still hold for the leave-$l$-out
style jackknife for fixed $l$. Note that we consider the partitioning to be fixed and independent 
of the data, and condition on it throughout.

Let $A$ be some function of the data and model parameters, 
and $A_{(k)}$ the same quantity in the $k$-th leave-out sample. We define the jackknife operator
$\mathbf{J}$ as 
\[
\mathbf{J}[A]=A_{J}=(N-1)A-(N-2)\frac{1}{N-1}\sum_{k=1}^{N-1}A_{(k)}
\]
Additionally, we define a set of indicators that count the number
of unique sets $\mathcal{I}_{k}$ that a group of observations
$(i_{1},j_{1}),\dots,(i_{t},j_{t})$ are contained in. Let 
\[
I_{(i_{1},j_{1}),\dots,(i_{t},j_{t})}^{r}=\begin{cases}
	1 & (i_{1},j_{1}),\dots,(i_{t},j_{t})\text{ span exactly \ensuremath{r} of the sets }\mathcal{I}_{k}\\
	0 & \text{otherwise}
\end{cases}
\]

We next state an expansion for the leave-out estimate $\widehat{\beta}_{(k)}$ (see Supplementary Appendix I for full details of all terms). 
The expansion is the same as in Lemma \ref{lem:exp_beta1}, after redefining $\ell_{ij}$ in the objective 
function to include indicator variables for the dropped observations. Specifically, let $\ell^{(k)}(Y_{ij},X_{ij},\beta,\alpha_{i}+\gamma_{j})=\ell(Y_{ij},X_{ij},\beta,\alpha_{i}+\gamma_{j})1_{ij}^{k}\frac{N-1}{N-2}$, and $\mathcal{L}_{(k)}$ the correspesponding objective function.
The new objective function satisfies the conditions of Assumptions
1 and 2 whenever the full-sample version does (e.g.\ we have independence of $\ell_{ij}^{(k)}$
(and its derivatives) from $\ell_{st}^{(k)}$ (and its derivatives),
conditional on $(X,\phi)$ for all $(i,j)\ne(s,t),(t,s)$). We use the subscript $(k)$ to denote functions of the $k$-th
leaveout sample, e.g.\ $\partial_{\beta\phi'}\bar{\mathcal{L}}_{(k)}=\bar{E}[\partial_{\beta\phi'}\mathcal{L}_{(k)}]$.
\begin{lem}
	\label{lem:exp_beta1_k}Let Contdition \ref{cond:losets} and Assumptions \ref{assu:network} and \ref{assu:beta_jack} hold. Then,
	\[
	N\overline{W}_{N,(k)}(\widehat{\beta}_{(k)}-\beta_{0})=U_{(k)}^{(0)}+U_{(k)}^{(1)}+R_{\beta,(k)}+\tilde{R}_{\beta,(k)}
	\]
	where
	\begin{align*}
		U_{(k)}^{(0)} & =(\partial_{\beta}\mathcal{L}_{(k)})+(\partial_{\beta\phi'}\bar{\mathcal{L}})\bar{\mathcal{H}}^{-1}\mathcal{S}_{(k)}\\
		U_{(k)}^{(1)} & =(\partial_{\beta\phi'}\tilde{\mathcal{L}}_{(k)})\bar{\mathcal{H}}^{-1}\mathcal{S}_{(k)}-(\partial_{\beta\phi'}\bar{\mathcal{L}})\bar{\mathcal{H}}^{-1}\tilde{\mathcal{H}}_{(k)}\bar{\mathcal{H}}^{-1}\mathcal{S}_{(k)}\\
		& +\frac{1}{2}\sum_{g=1}^{\dim\phi}\big((\partial_{\beta\phi\phi_{g}}\bar{\mathcal{L}})+(\partial_{\beta\phi'}\bar{\mathcal{L}})\bar{\mathcal{H}}^{-1}(\partial_{\phi\phi'\phi_{g}}\bar{\mathcal{L}})\big)\big[\bar{\mathcal{H}}^{-1}\mathcal{S}_{(k)}\big]_{g}\bar{\mathcal{H}}^{-1}\mathcal{S}_{(k)}
	\end{align*}
	and the remainders satisfy $\lVert R_{\beta,(k)}\rVert=o_{p}(1)$
	and $\lVert\tilde{R}_{\beta,(k)}\rVert=o_{p}(N^{-1})$.
\end{lem}

\subsection{Jackknifing first-order expansion terms}

The next lemmas demonstrate the effect of the jackknife on general sums of the forms in $U^{(0)}$ and $U^{(1)}$.
\begin{lem}
	\label{lem:jackS} Let $1_{ij}^{k}$ satisfy Condition \ref{cond:losets}.
	For $A_{ij}$ a mean-zero random variable, let
	\[
	A = \frac{1}{N-1}\sum_i\sum_{s\ne i}A_{is},\qquad A_{(k)} = \frac{1}{N-2}\sum_i\sum_{s\ne i}A_{is}1_{is}^{k}
	\]
	Then $A_{J}=\mathbf{J}[A]=A$.
\end{lem}
\begin{proof}
	By Condition \ref{cond:losets}, $\sum_{k}1_{is}^{k}=N-2$ so that 
	\[
	\frac{1}{N-1}\sum_k A_{(k)} = \frac{1}{(N-1)(N-2)}\sum_k \sum_{i}\sum_{s\ne i}A_{is}1_{is}^{k}=\frac{1}{N-1}\sum_i\sum_{s\ne i}A_{is}=A,
	\]
	from which the result follows.
\end{proof}
\begin{lem}
	\label{lem:jack_r0} Let $1_{ij}^{k}$ satisfy Condition \ref{cond:losets}.
	For $A_{ij}$ a mean-zero random variable with bounded fourth moment,
	let
	\begin{align*}
		A & =\frac{1}{N-1}\big(\{\sum_{s\ne i}A_{is}\}_{i=1,\dots,N},\{\sum_{s\ne j}A_{sj}\}_{j=1,\dots,N}\big)
		=(A_{\alpha},A_{\gamma})\\
		A_{k} & =\frac{1}{N-2}\big(\{\sum_{s\ne i}A_{is}1_{is}^{k}\}_{i=1,\dots,N},\{\sum_{s\ne j}A_{sj}1_{sj}^{k}\}_{j=1,\dots,N}\big)
		=(A_{\alpha,k},A_{\gamma,k})
	\end{align*}
	and let $B$ and $B_{k}$ be similarly defined vectors
	of sums involving mean-zero random variables $B_{ij}$. Assume that
	$(A_{ij},B_{ij})$ are independent of $(A_{st},B_{st})$ for $(i,j)\not\in\{(s,t),(t,s)\}$.
	Let $M$ be a non-random $2N \times 2N$ matrix that has $O_{p}(1)$ elements on
	its diagonal and $O_{p}(N^{-1})$ off-diagonal terms. Then,
	\[
	\mathbf{J}[A'MB]= o_{p}(1)
	\]

\end{lem}
\begin{proof}
	We can decompose the term $A'MB$ as
	\[
	A'MB  =A_{\alpha}'M_{\alpha\alpha}B_{\alpha}+A_{\alpha}'M_{\alpha\gamma}B_{\gamma}
	+A_{\gamma}'M_{\gamma\alpha}B_{\alpha}+A_{\gamma}'M_{\gamma\gamma}B_{\gamma}
	\]
	where $M_{\alpha\alpha}$ is the upper left $N\times N$ block of $M$, $M_{\alpha\gamma}$ the upper right block, and so on. Recall that $I_{(ij)(st)}^{1}$ is one whenever $(i,j)$ and $(s,t)$
	are contained in the same $\mathcal{I}_{k}$, and so $\sum_{k}1_{ij}^{k}1_{st}^{k}=(N-2)I_{(ij)(st)}^{1}+(N-3)(1-I_{(ij)(st)}^{1})$. For the first term we have
	\begin{align*}
		A_{\alpha}'M_{\alpha\alpha}B_{\alpha} & =\sum_{i,j}A_{\alpha,i}(M_{\alpha\alpha})_{ij}B_{\alpha,j}\\
		& =\frac{1}{(N-1)^{2}}\sum_{i,j}\sum_{s\ne i}\sum_{t\ne j}(M_{\alpha\alpha})_{ij}A_{is}B_{jt}\\
		\frac{1}{N-1}\sum_{k}A_{k,\alpha}'M_{\alpha\alpha}B_{k,\alpha} & =\frac{1}{(N-1)(N-2)^{2}}\sum_{k}\sum_{i,j}\sum_{s\ne i}\sum_{t\ne j}(M_{\alpha\alpha})_{ij}A_{is}B_{jt}1_{is}^{k}1_{jt}^{k}\\
		& =\frac{1}{(N-1)(N-2)}\sum_{i,j}\sum_{s\ne i}\sum_{t\ne j}(M_{\alpha\alpha})_{ij}A_{is}B_{jt}I_{(is)(jt)}^{1}\\
		& +\frac{N-3}{(N-1)(N-2)^{2}}\sum_{i,j}\sum_{s\ne i}\sum_{t\ne j}(M_{\alpha\alpha})_{ij}A_{is}B_{jt}(1-I_{(is)(jt)}^{1})
	\end{align*}
	We can write the jackknife version of this first term is
	\begin{align*}
		 \mathbf{J}[A_\alpha'M_{\alpha\alpha}B_\alpha] = & (N-1)A_{\alpha}'M_{\alpha\alpha}B_{\alpha}-\frac{N-2}{N-1}\sum_{k}A_{\alpha,k}'M_{\alpha\alpha}B_{\alpha,k}\\
		= & \frac{1}{(N-1)(N-2)}\sum_{i}\sum_{j}\sum_{s\ne i}\sum_{t\ne j}(M_{\alpha\alpha})_{ij}\big(A_{is}B_{jt}\big)(1-I_{(is)(jt)}^{1})
	\end{align*}
	Similar computations for the other three elements gives
	\begin{align*}
		 \mathbf{J}[A'MB]
		= & \frac{1}{(N-1)(N-2)}\sum_{i}\sum_{j}\sum_{s\ne i}\sum_{t\ne j}\Gamma_{ijst}A_{is}B_{jt}(1-I_{(is)(jt)}^{1})
	\end{align*}
	where $\Gamma_{ijst}=(M_{\alpha\alpha})_{ij}+(M_{\alpha\gamma})_{it}+(M_{\gamma\alpha})_{sj}+(M_{\gamma\gamma})_{st}$.
	From the properties of $M$,  $\Gamma_{ijst}$ is $O_{p}(1)$ only when $i=j$ or $s=t$ and is $O_{p}(N)$ otherwise. 
	We can decompose  $\mathbf{J}[A'MB]$ as
	\begin{align*}
		 \mathbf{J}[A'MB] & =\frac{1}{(N-1)(N-2)}\sum_{i}\sum_{j}\Big(\sum_{s<i}\sum_{t<j}\Gamma_{ijst}A_{is}B_{jt}+\sum_{s<i}\sum_{t>j}\Gamma_{ijst}A_{is}B_{jt}\\
		& +\sum_{s>i}\sum_{t<j}\Gamma_{ijst}A_{is}B_{jt}+\sum_{s>i}\sum_{t>j}\Gamma_{ijst}A_{is}B_{jt}\Big)(1-I_{(is)(jt)}^{1})\\
		& =\mathcal{J}_{1}+\mathcal{J}_{2}+\mathcal{J}_{3}+\mathcal{J}_{4}
	\end{align*}
	Then we have
	\begin{align*}
		\bar{E}[\mathcal{J}_{1}^{2}] & =\frac{1}{(N-1)^{2}(N-2)^{2}}\sum_{i}\sum_{j}\sum_{k}\sum_{l}\sum_{s<i}\sum_{t<j}\sum_{p<k}\sum_{q<l}\Gamma_{ijst}\Gamma_{klpq}\\
		& \times E\big[A_{is}B_{jt}A_{kp}B_{lq}\big](1-I_{(is)(jt)}^{1})(1-I_{(kp)(lq)}^{1})\\
		& =\frac{1}{(N-1)^{2}(N-2)^{2}}\sum_{i}\sum_{j}\sum_{s<i}\sum_{t<j}\Gamma_{ijst}^{2}\bar{E}[A_{is}^{2}B_{jt}^{2}](1-I_{(is)(jt)}^{1})\\
		& +\frac{1}{(N-1)^{2}(N-2)^{2}}\sum_{i}\sum_{j}\sum_{s<i}\sum_{t<j}\Gamma_{ijst}\Gamma_{stij}\bar{E}[A_{is}B_{is}A_{jt}B_{jt}\big](1-I_{(is)(jt)}^{1})\\
		& =\frac{1}{(N-1)^{2}(N-2)^{2}}\sum_{i}\sum_{s<i}\sum_{t<i}\Gamma_{iist}^{2}\bar{E}[A_{is}^{2}B_{it}^{2}](1-I_{(is)(it)}^{1})\\
		& +\frac{1}{(N-1)^{2}(N-2)^{2}}\sum_{i}\sum_{j\ne i}\sum_{s<(i\wedge j)}\Gamma_{ijss}^{2}\bar{E}[A_{is}^{2}]\bar{E}[B_{js}^{2}](1-I_{(is)(js)}^{1})\\
		& +\frac{1}{N^{2}(N-1)^{2}(N-2)^{2}}\sum_{i}\sum_{j\ne i}\sum_{s<i}\sum_{t<j,t\ne s}N^{2}\Gamma_{ijst}^{2}\bar{E}[A_{is}^{2}]\bar{E}[B_{jt}^{2}](1-I_{(is)(jt)}^{1})\\
		& +\frac{1}{(N-1)^{2}(N-2)^{2}}\sum_{i}\sum_{s<i}\sum_{t<i}\Gamma_{iist}\Gamma_{stii}\bar{E}[A_{is}B_{is}]\bar{E}[A_{it}B_{it}\big](1-I_{(is)(it)}^{1})\\
		& +\frac{1}{(N-1)^{2}(N-2)^{2}}\sum_{i}\sum_{j\ne i}\sum_{s<(i\wedge j)}\Gamma_{ijss}\Gamma_{ssij}\bar{E}[A_{is}B_{is}]\bar{E}[A_{js}B_{js}\big](1-I_{(is)(js)}^{1})\\
		& +\frac{1}{N^{2}(N-1)^{2}(N-2)^{2}}\sum_{i}\sum_{j\ne i}\sum_{s<i}\sum_{t<j,t\ne s}N^{2}\Gamma_{ijst}\Gamma_{stij}\bar{E}[A_{is}B_{is}]\bar{E}[A_{jt}B_{jt}\big](1-I_{(is)(jt)}^{1})\\
		& =O_{p}(N)
	\end{align*}
	Where the last line follows from the properties of $\Gamma_{ijst}$.
	The same result holds for $\bar{E}[\mathcal{J}_{2}^{2}]$, $\bar{E}[\mathcal{J}_{3}^{2}]$,
	and $\bar{E}[\mathcal{J}_{4}^{2}]$, hence $ \mathbf{J}[A'MB]=O_{p}(N^{-1/2})=o_{p}(1)$ by the Markov inequality.
\end{proof}
The following lemma derives the forms of $\mathbf{J}[U^{(0)}]$ and
$\mathbf{J}[U^{(1)}]$.
\begin{lem}
	\label{lem:jack1}Let Assumptions \ref{assu:network} and \ref{assu:beta_jack} hold. Then
	\[
	\mathbf{J}[U^{(0)}]  =U^{(0)} \qquad
	\mathbf{J}[U^{(1)}]  =o_{p}(1)
	\]
\end{lem}
\begin{proof}
	For $U^{(0)}=(\partial_{\beta}\mathcal{L})+(\partial_{\beta\phi'}\bar{\mathcal{L}})\bar{\mathcal{H}}^{-1}\mathcal{S}$
	we can appeal to Lemma \ref{lem:jackS} with $A=(\partial_{\beta}\mathcal{L})=\frac{1}{N-1}\sum_i\sum_{j\ne i}\partial_{\beta}\ell_{ij}$
	and with 
	\[
		A = \frac{1}{N-1} \sum_i \sum_{j\ne i} \big\{ \big[(\partial_{\beta\alpha'}\bar{\mathcal{L}}) \bar{\mathcal{H}}^{-1}_{\alpha\alpha} + (\partial_{\beta\gamma'}\bar{\mathcal{L}}) \bar{\mathcal{H}}^{-1}_{\gamma\alpha}\big]_{i} +
			 \big[(\partial_{\beta\alpha'}\bar{\mathcal{L}}) \bar{\mathcal{H}}^{-1}_{\alpha\gamma} + (\partial_{\beta\gamma'}\bar{\mathcal{L}}) \bar{\mathcal{H}}^{-1}_{\gamma\gamma}\big]_{j} \big\} ( \partial_\pi \ell_{ij})
	\]
	where $[V]_i$ is the $i$-th element of vector $V$.
	
	Now, for $U^{(1)}$ we can appeal to Lemma \ref{lem:jack_r0}. Firstly, we can apply Lemma D.1 from FW16 to show that
	\begin{align*}
		\lVert \bar{\mathcal{H}}^{-1} - \mathcal{D}^{-1} \rVert = O_p(N^{-1})
	\end{align*}
	where $\mathcal{D}$ is a $2N \times 2N$ block diagonal matrix with upper left block given by $\mathcal{D}_{\alpha\alpha}=\text{diag}(-\frac{1}{N-1}\sum_{j\ne i} \bar{E}[\partial_{\pi^2}\ell_{ij}])$ and lower right block by  $\mathcal{D}_{\gamma\gamma}\text{diag}(-\frac{1}{N-1}\sum_{j\ne i} \bar{E}[\partial_{\pi^2}\ell_{ji}])$. It then follows that $ \bar{\mathcal{H}}^{-1}$ satsifies the conditions for the matrix $M$ in Lemma \ref{lem:jack_r0}.
	
	For the first term, $(\partial_{\beta\phi'}\tilde{\mathcal{L}})\bar{\mathcal{H}}^{-1}\mathcal{S}$
	we set $A_{ij}=\partial_{\beta\pi}\tilde{\ell}_{ij}$ and $B_{ij}=\partial_{\pi}\ell_{ij}$.
	For the second term, $(\partial_{\beta\phi'}\bar{\mathcal{L}})\bar{\mathcal{H}}^{-1}\tilde{\mathcal{H}}\bar{\mathcal{H}}^{-1}\mathcal{S}$
	we note that
	\[
	(\partial_{\beta\phi'}\bar{\mathcal{L}})\bar{\mathcal{H}}^{-1}\tilde{\mathcal{H}}=\big((\partial_{\beta\phi'}\bar{\mathcal{L}})\bar{\mathcal{H}}^{-1}\tilde{\mathcal{H}}_{\cdot,\alpha},(\partial_{\beta\phi'}\bar{\mathcal{L}})\bar{\mathcal{H}}^{-1}\tilde{\mathcal{H}}_{\cdot,\gamma}\big)
	\]
	where the $i$-th element of $(\partial_{\beta\phi'}\bar{\mathcal{L}})\bar{\mathcal{H}}^{-1}\tilde{\mathcal{H}}_{\cdot,\alpha}$
	is
	\begin{align*}
		(\partial_{\beta\phi'}\bar{\mathcal{L}})\bar{\mathcal{H}}^{-1}\tilde{\mathcal{H}}_{\cdot,\alpha_{i}} & =(\partial_{\beta\alpha'}\bar{\mathcal{L}})\bar{\mathcal{H}}_{\alpha\alpha}^{-1}\tilde{\mathcal{H}}_{\alpha\alpha_{i}}+(\partial_{\beta\alpha'}\bar{\mathcal{L}})\bar{\mathcal{H}}_{\alpha\gamma}^{-1}\tilde{\mathcal{H}}_{\gamma\alpha_{i}}\\
		& +(\partial_{\beta\gamma'}\bar{\mathcal{L}})\bar{\mathcal{H}}_{\gamma\alpha}^{-1}\tilde{\mathcal{H}}_{\alpha\alpha_{i}}+(\partial_{\beta\gamma'}\bar{\mathcal{L}})\bar{\mathcal{H}}_{\gamma\gamma}^{-1}\tilde{\mathcal{H}}_{\gamma\alpha_{i}}\\
		& =-(\partial_{\beta\alpha'}\bar{\mathcal{L}})\bar{\mathcal{H}}_{\alpha\alpha_{i}}^{-1}\frac{1}{N-1}\sum_{j\ne i}\partial_{\pi^{2}}\tilde{\ell}_{ij}-\frac{1}{N-1}\sum_{j\ne i}(\partial_{\beta\alpha'}\bar{\mathcal{L}})\bar{\mathcal{H}}_{\alpha\gamma_{j}}^{-1}\partial_{\pi^{2}}\tilde{\ell}_{ij}\\
		& -(\partial_{\beta\gamma'}\bar{\mathcal{L}})\bar{\mathcal{H}}_{\gamma\alpha_{i}}^{-1}\frac{1}{N-1}\sum_{j\ne i}\partial_{\pi^{2}}\tilde{\ell}_{ij}-\frac{1}{N-1}\sum_{j\ne i}(\partial_{\beta\gamma'}\bar{\mathcal{L}})\bar{\mathcal{H}}_{\gamma\gamma_{j}}^{-1}\partial_{\pi^{2}}\tilde{\ell}_{ij}
	\end{align*}
	and similarly for elements of $(\partial_{\beta\phi'}\bar{\mathcal{L}})\bar{\mathcal{H}}^{-1}\tilde{\mathcal{H}}_{\cdot,\gamma}$.
	So we can let 
	\[
	A_{ij}=\big((\partial_{\beta\alpha'}\bar{\mathcal{L}})\bar{\mathcal{H}}_{\alpha\alpha_{i}}^{-1}+(\partial_{\beta\alpha'}\bar{\mathcal{L}})\bar{\mathcal{H}}_{\alpha\gamma_{j}}^{-1}+(\partial_{\beta\gamma'}\bar{\mathcal{L}})\bar{\mathcal{H}}_{\gamma\alpha_{i}}^{-1}+(\partial_{\beta\gamma'}\bar{\mathcal{L}})\bar{\mathcal{H}}_{\gamma\gamma_{j}}^{-1}\big)\partial_{\pi^{2}}\tilde{\ell}_{ij}
	\]
	and $B_{ij}=\partial_{\pi}\ell_{ij}$ in Lemma \ref{lem:jack_r0} (we have $A_{ij}$ and $B_{ij}$ mean zero and the moment
	condition also holds by assumption). For the final term, we begin by
	demonstrating that both $\bar{\mathcal{H}}^{-1}\sum_{g}(\partial_{\beta\phi'\phi_{g}}\bar{\mathcal{L}})\bar{\mathcal{H}}^{-1}$and
	$\bar{\mathcal{H}}^{-1}\sum_{g}(\partial_{\phi\phi'\phi_{g}}\bar{\mathcal{L}})\big[\bar{\mathcal{H}}^{-1}(\partial_{\beta\phi}\bar{\mathcal{L}})\big]_{g}\bar{\mathcal{H}}^{-1}$are
	matrices that satisfy the requirements for $M$ in Lemma \ref{lem:jack_r0}. First note that if $M_1$ and $M_2$ are two matrices satisfying the conditions, then 
	their product $M_1 M_2$ will also. Hence we must show that the central parts of the two terms satsify the requirement.
	Here we show the proof for $\sum_{g}(\partial_{\phi\phi'\phi_{g}}\bar{\mathcal{L}})\big[\bar{\mathcal{H}}^{-1}(\partial_{\beta\phi}\bar{\mathcal{L}})\big]_{g}$, the other term can be shown similarly. Firstly,
	\begin{align*}
		\sum_{g}  (\partial_{\phi\phi'\phi_{g}}\bar{\mathcal{L}})\big[\bar{\mathcal{H}}^{-1}(\partial_{\beta\phi}\bar{\mathcal{L}})\big]_{g}
		&=\sum_{s,t}(\partial_{\phi\phi'\alpha_{s}}\bar{\mathcal{L}})\big(\bar{\mathcal{H}}_{\alpha\alpha}^{-1}\big)_{st}(\partial_{\beta\alpha_{t}}\bar{\mathcal{L}})+\sum_{s,t}(\partial_{\phi\phi'\gamma_{s}}\bar{\mathcal{L}})\big(\bar{\mathcal{H}}_{\gamma\alpha}^{-1}\big)_{st}(\partial_{\beta\alpha_{t}}\bar{\mathcal{L}})\\
		& +\sum_{s,t}(\partial_{\phi\phi'\alpha_{s}}\bar{\mathcal{L}})\big(\bar{\mathcal{H}}_{\alpha\gamma}^{-1}\big)_{st}(\partial_{\beta\gamma_{t}}\bar{\mathcal{L}})+\sum_{s,t}(\partial_{\phi\phi'\gamma_{s}}\bar{\mathcal{L}})\big(\bar{\mathcal{H}}_{\gamma\gamma}^{-1}\big)_{st}(\partial_{\beta\gamma_{t}}\bar{\mathcal{L}})
	\end{align*}
	Taking the first of these terms, the $(i,j)$ element is given by $\big[\sum_{s,t} (\partial_{\phi\phi'\alpha_{s}}\bar{\mathcal{L}})\big(\bar{\mathcal{H}}_{\alpha\alpha}^{-1}\big)_{st}(\partial_{\beta\alpha_{t}}\bar{\mathcal{L}})\big]_{ij}$.
	Now if $\phi_{i}=\phi_{j}=\alpha_{i}$ then
	\[
	\sum_{t}(\partial_{\alpha_{i}\alpha_{i}'\alpha_{i}}\bar{\mathcal{L}})(\bar{\mathcal{H}}_{\alpha\alpha})_{it}(\partial_{\beta\alpha_{t}}\bar{\mathcal{L}})  =O_{p}(1)
	\]
	and similarly if $\phi_{i}=\phi_{j}=\gamma_{i}$. If $i\ne j$ then we have $\sum_{t}(\partial_{\alpha_{i}\gamma_{j}'\alpha_{i}}\bar{\mathcal{L}})(\bar{\mathcal{H}}_{\alpha\alpha})_{it}(\partial_{\beta\alpha_{t}}\bar{\mathcal{L}})$ is either zero or $O_{p}(N^{-1})$.
	Identical results apply to the other elements in $\sum_{g}(\partial_{\phi\phi'\phi_{g}}\bar{\mathcal{L}})\big[\bar{\mathcal{H}}^{-1}(\partial_{\beta\phi}\bar{\mathcal{L}})\big]_{g}$
	and hence we can conclude that the matrix has $O_{p}(1)$ diagonal
	elements and $O_{p}(N^{-1})$ off-diagonal elements as required.
	Then, we can apply Lemma \ref{lem:jack_r0} with $A=B=\mathcal{S}$
	to give the result.
\end{proof}
\begin{lem}
	\label{lem:betaJ_approx}Let Assumptions \ref{assu:network} and \ref{assu:beta_jack} hold, and let $\widehat{\beta}_{J}$
	be the jackknife estimator. Then, a first-order approximation
	to the estimator is given by
	\[
	\overline{W}_{N}N(\widehat{\beta}_{J}-\beta_{0})=(\partial_{\beta}\mathcal{L})+(\partial_{\beta\phi'}\bar{\mathcal{L}})\bar{\mathcal{H}}^{-1}\mathcal{S}+o_p(1)=U^{(0)}+o_{p}(1).
	\]
\end{lem}
\begin{proof}
	Recall from Lemma \ref{lem:exp_beta1} that
	\[
	N\overline{W}_{N}(\widehat{\beta}-\beta_{0})=U^{(0)}+U^{(1)}+R_{\beta}+\tilde{R}_{\beta}
	\]
	Since $\overline{W}_{N}$ is fixed across leave-out samples, we can focus
	on the jackknife operator applied to the RHS. By Lemma \ref{lem:jack1}
	we have that $\mathbf{J}[U^{(0)}+U^{(1)}]=U^{(0)}+o_{p}(1)$, while
	in Supplementary Appendix I it is shown that $\mathbf{J}[R_{\beta}]=o_{p}(1)$.
	Finally,
	\[
	\mathbf{J}[\tilde{R}_{\beta}] =(N-1)\tilde{R}_{\beta}-(N-2)\frac{1}{N-1}\sum_{k}\tilde{R}_{\beta,(k)}
	=o_{p}(1)
	\]
	since each remainder term in the above is $o_{p}(N^{-1})$.
\end{proof}

\subsection{Proof of Theorem 1} \label{subsec:thm1proof}

From Lemma \ref{lem:betaJ_approx} we have that
\begin{align*}
	\overline{W}_{N}N(\widehat{\beta}_{J}-\beta_{0}) & =(\partial_{\beta}\mathcal{L})+(\partial_{\beta\phi'}\bar{\mathcal{L}})\bar{\mathcal{H}}^{-1}\mathcal{S}+o_{p}(1)\\
	& =\frac{1}{N-1}\sum_{i}\sum_{j<i}\big(D_{\beta}\ell_{ij}+D_{\beta}\ell_{ji}\big)+o_{p}(1)
\end{align*}
where $D_{\beta}\ell_{ij}=\partial_{\beta}\ell_{ij}-\partial_{\pi}\ell_{ij}\Xi_{ij}$
for $\Xi_{ij} =-\frac{1}{N-1}\sum_{s}\sum_{t\ne s}\Gamma_{ijst}\bar{E}[\partial_{\beta\pi}\ell_{st}]$. 
The result then follows from a standard CLT argument since elements
in the summation are conditionally uncorrelated with bounded second moments. To show consistency
of the plug-in estimator $\widehat{\Omega}_{N}$, we note that by
Assumptions \ref{assu:network} and \ref{assu:beta_jack} (iii), $D_{\beta}\ell_{ij}$ has a first-order
Taylor approximation that is a continuously differentiable function
with respect to the parameters. Then by the continuous mapping theorem
and consistency of the parameters $\lVert\widehat{\beta}-\beta_{0}\rVert \overset{p}{\longrightarrow} 0$
and $\lVert\widehat{\phi}-\phi\rVert_{\infty} \overset{p}{\longrightarrow} 0$, $\widehat{\Omega}_{N} \overset{p}{\longrightarrow} \Omega$
as required (see for example Lemma S.1 in FW16). Similarly, $W$ is also a sample average of functions that are continuously differentiable with respect to parameters, so that $\widehat{W}\to W$, and hence $\widehat{V}\to V$.

For the weighted jackknife we define
\[
W_{(k)}  =\frac{1}{N}\partial_{\beta\beta}\mathcal{L}_{(k)}+\frac{1}{N}(\partial_{\beta\phi'}\mathcal{L}_{(k)})\mathcal{H}_{(k)}^{-1}(\partial_{\beta\phi}\mathcal{L}_{(k)}), \qquad 
W_{J}  =\frac{1}{N-1}\sum_{k}W_{(k)}
\]
and $\widehat{W}_{(k)}$ and $\widehat{W}_J$ the plug-in versions of these. The weighted jackknife estimator is
\begin{align*}
	\widehat{W}_{J}(\widehat{\beta}_{wJ}-\beta) & =(N-1)W_{N}(\widehat{\beta}-\beta)-(N-2)\frac{1}{N-1}\sum_{k}W_{(k)}(\widehat{\beta}_{(k)}-\beta)\\
	& \quad+(N-1)(\widehat{W}_{J}-W_{J})(\widehat{\beta}-\beta)-(N-2)\frac{1}{N-1}\sum_{k}(\widehat{W}_{(k)}-W_{(k)})(\widehat{\beta}_{(k)}-\beta)\\
	& \quad+(N-1)(W_{J}-W_{N})(\widehat{\beta}-\beta)
\end{align*}
In Supplementary Appendix I.3 it is shown
that the second and third lines are both $o_{p}(N^{-1})$. For the first
line we have
\[
(N-1)W_{N}N(\widehat{\beta}-\beta)-(N-2)\frac{1}{N-1}\sum_{k}W_{(k)}N(\widehat{\beta}_{(k)}-\beta)=U^{(0)}+o_{p}(1)
\]
which follows directly from the proof of the first-order expansion
of the unweighted jackknife (note that the expansion for $\widehat{\beta}_{(k)}$
is naturally in terms of $W_{(k)}\widehat{\beta}_{(k)}$ and the proof
of the jackknife involves proof of the statement given here, in combination
with $\lVert W_{(k)}-\overline{W}_{N}\rVert=o_{p}(1)$, which allows us
to write the expansion in Lemma \ref{lem:exp_beta1_k}). Then, we
have
\[
N\widehat{W}_{J}(\widehat{\beta}_{wJ}-\beta)  =U^{(0)}+o_{p}(1)
\]
Supplementary Appendix I.3 also shows that $\lVert\widehat{W}_J-\overline{W}_{N}\rVert=o_{p}(1)$,
from which the statement $N\overline{W}_{N}(\widehat{\beta}_{wJ}-\beta)=U^{(0)}+o_{p}(1)$
follows, and hence the same asymptotic result as the standard jackknife
applies to the weighted jackknife.


\section{Jackknife results for average effects}

We begin by stating a first-order asymptotic expansion for the average
effect estimator that will be used in the proof of Theorem 3. The
proof of this result is provided in the Supplementary Appendix J.
\begin{lem}
	\label{lem:expansion_avg}Let Assumptions 1, 2 and 3 hold. Then
	\begin{align*}
		N(\widehat{\Delta}-\Delta) & =\Big[(\partial_{\beta}\bar{\Delta})+(\partial_{\phi'}\bar{\Delta})\mathcal{\bar{H}}^{-1}(\partial_{\beta\phi}\bar{\mathcal{L}})\Big]\bar{W}_{N}^{-1}\big(U^{(0)}+U^{(1)}\big)\\
		& +N(\partial_{\phi'}\bar{\Delta})\mathcal{\bar{H}}^{-1}\mathcal{S}
		+N(\partial_{\phi'}\tilde{\Delta})\mathcal{\bar{H}}^{-1}\mathcal{S}-N(\partial_{\phi'}\bar{\Delta})\mathcal{\bar{H}}^{-1}\tilde{\mathcal{H}}\mathcal{\bar{H}}^{-1}\mathcal{S}\\
		& +\frac{1}{2}N\mathcal{S}'\bar{\mathcal{H}}^{-1}\Big((\partial_{\phi\phi'}\bar{\Delta})+\sum_{g}(\partial_{\phi\phi'\phi_{g}}\bar{\mathcal{L}})[(\partial_{\phi'}\bar{\Delta})\mathcal{\bar{H}}^{-1}]_{g}\Big)\mathcal{\bar{H}}^{-1}\mathcal{S}
		+R_{\Delta}+\tilde{R}_{\Delta}
	\end{align*}
	where $\lVert R_{\Delta}\rVert=o_{p}(1)$, and $\lVert\tilde{R}_{\Delta}\rVert=o_{p}(N^{-1})$,
	and 
	\begin{align*}
		N(\widehat{\Delta}_{(k)}-\Delta_{(k)}) & =\Big[(\partial_{\beta}\bar{\Delta})+(\partial_{\phi'}\bar{\Delta})\mathcal{\bar{H}}^{-1}(\partial_{\beta\phi}\bar{\mathcal{L}})\Big]\bar{W}_{N}^{-1}\big(U_{(k)}^{(0)}+U_{(k)}^{(1)}\big)\\
		& +N(\partial_{\phi'}\bar{\Delta})\mathcal{\bar{H}}^{-1}\mathcal{S}_{(k)}
		+N(\partial_{\phi'}\tilde{\Delta}_{(k)})\mathcal{\bar{H}}^{-1}\mathcal{S}_{(k)}-N(\partial_{\phi'}\bar{\Delta})\mathcal{\bar{H}}^{-1}\tilde{\mathcal{H}}_{(k)}\mathcal{\bar{H}}^{-1}\mathcal{S}_{(k)}\\
		& +\frac{1}{2}N\mathcal{S}_{(k)}'\bar{\mathcal{H}}^{-1}\Big((\partial_{\phi\phi'}\bar{\Delta})+\sum_{g}(\partial_{\phi\phi'\phi_{g}}\bar{\mathcal{L}})[(\partial_{\phi'}\bar{\Delta})\mathcal{\bar{H}}^{-1}]_{g}\Big)\mathcal{\bar{H}}^{-1}\mathcal{S}_{(k)}
		+R_{(k),\Delta}+\tilde{R}_{(k),\Delta}
	\end{align*}
	where $\lVert R_{(k),\Delta}\rVert=o_{p}(1)$, and $\lVert\tilde{R}_{(k),\Delta}\rVert=o_{p}(N^{-1})$.
\end{lem}

The next set of lemmas demonstrates the effect of the jackknife operation on terms that involve derivatives of the 
average effects parameter.

\begin{lem}
	\label{lem:jack_r0_Delta} Let $1_{ij}^{k}$ satisfy Condition \ref{cond:losets}
	and let $1_{\lambda}^{k}=\prod_{(i,j)\in\lambda}1_{ij}^{k}$ and define
	$\Lambda_{is}=\{\lambda:(i,s)\in\lambda\}$, i.e. the set of all $\lambda$
	containing observation $(i,s)$. Let $A=(A_{\alpha},A_{\gamma})$ and $A_{k}=(A_{\alpha,k},A_{\gamma,k})$ be defined as in Lemma \ref{lem:jack_r0}, and let $B$ and $B_{k}$ be defined as
	\begin{align*}
		B & =\frac{N}{\vert\Lambda_{N}\vert}\big(\{\sum_{s\ne i}\sum_{\lambda\in\Lambda_{is}}B_{\lambda}\}_{i=1,\dots,N},\{\sum_{s\ne j}\sum_{\lambda\in\Lambda_{sj}}B_{\lambda}\}_{i=1,\dots,N}\big)
		=(B_{\alpha},B_{\gamma})\\
		B_{k} & =\frac{N-1}{N-r-1}\frac{N}{\vert\Lambda_{N}\vert}\big(\{\sum_{s\ne i}\sum_{\lambda\in\Lambda_{is}}B_{\lambda}1_{\lambda}^{k}\}_{i=1,\dots,N},\{\sum_{s\ne j}\sum_{\lambda\in\Lambda_{sj}}B_{\lambda}1_{\lambda}^{k}\}_{i=1,\dots,N}\big)
		=(B_{\alpha,k},B_{\gamma,k})
	\end{align*}
	for mean zero $B_{\lambda}$ with bounded fourth moment. Assume that
	$A_{ij}$ is independent of $A_{st}$ for $(s,t)\not\in\{(i,j),(j,i)\}$,
	and independent of $B_{\lambda}$ whenever $\lambda$ does not contain
	either $(i,j)$ or $(j,i)$. Let $M$ be a non-random matrix that has $O_{p}(1)$ elements on
	its diagonal and $O_{p}(N^{-1})$ off-diagonal terms. Then
	\[
	\mathbf{J}[A'MB] =  (N-1)A'MB-\frac{N-2}{N-1}\sum_{k}A_{(k)}'MB_{(k)}=o_{p}(1).
	\]
\end{lem}
\begin{proof}
	Define the sets $\Lambda_{(is)(jt)} = \Lambda_{is} \cap \Lambda_{jt}$ and $\bar{\Lambda}_{is} = \Lambda_{is} \cup \Lambda_{si}$.
	We have
	\begin{align*}
		A'MB & =A_{\alpha}'M_{\alpha\alpha}B_{\alpha}+A_{\alpha}'M_{\alpha\gamma}B_{\gamma}
		+A_{\gamma}'M_{\gamma\alpha}B_{\alpha}+A_{\gamma}'M_{\gamma\gamma}B_{\gamma}
	\end{align*}
	The full sample and leave-out versions of the first term are
	\begin{align*}
		A_{\alpha}'M_{\alpha\alpha}B_{\alpha} & =\frac{N}{(N-1)\vert\Lambda_{N}\vert}\sum_{i,j}\sum_{s\ne i}\sum_{t\ne j}\sum_{\lambda\in\Lambda_{jt}}(M_{\alpha\alpha})_{ij}A_{is}B_{\lambda}\\
		\frac{1}{N-1} & \sum_{k}A_{k,\alpha}'M_{\alpha\alpha}B_{k,\alpha}\\
		& =\frac{N}{(N-r-1)(N-2)\vert\Lambda_{N}\vert}\sum_{k}\sum_{i,j}\sum_{s\ne i}\sum_{t\ne j}\sum_{\lambda\in\Lambda_{(is)(jt)} }(M_{\alpha\alpha})_{ij}A_{is}B_{\lambda}1_{\lambda}^{k}\\
		& +\frac{N}{(N-r-1)(N-2)\vert\Lambda_{N}\vert}\sum_{k}\sum_{i,j}\sum_{s\ne i}\sum_{t\ne j}\sum_{\lambda\in(\Lambda_{jt}\backslash\Lambda_{is})}(M_{\alpha\alpha})_{ij}A_{is}B_{\lambda}1_{is}^{k}1_{\lambda}^{k}
	\end{align*}
	and the jackknifed term is
	\begin{align*}
		\mathcal{J}_{\alpha\alpha}= & (N-1)A_{\alpha}'M_{\alpha\alpha}B_{\alpha}-\frac{N-2}{N-1}\sum_{k}A_{\alpha,k}'M_{\alpha\alpha}B_{\alpha,k}\\
		= & \frac{N}{\vert\Lambda_{N}\vert}\sum_{i,j}\sum_{s\ne i}\sum_{t\ne j}\sum_{\lambda \in \Lambda_{(is)(jt)}}
		(M_{\alpha\alpha})_{ij}A_{is}B_{\lambda}\Big(1-\frac{\sum_{k}1_{\lambda}^{k}}{N-r-1}\Big)\\
		& +\frac{N}{\vert\Lambda_{N}\vert}\sum_{k}\sum_{i,j}\sum_{s\ne i}\sum_{t\ne j}\sum_{\lambda\in(\Lambda_{jt}\backslash\Lambda_{is})}(M_{\alpha\alpha})_{ij}A_{is}B_{\lambda}\Big(1-\frac{\sum_{k}1_{is}^{k}1_{\lambda}^{k}}{N-r-1}\Big)
	\end{align*}
	
	Letting $\Gamma_{ijst}=(M_{\alpha\alpha})_{ij}+(M_{\alpha\gamma})_{it}+(M_{\gamma\alpha})_{sj}+(M_{\gamma\gamma})_{st}$,
	similar computation for the other three elements gives
	\begin{align*}
		\mathbf{J}[A'MB]  = & (N-1)A'MB-\frac{N-2}{N-1}\sum_{k}A_{k}'MB_{k}\\
		= & \frac{N}{\vert\Lambda_{N}\vert}\sum_{i,j}\sum_{s\ne i}\sum_{t\ne j}\sum_{\lambda \in \Lambda_{(is)(jt)}} \Gamma_{ijst}A_{is}B_{\lambda}\Big(1-\frac{\sum_{k}1_{\lambda}^{k}}{N-r-1}\Big)\\
		& \frac{N}{\vert\Lambda_{N}\vert}\sum_{i,j}\sum_{s\ne i}\sum_{t\ne j}\sum_{\lambda\in(\Lambda_{(is)(jt)}\backslash\Lambda_{is})}\Gamma_{ijst}A_{is}B_{\lambda}\Big(1-\frac{\sum_{k}1_{is}^{k}1_{\lambda}^{k}}{N-r-1}\Big)\\
		& +\frac{N}{\vert\Lambda_{N}\vert}\sum_{i,j}\sum_{s\ne i}\sum_{t\ne j}\sum_{\lambda\in(\Lambda_{jt}\backslash\bar{\Lambda}_{is})}\Gamma_{ijst}A_{is}B_{\lambda}\Big(1-\frac{\sum_{k}1_{is}^{k}1_{\lambda}^{k}}{N-r-1}\Big)\\
		& =\mathcal{J}_{1}+\mathcal{J}_{2}+\mathcal{J}_{3}
	\end{align*}
	Next, recall that $\Gamma_{ijst}=O_{p}(1)$ whenever $i=j$ or $s=t$,
	and is $O_{p}(N^{-1})$ otherwise. Then, for the third term
	\begin{align*}
		\bar{E}[\mathcal{J}_{3}^{2}] & =\frac{N^{2}}{\vert\Lambda_{N}\vert^{2}}\sum_{i,j,k,l}\sum_{s\ne i}\sum_{t\ne j}\sum_{p\ne k}\sum_{q\ne l}\sum_{\lambda\in(\Lambda_{jt}\backslash\bar{\Lambda}_{is})}\sum_{\lambda\in(\Lambda_{ql}\backslash\bar{\Lambda}_{pk})}\\
		& \quad\Gamma_{ijst}\Gamma_{klpq}\bar{E}[A_{is}B_{\lambda}A_{pk}B_{\lambda'}]\Big(1-\frac{\sum_{k}1_{is}^{k}1_{\lambda}^{k}}{N-r-1}\Big)\Big(1-\frac{\sum_{k}1_{is}^{k}1_{\lambda'}^{k}}{N-r-1}\Big)\\
		& =\frac{N^{2}}{\vert\Lambda_{N}\vert^{2}}\sum_{i,j,l}\sum_{s\ne i}\sum_{t\ne j}\sum_{q\ne l}\sum_{\lambda\in(\Lambda_{jt}\backslash\bar{\Lambda}_{is})}\sum_{\lambda'\in((\Lambda_{ql}\cap\bar{\Lambda}_{jt})\backslash\bar{\Lambda}_{is})}\\
		& \quad\Gamma_{ijst}\bar{E}[A_{is}(\Gamma_{ilsq}A_{is}+\Gamma_{sliq}A_{si})]\bar{E}[B_{\lambda}B_{\lambda'}]\Big(1-\frac{\sum_{k}1_{is}^{k}1_{\lambda}^{k}}{N-r-1}\Big)\Big(1-\frac{\sum_{k}1_{is}^{k}1_{\lambda'}^{k}}{N-r-1}\Big)\\
		& +\frac{N^{2}}{\vert\Lambda_{N}\vert^{2}}\sum_{i,j,k,l}\sum_{s\ne i}\sum_{t\ne j}\sum_{p\ne k}\sum_{q\ne l}\sum_{\lambda\in((\Lambda_{jt}\cap\bar{\Lambda}_{pk})\backslash\bar{\Lambda}_{is})}\sum_{\lambda\in((\Lambda_{ql}\cap\bar{\Lambda}_{is})\backslash\bar{\Lambda}_{pk})}\\
		& \quad\Gamma_{ijst}\Gamma_{klpq}\bar{E}[A_{is}B_{\lambda'}]\bar{E}[A_{pk}B_{\lambda}]\Big(1-\frac{\sum_{k}1_{is}^{k}1_{\lambda}^{k}}{N-r-1}\Big)\Big(1-\frac{\sum_{k}1_{is}^{k}1_{\lambda'}^{k}}{N-r-1}\Big)
	\end{align*}
	
	Note that $\frac{N}{\vert\Lambda_{N}\vert}\vert\lambda\in(\Lambda_{jt}\backslash\bar{\Lambda}_{is})\vert=O(N^{-1})$,
	while $\frac{N}{\vert\Lambda_{N}\vert}\vert\lambda'\in((\Lambda_{ql}\cap\bar{\Lambda}_{jt})\backslash\bar{\Lambda}_{is})\vert$
	is $O(N^{-1})$ if $(q,l)$ equals $(t,j)$ or $(j,t)$, $O(N^{-2})$
	if either $q\in\{t,j\}$ or $l\in\{t,j\}$ and $O(N^{-3})$ otherwise.
	Also, 
	\[
	\Big(1-\frac{\sum_{k}1_{is}^{k}1_{\lambda}^{k}}{N-r-1}\Big)\Big(1-\frac{\sum_{k}1_{is}^{k}1_{\lambda'}^{k}}{N-r-1}\Big)=O(N^{-2})
	\]
	Combining these facts with, $\Gamma_{ijst}=O_{p}(1)$ whenever $i=j$
	or $s=t$, and $O_{p}(N^{-1})$ otherwise gives $\bar{E}[\mathcal{J}_{3}^{2}]=o_{p}(1)$.
	An almost identical analysis applies to $\mathcal{J}_{1}$ and $\mathcal{J}_{2}$,
	giving the result $\mathbf{J}[A'MB]=o_{p}(1)$.
\end{proof}
The next lemma states the first-order approximation for the jackknife
bias-corrected average effect estimator.
\begin{lem}
	\label{lem:avg_approx1}Let Assumptions 1, 2 and 3 hold and let $\widehat{\Delta}_{J}$
	be the jackknife bias-corrected estimator in (14). Then,
	\begin{align*}
		N(\widehat{\Delta}_{J}-\Delta) & =\Big[(\partial_{\beta}\bar{\Delta})-N(\partial_{\phi'}\bar{\Delta})\mathcal{\bar{H}}^{-1}(\partial_{\beta\phi}\bar{\mathcal{L}})\Big]\bar{W}_{N}^{-1}U^{(0)}
		+N(\partial_{\phi'}\bar{\Delta})\mathcal{\bar{H}}^{-1}\mathcal{S}+o_{p}(1)
	\end{align*}
\end{lem}
\begin{proof}
	First note that 
	\begin{align*}
	\widehat{\Delta}_{J}-\Delta & =(N-1)(\widehat{\Delta}-\Delta)-(N-2)\frac{1}{N-1}\sum_{k=1}^{N-1}(\widehat{\Delta}_{(k)}-\Delta_{(k)})\\
	& \quad+\frac{N-2}{N-1}\sum_{k=1}^{N-1}(\Delta_{(k)}-\Delta)
	\end{align*}
	
	We show that the first term is equal to the expression given in the lemma, while the second term is $o_p(1)$. To analyze the first term, we apply the results in Lemmas \ref{lem:jackS}, \ref{lem:jack_r0},  \ref{lem:jack1}, and \ref{lem:jack_r0_Delta}
	to the expression in the expansion of Lemma \ref{lem:expansion_avg}.
	For the first term, we apply the result in Lemma \ref{lem:jack1}
	to give
	\begin{align*}
		\mathbf{J}\Big[ & \big((\partial_{\beta}\bar{\Delta})-N(\partial_{\phi'}\bar{\Delta})\mathcal{\bar{H}}^{-1}(\partial_{\beta\phi}\bar{\mathcal{L}})\big)\bar{W}_{N}^{-1}\big(U^{(0)}+U^{(1)}\big)\Big]\\
		& =\big((\partial_{\beta}\bar{\Delta})-N(\partial_{\phi'}\bar{\Delta})\mathcal{\bar{H}}^{-1}(\partial_{\beta\phi}\bar{\mathcal{L}})\big)\bar{W}_{N}^{-1}U^{(0)}+o_{p}(1)
	\end{align*}
	Similarly, Lemma \ref{lem:jackS} implies that jackknifing the second
	term gives $N(\partial_{\phi'}\bar{\Delta})\mathcal{\bar{H}}^{-1}\mathcal{S}$.
	
	For the next two terms, we can apply Lemma  \ref{lem:jack_r0_Delta} with $M=\bar{\mathcal{H}}^{-1}$,
	$A=\mathcal{S}$, and  $B=N(\partial_{\phi'}\tilde{\Delta})$,
	and apply Lemma \ref{lem:jack_r0} with $M=\bar{\mathcal{H}}^{-1}$,
	$A=\mathcal{S}$, and $B=N(\partial_{\phi'}\bar{\Delta})\mathcal{\bar{H}}^{-1}\tilde{\mathcal{H}}$ (Lemma 28 in the supplementary appendix shows that $N(\partial_{\phi'}\bar{\Delta})$ is a vector with $O_p(1)$ terms so that $B$ satisfies the requirements of the lemma). 
	
	For the final term, we first show that 
	\[
	N\bar{\mathcal{H}}^{-1}\Big((\partial_{\phi\phi'}\bar{\Delta})+\sum_{g}(\partial_{\phi\phi'\phi_{g}}\bar{\mathcal{L}})[(\partial_{\phi'}\bar{\Delta})\mathcal{\bar{H}}^{-1}]_{g}\Big)\mathcal{\bar{H}}^{-1}
	\]
	satisfies the conditions for $M$ in Lemma \ref{lem:jack_r0}, from
	which we will be able to conclude that the jackknifed term will be
	$o_{p}(1)$.  
	
	By Lemma 28 we have that $N(\partial_{\phi\phi'}\bar{\Delta})$ is a $2\times2$ block-diagonal matrix with each block
	having $O_{p}(1)$ diagonal elements and $O_{p}(N^{-1})$ off-diagonal elements. We also have that $\bar{\mathcal{H}}^{-1}$ satisfies the conditions for $M$ (see Lemma D.1 in \cite{Fernandez-Val2016}), from which it follows that the product $N\bar{\mathcal{H}}^{-1}(\partial_{\phi\phi'}\bar{\Delta})\mathcal{\bar{H}}^{-1}$ satsifies the conditions.
	For $N\sum_{g}(\partial_{\phi\phi'\phi_{g}}\bar{\mathcal{L}})[(\partial_{\phi'}\bar{\Delta})\mathcal{\bar{H}}^{-1}]_{g}$, the proof follows identically to that in Lemma \ref{lem:jack1}, replacing $(\partial_{\beta\phi}\bar{\mathcal{L}})$ by $N(\partial_{\phi'}\bar{\Delta})$ (which by Lemma 28 is made up of $O_p(1)$ terms). 
	
	Finally, in the supplementary appendix
	(J.4) it is shown that $\mathbf{J}[R_{\Delta}]=o_{p}(1)$, and by
	$\tilde{R}_{\Delta,(k)}=o_{p}(N^{-1})$ for each $k$ (and in the
	full sample) we have that $(N-2)\tilde{R}_{\Delta,(k)}=o_{p}(1)$
	and $(N-1)\tilde{R}_{\Delta}=o_{p}(1)$ so that the jackknifed version
	of this term is also $o_{p}(1)$. 
	
	We next demonstrate that $(N-2)\frac{1}{N-1}\sum_{k}(\Delta_{(k)}-\Delta)=o_{p}(1)$. This follows from
	\begin{align*}
		\frac{N-2}{N-1}\sum_{k}(\Delta_{(k)}-\Delta) & =\frac{N-2}{N-1}\sum_{k}\frac{1}{\lvert\Lambda_{N}\rvert}\sum_{\lambda}m_{\lambda}\Big(\frac{N-1}{N-r-1}1_{\lambda}^{k}-1\Big)\\
		& =\frac{N-2}{N-1}\frac{1}{\lvert\Lambda_{N}\rvert}\sum_{\lambda}m_{\lambda}\Big(\frac{N-1}{N-r-1}\sum_{k}1_{\lambda}^{k}-(N-1)\Big)
	\end{align*}
	For most $\lambda$, on the order of $\lvert\Lambda_{N}\rvert$, we have that any $\lambda$ spans $r$ different leaveout sets. Relative to these sets, the number of $\lambda$ in which two observations are constrained to fall within the same leaveout set are of order at most $N^{-1}\lvert\Lambda_{N}\rvert$, and so on. Letting $I_{\lambda,t}$ be an indicator equal to one when $\lambda$ spans $t$ leaveout sets, we can write
\begin{align*}
	\frac{N-2}{N-1}\sum_{k}(\Delta_{(k)}-\Delta) & =\frac{N-2}{N-1}\frac{1}{\lvert\Lambda_{N}\rvert}\sum_{\lambda}m_{\lambda}\Big(\frac{N-1}{N-r-1}\sum_{k}1_{\lambda}^{k}-(N-1)\Big)\\
	& =\frac{N-2}{N-r-1}\frac{1}{\lvert\Lambda_{N}\rvert}\sum_{\lambda}m_{\lambda}1_{\lambda,r-1}\\
	& +\dots+(r-1)\frac{N-2}{N-r-1}\frac{1}{\lvert\Lambda_{N}\rvert}\sum_{\lambda}m_{\lambda}1_{\lambda,1}\\
	& =o_{p}(1)
\end{align*}
	and hence we get the result.

\end{proof}

\subsection{Proof of Theorem 2}

We can decompose $\widehat{\Delta}_{J}-\bar{\Delta}$ into $\widehat{\Delta}_{J}-\bar{\Delta}=(\widehat{\Delta}_{J}-\Delta)+(\Delta-\bar{\Delta})$.
From Lemma \ref{lem:avg_approx1} we have 
\begin{align*}
	N(\widehat{\Delta}_{J}-\Delta) & =\Big[(\partial_{\beta}\bar{\Delta})-(\partial_{\phi'}\bar{\Delta})\mathcal{\bar{H}}^{-1}(\partial_{\beta\phi}\bar{\mathcal{L}})\Big]\bar{W}_{N}^{-1}U^{(0)}
	+N(\partial_{\phi'}\bar{\Delta})\mathcal{\bar{H}}^{-1}\mathcal{S}+o_{p}(1)\\
	&=\frac{1}{N-1}\sum_{i}\sum_{j < i} h_{ij} + o_p(1)
\end{align*}
for $h_{ij}=-N(\partial_{\theta}\bar{\Delta})(\partial_{\theta\theta'}\bar{\mathcal{L}})^{-1}(\partial_{\theta}\ell_{ij}+\partial_{\theta}\ell_{ji})$. 
Similarly, we may write
\begin{align*}
	N(\Delta-\bar{\Delta}) & =\frac{N}{\vert\Lambda_{N}\vert}\sum_{\lambda}(m_{\lambda}-\bar{m}_{\lambda}) \\
	&= \frac{1}{d}\frac{N}{\vert\Lambda_{N}\vert}\sum_i\sum_{j < i} \sum_{\lambda\in \bar{\Lambda}_{ij}}\tilde{m}_{\lambda}
	=\frac{1}{N-1}\sum_{i}\sum_{j < i} \frac{1}{d}s_{ij}
\end{align*}
for $s_{ij}=\frac{(N-p)!}{(N-2)!}\sum_{\lambda\in\bar{\Lambda}_{ij}}\tilde{m}_{\lambda}$,
where $d$ represents the number of times each $\lambda$ appears in $\bar{\Lambda}_{ij}$ over all $i$ and $j$ (this is the number of unique observations that can be obtained by alternating the senders and receivers in $\lambda$). We then have
\[
N(\widehat{\Delta}_{J}-\bar{\Delta})=\frac{1}{N-1}\sum_{i}\sum_{j < i}(h_{ij}+\frac{1}{d} s_{ij})+o_{p}(1)
\]
which is asymptotically normal by a standard CLT. To compute the variance of this term, first note that $\sum_{s}\sum_{t < s}\bar{E}[h_{ij}h_{st}]=\bar{E}[h_{ij}^2]$.
Also,
\begin{align*}
	\frac{1}{d}\sum_{s}\sum_{t < s}\bar{E}[h_{ij}s_{st}]  & =\frac{(N-p)!}{(N-2)!}\sum_{\lambda}\bar{E}[h_{ij}\tilde{m}_{\lambda}]\\
	& =\frac{(N-p)!}{(N-2)!}\sum_{\lambda\in\bar{\Lambda}_{ij}}\bar{E}[h_{ij}\tilde{m}_{\lambda}]
	=\bar{E}[h_{ij}s_{ij}].
\end{align*}
Finally, let $\rho_{\lambda}$ represent the set of agents contained in $\lambda$
(recall that $\lambda$ is a set of observations $(i,j)$). The term
$E[\tilde{m}_{\lambda}\tilde{m}_{\lambda'}]$ can only be non-zero
so long as $\vert\rho_{\lambda}\cap\rho_{\lambda'}\vert\geq2$, since
$\lambda$ and $\lambda'$ are independent unless they have a dyad
in common. Letting $n_{\lambda,\lambda'}=\vert\rho_{\lambda}\cap\rho_{\lambda'}\vert$
represent the number of agents in common between the observations
in $\lambda$ and $\lambda'$, we have
\begin{align*}
	Var\Big(\frac{N}{\vert\Lambda_{N}\vert}\sum_{\lambda}\tilde{m}_{\lambda}\Big) & =\frac{N^{2}}{\vert\Lambda_{N}\vert^{2}}\sum_{\lambda}\sum_{\lambda'}E[\tilde{m}_{\lambda}\tilde{m}_{\lambda'}]\\
	& =\frac{N^{2}}{\vert\Lambda_{N}\vert^{2}}\sum_{\lambda}\sum_{\lambda':n_{\lambda,\lambda'}=2}E[\tilde{m}_{\lambda}\tilde{m}_{\lambda'}]
	+\frac{N^{2}}{\vert\Lambda_{N}\vert^{2}}\sum_{\lambda}\sum_{\lambda':n_{\lambda,\lambda'}>2}E[\tilde{m}_{\lambda}\tilde{m}_{\lambda'}]
\end{align*}
We have that $\vert\lambda':n_{\lambda,\lambda'}=2\vert=\binom{p}{2}\frac{(N-p-1)!}{(N-2p+1)!}=O(N^{p-2})$,
so that the first term above is $O(1)$.
Similar calculations for $n_{\lambda,\lambda'}>3$ show that the second
term is $O(N^{-1})$; e.g. considering the case $n_{\lambda,\lambda'}=3$
(assuming $p\geq3$), we find $\vert\lambda':n_{\lambda,\lambda'}=3\vert=\binom{p}{3}\frac{(N-p-1)!}{(N-2p+1)!}=O(N^{p-3})$
so that $\frac{N^{2}}{\vert\Lambda_{N}\vert^{2}}\sum_{\lambda}\sum_{\lambda':n_{\lambda,\lambda'}=3}E[\tilde{m}_{\lambda}\tilde{m}_{\lambda'}]=O(N^{-1})$.
Therefore
\begin{align*}
	Var\Big(\frac{1}{N-1}\sum_{i}\sum_{j < i}\frac{1}{d} s_{ij}\Big) & =\frac{N^{2}}{\vert\Lambda_{N}\vert^{2}}\sum_{\lambda}\sum_{\lambda':n_{\lambda,\lambda'}=2}\bar{E}[\tilde{m}_{\lambda}\tilde{m}_{\lambda'}]+o(1)\\
	& =\frac{N^{2}}{\vert\Lambda_{N}\vert^{2}}\sum_{i}\sum_{j<i}\sum_{\lambda\in\bar{\Lambda}_{ij}}\sum_{\lambda'\in\bar{\Lambda}_{ij}}\bar{E}[\tilde{m}_{\lambda}\tilde{m}_{\lambda'}]+o(1)\\
	& =\frac{1}{(N-1)^{2}}\sum_{i}\sum_{j<i}\bar{E}\Big[\big(\frac{(N-p)!}{(N-2)!}\sum_{\lambda\in\bar{\Lambda}_{ij}}\tilde{m}_{\lambda}\Big)^{2}\Big]+o(1)\\
	& =\frac{1}{(N-1)^{2}}\sum_{i}\sum_{j<i}\bar{E}[s_{ij}^{2}]+o(1)
\end{align*}

Combining the above results gives
\begin{align*}
	Var\Big(\frac{1}{N-1}\sum_{i}\sum_{j < i}(h_{ij}+\frac{1}{d}s_{ij})\Big) &  =\frac{1}{(N-1)^{2}}\sum_{i}\sum_{j<i}\bar{E}[(h_{ij}+s_{ij})^{2}] + o(1)
\end{align*}

The asymptotic variance
of $N(\widehat{\Delta}_{J}-\bar{\Delta})$ is given by the limit
of this expression. Assumptions 2 (iii) and 3 (ii) guarantee that
both $h_{ij}$ and $s_{ij}$ have first-order approximations that
are continuously differentiable in the parameters, so that the continuous
mapping theorem and the consistency results $\lVert\widehat{\beta}-\beta_{0}\rVert \overset{p}{\longrightarrow} 0$
and $\lVert\widehat{\phi}-\phi\rVert_{\infty} \overset{p}{\longrightarrow} 0$, imply consistency
of the plug-in estimator for $V_{\Delta}$ (see for example Lemma
S.1 in FW16).

\subsection{Proof of Theorem 3}

We begin with a U-statistic representation of $\bar{\Delta}$,
which will allow us to apply standard asymptotic results on U-statistics. As in equation (\ref{eq:U_stat}) 
we have
\[
\bar{\Delta}-\delta =\binom{N}{p}^{-1}\sum_{\tau}u_{\tau}, \qquad u_{\tau}=\frac{1}{p!}\sum_{\lambda:\rho_\lambda=\tau}(\bar{m}_{\lambda}-\delta)
\]
where $\tau$ is a set of $p$ agents, and $u_\tau$ sums over the $p!$ $\lambda$ that share a common set of $p$ agents. The variable $u_{\tau}$ is symmetric function of $\{X_{i},\alpha_{i},\gamma_{i}\}$ for $p$ agents $i$. Since we assume that $\{X_{i},\alpha_{i},\gamma_{i}\}$ are i.i.d. over agents,
$\bar{\Delta}-\delta$ is a U-statistic of order $p$. Standard theory, e.g. Theorem 12.3 in \cite{VanderVaart1998} gives
\[
\sqrt{N}(\bar{\Delta}-\delta) \overset{d}{\longrightarrow} \mathcal{N}(0,p^{2}\zeta_{1})
\]
where  $\zeta_{1}=Cov(u_{\tau},u_{\tau'})$ for $\tau$ and $\tau'$ sharing exactly one agent in common. Define
$t_i = p\binom{N-1}{p-1}^{-1}\sum_{\tau:i\in\tau}u_{\tau} = \frac{(N-p)!}{(N-1)!}\sum_{\lambda:i\in\lambda} (\bar{m}_\lambda-\delta)$. We next show that $Var(t_i) \overset{p}{\longrightarrow} p^2 \zeta_1$, the variance of the average effect. The variance of $t_{i}$ is given by
\begin{align*}
	Var(t_{i}) & =p^2\binom{N-1}{p-1}^{-2}\sum_{\tau:i\in\tau}\sum_{\tau':i\in\tau'}E[u_{\tau}u_{\tau'}]\\
	& =p^2 \binom{N-1}{p-1}^{-2}\sum_{\tau:i\in\tau}\sum_{\tau':\tau\cap\tau'=\{i\}}E[u_{\tau}u_{\tau'}]
	+p^2 \binom{N-1}{p-1}^{-2}\sum_{\tau:i\in\tau}\sum_{\substack{\tau':i\in\tau'\\
			\vert\tau\cap\tau'\vert>1}}E[u_{\tau}u_{\tau'}]\\
	& =p^2\frac{(N-p)!(p-1)!}{(N-1)!}\frac{(N-p)!}{(N-2p+1)!(p-1)!}\zeta_{1}+o(1)\\
	& =p^2\zeta_{1}+o(1)
\end{align*}
where the second to last line follows from the fact that there are $\binom{N-1}{p-1}$ ways to
choose $\tau$ containing $i$, and $\binom{N-p}{p-1}$ ways to choose
the remaining $p-1$ agents in $\tau'$ so that $\tau$ and $\tau'$
share only agent $i$ in common, while the number of $(\tau,\tau')$ that share more than one agent in common is of an order $N$ smaller. Hence, a LLN implies that $\frac{1}{N}\sum_i t_i^2 \overset{p}{\longrightarrow} p^2\zeta_{1}$. Let 
\[
\widehat{t}_i = \frac{(N-p)!}{(N-1)!}\sum_{\lambda:i\in\lambda} (\widehat{\bar{m}}_\lambda-\widehat{\Delta}_N)
\]
Since $\bar{m}_\lambda$ is continuously differentiable with respect to the parameters, consistency of parameters $\lVert\widehat{\phi}-\phi_{0}\rVert_{\infty} \overset{p}{\longrightarrow} 0$, and $\widehat{\Delta}_N \overset{p}{\longrightarrow} \delta$ ensures that $\widehat{t}_i^2 \overset{p}{\longrightarrow} t_i^2$ and hence $\widehat{V}_{\delta} \overset{p}{\longrightarrow} p^2 \zeta_1$.

\newpage

%
%
%

\newpage

\appendix
\setcounter{section}{3}

{\bf \Large \centering Supplementary Appendix to ``A jackknife bias correction for nonlinear network data models with fixed effects''}

This supplementary appendix contains a number of sections with additional results and proofs. 
Section D contains additional empirical results (estimation of a zero-inflated negative binomial model using the trade data from the main paper) as well as additional simulations (a logit model with and without misspecification, the effect of averaging the jackknife over multiple random node labelings, and a model calibrated to the trade data used in the empirical example). 

Sections E.1-E.3 present the expansion of the objective function. Since the expansion is extremely long, I present it in terms of derivatives of various higher-level terms for succinctness. In E.4-E.9 I show the full expressions for the derivatives of these higher-level terms, Section E.10 then shows that expansion for $\widehat{\beta}$. 
In Section F I present a series of lemmas that will be useful in bounding the derivatives displayed in E. Section G then applies these lemmas to bound the derivatives of the higher-level terms. In Section H, these results are in turn applied to bound the derivatives of the objective function.
Section I discusses the expansion for the jackknife estimator and provides the additional results that complement the proof of Theorem 1 in the main appendix.
Section J provides an expansion for the average effect estimator. J.1 provides an expansion for the fixed effect estimates, J.2 provides additional lemmas to bound derivatives of the average effect terms, J.3 gives the expansion and J.4 discusses the expansion applied to the jackknife estimate.

\section{Additional simulations and empirical results} \label{app:additional}

\subsection{Zero-inflated binomial model}

Here we provide another example of how the jackknife correction can
be used. As in the main empirical example, the data are taken from
\cite{Santos2006}, and additional details on their construction can
be found in their paper. The outcome variable $Y_{ij}$ is the \emph{value}
of exports from country $i$ to country $j$. We also use several
covariates to capture homophily in trade relationships, which include:
\emph{log distance}, the log of the distance between the capitals
of the countries\emph{; border}, an indicator of whether the countries
share a common border; \emph{language}, an indicator for whether the
countries share a language; \emph{colonial}, and indicator for whether
either country had colonized the other at some point in history; and
\emph{trade agreement}, an indicator for the presence of a joint preferential
trade agreement between the two countries.

\cite{Burger2009} propose a zero-inflated negative binomial model.
The value of trade between $i$ and $j$ is given by the product of
two variables $Y_{ij}=z_{ij}Y_{ij}^{*}$, where $z_{ij}\in\{0,1\}$
is a binary decision to enter into a trading relationship, while $Y_{ij}^{*}$
is the value of exports that will be realized, conditional on $z_{ij}=1$.
The binary decision is modeled using as a probit function, while the
latent outcome $Y_{ij}^{*}$ is modeled as a negative binomial variable.

In this example, the objective function is given by
\begin{align*}
f(Y_{ij}\vert X_{ij},\theta) & =\mathds{1}\{Y_{ij}=0\}p_{ij}+(1-p_{ij})g(Y_{ij}\vert X_{ij},\theta)
\end{align*}
where $\theta=(\beta,\alpha,\gamma,\nu)$, and
\begin{align*}
p_{ij} & =\Phi(X_{ij}'\beta^{z}+\alpha_{i}^{z}+\gamma_{j}^{z})\\
g(Y_{ij}\vert X_{ij},\theta) & =\frac{\Gamma(Y_{ij}+\nu)}{\Gamma(\nu)Y_{ij}!}\big(\frac{\nu}{\nu+\mu_{ij}}\big)^{\nu}\big(\frac{\mu_{ij}}{\nu+\mu_{ij}}\big)^{Y_{ij}}\\
\mu_{ij} & =\exp(X_{ij}'\beta^{y}+\alpha_{i}^{y}+\gamma_{j}^{y})
\end{align*}
The parameter $\nu$ captures the degree of overdispersion in the
model for $Y_{ij}^{*}$, with $\nu\to\infty$ resulting in a model
with equal mean and variance (as in the Poisson), while smaller $\nu$
lead to greater degrees of dispersion.

Estimates of the parameters in the model are presented in Table \ref{tab:zinb_trade}.
Most variables change by only small amounts after bias correction.
However, the effect of sharing a common border on the probability
of engaging in zero trade changes significantly after bias correction;
while the maximum likelihood estimate suggests that common borders
are important for link formation, the bias corrected estimate is no
longer significant. This suggests that the sharing a common border
has little effect on the likelihood of engaging in trade, but does
affect the volume of trade. The results also suggests a substantial
impact of trade agreements, both on the probability of engaging in
trade and on the volume of trade, a result that is robust to bias
correction. The overdispersion parameter $\nu$ is less than a half,
suggesting a significant amount of overdispersion, i.e. export volumes
have far greater variation across country pairs than suggested by
a Poisson model.

The rightmost column in the table reports the difference between the
MLE and jackknife bias-corrected estimators in terms of their standard
errors. For a number of variables in the model of export volumes,
the change in estimate is around three-quarters of the standard deviation
or more, which has an important impact on inference. To give some
idea of the scale of these biases, a bias of three-quarters of a standard
error is enough for a five per cent test two reject around 12 per
cent of the time (more than twice nominal size), while bias of 1.5
standard errors leads to a rejection rate of more than 30 per cent.

\noindent 
\begin{table}[H]
\noindent \centering\small
\begin{threeparttable}

\noindent \caption{\label{tab:zinb_trade}Zero-inflated negative binomial model}

\noindent %
\begin{tabular}{lcccc} 
\hline 
 & \multicolumn{4}{c}{Coefficients}\tabularnewline
 & MLE & Jackknife & SE & (Bias/SE)\tabularnewline
\hline 
Zero model &  &  &  & \tabularnewline
\emph{log distance} & 0.721 & 0.721 & 0.029 & 0.00\tabularnewline
\emph{border} & 0.628 & 0.157 & 0.120 & 3.93\tabularnewline
\emph{language} & -0.330 & -0.306 & 0.053 & 0.45\tabularnewline
\emph{colonial} & -0.305 & -0.282 & 0.056 & 0.41\tabularnewline
\emph{trade agreement} & -1.168 & -1.126 & 0.180 & 0.24\tabularnewline
\hline 
Volume model &  &  &  & \tabularnewline
\emph{log distance} & -1.243 & -1.218 & 0.033 & 0.77\tabularnewline
\emph{border} & 0.437 & 0.483 & 0.129 & 0.36\tabularnewline
\emph{language} & 0.405 & 0.418 & 0.068 & 0.18\tabularnewline
\emph{colonial} & 0.399 & 0.335 & 0.073 & 0.88\tabularnewline
\emph{trade agreement} & 1.055 & 0.960 & 0.131 & 0.73\tabularnewline
$\nu$ & 0.492 & 0.459 & 0.008 & 4.38\tabularnewline
\hline 
\end{tabular}

\end{threeparttable}
\end{table}

Table \ref{tab:zinb_avg} contains estimates of the average effect
of a regressor on expected export volume, conditional on non-zero
trade, over the distribution of regressors and fixed effects for trading
country pairs. That is, we calculate (for $n_{1}=\sum_{i}\sum_{j\ne i}\mathds{1}\{Y_{ij}>0\}$)
\[
\Delta_{N}=\frac{1}{n_{1}}\sum_{i}\sum_{j\ne i}\mathds{1}\{Y_{ij}>0\}\beta_{dist}\exp(X_{ij}'\beta^{y}+\alpha_{i}^{y}+\gamma_{j}^{y})
\]
for the continuous regressor \emph{log distance} and
\[
\Delta_{N}=\frac{1}{n_{1}}\sum_{i}\sum_{j\ne i}\mathds{1}\{Y_{ij}>0\}\big(\exp({\beta^{y}}'X_{ij}^{(1)}+\alpha_{i}^{y}+\gamma_{j}^{y})-\exp({\beta^{y}}'X_{ij}^{(0)}+\alpha_{i}^{y}+\gamma_{j}^{y})\big)
\]
for binary regressors, where $X_{ij}^{(1)}$ sets the binary regressor
of interest to one for all $(i,j)$ and $X_{ij}^{(0)}$ sets it to
zero (leaving other regressors unchanged). Again, the jackknife bias
correction has an important impact on two of the effects; for example,
the effect of a trade agreement on expected export volumes decreases
by about a quarter (more than a one standard error change in magnitude).
Note that, as is the case here, a small bias in the coefficient on
some variable does not necessarily imply low bias in the corresponding
marginal effect.

\noindent 
\begin{table}[H]
\noindent \centering \small
\begin{threeparttable}

\noindent \caption{\label{tab:zinb_avg}Zero-inflated negative binomial model}

\noindent %
\begin{tabular}{lcccc}
\hline 
 & \multicolumn{4}{c}{Average effects}\tabularnewline
 & MLE & Jackknife & SE & (Bias/SE)\tabularnewline
\hline 
\emph{log distance} & -116.2 & -113.8 & 9.08 & 0.26\tabularnewline
\emph{border} & 47.5 & 50.2 & 16.95 & 0.16\tabularnewline
\emph{language} & 43.4 & 41.0 & 8.64 & 0.28\tabularnewline
\emph{colonial} & 44.4 & 31.0 & 10.00 & 1.34\tabularnewline
\emph{trade agreement} & 140.2 & 107.1 & 28.10 & 1.18\tabularnewline
\hline 
\end{tabular}

\end{threeparttable}
\end{table}

\subsection{Simulations of misspecified logit model}

In order to demonstrate how the jackknife bias correction from the conditional logit estimator under misspecification, I compare the estimators
in simulations of a network formation using both logistic (correctly specified) and normally distributed (misspecified) errors. The simulation design
is the same as that used in the main paper, although the fixed effects distributions are adjusted slightly for the logit model in order to 
match the degree distributions in the probit specification. 

As can be seen in the table, under correct specification, all estimators are approximately unbiased, other than the uncorrected MLE estimator. Suprisingly, the weighted jackknife has lower bias than the CMLE in the sparest design, but this would not be expected to hold in general, or in even sparser settings. In the misspecified case the jackknife and CMLE estimators deliver different results. Bias is reported relative to the true maximizer of the expected value of the log-likelihood (the object targeted by the MLE). It is clear that the jackknife estimators are approximately 
unbiased relative to these pseudo-true values. In contrast, CMLE estimates a different pseudo-true value, since it is based on maximum likelihood estimation of the two-way differences $(Y_{ij}-Y_{il}) - (Y_{kj}-Y_{kl})$. The choice of bias correction can be important under misspecification, and so the researcher should carefully consider to the different interpretations of the pseudo-true parameter values that are estimated by the jackknife relative to the CMLE.

\begin{table}[H]
\centering \small
\caption{Misspecified model (true DGP probit)}
\begin{tabular}{lccccccccc}
	\hline 
	& Bias & SD & RMSE & Rej (5\%) &  & Bias & SD & RMSE & Rej (5\%)\tabularnewline
	\hline 
	& \multicolumn{4}{c}{$-\log(\log n),\log(\log n)$} &  & \multicolumn{4}{c}{$-\log(\log n),0$}\tabularnewline
	\cline{2-10}
	MLE & 0.105 & 0.081 & 0.133 & 0.243 &  & 0.109 & 0.127 & 0.167 & 0.115\tabularnewline
	Jackknife & -0.030 & 0.072 & 0.078 & 0.044 &  & -0.035 & 0.135 & 0.139 & 0.068\tabularnewline
	Jackknife (weighted) & -0.023 & 0.073 & 0.076 & 0.040 &  & -0.010 & 0.121 & 0.122 & 0.059\tabularnewline
	CMLE & 0.087 & 0.110 & 0.140 & 0.032 &  & 0.087 & 0.189 & 0.209 & 0.007\tabularnewline
	\hline 
	& \multicolumn{4}{c}{$-(\log n)^{1/2},0$} &  & \multicolumn{4}{c}{$-\log n,0$}\tabularnewline
	\cline{2-10}
	MLE & 0.179 & 0.370 & 0.411 & 0.101 &  & 1.792 & 2.309 & 2.923 & 0.014\tabularnewline
	Jackknife & -0.123 & 0.587 & 0.600 & 0.085 &  & -1.704 & 4.382 & 4.702 & 0.523\tabularnewline
	Jackknife (weighted) & 0.033 & 0.373 & 0.375 & 0.063 &  & 1.232 & 2.286 & 2.597 & 0.054\tabularnewline
	CMLE & 0.311 & 0.792 & 0.851 & 0.068 &  & 2.455 & 1.481 & 2.868 & 0.770\tabularnewline
	\hline 
\end{tabular}

\end{table}

\begin{table}[H]
\centering \small \caption{Correctly specified model (true DGP logit)}
\begin{tabular}{lccccccccc}
	\hline 
	& Bias & SD & RMSE & Rej (5\%) &  & Bias & SD & RMSE & Rej (5\%)\tabularnewline
	\hline 
	& \multicolumn{4}{c}{$-\log(\log n),\log(\log n)$} &  & \multicolumn{4}{c}{$-\log(\log n),0$}\tabularnewline
	\cline{2-10}
	MLE & 0.057 & 0.058 & 0.081 & 0.143 &  & 0.056 & 0.075 & 0.093 & 0.129\tabularnewline
	Jackknife & -0.002 & 0.054 & 0.054 & 0.034 &  & -0.004 & 0.070 & 0.070 & 0.056\tabularnewline
	Jackknife (weighted) & 0.000 & 0.054 & 0.054 & 0.034 &  & 0.001 & 0.070 & 0.070 & 0.055\tabularnewline
	CMLE & 0.007 & 0.058 & 0.058 & 0.021 &  & 0.009 & 0.072 & 0.073 & 0.046\tabularnewline
	\hline 
	& \multicolumn{4}{c}{$-(\log n)^{1/2},0$} &  & \multicolumn{4}{c}{$-\log n,0$}\tabularnewline
	\cline{2-10}
	MLE & 0.069 & 0.099 & 0.121 & 0.119 &  & 0.128 & 0.227 & 0.261 & 0.116\tabularnewline
	Jackknife & -0.008 & 0.091 & 0.091 & 0.040 &  & -0.048 & 0.264 & 0.269 & 0.046\tabularnewline
	Jackknife (weighted) & 0.001 & 0.092 & 0.092 & 0.043 &  & 0.002 & 0.211 & 0.211 & 0.046\tabularnewline
	CMLE & 0.014 & 0.097 & 0.098 & 0.026 &  & 0.040 & 0.219 & 0.223 & 0.014\tabularnewline
	\hline 
\end{tabular}

\end{table}

\subsection{Averaging over multiple random node labelings}
Table \ref{tab:label} investigates the impact of computing the average of jackknife estimators over a number of random node labelings. The table shows the bias, standard deviation, and 5-95th percentile range of estimators that average 1, 10 or 50 different jackknife estimates. In each case, the bias of the estimator is essentially unaffected. In the dense network cases it is clear that the additional averaging also has very little impact on the dispersion of the estimator, with standard deviations being very similar across the estimators. In the sparsest design, the standard deviation of the jackknife is around 5\% smaller when averaging over 10 relabelings (increasing this to 50 seems to have little effect). The weighted jackknife appears unaffected by the relabeling in any of the designs. Given these findings, it is suggested that in less dense settings, recomputing the jackknife a small number of times may be useful to ensure that outlier cases are avoided, while in denser settings this is likely unnecessary.

\begin{table}[H]
\centering\caption{\label{tab:label}Simulations using multiple node labelings} \small
\begin{threeparttable}
\begin{tabular}{llcccccccc}
	\hline 
	&  & \multicolumn{4}{c}{Jackknife} & \multicolumn{4}{c}{Weighted jackknife}\tabularnewline
	\cline{3-10}
	& $k$ & A & B & C & D & A & B & C & D\tabularnewline
	\hline 
	Bias & 1 & -0.009 & -0.011 & -0.040 & -0.513 & -0.006 & -0.001 & 0.008 & 0.295\tabularnewline
	& 10 & -0.009 & -0.011 & -0.038 & -0.458 & -0.006 & -0.001 & 0.008 & 0.305\tabularnewline
	& 50 & -0.009 & -0.011 & -0.038 & -0.457 & -0.006 & -0.001 & 0.007 & 0.307\tabularnewline
	\hline 
	SD & 1 & 0.039 & 0.057 & 0.163 & 1.098 & 0.039 & 0.056 & 0.143 & 0.533\tabularnewline
	& 10 & 0.039 & 0.057 & 0.147 & 1.041 & 0.039 & 0.056 & 0.125 & 0.517\tabularnewline
	& 50 & 0.039 & 0.057 & 0.146 & 1.042 & 0.039 & 0.056 & 0.123 & 0.516\tabularnewline
	\hline 
	$5-95^{**}$ & 1 & 0.131 & 0.176 & 0.246 & 3.027 & 0.132 & 0.179 & 0.286 & 1.244\tabularnewline
	& 10 & 0.131 & 0.174 & 0.242 & 2.961 & 0.132 & 0.180 & 0.283 & 1.266\tabularnewline
	& 50 & 0.131 & 0.174 & 0.237 & 2.981 & 0.132 & 0.180 & 0.284 & 1.264\tabularnewline
	\hline 
\end{tabular}
\begin{tablenotes}
	\item *A = $(-\log(\log n), \log(\log n))$, B=$(-\log(\log n),0)$, C=$(-(\log n)^{1/2},0)$,  and D=$(-\log n,0)$.
	\item **difference between the 95th and 5th percentiles of simulated distribution
\end{tablenotes}
\end{threeparttable}
\end{table}

\subsection{Simulations calibrated to empirical data}

Table \ref{tab:emp_sims} reports simulation results from a data generating process calibrated to the empirical example presented in the main paper (the trade data of Silva and Tenreyo, 2006). To generate the data sets, we draw random samples of $N=(50,70,100,134)$ countries without replacement (from the 134 total countries) and construct the network links according to $Y^*_{ij} = 1\{U_{ij}\leq p_{ij}\}$, where $U_{ij}$ is a uniform random variable and $p_{ij}=\Phi(\beta'X_{ij}+\alpha_i +\gamma_j)$ are the fitted probabilities estimated using the full sample of 134 countries. For the only continuous regressor, log distance, the bias in the MLE is close to one standard deviation in magnitude for each of the sample sizes so that the simulated rejection rate is two or three times larger than the nominal 5\%. Both the standard and weighted jackknife estimators achieve low bias and rejection rates below the nominal rate. For the remaining variables, it appears that the bias is only small  relative to the standard error so that he MLE estimator only slightly over-rejects, which matches the results found in the empirical analysis. The jackknife estimator performs well in all cases expecting the estimation of the coefficient on the trade agreement dummy variable when the sample size is smaller ($N=50$). Only around 2\% of all country pairs have a trade agreement in the data set, so that in the $N=50$ samples the majority of countries have no agreements with any other country in the samples. This is of course worsened in the leaveout subsamples used in the jackknife and highlights that extreme lack of variation in regressors can also be problematic for jackknife bias correction.  Nonetheless, the weighted jackknife still has a reasonable reject rate of 7.2\%.

\begin{table}
	\caption{Simulation results - trade calibration} \label{tab:emp_sims}
	\centering \small
	\begin{threeparttable}
	\begin{tabular}{ccccccccccc}
		\hline 
		&  & \multicolumn{3}{c}{Bias} & \multicolumn{3}{c}{SD} & \multicolumn{3}{c}{Rej (5\%)}\tabularnewline
		\cline{3-11}
		Coefficient & $N$ & MLE & Jack. & Jack.(w) & MLE & Jack. & Jack.(w) & MLE & Jack. & Jack.(w)\tabularnewline
		\hline 
		log dist. & 50 & -0.072 & 0.005 & -0.006 & 0.088 & 0.078 & 0.079 & 0.168 & 0.048 & 0.040\tabularnewline
		& 70 & -0.041 & 0.007 & 0.002 & 0.058 & 0.054 & 0.054 & 0.108 & 0.040 & 0.036\tabularnewline
		& 100 & -0.031 & 0.000 & -0.002 & 0.038 & 0.037 & 0.037 & 0.132 & 0.040 & 0.036\tabularnewline
		& 134 & -0.021 & 0.001 & 0.000 & 0.027 & 0.027 & 0.027 & 0.108 & 0.036 & 0.036\tabularnewline
		\hline 
		language & 50 & 0.034 & 0.003 & 0.008 & 0.169 & 0.153 & 0.156 & 0.068 & 0.032 & 0.032\tabularnewline
		& 70 & 0.023 & 0.003 & 0.005 & 0.105 & 0.097 & 0.098 & 0.040 & 0.036 & 0.036\tabularnewline
		& 100 & 0.003 & -0.009 & -0.008 & 0.069 & 0.067 & 0.067 & 0.044 & 0.040 & 0.044\tabularnewline
		& 134 & 0.008 & -0.001 & -0.001 & 0.051 & 0.050 & 0.050 & 0.044 & 0.040 & 0.040\tabularnewline
		\hline 
		colony & 50 & 0.021 & -0.009 & -0.005 & 0.182 & 0.164 & 0.167 & 0.076 & 0.044 & 0.044\tabularnewline
		& 70 & 0.004 & -0.015 & -0.014 & 0.117 & 0.109 & 0.110 & 0.076 & 0.052 & 0.060\tabularnewline
		& 100 & 0.021 & 0.008 & 0.009 & 0.074 & 0.071 & 0.071 & 0.044 & 0.040 & 0.040\tabularnewline
		& 134 & 0.004 & -0.005 & -0.004 & 0.056 & 0.055 & 0.055 & 0.060 & 0.064 & 0.056\tabularnewline
		\hline 
		border & 50 & -0.046 & 0.002 & 0.016 & 0.384 & 0.402 & 0.365 & 0.072 & 0.048 & 0.052\tabularnewline
		& 70 & -0.040 & 0.006 & 0.000 & 0.242 & 0.227 & 0.229 & 0.048 & 0.036 & 0.036\tabularnewline
		& 100 & -0.034 & -0.004 & -0.006 & 0.173 & 0.166 & 0.167 & 0.060 & 0.056 & 0.060\tabularnewline
		& 134 & -0.038 & -0.016 & -0.017 & 0.124 & 0.121 & 0.121 & 0.068 & 0.064 & 0.068\tabularnewline
		\hline 
		trade agr. & 50 & 0.454 & -3.831 & -3.018 & 1.319 & 9.012 & 9.094 & 0.064 & 0.180 & 0.072\tabularnewline
		& 70 & 0.154 & -0.614 & -0.450 & 0.574 & 4.067 & 4.062 & 0.052 & 0.044 & 0.024\tabularnewline
		& 100 & 0.058 & -0.023 & 0.019 & 0.240 & 0.293 & 0.236 & 0.024 & 0.016 & 0.016\tabularnewline
		& 134 & 0.031 & -0.011 & 0.000 & 0.179 & 0.172 & 0.174 & 0.060 & 0.056 & 0.052\tabularnewline
		\hline 
	\end{tabular}
	\end{threeparttable}
\end{table}

\section{Asymptotic expansion} \label{SA:expansion}

\subsection{Application of results in Fern\'andez-Val and Weidner (2016)}
\cite{Fernandez-Val2016} (FW16) derive asymptotic expansion for two-fixed effect models. The results in this paper are based on extensions of these expansion to higher-order. I make use of the many of the results that have already been proven in that paper (sometimes using the stronger moment conditions assumed in this paper to allow for the higher-order terms). \cite{Dzemski2019} provides details on how the FW16 results can be applied to a network model. In particular, note that under Assumptions 1 and 2 in this paper, Assumption B.1 in FW16 holds with $q=15$ and $\epsilon=1/30$ (see for example proofs in Lemma S.7 of FW16, or in the results or Lemma A.2 in \cite{Dzemski2019}). A key result that follows from this is consistency of both the common parameters and fixed effects (see Corollary B.3 in FW16). More details on these key results can be found in Section \ref{sec:basic_lemmas}.

\subsection*{Notation}
The notation in the Supplementary Appendix follows that in the main paper. In addition, we 
define the set of norms which will be used to bound expansion terms. We follow FW16 in using the Euclidean
norm $\lVert\cdot\rVert$ for $\dim\beta$ vectors, and the norm induced
by the Euclidean norm for matrices and tensors, i.e. 
\[
\lVert\partial_{\beta\beta\beta}\mathcal{L}(\beta,\phi)\rVert=\max_{u,v\in\mathbb{R}^{\dim\beta}:\lVert u\rVert=1,\lVert v\rVert=1}\lVert\sum_{k,l=1}^{\dim\beta}u_{k}v_{l}\partial_{\beta\beta_{k}\beta_{l}}\mathcal{L}(\beta,\phi)\rVert
\]
In the proofs we sometimes take $\beta$ to be a scalar to simplify
notation, although the results apply to any vector of fixed size.
Since the number of fixed effect parameters in the model grows with
$N$, the choice of norm for $\dim\phi$ vectors and matrices is important.
Following FW16, we choose the $\ell_{q}$-norm for $\dim\phi$ vectors
and the corresponding induced norms for matrices and tensors 
\[
\lVert\partial_{\phi\phi\phi}\mathcal{L}(\beta,\phi)\rVert=\max_{u,v\in\mathbb{R}^{\dim\phi}:\lVert u\rVert=1,\lVert v\rVert=1}\lVert\sum_{k,l=1}^{\dim\phi}u_{k}v_{l}\partial_{\phi\phi_{k}\phi_{l}}\mathcal{L}(\beta,\phi)\rVert_{q}
\]
See FW16 for more details on these norms. When omit the subscript $q$ whenever $q=2$ for vectors and matrices, i.e.\ for the 
Euclidean norm and spectral norm resepctively.
We define the sets $\mathcal{B}(r,\beta_{0})=\{\beta:\lVert\beta-\beta_{0}\rVert\leq r\}$,
for $r>0$, and $\mathcal{B}_{q}(r,\phi_{0})=\{\phi:\lVert\phi-\phi_{0}\rVert_{q}\leq r\}$.

\subsection{Expansion of the objective function}

The asymptotic expansion is based on that derived in FW16. For convenience in the
jackknife results that follow, we make a small change to the Legendre
transformed objective function described in S.3.1 of FW16, by also
creating a dual for the common parameter $\beta$. We describe the
transform only briefly as it differs little from FW16.

Let $\mathcal{B}(r_{\beta},\beta_{0})\times\mathcal{B}_{q}(r_{\phi},\phi_{0})$
be a shrinking neighborhood of the true parameters, and define the
Legendre transformation of the objective function
\begin{align*}
\mathcal{L}^{*}(b,s) & =\max_{\beta\in\mathcal{B}(r_{\beta},\beta_{0}),\phi\in\mathcal{B}_{q}(r_{\phi},\beta_{0})}\big[\mathcal{L}(\beta,\phi)-b\beta-\phi's\big]\\
\big(B(b,s),\Phi(b,s)\big) & =\arg\max_{\beta\in\mathcal{B}(r_{\beta},\beta_{0}),\phi\in\mathcal{B}_{q}(r_{\phi},\beta_{0})}\big[\mathcal{L}(\beta,\phi)-b\beta-\phi's\big]
\end{align*}

Since $B,\Phi$ are the optimal values defined by the first-order
conditions, we have $\mathcal{L}^{*}(b,s)=\mathcal{L}(B(b,s),\Phi(b,s))-B(b,s)b-\Phi(b,s)'s$,
and the first-order conditions
\begin{align*}
\partial_{\beta}\mathcal{L}(B,\Phi) & =b\\
\partial_{\phi}\mathcal{L}(B,\Phi) & =s
\end{align*}

We can differentiate these two FOCs with respect to $b$ and $s$
to give
\begin{equation}
\begin{aligned}\partial_{\beta\beta}\mathcal{L}(B,\Phi)(\partial_{b}B)+\partial_{\beta\phi'}\mathcal{L}(B,\Phi)(\partial_{b}\Phi) & =1\\
\partial_{\beta\phi}\mathcal{L}(B,\Phi)(\partial_{b}B)+\partial_{\phi\phi'}\mathcal{L}(B,\Phi)(\partial_{b}\Phi) & =0\\
\partial_{\beta\beta}\mathcal{L}(B,\Phi)(\partial_{s'}B)+\partial_{\beta\phi'}\mathcal{L}(B,\Phi)(\partial_{s'}\Phi) & =0\\
\partial_{\beta\phi}\mathcal{L}(B,\Phi)(\partial_{s'}B)+\partial_{\phi\phi'}\mathcal{L}(B,\Phi)(\partial_{s'}\Phi) & =I
\end{aligned}
\label{eq:FOC1}
\end{equation}
Using these equations, we may solve for the derivatives of $B(b,s)$
and $\Phi(b,s)$. Let $\mathcal{H}=-\partial_{\phi\phi'}\mathcal{L}$,
and define
\[
W=-\Big(\partial_{\beta\beta}\mathcal{L}+(\partial_{\beta\phi'}\mathcal{L})\mathcal{H}^{-1}(\partial_{\beta\phi}\mathcal{L})\Big)
\]
The objects $W$ and $\mathcal{H}$ will play important roles in the
expansion. Then we have
\begin{equation}
\begin{aligned}\partial_{b}B & =-W^{-1}\\
\partial_{b}\Phi & =-\mathcal{H}^{-1}(\partial_{\beta\phi}\mathcal{L})W^{-1}\\
\partial_{s'}B & =-W^{-1}(\partial_{\beta\phi'}\mathcal{L})\mathcal{H}^{-1}=\partial_{b}\Phi'\\
\partial_{s'}\Phi & =-\mathcal{H}^{-1}-W^{-1}\mathcal{H}^{-1}(\partial_{\beta\phi}\mathcal{L})(\partial_{\beta\phi}\mathcal{L})'\mathcal{H}^{-1}
\end{aligned}
\label{eq:dB1}
\end{equation}
We compute the terms in the asymptotic expansions for a
scalar parameter $\beta$. Since $\beta$ has finite dimension, this
does not affect the asymptotic order of the terms, but it does significantly
simplify the notation. Recall that by the FOC of the objective function
we have $\partial_{b}\mathcal{L}^{*}(0,0)=-\widehat{\beta}$ and $\partial_{b}\mathcal{L}^{*}(\mathcal{S}_{\beta},\mathcal{S})=-\beta_{0}$.
Expanding around this point gives

\begin{align*}
	\partial_{b}\mathcal{L}^{*}(0,0) & =\partial_{b}\mathcal{L}^{*}-(\partial_{bb}\mathcal{L}^{*})\mathcal{S}_{\beta}-(\partial_{bs'}\mathcal{L}^{*})\mathcal{S}\\
	& +\frac{1}{2}(\partial_{bbb}\mathcal{L}^{*})\mathcal{S}_{\beta}^{2}+(\partial_{bbs'}\mathcal{L}^{*})\mathcal{S}\mathcal{S}_{\beta}+\frac{1}{2}\mathcal{S}'(\partial_{bss'}\mathcal{L}^{*})\mathcal{S}\\
	& -\frac{1}{6}(\partial_{b^{4}}\mathcal{L}^{*}(\bar{b},\bar{s}))\mathcal{S}_{\beta}^{3}-\frac{1}{2}(\partial_{bbbs'}\mathcal{L}^{*}(\bar{b},\bar{s}))\mathcal{S}\mathcal{S}_{\beta}^{2}\\
	& -\frac{1}{2}\mathcal{S}'(\partial_{bbss'}\mathcal{L}^{*})\mathcal{S}\mathcal{S}_{\beta} -\frac{1}{6}\sum_{g}\mathcal{S}'(\partial_{bss's_{g}}\mathcal{L}^{*})\mathcal{S}\mathcal{S}_{g}\\
	& +\frac{1}{4}\mathcal{S}'(\partial_{bbbss'}\mathcal{L}^{*}(\bar{b},\bar{s}))\mathcal{S}\mathcal{S}_{\beta}^{2}+\frac{1}{6}\sum_{g}\mathcal{S}'(\partial_{bbss's_{g}}\mathcal{L}^{*}(\bar{b},\bar{s}))\mathcal{S}\mathcal{S}_{g}\mathcal{S}_{\beta}\\
	& +\frac{1}{24}\sum_{f,g}\mathcal{S}'(\partial_{bss's_{f}s_{g}}\mathcal{L}^{*})\mathcal{S}\mathcal{S}_{f}\mathcal{S}_{g}-\frac{1}{24}\sum_{f,g}\mathcal{S}'(\partial_{bbss's_{f}s_{g}}\mathcal{L}^{*}(\bar{b},\bar{s}))\mathcal{S}\mathcal{S}_{f}\mathcal{S}_{g}\mathcal{S}_{\beta}\\
	& -\frac{1}{120}\sum_{e,f,g}\mathcal{S}'(\partial_{bss's_{e}s_{f}s_{g}}\mathcal{L}^{*}(\bar{b},\bar{s}))\mathcal{S}\mathcal{S}_{e}\mathcal{S}_{f}\mathcal{S}_{g}\\
\end{align*}

where $(\bar{b},\bar{s})$ are intermediate values between $(0,0)$
and $(\mathcal{S}_{\beta},\mathcal{S})$. The next section gives expressions for each of these derivatives in terms of the derivatives of the original objective function $\mathcal{L}$.

\subsection{\label{sec:expressions_ind} Detailed expansion terms}

This section contains expressions for each of the derivatives of $\mathcal{L}^*$  used in the expansion in terms of derivatives of the objective function $\mathcal{L}$. Since the expressions contain a large number of terms, in order to simplify notation we first define some new terms and give the expressions in terms of derivatives of these terms. Subsequent subsections then give expressions for each of the derivatives of these compound terms.

Let $\mathcal{E}^{s}=\partial_{\beta^{s}\phi\phi'}\mathcal{L}$, $\mathcal{F}^{s,t}=(\partial_{\beta^{s}\phi'}\mathcal{L})\mathcal{H}^{-1}(\partial_{\beta^{t}\phi}\mathcal{L})$
and $\mathcal{F}=\mathcal{F}^{1,1}$. Also, let 
\begin{align*}
	\mathcal{F}^{s,(r),t} & =(\partial_{\beta^{s}\phi'}\mathcal{L})\mathcal{H}^{-1}\mathcal{E}^{r}\mathcal{H}^{-1}(\partial_{\beta^{t}\phi}\mathcal{L})\\
	\mathcal{F}^{s,(r_{1},r_{2}),t} & =(\partial_{\beta^{s}\phi'}\mathcal{L})\mathcal{H}^{-1}\mathcal{E}^{r_{1}}\mathcal{H}^{-1}\mathcal{E}^{r_{2}}\mathcal{H}^{-1}(\partial_{\beta^{t}\phi}\mathcal{L})
\end{align*}
Finally, we define
\begin{align*}
	\mathcal{G}^{s,t} & =\mathcal{H}^{-1}(\partial_{\beta^{s}\phi}\mathcal{L})(\partial_{\beta^{t}\phi'}\mathcal{L})\mathcal{H}^{-1}
\end{align*}
and $\mathcal{G}=\mathcal{G}^{1,1}$. We denote derivatives of these
terms with respect to $b$ or $s$ using subscripts, e.g. $\mathcal{G}_{b}=\partial\mathcal{G}/\partial b$,
$\mathcal{G}_{s_{g}}=\partial\mathcal{G}/\partial s_{g}$ where $s_{g}$
is the $g$-th element of the vector $s$, etc.

We take derivatives of the transformed likelihood, evaluated at $b^{*}=\partial_{\beta}\mathcal{L}(\beta_{0},\phi_{0})=\mathcal{S}_{\beta}$,
and $s^{*}=\partial_{\phi}\mathcal{L}(\beta_{0},\phi_{0})=\mathcal{S}$.
When evaluated at these values, the arguments are not written out,
e.g. $\partial_{b}\mathcal{L}^{*}=\partial_{b}\mathcal{L}^{*}(b^{*},s^{*})$.
Note that $B(b^{*},s^{*})=\beta_{0}$ and $\Phi(b^{*},s^{*})=\phi_{0}$,
since they satisfy the first order conditions
\begin{align*}
	\partial_{\beta}\mathcal{L}(B(b^{*},s^{*}),\Phi(b^{*},s^{*})) & =b^{*}=\partial_{\beta}\mathcal{L}(\beta_{0},\phi_{0})\\
	\partial_{\phi}\mathcal{L}(B(b^{*},s^{*}),\Phi(b^{*},s^{*})) & =s^{*}=\partial_{\phi}\mathcal{L}(\beta_{0},\phi_{0})
\end{align*}

Differentiating $\mathcal{L}^{*}(b,s)=\mathcal{L}(B(b,s),\Phi(b,s))-B(b,s)b-s'\Phi(b,s)$
with respect to $b$ and $s$ and evaluating at $(b^{*},s^{*})$ gives
\begin{align*}
	\partial_{b}\mathcal{L}^{*} & =\big((\partial_{\beta}\mathcal{L})-b\big)(\partial_{b}B)+\big((\partial_{\phi'}\mathcal{L})-s'\big)(\partial_{b}\Phi)-B(b,s)\\
	& =-\beta_{0}\\
	\partial_{s'}\mathcal{L}^{*} & =\big((\partial_{\beta}\mathcal{L})-b\big)(\partial_{s'}B)+\big((\partial_{\phi'}\mathcal{L})-s'\big)(\partial_{s'}\Phi)-\Phi(b,s)'\\
	& =-\phi_{0}
\end{align*}
Taking second derivatives of these functions, and applying the identities
in (\ref{eq:dB1})
\begin{align*}
	\partial_{bb}\mathcal{L}^{*} & =\big((\partial_{\beta\beta}\mathcal{L})(\partial_{b}B)+(\partial_{\beta\phi'}\mathcal{L})(\partial_{b}\Phi)-1\big)(\partial_{b}B)\\
	& +\big((\partial_{\beta}\mathcal{L})-b\big)(\partial_{bb}B)\\
	& +\big((\partial_{\beta\phi'}\mathcal{L})(\partial_{b}B)+(\partial_{b}\Phi)'(\partial_{\phi\phi'}\mathcal{L})\big)(\partial_{b}\Phi)\\
	& +\big((\partial_{\phi'}\mathcal{L})-s'\big)(\partial_{bb}\Phi)\\
	& -(\partial_{b}B)\\
	& =-(\partial_{b}B)\\
	& =W^{-1}\\
	\partial_{bs'}\mathcal{L}^{*} & =\big((\partial_{\beta\beta}\mathcal{L})(\partial_{s'}B)+(\partial_{\beta\phi'}\mathcal{L})(\partial_{s'}\Phi)\big)(\partial_{b}B)\\
	& +\big((\partial_{\beta}\mathcal{L})-b\big)(\partial_{bs'}B)\\
	& +\big((\partial_{\beta\phi'}\mathcal{L})(\partial_{s'}B)+(\partial_{\phi\phi'}\mathcal{L})(\partial_{s}\Phi)-I\big)(\partial_{b}\Phi)\\
	& +\big((\partial_{\phi'}\mathcal{L})-s'\big)(\partial_{bs'}\Phi)\\
	& -(\partial_{s'}B)\\
	& =-(\partial_{s'}B)\\
	& =W^{-1}(\partial_{\beta\phi'}\mathcal{L})\mathcal{H}^{-1}\\
	\partial_{ss'}\mathcal{L}^{*} & =\big((\partial_{\beta}\mathcal{L})-b\big)(\partial_{ss'}B)+\sum_{g}\big((\partial_{\phi'}\mathcal{L})-s'\big)_{g}(\partial_{bss'}\Phi_{g})-\partial_{s'}\Phi\\
	& =-\partial_{s'}\Phi\\
	& =\mathcal{H}^{-1}+W^{-1}\mathcal{G}
\end{align*}

We proceed by taking further derivatives of these expressions, substituting
derivatives of the identities in (\ref{eq:dB1}).

Third derivatives:
\begin{align*}
	\partial_{bbb}\mathcal{L}^{*} & =-W^{-2}W_{b}\\
	\partial_{bbs}\mathcal{L}^{*} & =-W^{-2}W_{s}\\
	\partial_{bss'}\mathcal{L}^{*} & =-\mathcal{H}^{-1}\mathcal{H}_{b}\mathcal{H}^{-1}-W^{-2}W_{b}\mathcal{G}+W^{-1}\mathcal{G}_{b}\\
	\partial_{ss's_{g}}\mathcal{L}^{*} & =-\mathcal{H}^{-1}\mathcal{H}_{s_{g}}\mathcal{H}^{-1}-W^{-2}W_{s_{g}}\mathcal{G}+W^{-1}\mathcal{G}_{s_{g}}
\end{align*}

Fourth derivatives:
\begin{align*}
	\partial_{b^{4}}\mathcal{L}^{*} & =2W^{-3}W_{b}^{2}-W^{-2}W_{bb}\\
	\partial_{bbbs}\mathcal{L}^{*} & =2W^{-3}W_{b}W_{s}-W^{-2}W_{bs}\\
	\partial_{bbss'}\mathcal{L}^{*} & =\partial_{bbss'}\mathcal{L}_{(1)}^{*}+\partial_{bbss'}\mathcal{L}_{(2)}^{*}\\
	\partial_{bbss'}\mathcal{L}_{(1)}^{*} & =2\mathcal{H}^{-1}\mathcal{H}_{b}\mathcal{H}^{-1}\mathcal{H}_{b}\mathcal{H}^{-1}-\mathcal{H}^{-1}\mathcal{H}_{bb}\mathcal{H}^{-1}\\
	\partial_{bbss'}\mathcal{L}_{(2)}^{*} & =2W^{-3}W_{b}^{2}\mathcal{G}-W^{-2}W_{bb}\mathcal{G}-W^{-2}W_{b}\mathcal{G}_{b}\\
	& +W^{-1}\mathcal{G}_{bb}\\
	\partial_{bss's_{g}}\mathcal{L}^{*} & =\partial_{bss's_{g}}\mathcal{L}_{(1)}^{*}+\partial_{bss's_{g}}\mathcal{L}_{(2)}^{*}\\
	\partial_{bss's_{g}}\mathcal{L}_{(1)}^{*} & =\mathcal{H}^{-1}\mathcal{H}_{b}\mathcal{H}^{-1}\mathcal{H}_{s_{g}}\mathcal{H}^{-1}+\mathcal{H}^{-1}\mathcal{H}_{s_{g}}\mathcal{H}^{-1}\mathcal{H}_{b}\mathcal{H}^{-1}\\
	& -\mathcal{H}^{-1}\mathcal{H}_{bs_{g}}\mathcal{H}^{-1}\\
	\partial_{bss's_{g}}\mathcal{L}_{(2)}^{*} & =2W^{-3}W_{b}W_{s_{g}}\mathcal{G}-W^{-2}W_{bs_{g}}\mathcal{G}-W^{-2}W_{s_{g}}\mathcal{G}_{b}\\
	& -W^{-2}W_{b}\mathcal{G}_{s_{g}}+W^{-1}\mathcal{G}_{bs_{g}}\\
	\partial_{ss's_{g}s_{h}}\mathcal{L}^{*} & =\partial_{ss's_{g}s_{h}}\mathcal{L}_{(1)}^{*}+\partial_{ss's_{g}s_{h}}\mathcal{L}_{(2)}^{*}\\
	\partial_{ss's_{g}s_{h}}\mathcal{L}_{(1)}^{*} & =\mathcal{H}^{-1}\mathcal{H}_{s_{g}}\mathcal{H}^{-1}\mathcal{H}_{s_{h}}\mathcal{H}^{-1}+\mathcal{H}^{-1}\mathcal{H}_{s_{h}}\mathcal{H}^{-1}\mathcal{H}_{s_{g}}\mathcal{H}^{-1}\\
	& -\mathcal{H}^{-1}\mathcal{H}_{s_{g}s_{h}}\mathcal{H}^{-1}\\
	\partial_{ss's_{g}s_{h}}\mathcal{L}_{(2)}^{*} & =-2W^{-3}W_{s_{h}}W_{s_{g}}\mathcal{G}+W^{-2}W_{s_{g}s_{h}}\mathcal{G}+W^{-2}W_{s_{g}}\mathcal{G}_{s_{h}}\\
	& -W^{-2}W_{s_{h}}\mathcal{G}_{s_{g}}+W^{-1}\mathcal{G}_{s_{g}s_{h}}
\end{align*}
where we split some of the terms into two parts (e.g.$\partial_{bbss'}\mathcal{L}^{*}  =\partial_{bbss'}\mathcal{L}_{(1)}^{*}+\partial_{bbss'}\mathcal{L}_{(2)}^{*}$ ), since these will be bounded separately in later sections.

Fifth derivatives:
\begin{align*}
	\partial_{bbbss'}\mathcal{L}_{(1)}^{*} & =-6\mathcal{H}^{-1}\mathcal{H}_{b}\mathcal{H}^{-1}\mathcal{H}_{b}\mathcal{H}^{-1}\mathcal{H}_{b}\mathcal{H}^{-1}\\
	& +3\mathcal{H}^{-1}\mathcal{H}_{bb}\mathcal{H}^{-1}\mathcal{H}_{b}\mathcal{H}^{-1}+3\mathcal{H}^{-1}\mathcal{H}_{b}\mathcal{H}^{-1}\mathcal{H}_{bb}\mathcal{H}^{-1}\\
	& -\mathcal{H}^{-1}\mathcal{H}_{bbb}\mathcal{H}^{-1}
\end{align*}
\begin{align*}
	\partial_{bbss's_{g}}\mathcal{L}_{(1)}^{*} & =-2\mathcal{H}^{-1}\mathcal{H}_{s_{g}}\mathcal{H}^{-1}\mathcal{H}_{b}\mathcal{H}^{-1}\mathcal{H}_{b}\mathcal{H}^{-1}\\
	& -2\mathcal{H}^{-1}\mathcal{H}_{b}\mathcal{H}^{-1}\mathcal{H}_{s_{g}}\mathcal{H}^{-1}\mathcal{H}_{b}\mathcal{H}^{-1}\\
	& -2\mathcal{H}^{-1}\mathcal{H}_{b}\mathcal{H}^{-1}\mathcal{H}_{b}\mathcal{H}^{-1}\mathcal{H}_{s_{g}}\mathcal{H}^{-1}\\
	& +2\mathcal{H}^{-1}\mathcal{H}_{bs_{g}}\mathcal{H}^{-1}\mathcal{H}_{b}\mathcal{H}^{-1}+2\mathcal{H}^{-1}\mathcal{H}_{b}\mathcal{H}^{-1}\mathcal{H}_{bs_{g}}\mathcal{H}^{-1}\\
	& +\mathcal{H}^{-1}\mathcal{H}_{s_{g}}\mathcal{H}^{-1}\mathcal{H}_{bb}\mathcal{H}^{-1}+\mathcal{H}^{-1}\mathcal{H}_{bb}\mathcal{H}^{-1}\mathcal{H}_{s_{g}}\mathcal{H}^{-1}\\
	& -\mathcal{H}^{-1}\mathcal{H}_{bbs_{g}}\mathcal{H}^{-1}
\end{align*}
\begin{align*}
	\partial_{bss's_{g}s_{h}}\mathcal{L}_{(1)}^{*} & =-\mathcal{H}^{-1}\mathcal{H}_{s_{h}}\mathcal{H}^{-1}\mathcal{H}_{b}\mathcal{H}^{-1}\mathcal{H}_{s_{g}}\mathcal{H}^{-1}\\
	& -\mathcal{H}^{-1}\mathcal{H}_{b}\mathcal{H}^{-1}\mathcal{H}_{s_{h}}\mathcal{H}^{-1}\mathcal{H}_{s_{g}}\mathcal{H}^{-1}\\
	& -\mathcal{H}^{-1}\mathcal{H}_{b}\mathcal{H}^{-1}\mathcal{H}_{s_{g}}\mathcal{H}^{-1}\mathcal{H}_{s_{h}}\mathcal{H}^{-1}\\
	& +\mathcal{H}^{-1}\mathcal{H}_{bs_{h}}\mathcal{H}^{-1}\mathcal{H}_{s_{g}}\mathcal{H}^{-1}+\mathcal{H}^{-1}\mathcal{H}_{b}\mathcal{H}^{-1}\mathcal{H}_{s_{g}s_{h}}\mathcal{H}^{-1}\\
	& -\mathcal{H}^{-1}\mathcal{H}_{s_{h}}\mathcal{H}^{-1}\mathcal{H}_{s_{g}}\mathcal{H}^{-1}\mathcal{H}_{b}\mathcal{H}^{-1}\\
	& -\mathcal{H}^{-1}\mathcal{H}_{b}\mathcal{H}^{-1}\mathcal{H}_{s_{g}}\mathcal{H}^{-1}\mathcal{H}_{s_{h}}\mathcal{H}^{-1}\\
	& -\mathcal{H}^{-1}\mathcal{H}_{s_{g}}\mathcal{H}^{-1}\mathcal{H}_{b}\mathcal{H}^{-1}\mathcal{H}_{s_{h}}\mathcal{H}^{-1}\\
	& +\mathcal{H}^{-1}\mathcal{H}_{s_{g}s_{h}}\mathcal{H}^{-1}\mathcal{H}_{b}\mathcal{H}^{-1}+\mathcal{H}^{-1}\mathcal{H}_{s_g}\mathcal{H}^{-1}\mathcal{H}_{bs_{h}}\mathcal{H}^{-1}\\
	& +\mathcal{H}^{-1}\mathcal{H}_{s_{h}}\mathcal{H}^{-1}\mathcal{H}_{bs_{g}}\mathcal{H}^{-1}+\mathcal{H}^{-1}\mathcal{H}_{bs_{g}}\mathcal{H}^{-1}\mathcal{H}_{s_{h}}\mathcal{H}^{-1}\\
	& -\mathcal{H}^{-1}\mathcal{H}_{bs_{g}s_{h}}\mathcal{H}^{-1}
\end{align*}

\begin{align*}
	\partial_{ss's_{f}s_{g}s_{h}}\mathcal{L}^{*}_{(1)} & =-\mathcal{H}^{-1}\mathcal{H}_{s_{f}}\mathcal{H}^{-1}\mathcal{H}_{s_{g}}\mathcal{H}^{-1}\mathcal{H}_{s_{h}}\mathcal{H}^{-1}
	 -\mathcal{H}^{-1}\mathcal{H}_{s_{g}}\mathcal{H}^{-1}\mathcal{H}_{s_{f}}\mathcal{H}^{-1}\mathcal{H}_{s_{h}}\mathcal{H}^{-1}\\
	& -\mathcal{H}^{-1}\mathcal{H}_{s_{g}}\mathcal{H}^{-1}\mathcal{H}_{s_{h}}\mathcal{H}^{-1}\mathcal{H}_{s_{f}}\mathcal{H}^{-1}
	 +\mathcal{H}^{-1}\mathcal{H}_{s_{f}s_{g}}\mathcal{H}^{-1}\mathcal{H}_{s_{h}}\mathcal{H}^{-1}\\
	& +\mathcal{H}^{-1}\mathcal{H}_{s_{g}}\mathcal{H}^{-1}\mathcal{H}_{s_{f}s_{h}}\mathcal{H}^{-1}
	 -\mathcal{H}^{-1}\mathcal{H}_{s_{f}}\mathcal{H}^{-1}\mathcal{H}_{s_{h}}\mathcal{H}^{-1}\mathcal{H}_{s_{g}}\mathcal{H}^{-1}\\
	& -\mathcal{H}^{-1}\mathcal{H}_{s_{h}}\mathcal{H}^{-1}\mathcal{H}_{s_{f}}\mathcal{H}^{-1}\mathcal{H}_{s_{g}}\mathcal{H}^{-1}
	 -\mathcal{H}^{-1}\mathcal{H}_{s_{h}}\mathcal{H}^{-1}\mathcal{H}_{s_{g}}\mathcal{H}^{-1}\mathcal{H}_{s_{f}}\mathcal{H}^{-1}\\
	& +\mathcal{H}^{-1}\mathcal{H}_{s_{f}s_{h}}\mathcal{H}^{-1}\mathcal{H}_{s_{g}}\mathcal{H}^{-1}
	 +\mathcal{H}^{-1}\mathcal{H}_{s_{h}}\mathcal{H}^{-1}\mathcal{H}_{s_{f}s_{g}}\mathcal{H}^{-1}\\
	& +\mathcal{H}^{-1}\mathcal{H}_{s_{f}}\mathcal{H}^{-1}\mathcal{H}_{s_{g}s_{h}}\mathcal{H}^{-1}
	 +\mathcal{H}^{-1}\mathcal{H}_{s_{g}s_{h}}\mathcal{H}^{-1}\mathcal{H}_{s_{f}}\mathcal{H}^{-1}\\
	& -\mathcal{H}^{-1}\mathcal{H}_{s_{f}s_{g}s_{h}}\mathcal{H}^{-1} \\
		\partial_{ss's_{f}s_{g}s_{h}}\mathcal{L}^{*}_{(2)} &= 6W^{-4}W_{s_{h}}W_{s_{g}}W_{s_{f}}\mathcal{G}\\
	& -2W^{-3}\big(W_{s_{f}s_{h}}W_{s_{g}}+W_{s_{h}}W_{s_{f}s_{g}}+W_{s_{f}}W_{s_{g}s_{h}}\big)\mathcal{G}\\
	& -2W^{-3}\big(W_{s_{h}}W_{s_{g}}+W_{s_{g}s_{h}}\big)\mathcal{G}_{s_{f}}\\
	& -2W^{-3}\big(W_{s_{f}}W_{s_{g}}+W_{s_{f}s_{g}}\big)\mathcal{G}_{s_{h}}-2W^{-3}\big(W_{s_{f}}W_{s_{h}}+W_{s_{f}s_{h}}\big)\mathcal{G}_{s_{g}}\\
	& +W^{-2}W_{s_{f}s_{g}s_{h}}\mathcal{G}+W^{-1}\mathcal{G}_{s_{f}s_{g}s_{h}}\\
	& +W^{-2}W_{s_{h}}\mathcal{G}_{s_{f}s_{g}}+W^{-2}W_{s_{f}}\mathcal{G}_{s_{g}s_{h}}+W^{-2}W_{s_{g}}\mathcal{G}_{s_{f}s_{h}}
\end{align*}

Sixth derivatives:
\begin{align*}
	\partial_{bss's_{f}s_{g}s_{h}}\mathcal{L}_{(1)}^{*} & =\mathcal{H}^{-1}\mathcal{H}_{s_{f}}\mathcal{H}^{-1}\mathcal{H}_{b}\mathcal{H}^{-1}\mathcal{H}_{s_{g}}\mathcal{H}^{-1}\mathcal{H}_{s_{h}}\mathcal{H}^{-1}
	 +\mathcal{H}^{-1}\mathcal{H}_{b}\mathcal{H}^{-1}\mathcal{H}_{s_{f}}\mathcal{H}^{-1}\mathcal{H}_{s_{g}}\mathcal{H}^{-1}\mathcal{H}_{s_{h}}\mathcal{H}^{-1}\\
	& +\mathcal{H}^{-1}\mathcal{H}_{b}\mathcal{H}^{-1}\mathcal{H}_{s_{g}}\mathcal{H}^{-1}\mathcal{H}_{s_{f}}\mathcal{H}^{-1}\mathcal{H}_{s_{h}}\mathcal{H}^{-1}
	 +\mathcal{H}^{-1}\mathcal{H}_{b}\mathcal{H}^{-1}\mathcal{H}_{s_{g}}\mathcal{H}^{-1}\mathcal{H}_{s_{h}}\mathcal{H}^{-1}\mathcal{H}_{s_{f}}\mathcal{H}^{-1}\\
	& -\mathcal{H}^{-1}\mathcal{H}_{bs_{f}}\mathcal{H}^{-1}\mathcal{H}_{s_{g}}\mathcal{H}^{-1}\mathcal{H}_{s_{h}}\mathcal{H}^{-1}
	 -\mathcal{H}^{-1}\mathcal{H}_{b}\mathcal{H}^{-1}\mathcal{H}_{s_{f}s_{g}}\mathcal{H}^{-1}\mathcal{H}_{s_{h}}\mathcal{H}^{-1}\\
	& -\mathcal{H}^{-1}\mathcal{H}_{b}\mathcal{H}^{-1}\mathcal{H}_{s_{g}}\mathcal{H}^{-1}\mathcal{H}_{s_{f}s_{h}}\mathcal{H}^{-1}
	 +\mathcal{H}^{-1}\mathcal{H}_{s_{f}}\mathcal{H}^{-1}\mathcal{H}_{s_{g}}\mathcal{H}^{-1}\mathcal{H}_{b}\mathcal{H}^{-1}\mathcal{H}_{s_{h}}\mathcal{H}^{-1}\\
	& +\mathcal{H}^{-1}\mathcal{H}_{s_{g}}\mathcal{H}^{-1}\mathcal{H}_{s_{f}}\mathcal{H}^{-1}\mathcal{H}_{b}\mathcal{H}^{-1}\mathcal{H}_{s_{h}}\mathcal{H}^{-1}
	 +\mathcal{H}^{-1}\mathcal{H}_{s_{g}}\mathcal{H}^{-1}\mathcal{H}_{b}\mathcal{H}^{-1}\mathcal{H}_{s_{f}}\mathcal{H}^{-1}\mathcal{H}_{s_{h}}\mathcal{H}^{-1}\\
	& +\mathcal{H}^{-1}\mathcal{H}_{s_{g}}\mathcal{H}^{-1}\mathcal{H}_{b}\mathcal{H}^{-1}\mathcal{H}_{s_{h}}\mathcal{H}^{-1}\mathcal{H}_{s_{f}}\mathcal{H}^{-1}
	 -\mathcal{H}^{-1}\mathcal{H}_{s_{g}}\mathcal{H}^{-1}\mathcal{H}_{bs_{f}}\mathcal{H}^{-1}\mathcal{H}_{s_{h}}\mathcal{H}^{-1}\\
	& -\mathcal{H}^{-1}\mathcal{H}_{s_{f}s_{g}}\mathcal{H}^{-1}\mathcal{H}_{b}\mathcal{H}^{-1}\mathcal{H}_{s_{h}}\mathcal{H}^{-1}
	 -\mathcal{H}^{-1}\mathcal{H}_{s_{g}}\mathcal{H}^{-1}\mathcal{H}_{b}\mathcal{H}^{-1}\mathcal{H}_{s_{f}s_{h}}\mathcal{H}^{-1} \\
	& +\mathcal{H}^{-1}\mathcal{H}_{s_{f}}\mathcal{H}^{-1}\mathcal{H}_{s_{g}}\mathcal{H}^{-1}\mathcal{H}_{s_{h}}\mathcal{H}^{-1}\mathcal{H}_{b}\mathcal{H}^{-1}
	 +\mathcal{H}^{-1}\mathcal{H}_{s_{g}}\mathcal{H}^{-1}\mathcal{H}_{s_{f}}\mathcal{H}^{-1}\mathcal{H}_{s_{h}}\mathcal{H}^{-1}\mathcal{H}_{b}\mathcal{H}^{-1}\\
	& +\mathcal{H}^{-1}\mathcal{H}_{s_{g}}\mathcal{H}^{-1}\mathcal{H}_{s_{h}}\mathcal{H}^{-1}\mathcal{H}_{s_{f}}\mathcal{H}^{-1}\mathcal{H}_{b}\mathcal{H}^{-1}
	 +\mathcal{H}^{-1}\mathcal{H}_{s_{g}}\mathcal{H}^{-1}\mathcal{H}_{s_{h}}\mathcal{H}^{-1}\mathcal{H}_{b}\mathcal{H}^{-1}\mathcal{H}_{s_{f}}\mathcal{H}^{-1}\\
	& -\mathcal{H}^{-1}\mathcal{H}_{s_{g}}\mathcal{H}^{-1}\mathcal{H}_{s_{h}}\mathcal{H}^{-1}\mathcal{H}_{bs_{f}}\mathcal{H}^{-1}
	 -\mathcal{H}^{-1}\mathcal{H}_{s_{f}s_{g}}\mathcal{H}^{-1}\mathcal{H}_{s_{h}}\mathcal{H}^{-1}\mathcal{H}_{b}\mathcal{H}^{-1}\\
	& -\mathcal{H}^{-1}\mathcal{H}_{s_{g}}\mathcal{H}^{-1}\mathcal{H}_{s_{f}s_{h}}\mathcal{H}^{-1}\mathcal{H}_{b}\mathcal{H}^{-1}
	 -\mathcal{H}^{-1}\mathcal{H}_{s_{f}}\mathcal{H}^{-1}\mathcal{H}_{bs_{g}}\mathcal{H}^{-1}\mathcal{H}_{s_{h}}\mathcal{H}^{-1}\\
	& -\mathcal{H}^{-1}\mathcal{H}_{bs_{g}}\mathcal{H}^{-1}\mathcal{H}_{s_{f}}\mathcal{H}^{-1}\mathcal{H}_{s_{h}}\mathcal{H}^{-1}
	 -\mathcal{H}^{-1}\mathcal{H}_{bs_{g}}\mathcal{H}^{-1}\mathcal{H}_{s_{h}}\mathcal{H}^{-1}\mathcal{H}_{s_{f}}\mathcal{H}^{-1}\\
	& +\mathcal{H}^{-1}\mathcal{H}_{bs_{g}s_{f}}\mathcal{H}^{-1}\mathcal{H}_{s_{h}}\mathcal{H}^{-1}
	 +\mathcal{H}^{-1}\mathcal{H}_{bs_{g}}\mathcal{H}^{-1}\mathcal{H}_{s_{h}s_{f}}\mathcal{H}^{-1}\\
	& -\mathcal{H}^{-1}\mathcal{H}_{s_{f}}\mathcal{H}^{-1}\mathcal{H}_{s_{g}}\mathcal{H}^{-1}\mathcal{H}_{bs_{h}}\mathcal{H}^{-1}
	 -\mathcal{H}^{-1}\mathcal{H}_{s_{g}}\mathcal{H}^{-1}\mathcal{H}_{s_{f}}\mathcal{H}^{-1}\mathcal{H}_{bs_{h}}\mathcal{H}^{-1}\\
	& -\mathcal{H}^{-1}\mathcal{H}_{s_{g}}\mathcal{H}^{-1}\mathcal{H}_{bs_{h}}\mathcal{H}^{-1}\mathcal{H}_{s_{f}}\mathcal{H}^{-1}
	 +\mathcal{H}^{-1}\mathcal{H}_{s_{g}s_{f}}\mathcal{H}^{-1}\mathcal{H}_{bs_{h}}\mathcal{H}^{-1}\\
	& +\mathcal{H}^{-1}\mathcal{H}_{s_{g}}\mathcal{H}^{-1}\mathcal{H}_{bs_{h}s_{f}}\mathcal{H}^{-1}
	 +\mathcal{H}^{-1}\mathcal{H}_{s_{f}}\mathcal{H}^{-1}\mathcal{H}_{b}\mathcal{H}^{-1}\mathcal{H}_{s_{h}}\mathcal{H}^{-1}\mathcal{H}_{s_{g}}\mathcal{H}^{-1}\\
	& +\mathcal{H}^{-1}\mathcal{H}_{b}\mathcal{H}^{-1}\mathcal{H}_{s_{f}}\mathcal{H}^{-1}\mathcal{H}_{s_{h}}\mathcal{H}^{-1}\mathcal{H}_{s_{g}}\mathcal{H}^{-1}
	 +\mathcal{H}^{-1}\mathcal{H}_{b}\mathcal{H}^{-1}\mathcal{H}_{s_{h}}\mathcal{H}^{-1}\mathcal{H}_{s_{f}}\mathcal{H}^{-1}\mathcal{H}_{s_{g}}\mathcal{H}^{-1}\\
	& +\mathcal{H}^{-1}\mathcal{H}_{b}\mathcal{H}^{-1}\mathcal{H}_{s_{h}}\mathcal{H}^{-1}\mathcal{H}_{s_{g}}\mathcal{H}^{-1}\mathcal{H}_{s_{f}}\mathcal{H}^{-1}
	 -\mathcal{H}^{-1}\mathcal{H}_{bs_{f}}\mathcal{H}^{-1}\mathcal{H}_{s_{h}}\mathcal{H}^{-1}\mathcal{H}_{s_{g}}\mathcal{H}^{-1}\\
	& -\mathcal{H}^{-1}\mathcal{H}_{b}\mathcal{H}^{-1}\mathcal{H}_{s_{f}s_{h}}\mathcal{H}^{-1}\mathcal{H}_{s_{g}}\mathcal{H}^{-1}
	 -\mathcal{H}^{-1}\mathcal{H}_{b}\mathcal{H}^{-1}\mathcal{H}_{s_{h}}\mathcal{H}^{-1}\mathcal{H}_{s_{f}s_{g}}\mathcal{H}^{-1}\\
	& +\mathcal{H}^{-1}\mathcal{H}_{s_{f}}\mathcal{H}^{-1}\mathcal{H}_{s_{h}}\mathcal{H}^{-1}\mathcal{H}_{b}\mathcal{H}^{-1}\mathcal{H}_{s_{g}}\mathcal{H}^{-1}
	 +\mathcal{H}^{-1}\mathcal{H}_{s_{h}}\mathcal{H}^{-1}\mathcal{H}_{s_{f}}\mathcal{H}^{-1}\mathcal{H}_{b}\mathcal{H}^{-1}\mathcal{H}_{s_{g}}\mathcal{H}^{-1}\\
	& +\mathcal{H}^{-1}\mathcal{H}_{s_{h}}\mathcal{H}^{-1}\mathcal{H}_{b}\mathcal{H}^{-1}\mathcal{H}_{s_{f}}\mathcal{H}^{-1}\mathcal{H}_{s_{g}}\mathcal{H}^{-1}
	 +\mathcal{H}^{-1}\mathcal{H}_{s_{h}}\mathcal{H}^{-1}\mathcal{H}_{b}\mathcal{H}^{-1}\mathcal{H}_{s_{g}}\mathcal{H}^{-1}\mathcal{H}_{s_{f}}\mathcal{H}^{-1}\\
	& -\mathcal{H}^{-1}\mathcal{H}_{s_{h}}\mathcal{H}^{-1}\mathcal{H}_{bs_{f}}\mathcal{H}^{-1}\mathcal{H}_{s_{g}}\mathcal{H}^{-1}
	 -\mathcal{H}^{-1}\mathcal{H}_{s_{f}s_{h}}\mathcal{H}^{-1}\mathcal{H}_{b}\mathcal{H}^{-1}\mathcal{H}_{s_{g}}\mathcal{H}^{-1}\\
	& -\mathcal{H}^{-1}\mathcal{H}_{s_{h}}\mathcal{H}^{-1}\mathcal{H}_{b}\mathcal{H}^{-1}\mathcal{H}_{s_{f}s_{g}}\mathcal{H}^{-1}
	 +\mathcal{H}^{-1}\mathcal{H}_{s_{f}}\mathcal{H}^{-1}\mathcal{H}_{s_{h}}\mathcal{H}^{-1}\mathcal{H}_{s_{g}}\mathcal{H}^{-1}\mathcal{H}_{b}\mathcal{H}^{-1}\\
	& +\mathcal{H}^{-1}\mathcal{H}_{s_{h}}\mathcal{H}^{-1}\mathcal{H}_{s_{f}}\mathcal{H}^{-1}\mathcal{H}_{s_{g}}\mathcal{H}^{-1}\mathcal{H}_{b}\mathcal{H}^{-1}
	 +\mathcal{H}^{-1}\mathcal{H}_{s_{h}}\mathcal{H}^{-1}\mathcal{H}_{s_{g}}\mathcal{H}^{-1}\mathcal{H}_{s_{f}}\mathcal{H}^{-1}\mathcal{H}_{b}\mathcal{H}^{-1}\\
	& +\mathcal{H}^{-1}\mathcal{H}_{s_{h}}\mathcal{H}^{-1}\mathcal{H}_{s_{g}}\mathcal{H}^{-1}\mathcal{H}_{b}\mathcal{H}^{-1}\mathcal{H}_{s_{f}}\mathcal{H}^{-1}
	 -\mathcal{H}^{-1}\mathcal{H}_{s_{h}}\mathcal{H}^{-1}\mathcal{H}_{s_{g}}\mathcal{H}^{-1}\mathcal{H}_{bs_{f}}\mathcal{H}^{-1}\\
	& -\mathcal{H}^{-1}\mathcal{H}_{s_{f}s_{h}}\mathcal{H}^{-1}\mathcal{H}_{s_{g}}\mathcal{H}^{-1}\mathcal{H}_{b}\mathcal{H}^{-1}
	 -\mathcal{H}^{-1}\mathcal{H}_{s_{h}}\mathcal{H}^{-1}\mathcal{H}_{s_{f}s_{g}}\mathcal{H}^{-1}\mathcal{H}_{b}\mathcal{H}^{-1}\\
	& -\mathcal{H}^{-1}\mathcal{H}_{s_{f}}\mathcal{H}^{-1}\mathcal{H}_{bs_{h}}\mathcal{H}^{-1}\mathcal{H}_{s_{g}}\mathcal{H}^{-1}
	 -\mathcal{H}^{-1}\mathcal{H}_{bs_{h}}\mathcal{H}^{-1}\mathcal{H}_{s_{f}}\mathcal{H}^{-1}\mathcal{H}_{s_{g}}\mathcal{H}^{-1}\\
	& -\mathcal{H}^{-1}\mathcal{H}_{bs_{h}}\mathcal{H}^{-1}\mathcal{H}_{s_{g}}\mathcal{H}^{-1}\mathcal{H}_{s_{f}}\mathcal{H}^{-1}
	 +\mathcal{H}^{-1}\mathcal{H}_{bs_{h}s_{f}}\mathcal{H}^{-1}\mathcal{H}_{s_{g}}\mathcal{H}^{-1}\\
	& +\mathcal{H}^{-1}\mathcal{H}_{bs_{h}}\mathcal{H}^{-1}\mathcal{H}_{s_{g}s_{f}}\mathcal{H}^{-1}
	 -\mathcal{H}^{-1}\mathcal{H}_{s_{f}}\mathcal{H}^{-1}\mathcal{H}_{s_{h}}\mathcal{H}^{-1}\mathcal{H}_{bs_{g}}\mathcal{H}^{-1}\\
	& -\mathcal{H}^{-1}\mathcal{H}_{s_{h}}\mathcal{H}^{-1}\mathcal{H}_{s_{f}}\mathcal{H}^{-1}\mathcal{H}_{bs_{g}}\mathcal{H}^{-1}
	 -\mathcal{H}^{-1}\mathcal{H}_{s_{h}}\mathcal{H}^{-1}\mathcal{H}_{bs_{g}}\mathcal{H}^{-1}\mathcal{H}_{s_{f}}\mathcal{H}^{-1}\\
	& +\mathcal{H}^{-1}\mathcal{H}_{s_{h}s_{f}}\mathcal{H}^{-1}\mathcal{H}_{bs_{g}}\mathcal{H}^{-1}
	 +\mathcal{H}^{-1}\mathcal{H}_{s_{h}}\mathcal{H}^{-1}\mathcal{H}_{bs_{g}s_{f}}\mathcal{H}^{-1}\\
	& -\mathcal{H}^{-1}\mathcal{H}_{s_{f}}\mathcal{H}^{-1}\mathcal{H}_{b}\mathcal{H}^{-1}\mathcal{H}_{s_{g}s_{h}}\mathcal{H}^{-1}
	 -\mathcal{H}^{-1}\mathcal{H}_{b}\mathcal{H}^{-1}\mathcal{H}_{s_{f}}\mathcal{H}^{-1}\mathcal{H}_{s_{g}s_{h}}\mathcal{H}^{-1}\\
	& -\mathcal{H}^{-1}\mathcal{H}_{b}\mathcal{H}^{-1}\mathcal{H}_{s_{g}s_{h}}\mathcal{H}^{-1}\mathcal{H}_{s_{f}}\mathcal{H}^{-1}
	 +\mathcal{H}^{-1}\mathcal{H}_{bs_{f}}\mathcal{H}^{-1}\mathcal{H}_{s_{g}s_{h}}\mathcal{H}^{-1}\\
	& +\mathcal{H}^{-1}\mathcal{H}_{b}\mathcal{H}^{-1}\mathcal{H}_{s_{f}s_{g}s_{h}}\mathcal{H}^{-1}
	 -\mathcal{H}^{-1}\mathcal{H}_{s_{f}}\mathcal{H}^{-1}\mathcal{H}_{s_{g}s_{h}}\mathcal{H}^{-1}\mathcal{H}_{b}\mathcal{H}^{-1}\\
	& -\mathcal{H}^{-1}\mathcal{H}_{s_{g}s_{h}}\mathcal{H}^{-1}\mathcal{H}_{s_{f}}\mathcal{H}^{-1}\mathcal{H}_{b}\mathcal{H}^{-1}
	 -\mathcal{H}^{-1}\mathcal{H}_{s_{g}s_{h}}\mathcal{H}^{-1}\mathcal{H}_{b}\mathcal{H}^{-1}\mathcal{H}_{s_{f}}\mathcal{H}^{-1}\\
	& +\mathcal{H}^{-1}\mathcal{H}_{s_{f}s_{g}s_{h}}\mathcal{H}^{-1}\mathcal{H}_{b}\mathcal{H}^{-1}
	 +\mathcal{H}^{-1}\mathcal{H}_{s_{g}s_{h}}\mathcal{H}^{-1}\mathcal{H}_{s_{f}b}\mathcal{H}^{-1}\\
	& +\mathcal{H}^{-1}\mathcal{H}_{s_{f}}\mathcal{H}^{-1}\mathcal{H}_{bs_{g}s_{h}}\mathcal{H}^{-1}
	 +\mathcal{H}^{-1}\mathcal{H}_{bs_{g}s_{h}}\mathcal{H}^{-1}\mathcal{H}_{s_{f}}\mathcal{H}^{-1}\\
	& -\mathcal{H}^{-1}\mathcal{H}_{bs_{f}s_{g}s_{h}}\mathcal{H}^{-1}
\end{align*}

\subsection{Expressions for $\mathcal{H}$ terms}
In order to simplify some of the derivative expression we introduce some new terms.
\begin{align*}
	\mathcal{P}^{(r,s)} & =\sum_{f}(\partial_{\beta^{r}\phi\phi'\phi_{f}}\mathcal{L})\big[(\partial_{\beta^{s}\phi'}\mathcal{L})\mathcal{H}^{-1}\big]_{f}\\
	\mathcal{P}_{(g)}^{r} & =\sum_{f}(\partial_{\beta^{r}\phi\phi'\phi_{f}}\mathcal{L})\big[\mathcal{H}^{-1}\big]_{fg}\\{}
	[\mathcal{R}^{s}]_{g} & =\big[(\partial_{\beta^{s}\phi'}\mathcal{L})\mathcal{H}^{-1}\big]_{g}
\end{align*}

By definition we have that $\mathcal{H}=-\partial_{\phi\phi'}\mathcal{L}$.
Differentiating gives
\[
\mathcal{H}_{b}=W^{-1}\big(\partial_{\beta\phi\phi'}\mathcal{L}+\sum_{f}(\partial_{\phi\phi'\phi_{f}}\mathcal{L})\big[\mathcal{H}^{-1}(\partial_{\beta\phi}\mathcal{L})\big]_{f}\big)
\]
\begin{align*}
\mathcal{H}_{s_{g}} & =-(\partial_{\beta\phi\phi'}\mathcal{L})(\partial_{s_{g}}B)-\sum_{f}(\partial_{\phi\phi'\phi_{f}}\mathcal{L})(\partial_{s}\Phi)_{fg}\\
 & =W^{-1}\big[(\partial_{\beta\phi'}\mathcal{L})\mathcal{H}^{-1}\big]_{g}(\partial_{\beta\phi\phi'}\mathcal{L})\\
 & +\sum_{f}(\partial_{\phi\phi'\phi_{f}}\mathcal{L})[\mathcal{H}^{-1}]_{fg}\\
 & +W^{-1}\sum_{f}(\partial_{\phi\phi'\phi_{f}}\mathcal{L})\big[(\partial_{\beta\phi'}\mathcal{L})\mathcal{H}^{-1}\big]_{f}\big[(\partial_{\beta\phi'}\mathcal{L})\mathcal{H}^{-1}\big]_{g}
\end{align*}

Differentiating a second time gives
\begin{align*}
\mathcal{H}_{bb} & =-W^{-1}W_{b}\mathcal{H}_{b}+W^{-1}\mathcal{E}_{b}^{1}\\
 & -W^{-2}\sum_{f}(\partial_{\beta\phi\phi'\phi_{f}}\mathcal{L})\big[\mathcal{H}^{-1}(\partial_{\beta\phi}\mathcal{L})\big]_{f}\\
 & -W^{-2}\sum_{e,f}(\partial_{\phi\phi'\phi_{e}\phi_{f}}\mathcal{L})\big[\mathcal{H}^{-1}(\partial_{\beta\phi}\mathcal{L})\big]_{e}\big[\mathcal{H}^{-1}(\partial_{\beta\phi}\mathcal{L})\big]_{f}\\
 & -W^{-1}\sum_{f}(\partial_{\phi\phi'\phi_{f}}\mathcal{L})\big[\mathcal{H}^{-1}\mathcal{H}_{b}\mathcal{H}^{-1}(\partial_{\beta\phi}\mathcal{L})\big]_{f}\\
 & -W^{-2}\sum_{f}(\partial_{\phi\phi'\phi_{f}}\mathcal{L})\big[\mathcal{H}^{-1}(\partial_{\beta\beta\phi}\mathcal{L})\big]_{f}\\
 & -W^{-2}\sum_{f}(\partial_{\phi\phi'\phi_{f}}\mathcal{L})\big[\mathcal{H}^{-1}\mathcal{E}^{1}\mathcal{H}^{-1}(\partial_{\beta\phi}\mathcal{L})\big]_{f}
\end{align*}
\begin{align*}
\mathcal{H}_{bs_{g}} & =-W^{-2}W_{s_{g}}\mathcal{E}^{1}+W^{-1}\mathcal{E}_{s_{g}}^{1}\\
 & -W^{-2}W_{s_{g}}\sum_{f}(\partial_{\phi\phi'\phi_{f}}\mathcal{L})\big[\mathcal{H}^{-1}(\partial_{\beta\phi}\mathcal{L})\big]_{f}\\
 & -W^{-2}\sum_{f}(\partial_{\beta\phi\phi'\phi_{f}}\mathcal{L})\big[\mathcal{H}^{-1}(\partial_{\beta\phi}\mathcal{L})\big]_{f}\big[\mathcal{H}^{-1}(\partial_{\beta\phi}\mathcal{L})\big]_{g}\\
 & +W^{-1}\sum_{e,f}(\partial_{\phi\phi'\phi_{e}\phi_{f}}\mathcal{L})\big[\mathcal{H}^{-1}(\partial_{\beta\phi}\mathcal{L})\big]_{f}\big[\mathcal{H}^{-1}\big]_{eg}\\
 & -W^{-2}\sum_{e,f}(\partial_{\phi\phi'\phi_{e}\phi_{f}}\mathcal{L})\big[\mathcal{H}^{-1}(\partial_{\beta\phi}\mathcal{L})\big]_{e}\big[\mathcal{H}^{-1}(\partial_{\beta\phi}\mathcal{L})\big]_{f}\big[\mathcal{H}^{-1}(\partial_{\beta\phi}\mathcal{L})\big]_{g}\\
 & -W^{-1}\sum_{f}(\partial_{\phi\phi'\phi_{f}}\mathcal{L})\big[\mathcal{H}^{-1}\mathcal{H}_{s_{g}}\mathcal{H}^{-1}(\partial_{\beta\phi}\mathcal{L})\big]_{f}\\
 & -W^{-2}\sum_{f}(\partial_{\phi\phi'\phi_{f}}\mathcal{L})\big[\mathcal{H}^{-1}(\partial_{\beta\beta\phi}\mathcal{L})\big]_{f}\big[\mathcal{H}^{-1}(\partial_{\beta\phi}\mathcal{L})\big]_{g}\\
 & +W^{-1}\sum_{f}(\partial_{\phi\phi'\phi_{f}}\mathcal{L})\big[\mathcal{H}^{-1}\mathcal{E}^{1}\mathcal{H}^{-1}\big]_{fg}\\
 & -W^{-2}\sum_{f}(\partial_{\phi\phi'\phi_{f}}\mathcal{L})\big[\mathcal{H}^{-1}\mathcal{E}^{1}\mathcal{H}^{-1}(\partial_{\beta\phi}\mathcal{L})\big]_{f}\big[\mathcal{H}^{-1}(\partial_{\beta\phi}\mathcal{L})\big]_{g}
\end{align*}

\begin{align*}
\mathcal{H}_{s_{g}s_{h}} & =-W^{-2}W_{s_{h}}\big[(\partial_{\beta\phi'}\mathcal{L})\mathcal{H}^{-1}\big]_{g}(\partial_{\beta\phi\phi'}\mathcal{L})\\
 & -W^{-2}\big[(\partial_{\beta\beta\phi'}\mathcal{L})\mathcal{H}^{-1}\big]_{g}\big[(\partial_{\beta\phi'}\mathcal{L})\mathcal{H}^{-1}\big]_{h}(\partial_{\beta\phi\phi'}\mathcal{L})\\
 & -W^{-1}\big[\mathcal{H}^{-1}(\partial_{\beta\phi\phi'}\mathcal{L})\mathcal{H}^{-1}\big]_{hg}(\partial_{\beta\phi\phi'}\mathcal{L})\\
 & -W^{-2}\big[(\partial_{\beta\phi}\mathcal{L})'\mathcal{H}^{-1}\mathcal{E}\mathcal{H}^{-1}\big]_{g}\big[(\partial_{\beta\phi'}\mathcal{L})\mathcal{H}^{-1}\big]_{h}(\partial_{\beta\phi\phi'}\mathcal{L})\\
 & -W^{-1}\big[(\partial_{\beta\phi'}\mathcal{L})\mathcal{H}^{-1}\mathcal{H}_{s_{h}}\mathcal{H}^{-1}\big]_{g}(\partial_{\beta\phi\phi'}\mathcal{L})\\
 & -W^{-1}\big[(\partial_{\beta\phi'}\mathcal{L})\mathcal{H}^{-1}\big]_{g}\big[(\partial_{\beta\phi'}\mathcal{L})\mathcal{H}^{-1}\big]_{h}(\partial_{\beta\beta\phi\phi'}\mathcal{L})\\
 & -W^{-1}\big[(\partial_{\beta\phi'}\mathcal{L})\mathcal{H}^{-1}\big]_{g}\sum_{f}(\partial_{\beta\phi\phi'\phi_{f}}\mathcal{L})\big[\mathcal{H}^{-1}\big]_{fh}\\
 & -W^{-2}\big[(\partial_{\beta\phi'}\mathcal{L})\mathcal{H}^{-1}\big]_{g}\sum_{f}(\partial_{\beta\phi\phi'\phi_{f}}\mathcal{L})\big[\mathcal{H}^{-1}(\partial_{\beta\phi}\mathcal{L})\big]_{f}\big[(\partial_{\beta\phi}\mathcal{L})'\mathcal{H}^{-1}\big]_{h}\\
 & -W^{-1}\sum_{f}(\partial_{\beta\phi\phi'\phi_{f}}\mathcal{L})[\mathcal{H}^{-1}]_{fg}\big[(\partial_{\beta\phi}\mathcal{L})'\mathcal{H}^{-1}\big]_{h}\\
 & -\sum_{e,f}(\partial_{\phi\phi'\phi_{e}\phi_{f}}\mathcal{L})[\mathcal{H}^{-1}]_{fg}[\mathcal{H}^{-1}]_{eh}\\
 & -W^{-1}\sum_{e,f}(\partial_{\phi\phi'\phi_{e}\phi_{f}}\mathcal{L})[\mathcal{H}^{-1}]_{fg}\big[(\partial_{\beta\phi}\mathcal{L})'\mathcal{H}^{-1}\big]_{e}\big[(\partial_{\beta\phi}\mathcal{L})'\mathcal{H}^{-1}\big]_{h}\\
 & -\sum_{f}(\partial_{\phi\phi'\phi_{f}}\mathcal{L})[\mathcal{H}^{-1}\mathcal{H}_{s_{h}}\mathcal{H}^{-1}]_{fg}\\
 & -W^{-2}W_{s_{h}}\sum_{f}(\partial_{\phi\phi'\phi_{f}}\mathcal{L})\big[(\partial_{\beta\phi'}\mathcal{L})\mathcal{H}^{-1}\big]_{f}\big[(\partial_{\beta\phi'}\mathcal{L})\mathcal{H}^{-1}\big]_{g}\\
 & -W^{-2}\sum_{f}(\partial_{\beta\phi\phi'\phi_{f}}\mathcal{L})\big[(\partial_{\beta\phi'}\mathcal{L})\mathcal{H}^{-1}\big]_{f}\big[(\partial_{\beta\phi'}\mathcal{L})\mathcal{H}^{-1}\big]_{g}\\
 & -W^{-1}\sum_{e,f}(\partial_{\phi\phi'\phi_{e}\phi_{f}}\mathcal{L})[\mathcal{H}^{-1}]_{eh}\big[(\partial_{\beta\phi'}\mathcal{L})\mathcal{H}^{-1}\big]_{f}\big[(\partial_{\beta\phi'}\mathcal{L})\mathcal{H}^{-1}\big]_{g}\\
 & -W^{-2}\sum_{e,f}(\partial_{\phi\phi'\phi_{e}\phi_{f}}\mathcal{L})\big[(\partial_{\beta\phi'}\mathcal{L})\mathcal{H}^{-1}\big]_{e}\big[(\partial_{\beta\phi'}\mathcal{L})\mathcal{H}^{-1}\big]_{f}\big[(\partial_{\beta\phi'}\mathcal{L})\mathcal{H}^{-1}\big]_{g}\big[(\partial_{\beta\phi'}\mathcal{L})\mathcal{H}^{-1}\big]_{h}\\
 & -W^{-2}\sum_{f}(\partial_{\phi\phi'\phi_{f}}\mathcal{L})\big[(\partial_{\beta\beta\phi'}\mathcal{L})\mathcal{H}^{-1}\big]_{f}\big[(\partial_{\beta\phi'}\mathcal{L})\mathcal{H}^{-1}\big]_{g}\\
 & -W^{-1}\sum_{f}(\partial_{\phi\phi'\phi_{f}}\mathcal{L})\big[\mathcal{H}^{-1}(\partial_{\beta\phi\phi'}\mathcal{L})\mathcal{H}^{-1}\big]_{fh}\big[(\partial_{\beta\phi'}\mathcal{L})\mathcal{H}^{-1}\big]_{g}\\
 & -W^{-2}\sum_{f}(\partial_{\phi\phi'\phi_{f}}\mathcal{L})\big[(\partial_{\beta\phi'}\mathcal{L})\mathcal{H}^{-1}\mathcal{E}\mathcal{H}^{-1}\big]_{f}\big[(\partial_{\beta\phi'}\mathcal{L})\mathcal{H}^{-1}\big]_{g}\big[(\partial_{\beta\phi'}\mathcal{L})\mathcal{H}^{-1}\big]_{h}\\
 & -W^{-1}\sum_{f}(\partial_{\phi\phi'\phi_{f}}\mathcal{L})\big[(\partial_{\beta\phi'}\mathcal{L})\mathcal{H}^{-1}\mathcal{H}_{s_{h}}\mathcal{H}^{-1}\big]_{f}\big[(\partial_{\beta\phi'}\mathcal{L})\mathcal{H}^{-1}\big]_{g}\\
 & -W^{-2}\sum_{f}(\partial_{\phi\phi'\phi_{f}}\mathcal{L})\big[(\partial_{\beta\phi'}\mathcal{L})\mathcal{H}^{-1}\big]_{f}\big[(\partial_{\beta\beta\phi'}\mathcal{L})\mathcal{H}^{-1}\big]_{g}\\
 & -W^{-1}\sum_{f}(\partial_{\phi\phi'\phi_{f}}\mathcal{L})\big[(\partial_{\beta\phi'}\mathcal{L})\mathcal{H}^{-1}\big]_{f}\big[\mathcal{H}^{-1}(\partial_{\beta\phi\phi'}\mathcal{L})\mathcal{H}^{-1}\big]_{gh}\\
 & -W^{-2}\sum_{f}(\partial_{\phi\phi'\phi_{f}}\mathcal{L})\big[(\partial_{\beta\phi'}\mathcal{L})\mathcal{H}^{-1}\big]_{f}\big[(\partial_{\beta\phi'}\mathcal{L})\mathcal{H}^{-1}\mathcal{E}\mathcal{H}^{-1}\big]_{g}\big[(\partial_{\beta\phi'}\mathcal{L})\mathcal{H}^{-1}\big]_{h}\\
 & -W^{-1}\sum_{f}(\partial_{\phi\phi'\phi_{f}}\mathcal{L})\big[(\partial_{\beta\phi'}\mathcal{L})\mathcal{H}^{-1}\big]_{f}\big[(\partial_{\beta\phi'}\mathcal{L})\mathcal{H}^{-1}\mathcal{H}_{s_{h}}\mathcal{H}^{-1}\big]_{g}
\end{align*}
The third derivatives are

\begin{align*}
\mathcal{H}_{bbb} & =W^{-2}W_{b}^{2}\mathcal{H}_{b}-W^{-1}W_{bb}\mathcal{H}_{b}-W^{-1}W_{b}\mathcal{H}_{bb}\\
 & +W^{-2}W_{b}\mathcal{E}_{b}-W^{-1}\mathcal{E}_{bb}\\
 & +2W^{-3}W_{b}\sum_{f}(\partial_{\beta\phi\phi'\phi_{f}}\mathcal{L})\big[\mathcal{H}^{-1}(\partial_{\beta\phi}\mathcal{L})\big]_{f}\\
 & -W^{-3}\sum_{f}(\partial_{\beta\beta\phi\phi'\phi_{f}}\mathcal{L})\big[\mathcal{H}^{-1}(\partial_{\beta\phi}\mathcal{L})\big]_{f}\\
 & -W^{-3}\sum_{e,f}(\partial_{\beta\phi\phi'\phi_{e}\phi_{f}}\mathcal{L})\big[\mathcal{H}^{-1}(\partial_{\beta\phi}\mathcal{L})\big]_{e}\big[\mathcal{H}^{-1}(\partial_{\beta\phi}\mathcal{L})\big]_{f}\\
 & +W^{-2}\sum_{f}(\partial_{\beta\phi\phi'\phi_{f}}\mathcal{L})\big[\mathcal{H}^{-1}\mathcal{H}_{b}\mathcal{H}^{-1}(\partial_{\beta\phi}\mathcal{L})\big]_{f}\\
 & -W^{-3}\sum_{f}(\partial_{\beta\phi\phi'\phi_{f}}\mathcal{L})\big[\mathcal{H}^{-1}(\partial_{\beta\beta\phi}\mathcal{L})\big]_{f}\\
 & -W^{-3}\sum_{f}(\partial_{\beta\phi\phi'\phi_{f}}\mathcal{L})\big[\mathcal{H}^{-1}\mathcal{E}\mathcal{H}^{-1}(\partial_{\beta\phi}\mathcal{L})\big]_{f}\\
\\
 & +2W^{-3}W_{b}\sum_{e,f}(\partial_{\phi\phi'\phi_{e}\phi_{f}}\mathcal{L})\big[\mathcal{H}^{-1}(\partial_{\beta\phi}\mathcal{L})\big]_{e}\big[\mathcal{H}^{-1}(\partial_{\beta\phi}\mathcal{L})\big]_{f}\\
 & -W^{-3}\sum_{e,f}(\partial_{\beta\phi\phi'\phi_{e}\phi_{f}}\mathcal{L})\big[\mathcal{H}^{-1}(\partial_{\beta\phi}\mathcal{L})\big]_{e}\big[\mathcal{H}^{-1}(\partial_{\beta\phi}\mathcal{L})\big]_{f}\\
 & -W^{-3}\sum_{d,e,f}(\partial_{\phi\phi'\phi_{d}\phi_{e}\phi_{f}}\mathcal{L})\big[\mathcal{H}^{-1}(\partial_{\beta\phi}\mathcal{L})\big]_{d}\big[\mathcal{H}^{-1}(\partial_{\beta\phi}\mathcal{L})\big]_{e}\big[\mathcal{H}^{-1}(\partial_{\beta\phi}\mathcal{L})\big]_{f}\\
 & +2W^{-2}\sum_{e,f}(\partial_{\phi\phi'\phi_{e}\phi_{f}}\mathcal{L})\big[\mathcal{H}^{-1}\mathcal{H}_{b}\mathcal{H}^{-1}(\partial_{\beta\phi}\mathcal{L})\big]_{e}\big[\mathcal{H}^{-1}(\partial_{\beta\phi}\mathcal{L})\big]_{f}\\
 & -2W^{-3}\sum_{e,f}(\partial_{\phi\phi'\phi_{e}\phi_{f}}\mathcal{L})\big[\mathcal{H}^{-1}(\partial_{\beta\beta\phi}\mathcal{L})\big]_{e}\big[\mathcal{H}^{-1}(\partial_{\beta\phi}\mathcal{L})\big]_{f}\\
 & -2W^{-3}\sum_{e,f}(\partial_{\phi\phi'\phi_{e}\phi_{f}}\mathcal{L})\big[\mathcal{H}^{-1}\mathcal{E}\mathcal{H}^{-1}(\partial_{\beta\phi}\mathcal{L})\big]_{e}\big[\mathcal{H}^{-1}(\partial_{\beta\phi}\mathcal{L})\big]_{f}\\
\\
 & -W^{-1}W_{b}\sum_{f}(\partial_{\phi\phi'\phi_{f}}\mathcal{L})\big[\mathcal{H}^{-1}\mathcal{H}_{b}\mathcal{H}^{-1}(\partial_{\beta\phi}\mathcal{L})\big]_{f}\\
 & +W^{-2}\sum_{f}(\partial_{\beta\phi\phi'\phi_{f}}\mathcal{L})\big[\mathcal{H}^{-1}\mathcal{H}_{b}\mathcal{H}^{-1}(\partial_{\beta\phi}\mathcal{L})\big]_{f}\\
 & +W^{-2}\sum_{e,f}(\partial_{\phi\phi'\phi_{e}\phi_{f}}\mathcal{L})\big[\mathcal{H}^{-1}(\partial_{\beta\phi}\mathcal{L})\big]_{e}\big[\mathcal{H}^{-1}\mathcal{H}_{b}\mathcal{H}^{-1}(\partial_{\beta\phi}\mathcal{L})\big]_{f}\\
 & -2W^{-1}\sum_{f}(\partial_{\phi\phi'\phi_{f}}\mathcal{L})\big[\mathcal{H}^{-1}\mathcal{H}_{b}\mathcal{H}^{-1}\mathcal{H}_{b}\mathcal{H}^{-1}(\partial_{\beta\phi}\mathcal{L})\big]_{f}\\
 & +W^{-1}\sum_{f}(\partial_{\phi\phi'\phi_{f}}\mathcal{L})\big[\mathcal{H}^{-1}\mathcal{H}_{bb}\mathcal{H}^{-1}(\partial_{\beta\phi}\mathcal{L})\big]_{f}\\
 & +W^{-2}\sum_{f}(\partial_{\phi\phi'\phi_{f}}\mathcal{L})\big[\mathcal{H}^{-1}\mathcal{H}_{b}\mathcal{H}^{-1}(\partial_{\beta\beta\phi}\mathcal{L})\big]_{f}\\
 & +W^{-2}\sum_{f}(\partial_{\phi\phi'\phi_{f}}\mathcal{L})\big[\mathcal{H}^{-1}\mathcal{H}_{b}\mathcal{H}^{-1}\mathcal{E}\mathcal{H}^{-1}(\partial_{\beta\phi}\mathcal{L})\big]_{f}\\
\\
 & +2W^{-3}W_{b}\sum_{f}(\partial_{\phi\phi'\phi_{f}}\mathcal{L})\big[\mathcal{H}^{-1}(\partial_{\beta\beta\phi}\mathcal{L})\big]_{f}\\
 & -W^{-3}\sum_{f}(\partial_{\beta\phi\phi'\phi_{f}}\mathcal{L})\big[\mathcal{H}^{-1}(\partial_{\beta\beta\phi}\mathcal{L})\big]_{f}\\
 & -W^{-3}\sum_{e,f}(\partial_{\phi\phi'\phi_{e}\phi_{f}}\mathcal{L})\big[\mathcal{H}^{-1}(\partial_{\beta\phi}\mathcal{L})\big]_{e}\big[\mathcal{H}^{-1}(\partial_{\beta\beta\phi}\mathcal{L})\big]_{f}\\
 & +W^{-2}\sum_{f}(\partial_{\phi\phi'\phi_{f}}\mathcal{L})\big[\mathcal{H}^{-1}\mathcal{H}_{b}\mathcal{H}^{-1}(\partial_{\beta\beta\phi}\mathcal{L})\big]_{f}\\
 & -W^{-3}\sum_{f}(\partial_{\phi\phi'\phi_{f}}\mathcal{L})\big[\mathcal{H}^{-1}(\partial_{\beta\beta\beta\phi}\mathcal{L})\big]_{f}\\
 & -W^{-3}\sum_{f}(\partial_{\phi\phi'\phi_{f}}\mathcal{L})\big[\mathcal{H}^{-1}\mathcal{E}^{2}\mathcal{H}^{-1}(\partial_{\beta\phi}\mathcal{L})\big]_{f}\\
\\
 & +2W^{-3}W_{b}\sum_{f}(\partial_{\phi\phi'\phi_{f}}\mathcal{L})\big[\mathcal{H}^{-1}\mathcal{E}\mathcal{H}^{-1}(\partial_{\beta\phi}\mathcal{L})\big]_{f}\\
 & -W^{-3}\sum_{f}(\partial_{\beta\phi\phi'\phi_{f}}\mathcal{L})\big[\mathcal{H}^{-1}\mathcal{E}\mathcal{H}^{-1}(\partial_{\beta\phi}\mathcal{L})\big]_{f}\\
 & -W^{-3}\sum_{e,f}(\partial_{\phi\phi'\phi_{e}\phi_{f}}\mathcal{L})\big[\mathcal{H}^{-1}(\partial_{\beta\phi}\mathcal{L})\big]_{e}\big[\mathcal{H}^{-1}\mathcal{E}\mathcal{H}^{-1}(\partial_{\beta\phi}\mathcal{L})\big]_{f}\\
 & +W^{-2}\sum_{f}(\partial_{\phi\phi'\phi_{f}}\mathcal{L})\big[\mathcal{H}^{-1}\mathcal{H}_{b}\mathcal{H}^{-1}\mathcal{E}\mathcal{H}^{-1}(\partial_{\beta\phi}\mathcal{L})\big]_{f}\\
 & +W^{-2}\sum_{f}(\partial_{\phi\phi'\phi_{f}}\mathcal{L})\big[\mathcal{H}^{-1}\mathcal{E}\mathcal{H}^{-1}\mathcal{H}_{b}\mathcal{H}^{-1}(\partial_{\beta\phi}\mathcal{L})\big]_{f}\\
 & -W^{-2}\sum_{f}(\partial_{\phi\phi'\phi_{f}}\mathcal{L})\big[\mathcal{H}^{-1}\mathcal{E}_{b}\mathcal{H}^{-1}(\partial_{\beta\phi}\mathcal{L})\big]_{f}\\
 & -W^{-3}\sum_{f}(\partial_{\phi\phi'\phi_{f}}\mathcal{L})\big[\mathcal{H}^{-1}\mathcal{E}\mathcal{H}^{-1}(\partial_{\beta\beta\phi}\mathcal{L})\big]_{f}\\
 & -W^{-3}\sum_{f}(\partial_{\phi\phi'\phi_{f}}\mathcal{L})\big[\mathcal{H}^{-1}\mathcal{E}\mathcal{H}^{-1}\mathcal{E}\mathcal{H}^{-1}(\partial_{\beta\phi}\mathcal{L})\big]_{f}
\end{align*}

\begin{align*}
\mathcal{H}_{bbs_{g}} & =-2W^{-3}W_{b}W_{s_{g}}\mathcal{E}^{1}+W^{-2}W_{bs_{g}}\mathcal{E}^{1}+W^{-2}W_{s_{g}}\mathcal{E}_{b}^{1}\\
 & +W^{-2}W_{b}\mathcal{E}_{s_{g}}^{1}-W^{-1}\mathcal{E}_{bs_{g}}^{1}\\
 & -2W^{-3}W_{b}W_{s_{g}}\sum_{f}(\partial_{\phi\phi'\phi_{f}}\mathcal{L})\big[\mathcal{H}^{-1}(\partial_{\beta\phi}\mathcal{L})\big]_{f}\\
 & +W^{-2}W_{bs_{g}}\sum_{f}(\partial_{\phi\phi'\phi_{f}}\mathcal{L})\big[\mathcal{H}^{-1}(\partial_{\beta\phi}\mathcal{L})\big]_{f}\\
 & -W^{-3}W_{s_{g}}\sum_{f}(\partial_{\beta\phi\phi'\phi_{f}}\mathcal{L})\big[\mathcal{H}^{-1}(\partial_{\beta\phi}\mathcal{L})\big]_{f}\\
 & -W^{-3}W_{s_{g}}\sum_{e,f}(\partial_{\phi\phi'\phi_{e}\phi_{f}}\mathcal{L})\big[\mathcal{H}^{-1}(\partial_{\beta\phi}\mathcal{L})\big]_{e}\big[\mathcal{H}^{-1}(\partial_{\beta\phi}\mathcal{L})\big]_{f}\\
 & -W^{-2}W_{s_{g}}\sum_{f}(\partial_{\phi\phi'\phi_{f}}\mathcal{L})\big[\mathcal{H}^{-1}\mathcal{H}_{b}\mathcal{H}^{-1}(\partial_{\beta\phi}\mathcal{L})\big]_{f}\\
 & -W^{-3}W_{s_{g}}\sum_{f}(\partial_{\phi\phi'\phi_{f}}\mathcal{L})\big[\mathcal{H}^{-1}(\partial_{\beta\beta\phi}\mathcal{L})\big]_{f}\\
 & -W^{-3}W_{s_{g}}\sum_{f}(\partial_{\phi\phi'\phi_{f}}\mathcal{L})\big[\mathcal{H}^{-1}\mathcal{E}^{1}\mathcal{H}^{-1}(\partial_{\beta\phi}\mathcal{L})\big]_{f}\\
 & +W^{-3}W_{b}\sum_{f}(\partial_{\beta\phi\phi'\phi_{f}}\mathcal{L})\big[\mathcal{H}^{-1}(\partial_{\beta\phi}\mathcal{L})\big]_{f}\big[\mathcal{H}^{-1}(\partial_{\beta\phi}\mathcal{L})\big]_{g}\\
 & +W^{-3}\sum_{f}(\partial_{\beta^{2}\phi\phi'\phi_{f}}\mathcal{L})\big[\mathcal{H}^{-1}(\partial_{\beta\phi}\mathcal{L})\big]_{f}\big[\mathcal{H}^{-1}(\partial_{\beta\phi}\mathcal{L})\big]_{g}\\
 & +W^{-3}\sum_{e,f}(\partial_{\beta\phi\phi'\phi_{e}\phi_{f}}\mathcal{L})\big[\mathcal{H}^{-1}(\partial_{\beta\phi}\mathcal{L})\big]_{e}\big[\mathcal{H}^{-1}(\partial_{\beta\phi}\mathcal{L})\big]_{f}\big[\mathcal{H}^{-1}(\partial_{\beta\phi}\mathcal{L})\big]_{g}\\
 & +W^{-2}\sum_{f}(\partial_{\beta\phi\phi'\phi_{f}}\mathcal{L})\big[\mathcal{H}^{-1}\mathcal{H}_{b}\mathcal{H}^{-1}(\partial_{\beta\phi}\mathcal{L})\big]_{f}\big[\mathcal{H}^{-1}(\partial_{\beta\phi}\mathcal{L})\big]_{g}\\
 & +W^{-3}\sum_{f}(\partial_{\beta\phi\phi'\phi_{f}}\mathcal{L})\big[\mathcal{H}^{-1}(\partial_{\beta\beta\phi}\mathcal{L})\big]_{f}\big[\mathcal{H}^{-1}(\partial_{\beta\phi}\mathcal{L})\big]_{g}\\
 & +W^{-3}\sum_{f}(\partial_{\beta\phi\phi'\phi_{f}}\mathcal{L})\big[\mathcal{H}^{-1}(\partial_{\beta\phi\phi'}\mathcal{L})\mathcal{H}^{-1}(\partial_{\beta\phi}\mathcal{L})\big]_{f}\big[\mathcal{H}^{-1}(\partial_{\beta\phi}\mathcal{L})\big]_{g}\\
 & +W^{-2}\sum_{f}(\partial_{\beta\phi\phi'\phi_{f}}\mathcal{L})\big[\mathcal{H}^{-1}(\partial_{\beta\phi}\mathcal{L})\big]_{f}\big[\mathcal{H}^{-1}\mathcal{H}_{b}\mathcal{H}^{-1}(\partial_{\beta\phi}\mathcal{L})\big]_{g}\\
 & +W^{-3}\sum_{f}(\partial_{\beta\phi\phi'\phi_{f}}\mathcal{L})\big[\mathcal{H}^{-1}(\partial_{\beta\phi}\mathcal{L})\big]_{f}\big[\mathcal{H}^{-1}(\partial_{\beta\beta\phi}\mathcal{L})\big]_{g}\\
 & +W^{-3}\sum_{f}(\partial_{\beta\phi\phi'\phi_{f}}\mathcal{L})\big[\mathcal{H}^{-1}(\partial_{\beta\phi}\mathcal{L})\big]_{f}\big[\mathcal{H}^{-1}(\partial_{\beta\phi\phi'}\mathcal{L})\mathcal{H}^{-1}(\partial_{\beta\phi}\mathcal{L})\big]_{g}\\
 & -W^{-2}W_{b}\sum_{e,f}(\partial_{\phi\phi'\phi_{e}\phi_{f}}\mathcal{L})\big[\mathcal{H}^{-1}(\partial_{\beta\phi}\mathcal{L})\big]_{f}\big[\mathcal{H}^{-1}\big]_{eg}\\
 & -W^{-2}\sum_{e,f}(\partial_{\beta\phi\phi'\phi_{e}\phi_{f}}\mathcal{L})\big[\mathcal{H}^{-1}(\partial_{\beta\phi}\mathcal{L})\big]_{f}\big[\mathcal{H}^{-1}\big]_{eg}\\
 & -W^{-2}\sum_{d,e,f}(\partial_{\phi\phi'\phi_{d}\phi_{e}\phi_{f}}\mathcal{L})\big[\mathcal{H}^{-1}(\partial_{\beta\phi}\mathcal{L})\big]_{d}\big[\mathcal{H}^{-1}(\partial_{\beta\phi}\mathcal{L})\big]_{f}\big[\mathcal{H}^{-1}\big]_{eg}\\
 & -W^{-1}\sum_{e,f}(\partial_{\phi\phi'\phi_{e}\phi_{f}}\mathcal{L})\big[\mathcal{H}^{-1}\mathcal{H}_{b}\mathcal{H}^{-1}(\partial_{\beta\phi}\mathcal{L})\big]_{f}\big[\mathcal{H}^{-1}\big]_{eg}\\
 & -W^{-2}\sum_{e,f}(\partial_{\phi\phi'\phi_{e}\phi_{f}}\mathcal{L})\big[\mathcal{H}^{-1}(\partial_{\beta\beta\phi}\mathcal{L})\big]_{f}\big[\mathcal{H}^{-1}\big]_{eg}\\
 & -W^{-2}\sum_{e,f}(\partial_{\phi\phi'\phi_{e}\phi_{f}}\mathcal{L})\big[\mathcal{H}^{-1}(\partial_{\beta\phi\phi'}\mathcal{L})\mathcal{H}^{-1}(\partial_{\beta\phi}\mathcal{L})\big]_{f}\big[\mathcal{H}^{-1}\big]_{eg}\\
 & -W^{-1}\sum_{e,f}(\partial_{\phi\phi'\phi_{e}\phi_{f}}\mathcal{L})\big[\mathcal{H}^{-1}(\partial_{\beta\phi}\mathcal{L})\big]_{f}\big[\mathcal{H}^{-1}\mathcal{H}_{b}\mathcal{H}^{-1}\big]_{eg}\\
 & +2W^{-3}W_{b}\sum_{e,f}(\partial_{\phi\phi'\phi_{e}\phi_{f}}\mathcal{L})\big[\mathcal{H}^{-1}(\partial_{\beta\phi}\mathcal{L})\big]_{e}\big[\mathcal{H}^{-1}(\partial_{\beta\phi}\mathcal{L})\big]_{f}\big[\mathcal{H}^{-1}(\partial_{\beta\phi}\mathcal{L})\big]_{g}\\
 & +W^{-3}\sum_{e,f}(\partial_{\beta\phi\phi'\phi_{e}\phi_{f}}\mathcal{L})\big[\mathcal{H}^{-1}(\partial_{\beta\phi}\mathcal{L})\big]_{e}\big[\mathcal{H}^{-1}(\partial_{\beta\phi}\mathcal{L})\big]_{f}\big[\mathcal{H}^{-1}(\partial_{\beta\phi}\mathcal{L})\big]_{g}\\
 & +W^{-3}\sum_{d,e,f}(\partial_{\phi\phi'\phi_{d}\phi_{e}\phi_{f}}\mathcal{L})\big[\mathcal{H}^{-1}(\partial_{\beta\phi}\mathcal{L})\big]_{d}\big[\mathcal{H}^{-1}(\partial_{\beta\phi}\mathcal{L})\big]_{e}\big[\mathcal{H}^{-1}(\partial_{\beta\phi}\mathcal{L})\big]_{f}\big[\mathcal{H}^{-1}(\partial_{\beta\phi}\mathcal{L})\big]_{g}\\
 & +2W^{-2}\sum_{e,f}(\partial_{\phi\phi'\phi_{e}\phi_{f}}\mathcal{L})\big[\mathcal{H}^{-1}\mathcal{H}_{b}\mathcal{H}^{-1}(\partial_{\beta\phi}\mathcal{L})\big]_{e}\big[\mathcal{H}^{-1}(\partial_{\beta\phi}\mathcal{L})\big]_{f}\big[\mathcal{H}^{-1}(\partial_{\beta\phi}\mathcal{L})\big]_{g}\\
 & +2W^{-3}\sum_{e,f}(\partial_{\phi\phi'\phi_{e}\phi_{f}}\mathcal{L})\big[\mathcal{H}^{-1}(\partial_{\beta\beta\phi}\mathcal{L})\big]_{e}\big[\mathcal{H}^{-1}(\partial_{\beta\phi}\mathcal{L})\big]_{f}\big[\mathcal{H}^{-1}(\partial_{\beta\phi}\mathcal{L})\big]_{g}\\
 & +2W^{-3}\sum_{e,f}(\partial_{\phi\phi'\phi_{e}\phi_{f}}\mathcal{L})\big[\mathcal{H}^{-1}(\partial_{\beta\phi\phi'}\mathcal{L})\mathcal{H}^{-1}(\partial_{\beta\phi}\mathcal{L})\big]_{e}\big[\mathcal{H}^{-1}(\partial_{\beta\phi}\mathcal{L})\big]_{f}\big[\mathcal{H}^{-1}(\partial_{\beta\phi}\mathcal{L})\big]_{g}\\
 & +W^{-2}\sum_{e,f}(\partial_{\phi\phi'\phi_{e}\phi_{f}}\mathcal{L})\big[\mathcal{H}^{-1}(\partial_{\beta\phi}\mathcal{L})\big]_{e}\big[\mathcal{H}^{-1}(\partial_{\beta\phi}\mathcal{L})\big]_{f}\big[\mathcal{H}^{-1}\mathcal{H}_{b}\mathcal{H}^{-1}(\partial_{\beta\phi}\mathcal{L})\big]_{g}\\
 & +W^{-3}\sum_{e,f}(\partial_{\phi\phi'\phi_{e}\phi_{f}}\mathcal{L})\big[\mathcal{H}^{-1}(\partial_{\beta\phi}\mathcal{L})\big]_{e}\big[\mathcal{H}^{-1}(\partial_{\beta\phi}\mathcal{L})\big]_{f}\big[\mathcal{H}^{-1}(\partial_{\beta\beta\phi}\mathcal{L})\big]_{g}\\
 & +W^{-3}\sum_{e,f}(\partial_{\phi\phi'\phi_{e}\phi_{f}}\mathcal{L})\big[\mathcal{H}^{-1}(\partial_{\beta\phi}\mathcal{L})\big]_{e}\big[\mathcal{H}^{-1}(\partial_{\beta\phi}\mathcal{L})\big]_{f}\big[\mathcal{H}^{-1}(\partial_{\beta\phi\phi'}\mathcal{L})\mathcal{H}^{-1}(\partial_{\beta\phi}\mathcal{L})\big]_{g}\\
 & -W^{-2}W_{b}\sum_{f}(\partial_{\phi\phi'\phi_{f}}\mathcal{L})\big[\mathcal{H}^{-1}\mathcal{H}_{b}\mathcal{H}^{-1}(\partial_{\beta\phi}\mathcal{L})\big]_{f}\\
 & -W^{-2}\sum_{f}(\partial_{\beta\phi\phi'\phi_{f}}\mathcal{L})\big[\mathcal{H}^{-1}\mathcal{H}_{b}\mathcal{H}^{-1}(\partial_{\beta\phi}\mathcal{L})\big]_{f}\\
 & -W^{-2}\sum_{e,f}(\partial_{\phi\phi'\phi_{e}\phi_{f}}\mathcal{L})\big[\mathcal{H}^{-1}(\partial_{\beta\phi}\mathcal{L})\big]_{e}\big[\mathcal{H}^{-1}\mathcal{H}_{b}\mathcal{H}^{-1}(\partial_{\beta\phi}\mathcal{L})\big]_{f}\\
 & -2W^{-1}\sum_{f}(\partial_{\phi\phi'\phi_{f}}\mathcal{L})\big[\mathcal{H}^{-1}\mathcal{H}_{b}\mathcal{H}^{-1}\mathcal{H}_{b}\mathcal{H}^{-1}(\partial_{\beta\phi}\mathcal{L})\big]_{f}\\
 & +W^{-1}\sum_{f}(\partial_{\phi\phi'\phi_{f}}\mathcal{L})\big[\mathcal{H}^{-1}\mathcal{H}_{bb}\mathcal{H}^{-1}(\partial_{\beta\phi}\mathcal{L})\big]_{f}\\
 & -W^{-2}\sum_{f}(\partial_{\phi\phi'\phi_{f}}\mathcal{L})\big[\mathcal{H}^{-1}\mathcal{H}_{b}\mathcal{H}^{-1}(\partial_{\beta\beta\phi}\mathcal{L})\big]_{f}\\
 & -W^{-2}\sum_{f}(\partial_{\phi\phi'\phi_{f}}\mathcal{L})\big[\mathcal{H}^{-1}\mathcal{H}_{b}\mathcal{H}^{-1}(\partial_{\beta\phi\phi'}\mathcal{L})\mathcal{H}^{-1}(\partial_{\beta\phi}\mathcal{L})\big]_{f}\\
 & +W^{-3}W_{b}\sum_{f}(\partial_{\phi\phi'\phi_{f}}\mathcal{L})\big[\mathcal{H}^{-1}(\partial_{\beta\phi}\mathcal{L})\big]_{f}\big[\mathcal{H}^{-1}(\partial_{\beta\phi}\mathcal{L})\big]_{g}\\
 & +W^{-3}\sum_{f}(\partial_{\beta\phi\phi'\phi_{f}}\mathcal{L})\big[\mathcal{H}^{-1}(\partial_{\beta\phi}\mathcal{L})\big]_{f}\big[\mathcal{H}^{-1}(\partial_{\beta\phi}\mathcal{L})\big]_{g}\\
 & +W^{-3}\sum_{e,f}(\partial_{\phi\phi'\phi_{e}\phi_{f}}\mathcal{L})\big[\mathcal{H}^{-1}(\partial_{\beta\phi}\mathcal{L})\big]_{e}\big[\mathcal{H}^{-1}(\partial_{\beta\phi}\mathcal{L})\big]_{f}\big[\mathcal{H}^{-1}(\partial_{\beta\phi}\mathcal{L})\big]_{g}\\
 & +W^{-2}\sum_{f}(\partial_{\phi\phi'\phi_{f}}\mathcal{L})\big[\mathcal{H}^{-1}\mathcal{H}_{b}\mathcal{H}^{-1}(\partial_{\beta\phi}\mathcal{L})\big]_{f}\big[\mathcal{H}^{-1}(\partial_{\beta\phi}\mathcal{L})\big]_{g}\\
 & +W^{-3}\sum_{f}(\partial_{\phi\phi'\phi_{f}}\mathcal{L})\big[\mathcal{H}^{-1}(\partial_{\beta\beta\phi}\mathcal{L})\big]_{f}\big[\mathcal{H}^{-1}(\partial_{\beta\phi}\mathcal{L})\big]_{g}\\
 & +W^{-3}\sum_{f}(\partial_{\phi\phi'\phi_{f}}\mathcal{L})\big[\mathcal{H}^{-1}(\partial_{\beta\phi\phi'}\mathcal{L})\mathcal{H}^{-1}(\partial_{\beta\phi}\mathcal{L})\big]_{f}\big[\mathcal{H}^{-1}(\partial_{\beta\phi}\mathcal{L})\big]_{g}\\
 & +W^{-2}\sum_{f}(\partial_{\phi\phi'\phi_{f}}\mathcal{L})\big[\mathcal{H}^{-1}(\partial_{\beta\phi}\mathcal{L})\big]_{f}\big[\mathcal{H}^{-1}\mathcal{H}_{b}\mathcal{H}^{-1}(\partial_{\beta\phi}\mathcal{L})\big]_{g}\\
 & +W^{-3}\sum_{f}(\partial_{\phi\phi'\phi_{f}}\mathcal{L})\big[\mathcal{H}^{-1}(\partial_{\beta\phi}\mathcal{L})\big]_{f}\big[\mathcal{H}^{-1}(\partial_{\beta\beta\phi}\mathcal{L})\big]_{g}\\
 & +W^{-3}\sum_{f}(\partial_{\phi\phi'\phi_{f}}\mathcal{L})\big[\mathcal{H}^{-1}(\partial_{\beta\phi}\mathcal{L})\big]_{f}\big[\mathcal{H}^{-1}(\partial_{\beta\phi\phi'}\mathcal{L})\mathcal{H}^{-1}(\partial_{\beta\phi}\mathcal{L})\big]_{g}\\
 & -W^{-2}W_{b}\sum_{f}(\partial_{\phi\phi'\phi_{f}}\mathcal{L})\big[\mathcal{H}^{-1}\mathcal{E}^{1}\mathcal{H}^{-1}\big]_{fg}\\
 & -W^{-2}\sum_{f}(\partial_{\beta\phi\phi'\phi_{f}}\mathcal{L})\big[\mathcal{H}^{-1}\mathcal{E}^{1}\mathcal{H}^{-1}\big]_{fg}\\
 & -W^{-2}\sum_{e,f}(\partial_{\phi\phi'\phi_{e}\phi_{f}}\mathcal{L})\big[\mathcal{H}^{-1}(\partial_{\beta\phi}\mathcal{L})\big]_{e}\big[\mathcal{H}^{-1}\mathcal{E}^{1}\mathcal{H}^{-1}\big]_{fg}\\
 & -W^{-1}\sum_{f}(\partial_{\phi\phi'\phi_{f}}\mathcal{L})\big[\mathcal{H}^{-1}\mathcal{H}_{b}\mathcal{H}^{-1}\mathcal{E}^{1}\mathcal{H}^{-1}\big]_{fg}\\
 & -W^{-1}\sum_{f}(\partial_{\phi\phi'\phi_{f}}\mathcal{L})\big[\mathcal{H}^{-1}\mathcal{E}^{1}\mathcal{H}^{-1}\mathcal{H}_{b}\mathcal{H}^{-1}\big]_{fg}\\
 & +W^{-1}\sum_{f}(\partial_{\phi\phi'\phi_{f}}\mathcal{L})\big[\mathcal{H}^{-1}\mathcal{E}_{b}^{1}\mathcal{H}^{-1}\big]_{fg}\\
 & +2W^{-3}W_{b}\sum_{f}(\partial_{\phi\phi'\phi_{f}}\mathcal{L})\big[\mathcal{H}^{-1}\mathcal{E}^{1}\mathcal{H}^{-1}(\partial_{\beta\phi}\mathcal{L})\big]_{f}\big[\mathcal{H}^{-1}(\partial_{\beta\phi}\mathcal{L})\big]_{g}\\
 & +W^{-3}\sum_{f}(\partial_{\beta\phi\phi'\phi_{f}}\mathcal{L})\big[\mathcal{H}^{-1}\mathcal{E}^{1}\mathcal{H}^{-1}(\partial_{\beta\phi}\mathcal{L})\big]_{f}\big[\mathcal{H}^{-1}(\partial_{\beta\phi}\mathcal{L})\big]_{g}\\
 & +W^{-3}\sum_{e,f}(\partial_{\phi\phi'\phi_{e}\phi_{f}}\mathcal{L})\big[\mathcal{H}^{-1}(\partial_{\beta\phi}\mathcal{L})\big]_{e}\big[\mathcal{H}^{-1}\mathcal{E}^{1}\mathcal{H}^{-1}(\partial_{\beta\phi}\mathcal{L})\big]_{f}\big[\mathcal{H}^{-1}(\partial_{\beta\phi}\mathcal{L})\big]_{g}\\
 & +W^{-2}\sum_{f}(\partial_{\phi\phi'\phi_{f}}\mathcal{L})\big[\mathcal{H}^{-1}\mathcal{H}_{b}\mathcal{H}^{-1}\mathcal{E}^{1}\mathcal{H}^{-1}(\partial_{\beta\phi}\mathcal{L})\big]_{f}\big[\mathcal{H}^{-1}(\partial_{\beta\phi}\mathcal{L})\big]_{g}\\
 & -W^{-2}\sum_{f}(\partial_{\phi\phi'\phi_{f}}\mathcal{L})\big[\mathcal{H}^{-1}\mathcal{E}_{b}^{1}\mathcal{H}^{-1}(\partial_{\beta\phi}\mathcal{L})\big]_{f}\big[\mathcal{H}^{-1}(\partial_{\beta\phi}\mathcal{L})\big]_{g}\\
 & +W^{-2}\sum_{f}(\partial_{\phi\phi'\phi_{f}}\mathcal{L})\big[\mathcal{H}^{-1}\mathcal{E}^{1}\mathcal{H}^{-1}\mathcal{H}_{b}\mathcal{H}^{-1}(\partial_{\beta\phi}\mathcal{L})\big]_{f}\big[\mathcal{H}^{-1}(\partial_{\beta\phi}\mathcal{L})\big]_{g}\\
 & +W^{-3}\sum_{f}(\partial_{\phi\phi'\phi_{f}}\mathcal{L})\big[\mathcal{H}^{-1}\mathcal{E}^{1}\mathcal{H}^{-1}(\partial_{\beta\beta\phi}\mathcal{L})\big]_{f}\big[\mathcal{H}^{-1}(\partial_{\beta\phi}\mathcal{L})\big]_{g}\\
 & +W^{-3}\sum_{f}(\partial_{\phi\phi'\phi_{f}}\mathcal{L})\big[\mathcal{H}^{-1}\mathcal{E}^{1}\mathcal{H}^{-1}\mathcal{E}^{1}\mathcal{H}^{-1}(\partial_{\beta\phi}\mathcal{L})\big]_{f}\big[\mathcal{H}^{-1}(\partial_{\beta\phi}\mathcal{L})\big]_{g}\\
 & +W^{-2}\sum_{f}(\partial_{\phi\phi'\phi_{f}}\mathcal{L})\big[\mathcal{H}^{-1}\mathcal{E}^{1}\mathcal{H}^{-1}(\partial_{\beta\phi}\mathcal{L})\big]_{f}\big[\mathcal{H}^{-1}\mathcal{H}_{b}\mathcal{H}^{-1}(\partial_{\beta\phi}\mathcal{L})\big]_{g}\\
 & +W^{-3}\sum_{f}(\partial_{\phi\phi'\phi_{f}}\mathcal{L})\big[\mathcal{H}^{-1}\mathcal{E}^{1}\mathcal{H}^{-1}(\partial_{\beta\phi}\mathcal{L})\big]_{f}\big[\mathcal{H}^{-1}(\partial_{\beta\beta\phi}\mathcal{L})\big]_{g}\\
 & +W^{-3}\sum_{f}(\partial_{\phi\phi'\phi_{f}}\mathcal{L})\big[\mathcal{H}^{-1}\mathcal{E}^{1}\mathcal{H}^{-1}(\partial_{\beta\phi}\mathcal{L})\big]_{f}\big[\mathcal{H}^{-1}\mathcal{E}^{1}\mathcal{H}^{-1}(\partial_{\beta\beta\phi}\mathcal{L})\big]_{g}
\end{align*}

\begin{align*}
	\mathcal{H}_{bs_{g}s_{h}} & =2W^{-3}W_{b}W_{s_{h}}[\mathcal{R}^{1}]_{g}(\partial_{\beta\phi\phi'}\mathcal{L})-W^{-2}W_{bs_{h}}[\mathcal{R}^{1}]_{g}(\partial_{\beta\phi\phi'}\mathcal{L})\\
	& -W^{-2}W_{s_{h}}[\mathcal{R}_{b}^{1}]_{g}(\partial_{\beta\phi\phi'}\mathcal{L})+W^{-3}W_{s_{h}}[\mathcal{R}^{1}]_{g}(\partial_{\beta^{2}\phi\phi'}\mathcal{L})\\
	& +W^{-3}W_{s_{h}}[\mathcal{R}^{1}]_{g}\mathcal{P}^{(1,1)}-W^{-2}W_{b}[\mathcal{R}_{s_{h}}^{1}]_{g}(\partial_{\beta\phi\phi'}\mathcal{L})\\
	& +W^{-1}[\mathcal{R}_{bs_{h}}^{1}]_{g}(\partial_{\beta\phi\phi'}\mathcal{L})-W^{-2}[\mathcal{R}_{s_{h}}^{1}]_{g}(\partial_{\beta^{2}\phi\phi'}\mathcal{L})-W^{-3}[\mathcal{R}_{s_{h}}^{1}]_{g}\mathcal{P}^{(1,1)}\\
	& +2W^{-3}W_{b}[\mathcal{R}^{1}]_{g}(\partial_{\beta^{2}\phi\phi'}\mathcal{L})[\mathcal{R}^{1}]_{h}-W^{-2}[\mathcal{R}_{b}^{1}]_{g}(\partial_{\beta^{2}\phi\phi'}\mathcal{L})[\mathcal{R}^{1}]_{h}\\
	& +W^{-3}[\mathcal{R}^{1}]_{g}(\partial_{\beta^{3}\phi\phi'}\mathcal{L})[\mathcal{R}^{1}]_{h}+W^{-3}[\mathcal{R}^{1}]_{g}\mathcal{P}^{(2,1)}[\mathcal{R}^{1}]_{h}\\
	& -W^{-2}[\mathcal{R}^{1}]_{g}(\partial_{\beta^{2}\phi\phi'}\mathcal{L})[\mathcal{R}_{b}^{1}]_{h}+W^{-2}W_{b}[\mathcal{R}^{1}]_{g}\mathcal{P}_{(h)}^{1}\\
	& -W^{-1}[\mathcal{R}_{b}^{1}]_{g}\mathcal{P}_{(h)}^{1}-W^{-1}[\mathcal{R}^{1}]_{g}\mathcal{P}_{(h),b}^{1}\\
	& +2W^{-3}W_{b}[\mathcal{R}^{1}]_{g}\mathcal{P}^{(1,1)}[\mathcal{R}^{1}]_{h}-W^{-2}[\mathcal{R}_{b}^{1}]_{g}\mathcal{P}^{(1,1)}[\mathcal{R}^{1}]_{h}\\
	& -W^{-2}[\mathcal{R}^{1}]_{g}\mathcal{P}_{b}^{(1,1)}[\mathcal{R}^{1}]_{h}-W^{-2}[\mathcal{R}^{1}]_{g}\mathcal{P}^{(1,1)}[\mathcal{R}_{b}^{1}]_{h}\\
	& +\mathcal{P}_{(g),bs_{h}}^{0}\\
	& +2W^{-3}W_{b}W_{s_{h}}\mathcal{P}^{(0,1)}[\mathcal{R}^{1}]_{g}-W^{-2}W_{bs_{h}}\mathcal{P}^{(0,1)}[\mathcal{R}^{1}]_{g}\\
	& -W^{-2}W_{s_{h}}\mathcal{P}_{b}^{(0,1)}[\mathcal{R}^{1}]_{g}-W^{-2}W_{s_{h}}\mathcal{P}^{(0,1)}[\mathcal{R}_{b}^{1}]_{g}\\
	& -W^{-2}W_{b}\mathcal{P}_{s_{h}}^{(0,1)}[\mathcal{R}^{1}]_{g}+W^{-1}\mathcal{P}_{bs_{h}}^{(0,1)}[\mathcal{R}^{1}]_{g}+W^{-1}\mathcal{P}_{s_{h}}^{(0,1)}[\mathcal{R}_{b}^{1}]_{g}\\
	& -W^{-2}W_{b}\mathcal{P}^{(0,1)}[\mathcal{R}_{s_{h}}^{1}]_{g}+W^{-1}\mathcal{P}_{b}^{(0,1)}[\mathcal{R}_{s_{h}}^{1}]_{g}+W^{-1}\mathcal{P}^{(0,1)}[\mathcal{R}_{bs_{h}}^{1}]_{g}
\end{align*}

\begin{align*}
	\mathcal{H}_{s_{f}s_{g}s_{h}} & =2W^{-3}W_{s_{f}}W_{s_{h}}[\mathcal{R}^{1}]_{g}(\partial_{\beta\phi\phi'}\mathcal{L})-W^{-2}W_{s_{f}s_{h}}[\mathcal{R}^{1}]_{g}(\partial_{\beta\phi\phi'}\mathcal{L})\\
	& -W^{-2}W_{s_{h}}[\mathcal{R}_{s_{f}}^{1}]_{g}(\partial_{\beta\phi\phi'}\mathcal{L})+W^{-3}W_{s_{h}}[\mathcal{R}^{1}]_{g}(\partial_{\beta^{2}\phi\phi'}\mathcal{L})[\mathcal{R}^{1}]_{f}\\
	& +W^{-2}W_{s_{h}}[\mathcal{R}^{1}]_{g}\sum_{e}(\partial_{\beta\phi\phi'\phi_{e}}\mathcal{L})[\mathcal{H}^{-1}]_{ef}\\
	& +W^{-3}W_{s_{h}}[\mathcal{R}^{1}]_{g}\sum_{e}(\partial_{\beta\phi\phi'\phi_{e}}\mathcal{L})[\mathcal{R}^{1}]_{e}[\mathcal{R}^{1}]_{f}\\
	& -W^{-2}W_{s_{f}}[\mathcal{R}_{s_{h}}^{1}]_{g}(\partial_{\beta\phi\phi'}\mathcal{L})+W^{-1}[\mathcal{R}_{s_{f}s_{h}}^{1}]_{g}(\partial_{\beta\phi\phi'}\mathcal{L})\\
	& -W^{-2}[\mathcal{R}_{s_{h}}^{1}]_{g}(\partial_{\beta^{2}\phi\phi'}\mathcal{L})[\mathcal{R}^{1}]_{f}-W^{-1}[\mathcal{R}_{s_{h}}^{1}]_{g}\sum_{e}(\partial_{\beta\phi\phi'\phi_{e}}\mathcal{L})[\mathcal{H}^{-1}]_{ef}\\
	& -W^{-2}[\mathcal{R}_{s_{h}}^{1}]_{g}\sum_{e}(\partial_{\beta\phi\phi'\phi_{e}}\mathcal{L})[\mathcal{R}^{1}]_{e}[\mathcal{R}^{1}]_{f}\\
	& +2W^{-3}W_{s_{f}}[\mathcal{R}^{1}]_{g}(\partial_{\beta^{2}\phi\phi'}\mathcal{L})[\mathcal{R}^{1}]_{h}-W^{-2}[\mathcal{R}_{s_{f}}^{1}]_{g}(\partial_{\beta^{2}\phi\phi'}\mathcal{L})[\mathcal{R}^{1}]_{h}\\
	& +W^{-3}[\mathcal{R}^{1}]_{g}(\partial_{\beta^{3}\phi\phi'}\mathcal{L})[\mathcal{R}^{1}]_{f}[\mathcal{R}^{1}]_{h}+W^{-3}[\mathcal{R}^{1}]_{g}(\partial_{\beta^{2}\phi\phi'}\mathcal{L})[\mathcal{R}^{1}]_{h}\\
	& -W^{-2}[\mathcal{R}^{1}]_{g}\sum_{e}(\partial_{\beta^{2}\phi\phi'\phi_{e}}\mathcal{L})[\mathcal{R}^{1}]_{e}[\mathcal{R}^{1}]_{f}[\mathcal{R}^{1}]_{h}\\
	& +W^{-2}W_{s_{f}}[\mathcal{R}^{1}]_{g}\mathcal{P}_{(h)}^{1}-W^{-1}[\mathcal{R}_{s_{f}}^{1}]_{g}\mathcal{P}_{(h)}^{1}-W^{-1}[\mathcal{R}^{1}]_{g}\mathcal{P}_{(h),s_{f}}^{1}\\
	& +2W^{-3}W_{s_{f}}[\mathcal{R}^{1}]_{g}\mathcal{P}^{(1,1)}[\mathcal{R}^{1}]_{h}-W^{-2}[\mathcal{R}_{s_{f}}^{1}]_{g}\mathcal{P}^{(1,1)}[\mathcal{R}^{1}]_{h}\\
	& -W^{-2}[\mathcal{R}^{1}]_{g}\mathcal{P}_{s_{f}}^{(1,1)}[\mathcal{R}^{1}]_{h}-W^{-2}[\mathcal{R}^{1}]_{g}\mathcal{P}^{(1,1)}[\mathcal{R}_{s_{f}}^{1}]_{h}\\
	& +\mathcal{P}_{(g),s_{f}s_{h}}^{0}\\
	& +2W^{-3}W_{s_{f}}W_{s_{h}}\mathcal{P}^{(0,1)}[\mathcal{R}^{1}]_{g}-W^{-2}W_{s_{f}s_{h}}\mathcal{P}^{(0,1)}[\mathcal{R}^{1}]_{g}\\
	& -W^{-2}W_{s_{h}}\mathcal{P}_{s_{f}}^{(0,1)}[\mathcal{R}^{1}]_{g}-W^{-2}W_{s_{h}}\mathcal{P}^{(0,1)}[\mathcal{R}_{s_{f}}^{1}]_{g}\\
	& -W^{-2}W_{s_{f}}\mathcal{P}_{s_{h}}^{(0,1)}[\mathcal{R}^{1}]_{g}+W^{-1}\mathcal{P}_{s_{f}s_{h}}^{(0,1)}[\mathcal{R}^{1}]_{g}+W^{-1}\mathcal{P}_{s_{h}}^{(0,1)}[\mathcal{R}_{s_{f}}^{1}]_{g}\\
	& -W^{-2}W_{s_{f}}\mathcal{P}^{(0,1)}[\mathcal{R}_{s_{h}}^{1}]_{g}+W^{-1}\mathcal{P}_{s_{f}}^{(0,1)}[\mathcal{R}_{s_{h}}^{1}]_{g}+W^{-1}\mathcal{P}^{(0,1)}[\mathcal{R}_{s_{f}s_{h}}^{1}]_{g}
\end{align*}

Fourth derivative:

\begin{align*}
	\mathcal{H}_{bs_{f}s_{g}s_{h}} & =-6W^{-4}W_{b}W_{s_{f}}W_{s_{h}}[\mathcal{R}^{1}]_{g}(\partial_{\beta\phi\phi'}\mathcal{L})+2W^{-3}W_{bs_{f}}W_{s_{h}}[\mathcal{R}^{1}]_{g}(\partial_{\beta\phi\phi'}\mathcal{L})\\
	& +2W^{-3}W_{s_{f}}W_{bs_{h}}[\mathcal{R}^{1}]_{g}(\partial_{\beta\phi\phi'}\mathcal{L})+2W^{-3}W_{s_{f}}W_{s_{h}}[\mathcal{R}_{b}^{1}]_{g}(\partial_{\beta\phi\phi'}\mathcal{L})\\
	& -2W^{-4}W_{s_{f}}W_{s_{h}}[\mathcal{R}^{1}]_{g}(\partial_{\beta^{2}\phi\phi'}\mathcal{L})-2W^{-4}W_{s_{f}}W_{s_{h}}[\mathcal{R}^{1}]_{g}\sum_{e}(\partial_{\beta\phi\phi'\phi_{e}}\mathcal{L})[\mathcal{R}^{1}]_{e}\\
	& +2W^{-3}W_{b}W_{s_{f}s_{h}}[\mathcal{R}^{1}]_{g}(\partial_{\beta\phi\phi'}\mathcal{L})-W^{-2}W_{bs_{f}s_{h}}[\mathcal{R}^{1}]_{g}(\partial_{\beta\phi\phi'}\mathcal{L})\\
	& -W^{-2}W_{s_{f}s_{h}}[\mathcal{R}_{b}^{1}]_{g}(\partial_{\beta\phi\phi'}\mathcal{L})+W^{-3}W_{s_{f}s_{h}}[\mathcal{R}^{1}]_{g}(\partial_{\beta^{2}\phi\phi'}\mathcal{L})\\
	& +W^{-3}W_{s_{f}s_{h}}[\mathcal{R}^{1}]_{g}\sum_{e}(\partial_{\beta\phi\phi'\phi_{e}}\mathcal{L})[\mathcal{R}^{1}]_{e}+2W^{-3}W_{b}W_{s_{h}}[\mathcal{R}_{s_{f}}^{1}]_{g}(\partial_{\beta\phi\phi'}\mathcal{L})\\
	& -W^{-2}W_{bs_{h}}[\mathcal{R}_{s_{f}}^{1}]_{g}(\partial_{\beta\phi\phi'}\mathcal{L})-W^{-2}W_{s_{h}}[\mathcal{R}_{bs_{f}}^{1}]_{g}(\partial_{\beta\phi\phi'}\mathcal{L})\\
	& +W^{-3}W_{s_{h}}[\mathcal{R}_{s_{f}}^{1}]_{g}(\partial_{\beta^{2}\phi\phi'}\mathcal{L})+W^{-3}W_{s_{h}}[\mathcal{R}_{s_{f}}^{1}]_{g}\sum_{e}(\partial_{\beta\phi\phi'\phi_{e}}\mathcal{L})[\mathcal{R}^{1}]_{e}\\
	& -3W^{-4}W_{b}W_{s_{h}}[\mathcal{R}^{1}]_{g}(\partial_{\beta^{2}\phi\phi'}\mathcal{L})[\mathcal{R}^{1}]_{f}+W^{-3}W_{bs_{h}}[\mathcal{R}^{1}]_{g}(\partial_{\beta^{2}\phi\phi'}\mathcal{L})[\mathcal{R}^{1}]_{f}\\
	& +W^{-3}W_{s_{h}}[\mathcal{R}_{b}^{1}]_{g}(\partial_{\beta^{2}\phi\phi'}\mathcal{L})[\mathcal{R}^{1}]_{f}-W^{-4}W_{s_{h}}[\mathcal{R}^{1}]_{g}(\partial_{\beta^{3}\phi\phi'}\mathcal{L})[\mathcal{R}^{1}]_{f}\\
	& -W^{-4}W_{s_{h}}[\mathcal{R}^{1}]_{g}\sum_{e}(\partial_{\beta^{2}\phi\phi'\phi_{e}}\mathcal{L})[\mathcal{R}^{1}]_{e}[\mathcal{R}^{1}]_{f}\\
	& +W^{-3}W_{s_{h}}[\mathcal{R}^{1}]_{g}(\partial_{\beta^{2}\phi\phi'}\mathcal{L})[\mathcal{R}_{b}^{1}]_{f}-2W^{-3}W_{b}W_{s_{h}}[\mathcal{R}^{1}]_{g}\sum_{e}(\partial_{\beta\phi\phi'\phi_{e}}\mathcal{L})[\mathcal{H}^{-1}]_{ef}\\
	& +W^{-2}W_{bs_{h}}[\mathcal{R}^{1}]_{g}\sum_{e}(\partial_{\beta\phi\phi'\phi_{e}}\mathcal{L})[\mathcal{H}^{-1}]_{ef}+W^{-2}W_{s_{h}}[\mathcal{R}_{b}^{1}]_{g}\sum_{e}(\partial_{\beta\phi\phi'\phi_{e}}\mathcal{L})[\mathcal{H}^{-1}]_{ef}\\
	& -W^{-3}W_{s_{h}}[\mathcal{R}^{1}]_{g}\sum_{e}(\partial_{\beta^{2}\phi\phi'\phi_{e}}\mathcal{L})[\mathcal{H}^{-1}]_{ef}\\
	& -W^{-3}W_{s_{h}}[\mathcal{R}^{1}]_{g}\sum_{d,e}(\partial_{\beta\phi\phi'\phi_{d}\phi_{e}}\mathcal{L})[\mathcal{H}^{-1}]_{ef}[\mathcal{R}^{1}]_{d}\\
	& -W^{-2}W_{s_{h}}[\mathcal{R}^{1}]_{g}\sum_{e}(\partial_{\beta\phi\phi'\phi_{e}}\mathcal{L})[\mathcal{H}^{-1}\mathcal{H}_{b}\mathcal{H}^{-1}]_{ef}\\
	& -3W^{-4}W_{b}W_{s_{h}}[\mathcal{R}^{1}]_{g}\sum_{e}(\partial_{\beta\phi\phi'\phi_{e}}\mathcal{L})[\mathcal{R}^{1}]_{e}[\mathcal{R}^{1}]_{f}\\
	& +W^{-3}W_{bs_{h}}[\mathcal{R}^{1}]_{g}\sum_{e}(\partial_{\beta\phi\phi'\phi_{e}}\mathcal{L})[\mathcal{R}^{1}]_{e}[\mathcal{R}^{1}]_{f}\\
	& +W^{-3}W_{s_{h}}[\mathcal{R}_{b}^{1}]_{g}\sum_{e}(\partial_{\beta\phi\phi'\phi_{e}}\mathcal{L})[\mathcal{R}^{1}]_{e}[\mathcal{R}^{1}]_{f}\\
	& -W^{-4}W_{s_{h}}[\mathcal{R}^{1}]_{g}\sum_{e}(\partial_{\beta^{2}\phi\phi'\phi_{e}}\mathcal{L})[\mathcal{R}^{1}]_{e}[\mathcal{R}^{1}]_{f}\\
	& -W^{-4}W_{s_{h}}[\mathcal{R}^{1}]_{g}\sum_{d,e}(\partial_{\beta\phi\phi'\phi_{d}\phi_{e}}\mathcal{L})[\mathcal{R}^{1}]_{d}[\mathcal{R}^{1}]_{e}[\mathcal{R}^{1}]_{f}\\
	& +W^{-3}W_{s_{h}}[\mathcal{R}^{1}]_{g}\sum_{e}(\partial_{\beta\phi\phi'\phi_{e}}\mathcal{L})[\mathcal{R}_{b}^{1}]_{e}[\mathcal{R}^{1}]_{f}\\
	& +W^{-3}W_{s_{h}}[\mathcal{R}^{1}]_{g}\sum_{e}(\partial_{\beta\phi\phi'\phi_{e}}\mathcal{L})[\mathcal{R}^{1}]_{e}[\mathcal{R}_{b}^{1}]_{f}\\
	& +2W^{-3}W_{b}W_{s_{f}}[\mathcal{R}_{s_{h}}^{1}]_{g}(\partial_{\beta\phi\phi'}\mathcal{L})-W^{-2}W_{bs_{f}}[\mathcal{R}_{s_{h}}^{1}]_{g}(\partial_{\beta\phi\phi'}\mathcal{L})\\
	& -W^{-2}W_{s_{f}}[\mathcal{R}_{bs_{h}}^{1}]_{g}(\partial_{\beta\phi\phi'}\mathcal{L})+W^{-3}W_{s_{f}}[\mathcal{R}_{s_{h}}^{1}]_{g}(\partial_{\beta^{2}\phi\phi'}\mathcal{L})\\
	& +W^{-3}W_{s_{f}}[\mathcal{R}_{s_{h}}^{1}]_{g}\sum_{e}(\partial_{\beta\phi\phi'\phi_{e}}\mathcal{L})[\mathcal{R}^{1}]_{e}-W^{-2}[\mathcal{R}_{s_{f}s_{h}}^{1}]_{g}(\partial_{\beta\phi\phi'}\mathcal{L})\\
	& +W^{-1}[\mathcal{R}_{bs_{f}s_{h}}^{1}]_{g}(\partial_{\beta\phi\phi'}\mathcal{L})-W^{-2}[\mathcal{R}_{s_{f}s_{h}}^{1}]_{g}(\partial_{\beta^{2}\phi\phi'}\mathcal{L})\\
	& -W^{-2}[\mathcal{R}_{s_{f}s_{h}}^{1}]_{g}\sum_{e}(\partial_{\beta\phi\phi'\phi_{e}}\mathcal{L})[\mathcal{R}^{1}]_{e}+2W^{-3}W_{b}[\mathcal{R}_{s_{h}}^{1}]_{g}(\partial_{\beta^{2}\phi\phi'}\mathcal{L})[\mathcal{R}^{1}]_{f}\\
	& -W^{-2}[\mathcal{R}_{bs_{h}}^{1}]_{g}(\partial_{\beta^{2}\phi\phi'}\mathcal{L})[\mathcal{R}^{1}]_{f}+W^{-3}[\mathcal{R}_{s_{h}}^{1}]_{g}(\partial_{\beta^{3}\phi\phi'}\mathcal{L})[\mathcal{R}^{1}]_{f}\\
	& +W^{-3}[\mathcal{R}_{s_{h}}^{1}]_{g}\sum_{e}(\partial_{\beta^{2}\phi\phi'\phi_{e}}\mathcal{L})[\mathcal{R}^{1}]_{e}[\mathcal{R}^{1}]_{f}-W^{-2}[\mathcal{R}_{s_{h}}^{1}]_{g}(\partial_{\beta^{2}\phi\phi'}\mathcal{L})[\mathcal{R}_{b}^{1}]_{f}\\
	& +W^{-2}W_{b}[\mathcal{R}_{s_{h}}^{1}]_{g}\sum_{e}(\partial_{\beta\phi\phi'\phi_{e}}\mathcal{L})[\mathcal{H}^{-1}]_{ef}-W^{-1}[\mathcal{R}_{bs_{h}}^{1}]_{g}\sum_{e}(\partial_{\beta\phi\phi'\phi_{e}}\mathcal{L})[\mathcal{H}^{-1}]_{ef}\\
	& +W^{-2}[\mathcal{R}_{s_{h}}^{1}]_{g}\sum_{e}(\partial_{\beta^{2}\phi\phi'\phi_{e}}\mathcal{L})[\mathcal{H}^{-1}]_{ef}\\
	& +W^{-2}[\mathcal{R}_{s_{h}}^{1}]_{g}\sum_{d,e}(\partial_{\beta\phi\phi'\phi_{d}\phi_{e}}\mathcal{L})[\mathcal{H}^{-1}]_{ef}[\mathcal{R}^{1}]_{d}\\
	& +W^{-1}[\mathcal{R}_{s_{h}}^{1}]_{g}\sum_{e}(\partial_{\beta\phi\phi'\phi_{e}}\mathcal{L})[\mathcal{H}^{-1}\mathcal{H}_{b}\mathcal{H}^{-1}]_{ef}\\
	& +2W^{-3}W_{b}[\mathcal{R}_{s_{h}}^{1}]_{g}\sum_{e}(\partial_{\beta\phi\phi'\phi_{e}}\mathcal{L})[\mathcal{R}^{1}]_{e}[\mathcal{R}^{1}]_{f}\\
	& -W^{-2}[\mathcal{R}_{bs_{h}}^{1}]_{g}\sum_{e}(\partial_{\beta\phi\phi'\phi_{e}}\mathcal{L})[\mathcal{R}^{1}]_{e}[\mathcal{R}^{1}]_{f}\\
	& +W^{-3}[\mathcal{R}_{s_{h}}^{1}]_{g}\sum_{e}(\partial_{\beta^{2}\phi\phi'\phi_{e}}\mathcal{L})[\mathcal{R}^{1}]_{e}[\mathcal{R}^{1}]_{f}\\
	& +W^{-3}[\mathcal{R}_{s_{h}}^{1}]_{g}\sum_{d,e}(\partial_{\beta\phi\phi'\phi_{d}\phi_{e}}\mathcal{L})[\mathcal{R}^{1}]_{d}[\mathcal{R}^{1}]_{e}[\mathcal{R}^{1}]_{f}\\
	& -W^{-2}[\mathcal{R}_{s_{h}}^{1}]_{g}\sum_{e}(\partial_{\beta\phi\phi'\phi_{e}}\mathcal{L})[\mathcal{R}_{b}^{1}]_{e}[\mathcal{R}^{1}]_{f}\\
	& -W^{-2}[\mathcal{R}_{s_{h}}^{1}]_{g}\sum_{e}(\partial_{\beta\phi\phi'\phi_{e}}\mathcal{L})[\mathcal{R}^{1}]_{e}[\mathcal{R}_{b}^{1}]_{f}\\
	& -6W^{-4}W_{b}W_{s_{f}}[\mathcal{R}^{1}]_{g}(\partial_{\beta^{2}\phi\phi'}\mathcal{L})[\mathcal{R}^{1}]_{h}+2W^{-3}W_{bs_{f}}[\mathcal{R}^{1}]_{g}(\partial_{\beta^{2}\phi\phi'}\mathcal{L})[\mathcal{R}^{1}]_{h}\\
	& +2W^{-3}W_{s_{f}}[\mathcal{R}_{b}^{1}]_{g}(\partial_{\beta^{2}\phi\phi'}\mathcal{L})[\mathcal{R}^{1}]_{h}-2W^{-4}W_{s_{f}}[\mathcal{R}^{1}]_{g}(\partial_{\beta^{3}\phi\phi'}\mathcal{L})[\mathcal{R}^{1}]_{h}\\
	& -2W^{-4}W_{s_{f}}[\mathcal{R}^{1}]_{g}\sum_{e}(\partial_{\beta^{2}\phi\phi'\phi_{e}}\mathcal{L})[\mathcal{R}^{1}]_{e}[\mathcal{R}^{1}]_{h}\\
	& +2W^{-3}W_{s_{f}}[\mathcal{R}^{1}]_{g}(\partial_{\beta^{2}\phi\phi'}\mathcal{L})[\mathcal{R}_{b}^{1}]_{h}+2W^{-3}W_{b}[\mathcal{R}_{s_{f}}^{1}]_{g}(\partial_{\beta^{2}\phi\phi'}\mathcal{L})[\mathcal{R}^{1}]_{h}\\
	& -W^{-2}[\mathcal{R}_{bs_{f}}^{1}]_{g}(\partial_{\beta^{2}\phi\phi'}\mathcal{L})[\mathcal{R}^{1}]_{h}+W^{-3}[\mathcal{R}_{s_{f}}^{1}]_{g}(\partial_{\beta^{3}\phi\phi'}\mathcal{L})[\mathcal{R}^{1}]_{h}\\
	& +W^{-3}[\mathcal{R}_{s_{f}}^{1}]_{g}\sum_{e}(\partial_{\beta^{2}\phi\phi'\phi_{e}}\mathcal{L})[\mathcal{R}^{1}]_{e}[\mathcal{R}^{1}]_{h}\\
	& -W^{-2}[\mathcal{R}_{s_{f}}^{1}]_{g}(\partial_{\beta^{2}\phi\phi'}\mathcal{L})[\mathcal{R}_{b}^{1}]_{h}-3W^{-4}W_{b}[\mathcal{R}^{1}]_{g}(\partial_{\beta^{3}\phi\phi'}\mathcal{L})[\mathcal{R}^{1}]_{f}[\mathcal{R}^{1}]_{h}\\
	& +W^{-3}[\mathcal{R}_{b}^{1}]_{g}(\partial_{\beta^{3}\phi\phi'}\mathcal{L})[\mathcal{R}^{1}]_{f}[\mathcal{R}^{1}]_{h}-W^{-4}[\mathcal{R}^{1}]_{g}(\partial_{\beta^{4}\phi\phi'}\mathcal{L})[\mathcal{R}^{1}]_{f}[\mathcal{R}^{1}]_{h}\\
	& -W^{-4}[\mathcal{R}^{1}]_{g}\sum_{e}(\partial_{\beta^{3}\phi\phi'\phi_{e}}\mathcal{L})[\mathcal{R}^{1}]_{e}[\mathcal{R}^{1}]_{f}[\mathcal{R}^{1}]_{h}\\
	& +W^{-3}[\mathcal{R}^{1}]_{g}(\partial_{\beta^{3}\phi\phi'}\mathcal{L})[\mathcal{R}_{b}^{1}]_{f}[\mathcal{R}^{1}]_{h}+W^{-3}[\mathcal{R}^{1}]_{g}(\partial_{\beta^{3}\phi\phi'}\mathcal{L})[\mathcal{R}^{1}]_{f}[\mathcal{R}_{b}^{1}]_{h}\\
	& -2W^{-3}W_{b}[\mathcal{R}^{1}]_{g}\sum_{e}(\partial_{\beta^{2}\phi\phi'\phi_{e}}\mathcal{L})[\mathcal{H}^{-1}]_{ef}[\mathcal{R}^{1}]_{h}\\
	& +W^{-2}[\mathcal{R}_{b}^{1}]_{g}\sum_{e}(\partial_{\beta^{2}\phi\phi'\phi_{e}}\mathcal{L})[\mathcal{H}^{-1}]_{ef}[\mathcal{R}^{1}]_{h}\\
	& -W^{-3}[\mathcal{R}^{1}]_{g}\sum_{e}(\partial_{\beta^{3}\phi\phi'\phi_{e}}\mathcal{L})[\mathcal{H}^{-1}]_{ef}[\mathcal{R}^{1}]_{h}\\
	& -W^{-3}[\mathcal{R}^{1}]_{g}\sum_{d,e}(\partial_{\beta^{2}\phi\phi'\phi_{d}\phi_{e}}\mathcal{L})[\mathcal{H}^{-1}]_{ef}[\mathcal{R}^{1}]_{d}[\mathcal{R}^{1}]_{h}\\
	& -W^{-2}[\mathcal{R}^{1}]_{g}\sum_{e}(\partial_{\beta^{2}\phi\phi'\phi_{e}}\mathcal{L})[\mathcal{H}^{-1}\mathcal{H}_{b}\mathcal{H}^{-1}]_{ef}[\mathcal{R}^{1}]_{h}\\
	& +W^{-2}[\mathcal{R}^{1}]_{g}\sum_{e}(\partial_{\beta^{2}\phi\phi'\phi_{e}}\mathcal{L})[\mathcal{H}^{-1}]_{ef}[\mathcal{R}_{b}^{1}]_{h}\\
	& -3W^{-4}W_{b}[\mathcal{R}^{1}]_{g}(\partial_{\beta^{2}\phi\phi'}\mathcal{L})[\mathcal{R}^{1}]_{h}+W^{-3}[\mathcal{R}_{b}^{1}]_{g}(\partial_{\beta^{2}\phi\phi'}\mathcal{L})[\mathcal{R}^{1}]_{h}\\
	& -W^{-4}[\mathcal{R}^{1}]_{g}(\partial_{\beta^{3}\phi\phi'}\mathcal{L})[\mathcal{R}^{1}]_{h}-W^{-4}[\mathcal{R}^{1}]_{g}\sum_{e}(\partial_{\beta^{2}\phi\phi'\phi_{e}}\mathcal{L})[\mathcal{R}^{1}]_{e}[\mathcal{R}^{1}]_{h}\\
	& +W^{-3}[\mathcal{R}^{1}]_{g}(\partial_{\beta^{2}\phi\phi'}\mathcal{L})[\mathcal{R}_{b}^{1}]_{h}\\
	& +2W^{-3}W_{b}[\mathcal{R}^{1}]_{g}\sum_{e}(\partial_{\beta^{2}\phi\phi'\phi_{e}}\mathcal{L})[\mathcal{R}^{1}]_{e}[\mathcal{R}^{1}]_{f}[\mathcal{R}^{1}]_{h}\\
	& -W^{-2}[\mathcal{R}_{b}^{1}]_{g}\sum_{e}(\partial_{\beta^{2}\phi\phi'\phi_{e}}\mathcal{L})[\mathcal{R}^{1}]_{e}[\mathcal{R}^{1}]_{f}[\mathcal{R}^{1}]_{h}\\
	& +W^{-3}[\mathcal{R}^{1}]_{g}\sum_{e}(\partial_{\beta^{3}\phi\phi'\phi_{e}}\mathcal{L})[\mathcal{R}^{1}]_{e}[\mathcal{R}^{1}]_{f}[\mathcal{R}^{1}]_{h}\\
	& +W^{-3}[\mathcal{R}^{1}]_{g}\sum_{d,e}(\partial_{\beta^{2}\phi\phi'\phi_{d}\phi_{e}}\mathcal{L})[\mathcal{R}^{1}]_{d}[\mathcal{R}^{1}]_{e}[\mathcal{R}^{1}]_{f}[\mathcal{R}^{1}]_{h}\\
	& -W^{-2}[\mathcal{R}^{1}]_{g}\sum_{e}(\partial_{\beta^{2}\phi\phi'\phi_{e}}\mathcal{L})[\mathcal{R}_{b}^{1}]_{e}[\mathcal{R}^{1}]_{f}[\mathcal{R}^{1}]_{h}\\
	& -W^{-2}[\mathcal{R}^{1}]_{g}\sum_{e}(\partial_{\beta^{2}\phi\phi'\phi_{e}}\mathcal{L})[\mathcal{R}^{1}]_{e}[\mathcal{R}_{b}^{1}]_{f}[\mathcal{R}^{1}]_{h}\\
	& -W^{-2}[\mathcal{R}^{1}]_{g}\sum_{e}(\partial_{\beta^{2}\phi\phi'\phi_{e}}\mathcal{L})[\mathcal{R}^{1}]_{e}[\mathcal{R}^{1}]_{f}[\mathcal{R}_{b}^{1}]_{h}\\
	& -2W^{-3}W_{b}W_{s_{f}}[\mathcal{R}^{1}]_{g}\mathcal{P}_{(h)}^{1}+W^{-2}W_{bs_{f}}[\mathcal{R}^{1}]_{g}\mathcal{P}_{(h)}^{1}\\
	& +W^{-2}W_{s_{f}}[\mathcal{R}_{b}^{1}]_{g}\mathcal{P}_{(h)}^{1}+W^{-2}W_{s_{f}}[\mathcal{R}^{1}]_{g}\mathcal{P}_{(h),b}^{1}\\
	& +W^{-2}W_{b}[\mathcal{R}_{s_{f}}^{1}]_{g}\mathcal{P}_{(h)}^{1}-W^{-1}[\mathcal{R}_{bs_{f}}^{1}]_{g}\mathcal{P}_{(h)}^{1}-W^{-1}[\mathcal{R}_{s_{f}}^{1}]_{g}\mathcal{P}_{(h),b}^{1}\\
	& +W^{-2}W_{b}[\mathcal{R}^{1}]_{g}\mathcal{P}_{(h),s_{f}}^{1}-W^{-1}[\mathcal{R}_{b}^{1}]_{g}\mathcal{P}_{(h),s_{f}}^{1}-W^{-1}[\mathcal{R}^{1}]_{g}\mathcal{P}_{(h),bs_{f}}^{1}\\
	& -6W^{-4}W_{b}W_{s_{f}}[\mathcal{R}^{1}]_{g}\mathcal{P}^{(1,1)}[\mathcal{R}^{1}]_{h}+2W^{-3}W_{bs_{f}}[\mathcal{R}^{1}]_{g}\mathcal{P}^{(1,1)}[\mathcal{R}^{1}]_{h}\\
	& +2W^{-3}W_{s_{f}}[\mathcal{R}_{b}^{1}]_{g}\mathcal{P}^{(1,1)}[\mathcal{R}^{1}]_{h}+2W^{-3}W_{s_{f}}[\mathcal{R}^{1}]_{g}\mathcal{P}_{b}^{(1,1)}[\mathcal{R}^{1}]_{h}\\
	& +2W^{-3}W_{s_{f}}[\mathcal{R}^{1}]_{g}\mathcal{P}^{(1,1)}[\mathcal{R}_{b}^{1}]_{h}\\
	& +2W^{-3}W_{b}[\mathcal{R}_{s_{f}}^{1}]_{g}\mathcal{P}^{(1,1)}[\mathcal{R}^{1}]_{h}-W^{-2}[\mathcal{R}_{bs_{f}}^{1}]_{g}\mathcal{P}^{(1,1)}[\mathcal{R}^{1}]_{h}\\
	& -W^{-2}[\mathcal{R}_{s_{f}}^{1}]_{g}\mathcal{P}_{b}^{(1,1)}[\mathcal{R}^{1}]_{h}-W^{-2}[\mathcal{R}_{s_{f}}^{1}]_{g}\mathcal{P}^{(1,1)}[\mathcal{R}_{b}^{1}]_{h}\\
	& +2W^{-3}W_{b}[\mathcal{R}^{1}]_{g}\mathcal{P}_{s_{f}}^{(1,1)}[\mathcal{R}^{1}]_{h}-W^{-2}[\mathcal{R}_{b}^{1}]_{g}\mathcal{P}_{s_{f}}^{(1,1)}[\mathcal{R}^{1}]_{h}\\
	& -W^{-2}[\mathcal{R}^{1}]_{g}\mathcal{P}_{bs_{f}}^{(1,1)}[\mathcal{R}^{1}]_{h}-W^{-2}[\mathcal{R}^{1}]_{g}\mathcal{P}_{s_{f}}^{(1,1)}[\mathcal{R}_{b}^{1}]_{h}\\
	& +2W^{-3}W_{b}[\mathcal{R}^{1}]_{g}\mathcal{P}^{(1,1)}[\mathcal{R}_{s_{f}}^{1}]_{h}-W^{-2}[\mathcal{R}_{b}^{1}]_{g}\mathcal{P}^{(1,1)}[\mathcal{R}_{s_{f}}^{1}]_{h}\\
	& -W^{-2}[\mathcal{R}^{1}]_{g}\mathcal{P}_{b}^{(1,1)}[\mathcal{R}_{s_{f}}^{1}]_{h}-W^{-2}[\mathcal{R}^{1}]_{g}\mathcal{P}^{(1,1)}[\mathcal{R}_{bs_{f}}^{1}]_{h}\\
	& +\mathcal{P}_{(g),bs_{f}s_{h}}^{0}\\
	& -6W^{-4}W_{b}W_{s_{f}}W_{s_{h}}\mathcal{P}^{(0,1)}[\mathcal{R}^{1}]_{g}+2W^{-3}W_{bs_{f}}W_{s_{h}}\mathcal{P}^{(0,1)}[\mathcal{R}^{1}]_{g}\\
	& +2W^{-3}W_{s_{f}}W_{bs_{h}}\mathcal{P}^{(0,1)}[\mathcal{R}^{1}]_{g}+2W^{-3}W_{s_{f}}W_{s_{h}}\mathcal{P}_{b}^{(0,1)}[\mathcal{R}^{1}]_{g}\\
	& +2W^{-3}W_{s_{f}}W_{s_{h}}\mathcal{P}^{(0,1)}[\mathcal{R}_{b}^{1}]_{g}\\
	& +2W^{-3}W_{b}W_{s_{f}s_{h}}\mathcal{P}^{(0,1)}[\mathcal{R}^{1}]_{g}-W^{-2}W_{bs_{f}s_{h}}\mathcal{P}^{(0,1)}[\mathcal{R}^{1}]_{g}\\
	& -W^{-2}W_{s_{f}s_{h}}\mathcal{P}_{b}^{(0,1)}[\mathcal{R}^{1}]_{g}-W^{-2}W_{s_{f}s_{h}}\mathcal{P}^{(0,1)}[\mathcal{R}_{b}^{1}]_{g}\\
	& +2W^{-3}W_{b}W_{s_{h}}\mathcal{P}_{s_{f}}^{(0,1)}[\mathcal{R}^{1}]_{g}-W^{-2}W_{bs_{h}}\mathcal{P}_{s_{f}}^{(0,1)}[\mathcal{R}^{1}]_{g}\\
	& -W^{-2}W_{s_{h}}\mathcal{P}_{bs_{f}}^{(0,1)}[\mathcal{R}^{1}]_{g}-W^{-2}W_{s_{h}}\mathcal{P}_{s_{f}}^{(0,1)}[\mathcal{R}_{b}^{1}]_{g}\\
	& +2W^{-3}W_{b}W_{s_{h}}\mathcal{P}^{(0,1)}[\mathcal{R}_{s_{f}}^{1}]_{g}-W^{-2}W_{bs_{h}}\mathcal{P}^{(0,1)}[\mathcal{R}_{s_{f}}^{1}]_{g}\\
	& -W^{-2}W_{s_{h}}\mathcal{P}_{b}^{(0,1)}[\mathcal{R}_{s_{f}}^{1}]_{g}-W^{-2}W_{s_{h}}\mathcal{P}^{(0,1)}[\mathcal{R}_{bs_{f}}^{1}]_{g}\\
	& +2W^{-3}W_{b}W_{s_{f}}\mathcal{P}_{s_{h}}^{(0,1)}[\mathcal{R}^{1}]_{g}-W^{-2}W_{bs_{f}}\mathcal{P}_{s_{h}}^{(0,1)}[\mathcal{R}^{1}]_{g}\\
	& -W^{-2}W_{s_{f}}\mathcal{P}_{bs_{h}}^{(0,1)}[\mathcal{R}^{1}]_{g}-W^{-2}W_{s_{f}}\mathcal{P}_{s_{h}}^{(0,1)}[\mathcal{R}_{b}^{1}]_{g}\\
	& -W^{-2}W_{b}\mathcal{P}_{s_{f}s_{h}}^{(0,1)}[\mathcal{R}^{1}]_{g}+W^{-1}\mathcal{P}_{bs_{f}s_{h}}^{(0,1)}[\mathcal{R}^{1}]_{g}+W^{-1}\mathcal{P}_{s_{f}s_{h}}^{(0,1)}[\mathcal{R}_{b}^{1}]_{g}\\
	& -W^{-2}W_{b}\mathcal{P}_{s_{h}}^{(0,1)}[\mathcal{R}_{s_{f}}^{1}]_{g}+W^{-1}\mathcal{P}_{bs_{h}}^{(0,1)}[\mathcal{R}_{s_{f}}^{1}]_{g}+W^{-1}\mathcal{P}_{s_{h}}^{(0,1)}[\mathcal{R}_{bs_{f}}^{1}]_{g}\\
	& +2W^{-3}W_{b}W_{s_{f}}\mathcal{P}^{(0,1)}[\mathcal{R}_{s_{h}}^{1}]_{g}-W^{-2}W_{bs_{f}}\mathcal{P}^{(0,1)}[\mathcal{R}_{s_{h}}^{1}]_{g}\\
	& -W^{-2}W_{s_{f}}\mathcal{P}_{b}^{(0,1)}[\mathcal{R}_{s_{h}}^{1}]_{g}-W^{-2}W_{s_{f}}\mathcal{P}^{(0,1)}[\mathcal{R}_{bs_{h}}^{1}]_{g}\\
	& -W^{-2}W_{b}\mathcal{P}_{s_{f}}^{(0,1)}[\mathcal{R}_{s_{h}}^{1}]_{g}+W^{-1}\mathcal{P}_{bs_{f}}^{(0,1)}[\mathcal{R}_{s_{h}}^{1}]_{g}+W^{-1}\mathcal{P}_{s_{f}}^{(0,1)}[\mathcal{R}_{bs_{h}}^{1}]_{g}\\
	& -W^{-2}W_{b}\mathcal{P}^{(0,1)}[\mathcal{R}_{s_{f}s_{h}}^{1}]_{g}+W^{-1}\mathcal{P}_{b}^{(0,1)}[\mathcal{R}_{s_{f}s_{h}}^{1}]_{g}+W^{-1}\mathcal{P}^{(0,1)}[\mathcal{R}_{bs_{f}s_{h}}^{1}]_{g}
\end{align*}

\subsection{Expressions for $\mathcal{P}$ terms}
First derivatives:

\begin{align*}
	\mathcal{P}_{b}^{(r,s)} & =-W^{-1}\sum_{f}(\partial_{\beta^{r+1}\phi\phi'\phi_{f}}\mathcal{L})[\mathcal{R}^{s}]_{f}\\
	& -W^{-1}\sum_{e,f}(\partial_{\beta^{r}\phi\phi'\phi_{e}\phi_{f}}\mathcal{L})[\mathcal{R}^{s}]_{f}[\mathcal{R}^{1}]_{e}\\
	& +\sum_{f}(\partial_{\beta^{r}\phi\phi'\phi_{f}}\mathcal{L})[\mathcal{R}_{b}^{s}]_{f}\\
	\mathcal{P}_{(g),b}^{r} & =-W^{-1}\sum_{f}(\partial_{\beta^{r+1}\phi\phi'\phi_{f}}\mathcal{L})\big[\mathcal{H}^{-1}\big]_{fg}\\
	& -W^{-1}\sum_{e,f}(\partial_{\beta^{r}\phi\phi'\phi_{e}\phi_{f}}\mathcal{L})[\mathcal{R}^{1}]_{e}\big[\mathcal{H}^{-1}\big]_{fg}\\
	& -\sum_{f}(\partial_{\beta^{r}\phi\phi'\phi_{f}}\mathcal{L})\big[\mathcal{H}^{-1}\mathcal{H}_{b}\mathcal{H}^{-1}\big]_{fg}
\end{align*}

\begin{align*}
	\mathcal{P}_{s_{h}}^{(a,r)} & =-W^{-1}[\mathcal{R}^{1}]_{h}\sum_{f}(\partial_{\beta^{a+1}\phi\phi'\phi_{f}}\mathcal{L})[\mathcal{R}^{1}]_{f}\\
	& -\sum_{e,f}(\partial_{\beta^{a}\phi\phi'\phi_{e}\phi_{f}}\mathcal{L})[\mathcal{R}^{1}]_{f}[\mathcal{H}^{-1}]_{eh}\\
	& -W^{-1}\sum_{e,f}(\partial_{\beta^{a}\phi\phi'\phi_{e}\phi_{f}}\mathcal{L})[\mathcal{R}^{1}]_{f}[\mathcal{R}^{1}]_{e}[\mathcal{R}^{1}]_{h}\\
	& +\sum_{f}(\partial_{\beta^{a}\phi\phi'\phi_{f}}\mathcal{L})[\mathcal{R}_{s_{h}}^{1}]_{f}
\end{align*}

\begin{align*}
	\mathcal{P}_{(g),s_{h}}^{r} & =-W^{-1}[\mathcal{R}^{1}]_{h}\mathcal{P}_{(g)}^{r+1}\\
	& -\sum_{e,f}(\partial_{\beta^{r}\phi\phi'\phi_{e}\phi_{f}}\mathcal{L})\big[\mathcal{H}^{-1}\big]_{fg}\big[\mathcal{H}^{-1}\big]_{eh}\\
	& -W^{-1}\sum_{e,f}(\partial_{\beta^{r}\phi\phi'\phi_{e}\phi_{f}}\mathcal{L})\big[\mathcal{H}^{-1}\big]_{fg}[\mathcal{R}^{1}]_{e}[\mathcal{R}^{1}]_{h}\\
	& -\sum_{f}(\partial_{\beta^{r}\phi\phi'\phi_{f}}\mathcal{L})\big[\mathcal{H}^{-1}\mathcal{H}_{s_{h}}\mathcal{H}^{-1}\big]_{fg}
\end{align*}

Second derivatives:

\begin{align*}
	\mathcal{P}_{(g),bs_{h}}^{r} & =W^{-2}W_{b}[\mathcal{R}^{1}]_{h}\mathcal{P}_{(g)}^{r+1}-W^{-1}[\mathcal{R}_{b}^{1}]_{h}\mathcal{P}_{(g)}^{r+1}-W^{-1}[\mathcal{R}^{1}]_{h}\mathcal{P}_{(g),b}^{r+1}\\
	& +W^{-1}\sum_{e,f}(\partial_{\beta^{r+1}\phi\phi'\phi_{e}\phi_{f}}\mathcal{L})\big[\mathcal{H}^{-1}\big]_{fg}\big[\mathcal{H}^{-1}\big]_{eh}\\
	& +W^{-1}\sum_{d,e,f}(\partial_{\beta^{r}\phi\phi'\phi_{d}\phi_{e}\phi_{f}}\mathcal{L})[\mathcal{R}^{1}]_{d}\big[\mathcal{H}^{-1}\big]_{fg}\big[\mathcal{H}^{-1}\big]_{eh}\\
	& +\sum_{e,f}(\partial_{\beta^{r}\phi\phi'\phi_{e}\phi_{f}}\mathcal{L})\big[\mathcal{H}^{-1}\mathcal{H}_{b}\mathcal{H}^{-1}\big]_{fg}\big[\mathcal{H}^{-1}\big]_{eh}\\
	& +\sum_{e,f}(\partial_{\beta^{r}\phi\phi'\phi_{e}\phi_{f}}\mathcal{L})\big[\mathcal{H}^{-1}\big]_{fg}\big[\mathcal{H}^{-1}\mathcal{H}_{b}\mathcal{H}^{-1}\big]_{eh}\\
	& +W^{-2}W_{b}\sum_{e,f}(\partial_{\beta^{r}\phi\phi'\phi_{e}\phi_{f}}\mathcal{L})\big[\mathcal{H}^{-1}\big]_{fg}[\mathcal{R}^{1}]_{e}[\mathcal{R}^{1}]_{h}\\
	& +W^{-2}\sum_{e,f}(\partial_{\beta^{r+1}\phi\phi'\phi_{e}\phi_{f}}\mathcal{L})\big[\mathcal{H}^{-1}\big]_{fg}[\mathcal{R}^{1}]_{e}[\mathcal{R}^{1}]_{h}\\
	& +W^{-2}\sum_{d,e,f}(\partial_{\beta^{r}\phi\phi'\phi_{d}\phi_{e}\phi_{f}}\mathcal{L})\big[\mathcal{H}^{-1}\big]_{fg}[\mathcal{R}^{1}]_{d}[\mathcal{R}^{1}]_{e}[\mathcal{R}^{1}]_{h}\\
	& +W^{-1}\sum_{e,f}(\partial_{\beta^{r}\phi\phi'\phi_{e}\phi_{f}}\mathcal{L})\big[\mathcal{H}^{-1}\mathcal{H}_{b}\mathcal{H}^{-1}\big]_{fg}[\mathcal{R}^{1}]_{e}[\mathcal{R}^{1}]_{h}\\
	& -W^{-1}\sum_{e,f}(\partial_{\beta^{r}\phi\phi'\phi_{e}\phi_{f}}\mathcal{L})\big[\mathcal{H}^{-1}\big]_{fg}[\mathcal{R}_{b}^{1}]_{e}[\mathcal{R}^{1}]_{h}\\
	& -W^{-1}\sum_{e,f}(\partial_{\beta^{r}\phi\phi'\phi_{e}\phi_{f}}\mathcal{L})\big[\mathcal{H}^{-1}\big]_{fg}[\mathcal{R}^{1}]_{e}[\mathcal{R}_{b}^{1}]_{h}\\
	& +W^{-1}\sum_{f}(\partial_{\beta^{r+1}\phi\phi'\phi_{f}}\mathcal{L})\big[\mathcal{H}^{-1}\mathcal{H}_{s_{h}}\mathcal{H}^{-1}\big]_{fg}\\
	& +W^{-1}\sum_{e,f}(\partial_{\beta^{r}\phi\phi'\phi_{e}\phi_{f}}\mathcal{L})[\mathcal{R}^{1}]_{e}\big[\mathcal{H}^{-1}\mathcal{H}_{s_{h}}\mathcal{H}^{-1}\big]_{fg}\\
	& +\sum_{f}(\partial_{\beta^{r}\phi\phi'\phi_{f}}\mathcal{L})\big[\mathcal{H}^{-1}\mathcal{H}_{b}\mathcal{H}^{-1}\mathcal{H}_{s_{h}}\mathcal{H}^{-1}\big]_{fg}\\
	& +\sum_{f}(\partial_{\beta^{r}\phi\phi'\phi_{f}}\mathcal{L})\big[\mathcal{H}^{-1}\mathcal{H}_{s_{h}}\mathcal{H}^{-1}\mathcal{H}_{b}\mathcal{H}^{-1}\big]_{fg}\\
	& -\sum_{f}(\partial_{\beta^{r}\phi\phi'\phi_{f}}\mathcal{L})\big[\mathcal{H}^{-1}\mathcal{H}_{bs_{h}}\mathcal{H}^{-1}\big]_{fg}
\end{align*}
\begin{align*}
	\mathcal{P}_{bs_{h}}^{(a,r)} & =W^{-2}W_{b}[\mathcal{R}^{1}]_{h}\sum_{f}(\partial_{\beta^{a+1}\phi\phi'\phi_{f}}\mathcal{L})[\mathcal{R}^{1}]_{f}\\
	& -W^{-1}[\mathcal{R}_{b}^{1}]_{h}\sum_{f}(\partial_{\beta^{a+1}\phi\phi'\phi_{f}}\mathcal{L})[\mathcal{R}^{1}]_{f}\\
	& -W^{-1}[\mathcal{R}^{1}]_{h}\sum_{f}(\partial_{\beta^{a+1}\phi\phi'\phi_{f}}\mathcal{L})[\mathcal{R}_{b}^{1}]_{f}\\
	& +W^{-2}[\mathcal{R}^{1}]_{h}\sum_{f}(\partial_{\beta^{a+2}\phi\phi'\phi_{f}}\mathcal{L})[\mathcal{R}^{1}]_{f}\\
	& +W^{-2}[\mathcal{R}^{1}]_{h}\sum_{e,f}(\partial_{\beta^{a+1}\phi\phi'\phi_{e}\phi_{f}}\mathcal{L})[\mathcal{R}^{1}]_{e}[\mathcal{R}^{1}]_{f}\\
	& +W^{-1}\sum_{e,f}(\partial_{\beta^{a+1}\phi\phi'\phi_{i}\phi_{f}}\mathcal{L})[\mathcal{R}^{1}]_{f}[\mathcal{H}^{-1}]_{eh}\\
	& +W^{-1}\sum_{d,e,f}(\partial_{\beta^{a}\phi\phi'\phi_{d}\phi_{e}\phi_{f}}\mathcal{L})[\mathcal{R}^{1}]_{d}[\mathcal{R}^{1}]_{f}[\mathcal{H}^{-1}]_{eh}\\
	& -\sum_{e,f}(\partial_{\beta^{a}\phi\phi'\phi_{i}\phi_{f}}\mathcal{L})[\mathcal{R}_{b}^{1}]_{f}[\mathcal{H}^{-1}]_{eh}\\
	& +\sum_{e,f}(\partial_{\beta^{a}\phi\phi'\phi_{i}\phi_{f}}\mathcal{L})[\mathcal{R}^{1}]_{f}[\mathcal{H}^{-1}\mathcal{H}_{b}\mathcal{H}^{-1}]_{eh}\\
	& +W^{-2}W_{b}\sum_{e,f}(\partial_{\beta^{a}\phi\phi'\phi_{e}\phi_{f}}\mathcal{L})[\mathcal{R}^{1}]_{f}[\mathcal{R}^{1}]_{e}[\mathcal{R}^{1}]_{h}\\
	& +W^{-2}\sum_{e,f}(\partial_{\beta^{a+1}\phi\phi'\phi_{e}\phi_{f}}\mathcal{L})[\mathcal{R}^{1}]_{f}[\mathcal{R}^{1}]_{e}[\mathcal{R}^{1}]_{h}\\
	& +W^{-2}\sum_{d,e,f}(\partial_{\beta^{a}\phi\phi'\phi_{d}\phi_{e}\phi_{f}}\mathcal{L})[\mathcal{R}^{1}]_{d}[\mathcal{R}^{1}]_{f}[\mathcal{R}^{1}]_{e}[\mathcal{R}^{1}]_{h}\\
	& -W^{-1}\sum_{e,f}(\partial_{\beta^{a}\phi\phi'\phi_{e}\phi_{f}}\mathcal{L})[\mathcal{R}_{b}^{1}]_{f}[\mathcal{R}^{1}]_{e}[\mathcal{R}^{1}]_{h}\\
	& -W^{-1}\sum_{e,f}(\partial_{\beta^{a}\phi\phi'\phi_{e}\phi_{f}}\mathcal{L})[\mathcal{R}^{1}]_{f}[\mathcal{R}_{b}^{1}]_{e}[\mathcal{R}^{1}]_{h}\\
	& -W^{-1}\sum_{e,f}(\partial_{\beta^{a}\phi\phi'\phi_{e}\phi_{f}}\mathcal{L})[\mathcal{R}^{1}]_{f}[\mathcal{R}^{1}]_{e}[\mathcal{R}_{b}^{1}]_{h}\\
	& -W^{-1}\sum_{f}(\partial_{\beta^{a+1}\phi\phi'\phi_{f}}\mathcal{L})[\mathcal{R}_{s_{h}}^{1}]_{f}\\
	& -W^{-1}\sum_{e,f}(\partial_{\beta^{a}\phi\phi'\phi_{e}\phi_{f}}\mathcal{L})[\mathcal{R}^{1}]_{e}[\mathcal{R}_{s_{h}}^{1}]_{f}\\
	& +\sum_{f}(\partial_{\beta^{a}\phi\phi'\phi_{f}}\mathcal{L})[\mathcal{R}_{bs_{h}}^{1}]_{f}
\end{align*}

\begin{align*}
	\mathcal{P}_{s_{g}s_{h}}^{(a,r)} & =W^{-2}W_{s_{g}}[\mathcal{R}^{1}]_{h}\sum_{f}(\partial_{\beta^{a+1}\phi\phi'\phi_{f}}\mathcal{L})[\mathcal{R}^{1}]_{f}\\
	& -W^{-1}[\mathcal{R}_{s_{g}}^{1}]_{h}\sum_{f}(\partial_{\beta^{a+1}\phi\phi'\phi_{f}}\mathcal{L})[\mathcal{R}^{1}]_{f}\\
	& +W^{-2}[\mathcal{R}^{1}]_{h}\sum_{f}(\partial_{\beta^{a+2}\phi\phi'\phi_{f}}\mathcal{L})[\mathcal{R}^{1}]_{f}[\mathcal{R}^{1}]_{g}\\
	& +W^{-1}[\mathcal{R}^{1}]_{h}\sum_{e,f}(\partial_{\beta^{a+1}\phi\phi'\phi_{e}\phi_{f}}\mathcal{L})[\mathcal{R}^{1}]_{f}[\mathcal{H}^{-1}]_{eg}\\
	& +W^{-2}[\mathcal{R}^{1}]_{h}\sum_{e,f}(\partial_{\beta^{a+1}\phi\phi'\phi_{e}\phi_{f}}\mathcal{L})[\mathcal{R}^{1}]_{e}[\mathcal{R}^{1}]_{f}[\mathcal{R}^{1}]_{g}\\
	& -W^{-1}[\mathcal{R}^{1}]_{h}\sum_{f}(\partial_{\beta^{a+1}\phi\phi'\phi_{f}}\mathcal{L})[\mathcal{R}_{s_{g}}^{1}]_{f}\\
	& +W^{-1}\sum_{e,f}(\partial_{\beta^{a+1}\phi\phi'\phi_{i}\phi_{f}}\mathcal{L})[\mathcal{R}^{1}]_{f}[\mathcal{H}^{-1}]_{eh}[\mathcal{R}^{1}]_{g}\\
	& +\sum_{d,e,f}(\partial_{\beta^{a}\phi\phi'\phi_{d}\phi_{e}\phi_{f}}\mathcal{L})[\mathcal{R}^{1}]_{f}[\mathcal{H}^{-1}]_{eh}[\mathcal{H}^{-1}]_{dg}\\
	& +W^{-1}\sum_{d,e,f}(\partial_{\beta^{a}\phi\phi'\phi_{d}\phi_{e}\phi_{f}}\mathcal{L})[\mathcal{R}^{1}]_{f}[\mathcal{H}^{-1}]_{eh}[\mathcal{R}^{1}]_{d}[\mathcal{R}^{1}]_{g}\\
	& -\sum_{e,f}(\partial_{\beta^{a}\phi\phi'\phi_{i}\phi_{f}}\mathcal{L})[\mathcal{R}_{s_{g}}^{1}]_{f}[\mathcal{H}^{-1}]_{eh}\\
	& +\sum_{e,f}(\partial_{\beta^{a}\phi\phi'\phi_{i}\phi_{f}}\mathcal{L})[\mathcal{R}^{1}]_{f}[\mathcal{H}^{-1}\mathcal{H}_{s_{g}}\mathcal{H}^{-1}]_{eh}\\
	& +W^{-2}W_{s_{g}}\sum_{e,f}(\partial_{\beta^{a}\phi\phi'\phi_{e}\phi_{f}}\mathcal{L})[\mathcal{R}^{1}]_{f}[\mathcal{R}^{1}]_{e}[\mathcal{R}^{1}]_{h}\\
	& +W^{-2}\sum_{e,f}(\partial_{\beta^{a+1}\phi\phi'\phi_{e}\phi_{f}}\mathcal{L})[\mathcal{R}^{1}]_{f}[\mathcal{R}^{1}]_{e}[\mathcal{R}^{1}]_{g}[\mathcal{R}^{1}]_{h}\\
	& +W^{-1}\sum_{d,e,f}(\partial_{\beta^{a}\phi\phi'\phi_{d}\phi_{e}\phi_{f}}\mathcal{L})[\mathcal{R}^{1}]_{f}[\mathcal{R}^{1}]_{e}[\mathcal{R}^{1}]_{h}[\mathcal{H}^{-1}]_{dg}\\
	& +W^{-2}\sum_{d,e,f}(\partial_{\beta^{a}\phi\phi'\phi_{d}\phi_{e}\phi_{f}}\mathcal{L})[\mathcal{R}^{1}]_{f}[\mathcal{R}^{1}]_{d}[\mathcal{R}^{1}]_{e}[\mathcal{R}^{1}]_{g}[\mathcal{R}^{1}]_{h}\\
	& -W^{-1}\sum_{e,f}(\partial_{\beta^{a}\phi\phi'\phi_{e}\phi_{f}}\mathcal{L})[\mathcal{R}_{s_{g}}^{1}]_{f}[\mathcal{R}^{1}]_{e}[\mathcal{R}^{1}]_{h}\\
	& -W^{-1}\sum_{e,f}(\partial_{\beta^{a}\phi\phi'\phi_{e}\phi_{f}}\mathcal{L})[\mathcal{R}^{1}]_{f}[\mathcal{R}_{s_{g}}^{1}]_{e}[\mathcal{R}^{1}]_{h}\\
	& -W^{-1}\sum_{e,f}(\partial_{\beta^{a}\phi\phi'\phi_{e}\phi_{f}}\mathcal{L})[\mathcal{R}^{1}]_{f}[\mathcal{R}^{1}]_{e}[\mathcal{R}_{s_{g}}^{1}]_{h}\\
	& -W^{-1}\sum_{f}(\partial_{\beta^{a+1}\phi\phi'\phi_{f}}\mathcal{L})[\mathcal{R}_{s_{h}}^{1}]_{f}[\mathcal{R}^{1}]_{g}\\
	& -\sum_{e,f}(\partial_{\beta^{a}\phi\phi'\phi_{e}\phi_{f}}\mathcal{L})[\mathcal{R}_{s_{h}}^{1}]_{f}[\mathcal{H}^{-1}]_{eg}\\
	& W^{-1}\sum_{e,f}(\partial_{\beta^{a}\phi\phi'\phi_{e}\phi_{f}}\mathcal{L})[\mathcal{R}_{s_{h}}^{1}]_{f}[\mathcal{R}^{1}]_{e}[\mathcal{R}^{1}]_{g}\\
	& +\sum_{f}(\partial_{\beta^{a}\phi\phi'\phi_{f}}\mathcal{L})[\mathcal{R}_{s_{g}s_{h}}^{1}]_{f}
\end{align*}

\begin{align*}
	\mathcal{P}_{(g),s_{k}s_{h}}^{r} & =W^{-2}W_{s_{k}}[\mathcal{R}^{1}]_{h}\mathcal{P}_{(g)}^{r+1}-W^{-1}[\mathcal{R}_{s_{k}}^{1}]_{h}\mathcal{P}_{(g)}^{r+1}-W^{-1}[\mathcal{R}^{1}]_{h}\mathcal{P}_{(g),s_{k}}^{r+1}\\
	& +W^{-1}\sum_{e,f}(\partial_{\beta^{r+1}\phi\phi'\phi_{e}\phi_{f}}\mathcal{L})\big[\mathcal{H}^{-1}\big]_{fg}\big[\mathcal{H}^{-1}\big]_{eh}[\mathcal{R}^{1}]_{k}\\
	& +\sum_{d,e,f}(\partial_{\beta^{r}\phi\phi'\phi_{d}\phi_{e}\phi_{f}}\mathcal{L})\big[\mathcal{H}^{-1}\big]_{fg}\big[\mathcal{H}^{-1}\big]_{eh}\big[\mathcal{H}^{-1}\big]_{dk}\\
	& +W^{-1}\sum_{d,e,f}(\partial_{\beta^{r}\phi\phi'\phi_{d}\phi_{e}\phi_{f}}\mathcal{L})\big[\mathcal{H}^{-1}\big]_{fg}\big[\mathcal{H}^{-1}\big]_{eh}[\mathcal{R}^{1}]_{d}[\mathcal{R}^{1}]_{k}\\
	& +\sum_{e,f}(\partial_{\beta^{r}\phi\phi'\phi_{e}\phi_{f}}\mathcal{L})\big[\mathcal{H}^{-1}\mathcal{H}_{s_{k}}\mathcal{H}^{-1}\big]_{fg}\big[\mathcal{H}^{-1}\big]_{eh}\\
	& +\sum_{e,f}(\partial_{\beta^{r}\phi\phi'\phi_{e}\phi_{f}}\mathcal{L})\big[\mathcal{H}^{-1}\big]_{fg}\big[\mathcal{H}^{-1}\mathcal{H}_{s_{k}}\mathcal{H}^{-1}\big]_{eh}\\
	& +W^{-2}W_{s_{k}}\sum_{e,f}(\partial_{\beta^{r}\phi\phi'\phi_{e}\phi_{f}}\mathcal{L})\big[\mathcal{H}^{-1}\big]_{fg}[\mathcal{R}^{1}]_{e}[\mathcal{R}^{1}]_{h}\\
	& +W^{-2}\sum_{e,f}(\partial_{\beta^{r+1}\phi\phi'\phi_{e}\phi_{f}}\mathcal{L})\big[\mathcal{H}^{-1}\big]_{fg}[\mathcal{R}^{1}]_{e}[\mathcal{R}^{1}]_{h}[\mathcal{R}^{1}]_{k}\\
	& +W^{-1}\sum_{d,e,f}(\partial_{\beta^{r}\phi\phi'\phi_{d}\phi_{e}\phi_{f}}\mathcal{L})\big[\mathcal{H}^{-1}\big]_{fg}\big[\mathcal{H}^{-1}\big]_{dk}[\mathcal{R}^{1}]_{e}[\mathcal{R}^{1}]_{h}\\
	& +W^{-2}\sum_{d,e,f}(\partial_{\beta^{r}\phi\phi'\phi_{d}\phi_{e}\phi_{f}}\mathcal{L})\big[\mathcal{H}^{-1}\big]_{fg}[\mathcal{R}^{1}]_{e}[\mathcal{R}^{1}]_{h}[\mathcal{R}^{1}]_{d}[\mathcal{R}^{1}]_{k}\\
	& +W^{-1}\sum_{e,f}(\partial_{\beta^{r}\phi\phi'\phi_{e}\phi_{f}}\mathcal{L})\big[\mathcal{H}^{-1}\mathcal{H}_{s_{k}}\mathcal{H}^{-1}\big]_{fg}[\mathcal{R}^{1}]_{e}[\mathcal{R}^{1}]_{h}\\
	& -W^{-1}\sum_{e,f}(\partial_{\beta^{r}\phi\phi'\phi_{e}\phi_{f}}\mathcal{L})\big[\mathcal{H}^{-1}\big]_{fg}[\mathcal{R}_{s_{k}}^{1}]_{e}[\mathcal{R}^{1}]_{h}\\
	& -W^{-1}\sum_{e,f}(\partial_{\beta^{r}\phi\phi'\phi_{e}\phi_{f}}\mathcal{L})\big[\mathcal{H}^{-1}\big]_{fg}[\mathcal{R}^{1}]_{e}[\mathcal{R}_{s_{k}}^{1}]_{h}\\
	& +W^{-1}\sum_{f}(\partial_{\beta^{r+1}\phi\phi'\phi_{f}}\mathcal{L})\big[\mathcal{H}^{-1}\mathcal{H}_{s_{h}}\mathcal{H}^{-1}\big]_{fg}[\mathcal{R}^{1}]_{k}\\
	& +\sum_{e,f}(\partial_{\beta^{r}\phi\phi'\phi_{e}\phi_{f}}\mathcal{L})\big[\mathcal{H}^{-1}\mathcal{H}_{s_{h}}\mathcal{H}^{-1}\big]_{fg}\big[\mathcal{H}^{-1}\big]_{ek}\\
	& +W^{-1}\sum_{e,f}(\partial_{\beta^{r}\phi\phi'\phi_{e}\phi_{f}}\mathcal{L})\big[\mathcal{H}^{-1}\mathcal{H}_{s_{h}}\mathcal{H}^{-1}\big]_{fg}[\mathcal{R}^{1}]_{e}[\mathcal{R}^{1}]_{k}\\
	& +\sum_{f}(\partial_{\beta^{r}\phi\phi'\phi_{f}}\mathcal{L})\big[\mathcal{H}^{-1}\mathcal{H}_{s_{k}}\mathcal{H}^{-1}\mathcal{H}_{s_{h}}\mathcal{H}^{-1}\big]_{fg}\\
	& -\sum_{f}(\partial_{\beta^{r}\phi\phi'\phi_{f}}\mathcal{L})\big[\mathcal{H}^{-1}\mathcal{H}_{s_{k}s_{h}}\mathcal{H}^{-1}\big]_{fg}\\
	& +\sum_{f}(\partial_{\beta^{r}\phi\phi'\phi_{f}}\mathcal{L})\big[\mathcal{H}^{-1}\mathcal{H}_{s_{h}}\mathcal{H}^{-1}\mathcal{H}_{s_{k}}\mathcal{H}^{-1}\big]_{fg}
\end{align*}

Third derivatives:

\begin{align*}
	\mathcal{P}_{bs_{g}s_{h}}^{(a,r)} & =-2W^{-3}W_{b}W_{s_{g}}[\mathcal{R}^{1}]_{h}\sum_{f}(\partial_{\beta^{a+1}\phi\phi'\phi_{f}}\mathcal{L})[\mathcal{R}^{1}]_{f}\\
	& +W^{-2}W_{bs_{g}}[\mathcal{R}^{1}]_{h}\sum_{f}(\partial_{\beta^{a+1}\phi\phi'\phi_{f}}\mathcal{L})[\mathcal{R}^{1}]_{f}\\
	& +W^{-2}W_{s_{g}}[\mathcal{R}_{b}^{1}]_{h}\sum_{f}(\partial_{\beta^{a+1}\phi\phi'\phi_{f}}\mathcal{L})[\mathcal{R}^{1}]_{f}\\
	& -W^{-3}W_{s_{g}}[\mathcal{R}^{1}]_{h}\sum_{f}(\partial_{\beta^{a+2}\phi\phi'\phi_{f}}\mathcal{L})[\mathcal{R}^{1}]_{f}\\
	& -W^{-3}W_{s_{g}}[\mathcal{R}^{1}]_{h}\sum_{e,f}(\partial_{\beta^{a+1}\phi\phi'\phi_{e}\phi_{f}}\mathcal{L})[\mathcal{R}^{1}]_{e}[\mathcal{R}^{1}]_{f}\\
	& +W^{-2}W_{s_{g}}[\mathcal{R}^{1}]_{h}\sum_{f}(\partial_{\beta^{a+1}\phi\phi'\phi_{f}}\mathcal{L})[\mathcal{R}_{b}^{1}]_{f}\\
	& +W^{-2}W_{b}[\mathcal{R}_{s_{g}}^{1}]_{h}\sum_{f}(\partial_{\beta^{a+1}\phi\phi'\phi_{f}}\mathcal{L})[\mathcal{R}^{1}]_{f}\\
	& -W^{-1}[\mathcal{R}_{bs_{g}}^{1}]_{h}\sum_{f}(\partial_{\beta^{a+1}\phi\phi'\phi_{f}}\mathcal{L})[\mathcal{R}^{1}]_{f}\\
	& +W^{-2}[\mathcal{R}_{s_{g}}^{1}]_{h}\sum_{f}(\partial_{\beta^{a+2}\phi\phi'\phi_{f}}\mathcal{L})[\mathcal{R}^{1}]_{f}\\
	& +W^{-2}[\mathcal{R}_{s_{g}}^{1}]_{h}\sum_{e,f}(\partial_{\beta^{a+1}\phi\phi'\phi_{e}\phi_{f}}\mathcal{L})[\mathcal{R}^{1}]_{f}\\
	& -W^{-1}[\mathcal{R}_{s_{g}}^{1}]_{h}\sum_{f}(\partial_{\beta^{a+1}\phi\phi'\phi_{f}}\mathcal{L})[\mathcal{R}_{b}^{1}]_{f}\\
	& -2W^{-3}W_{b}[\mathcal{R}^{1}]_{h}\sum_{f}(\partial_{\beta^{a+2}\phi\phi'\phi_{f}}\mathcal{L})[\mathcal{R}^{1}]_{f}[\mathcal{R}^{1}]_{g}\\
	& +W^{-2}[\mathcal{R}_{b}^{1}]_{h}\sum_{f}(\partial_{\beta^{a+2}\phi\phi'\phi_{f}}\mathcal{L})[\mathcal{R}^{1}]_{f}[\mathcal{R}^{1}]_{g}\\
	& -W^{-3}[\mathcal{R}^{1}]_{h}\sum_{f}(\partial_{\beta^{a+3}\phi\phi'\phi_{f}}\mathcal{L})[\mathcal{R}^{1}]_{f}[\mathcal{R}^{1}]_{g}\\
	& -W^{-3}[\mathcal{R}^{1}]_{h}\sum_{e,f}(\partial_{\beta^{a+2}\phi\phi'\phi_{e}\phi_{f}}\mathcal{L})[\mathcal{R}^{1}]_{f}[\mathcal{R}^{1}]_{g}\\
	& +W^{-2}[\mathcal{R}^{1}]_{h}\sum_{f}(\partial_{\beta^{a+2}\phi\phi'\phi_{f}}\mathcal{L})[\mathcal{R}_{b}^{1}]_{f}[\mathcal{R}^{1}]_{g}\\
	& +W^{-2}[\mathcal{R}^{1}]_{h}\sum_{f}(\partial_{\beta^{a+2}\phi\phi'\phi_{f}}\mathcal{L})[\mathcal{R}^{1}]_{f}[\mathcal{R}_{b}^{1}]_{g}\\
	& -W^{-2}W_{b}[\mathcal{R}^{1}]_{h}\sum_{e,f}(\partial_{\beta^{a+1}\phi\phi'\phi_{e}\phi_{f}}\mathcal{L})[\mathcal{R}^{1}]_{f}[\mathcal{H}^{-1}]_{eg}\\
	& +W^{-1}[\mathcal{R}_{b}^{1}]_{h}\sum_{e,f}(\partial_{\beta^{a+1}\phi\phi'\phi_{e}\phi_{f}}\mathcal{L})[\mathcal{R}^{1}]_{f}[\mathcal{H}^{-1}]_{eg}\\
	& -W^{-2}[\mathcal{R}^{1}]_{h}\sum_{e,f}(\partial_{\beta^{a+2}\phi\phi'\phi_{e}\phi_{f}}\mathcal{L})[\mathcal{R}^{1}]_{f}[\mathcal{H}^{-1}]_{eg}\\
	& -W^{-2}[\mathcal{R}^{1}]_{h}\sum_{d,e,f}(\partial_{\beta^{a+1}\phi\phi'\phi_{d}\phi_{e}\phi_{f}}\mathcal{L})[\mathcal{R}^{1}]_{d}[\mathcal{R}^{1}]_{f}[\mathcal{H}^{-1}]_{eg}\\
	& +W^{-1}[\mathcal{R}^{1}]_{h}\sum_{e,f}(\partial_{\beta^{a+1}\phi\phi'\phi_{e}\phi_{f}}\mathcal{L})[\mathcal{R}_{b}^{1}]_{f}[\mathcal{H}^{-1}]_{eg}\\
	& -W^{-1}[\mathcal{R}^{1}]_{h}\sum_{e,f}(\partial_{\beta^{a+1}\phi\phi'\phi_{e}\phi_{f}}\mathcal{L})[\mathcal{R}^{1}]_{f}[\mathcal{H}^{-1}\mathcal{H}_{b}\mathcal{H}^{-1}]_{eg}\\
	& -2W^{-3}W_{b}[\mathcal{R}^{1}]_{h}\sum_{e,f}(\partial_{\beta^{a+1}\phi\phi'\phi_{e}\phi_{f}}\mathcal{L})[\mathcal{R}^{1}]_{e}[\mathcal{R}^{1}]_{f}[\mathcal{R}^{1}]_{g}\\
	& +W^{-2}[\mathcal{R}_{b}^{1}]_{h}\sum_{e,f}(\partial_{\beta^{a+1}\phi\phi'\phi_{e}\phi_{f}}\mathcal{L})[\mathcal{R}^{1}]_{e}[\mathcal{R}^{1}]_{f}[\mathcal{R}^{1}]_{g}\\
	& -W^{-3}[\mathcal{R}^{1}]_{h}\sum_{e,f}(\partial_{\beta^{a+2}\phi\phi'\phi_{e}\phi_{f}}\mathcal{L})[\mathcal{R}^{1}]_{e}[\mathcal{R}^{1}]_{f}[\mathcal{R}^{1}]_{g}\\
	& -W^{-3}[\mathcal{R}^{1}]_{h}\sum_{d,e,f}(\partial_{\beta^{a+1}\phi\phi'\phi_{d}\phi_{e}\phi_{f}}\mathcal{L})[\mathcal{R}^{1}]_{d}[\mathcal{R}^{1}]_{e}[\mathcal{R}^{1}]_{f}[\mathcal{R}^{1}]_{g}\\
	& +W^{-2}[\mathcal{R}^{1}]_{h}\sum_{e,f}(\partial_{\beta^{a+1}\phi\phi'\phi_{e}\phi_{f}}\mathcal{L})[\mathcal{R}_{b}^{1}]_{e}[\mathcal{R}^{1}]_{f}[\mathcal{R}^{1}]_{g}\\
	& +W^{-2}[\mathcal{R}^{1}]_{h}\sum_{e,f}(\partial_{\beta^{a+1}\phi\phi'\phi_{e}\phi_{f}}\mathcal{L})[\mathcal{R}^{1}]_{e}[\mathcal{R}_{b}^{1}]_{f}[\mathcal{R}^{1}]_{g}\\
	& +W^{-2}[\mathcal{R}^{1}]_{h}\sum_{e,f}(\partial_{\beta^{a+1}\phi\phi'\phi_{e}\phi_{f}}\mathcal{L})[\mathcal{R}^{1}]_{e}[\mathcal{R}^{1}]_{f}[\mathcal{R}_{b}^{1}]_{g}\\
	& +W^{-2}W_{b}[\mathcal{R}^{1}]_{h}\sum_{f}(\partial_{\beta^{a+1}\phi\phi'\phi_{f}}\mathcal{L})[\mathcal{R}_{s_{g}}^{1}]_{f}\\
	& -W^{-1}[\mathcal{R}_{b}^{1}]_{h}\sum_{f}(\partial_{\beta^{a+1}\phi\phi'\phi_{f}}\mathcal{L})[\mathcal{R}_{s_{g}}^{1}]_{f}\\
	& +W^{-2}[\mathcal{R}^{1}]_{h}\sum_{f}(\partial_{\beta^{a+2}\phi\phi'\phi_{f}}\mathcal{L})[\mathcal{R}_{s_{g}}^{1}]_{f}\\
	& +W^{-2}[\mathcal{R}^{1}]_{h}\sum_{e,f}(\partial_{\beta^{a+1}\phi\phi'\phi_{e}\phi_{f}}\mathcal{L})[\mathcal{R}_{s_{g}}^{1}]_{f}\\
	& -W^{-1}[\mathcal{R}^{1}]_{h}\sum_{f}(\partial_{\beta^{a+1}\phi\phi'\phi_{f}}\mathcal{L})[\mathcal{R}_{bs_{g}}^{1}]_{f}\\
	& -W^{-2}W_{b}\sum_{e,f}(\partial_{\beta^{a+1}\phi\phi'\phi_{e}\phi_{f}}\mathcal{L})[\mathcal{R}^{1}]_{f}[\mathcal{H}^{-1}]_{eh}[\mathcal{R}^{1}]_{g}\\
	& -W^{-2}\sum_{e,f}(\partial_{\beta^{a+2}\phi\phi'\phi_{e}\phi_{f}}\mathcal{L})[\mathcal{R}^{1}]_{f}[\mathcal{H}^{-1}]_{eh}[\mathcal{R}^{1}]_{g}\\
	& -W^{-2}\sum_{d,e,f}(\partial_{\beta^{a+1}\phi\phi'\phi_{d}\phi_{e}\phi_{f}}\mathcal{L})[\mathcal{R}^{1}]_{d}[\mathcal{R}^{1}]_{f}[\mathcal{H}^{-1}]_{eh}[\mathcal{R}^{1}]_{g}\\
	& +W^{-1}\sum_{e,f}(\partial_{\beta^{a+1}\phi\phi'\phi_{e}\phi_{f}}\mathcal{L})[\mathcal{R}_{b}^{1}]_{f}[\mathcal{H}^{-1}]_{eh}[\mathcal{R}^{1}]_{g}\\
	& -W^{-1}\sum_{e,f}(\partial_{\beta^{a+1}\phi\phi'\phi_{e}\phi_{f}}\mathcal{L})[\mathcal{R}^{1}]_{f}[\mathcal{H}^{-1}\mathcal{H}_{b}\mathcal{H}^{-1}]_{eh}[\mathcal{R}^{1}]_{g}\\
	& +W^{-1}\sum_{e,f}(\partial_{\beta^{a+1}\phi\phi'\phi_{e}\phi_{f}}\mathcal{L})[\mathcal{R}^{1}]_{f}[\mathcal{H}^{-1}]_{eh}[\mathcal{R}_{b}^{1}]_{g}\\
	& -W^{-1}\sum_{d,e,f}(\partial_{\beta^{a+1}\phi\phi'\phi_{d}\phi_{e}\phi_{f}}\mathcal{L})[\mathcal{R}^{1}]_{f}[\mathcal{H}^{-1}]_{eh}[\mathcal{H}^{-1}]_{dg}\\
	& -W^{-1}\sum_{c,d,e,f}(\partial_{\beta^{a}\phi\phi'\phi_{c}\phi_{d}\phi_{e}\phi_{f}}\mathcal{L})[\mathcal{R}^{1}]_{c}[\mathcal{R}^{1}]_{f}[\mathcal{H}^{-1}]_{eh}[\mathcal{H}^{-1}]_{dg}\\
	& +\sum_{d,e,f}(\partial_{\beta^{a}\phi\phi'\phi_{d}\phi_{e}\phi_{f}}\mathcal{L})[\mathcal{R}_{b}^{1}]_{f}[\mathcal{H}^{-1}]_{eh}[\mathcal{H}^{-1}]_{dg}\\
	& -\sum_{d,e,f}(\partial_{\beta^{a}\phi\phi'\phi_{d}\phi_{e}\phi_{f}}\mathcal{L})[\mathcal{R}^{1}]_{f}[\mathcal{H}^{-1}\mathcal{H}_{b}\mathcal{H}^{-1}]_{eh}[\mathcal{H}^{-1}]_{dg}\\
	& -\sum_{d,e,f}(\partial_{\beta^{a}\phi\phi'\phi_{d}\phi_{e}\phi_{f}}\mathcal{L})[\mathcal{R}^{1}]_{f}[\mathcal{H}^{-1}]_{eh}[\mathcal{H}^{-1}\mathcal{H}_{b}\mathcal{H}^{-1}]_{dg}\\
	& -W^{-2}W_{b}\sum_{d,e,f}(\partial_{\beta^{a}\phi\phi'\phi_{d}\phi_{e}\phi_{f}}\mathcal{L})[\mathcal{R}^{1}]_{f}[\mathcal{H}^{-1}]_{eh}[\mathcal{R}^{1}]_{d}[\mathcal{R}^{1}]_{g}\\
	& -W^{-2}\sum_{d,e,f}(\partial_{\beta^{a+1}\phi\phi'\phi_{d}\phi_{e}\phi_{f}}\mathcal{L})[\mathcal{R}^{1}]_{f}[\mathcal{H}^{-1}]_{eh}[\mathcal{R}^{1}]_{d}[\mathcal{R}^{1}]_{g}\\
	& -W^{-2}\sum_{c,d,e,f}(\partial_{\beta^{a}\phi\phi'\phi_{c}\phi_{d}\phi_{e}\phi_{f}}\mathcal{L})[\mathcal{R}^{1}]_{c}[\mathcal{R}^{1}]_{f}[\mathcal{H}^{-1}]_{eh}[\mathcal{R}^{1}]_{d}[\mathcal{R}^{1}]_{g}\\
	& +W^{-1}\sum_{d,e,f}(\partial_{\beta^{a}\phi\phi'\phi_{d}\phi_{e}\phi_{f}}\mathcal{L})[\mathcal{R}_{b}^{1}]_{f}[\mathcal{H}^{-1}]_{eh}[\mathcal{R}^{1}]_{d}[\mathcal{R}^{1}]_{g}\\
	& -W^{-1}\sum_{d,e,f}(\partial_{\beta^{a}\phi\phi'\phi_{d}\phi_{e}\phi_{f}}\mathcal{L})[\mathcal{R}^{1}]_{f}[\mathcal{H}^{-1}\mathcal{H}_{b}\mathcal{H}^{-1}]_{eh}[\mathcal{R}^{1}]_{d}[\mathcal{R}^{1}]_{g}\\
	& +W^{-1}\sum_{d,e,f}(\partial_{\beta^{a}\phi\phi'\phi_{d}\phi_{e}\phi_{f}}\mathcal{L})[\mathcal{R}^{1}]_{f}[\mathcal{H}^{-1}]_{eh}[\mathcal{R}_{b}^{1}]_{d}[\mathcal{R}^{1}]_{g}\\
	& +W^{-1}\sum_{d,e,f}(\partial_{\beta^{a}\phi\phi'\phi_{d}\phi_{e}\phi_{f}}\mathcal{L})[\mathcal{R}^{1}]_{f}[\mathcal{H}^{-1}]_{eh}[\mathcal{R}^{1}]_{d}[\mathcal{R}_{b}^{1}]_{g}\\
	& +W^{-1}\sum_{e,f}(\partial_{\beta^{a+1}\phi\phi'\phi_{e}\phi_{f}}\mathcal{L})[\mathcal{R}_{s_{g}}^{1}]_{f}[\mathcal{H}^{-1}]_{eh}\\
	& +W^{-1}\sum_{d,e,f}(\partial_{\beta^{a}\phi\phi'\phi_{d}\phi_{e}\phi_{f}}\mathcal{L})[\mathcal{R}^{1}]_{d}[\mathcal{R}_{s_{g}}^{1}]_{f}[\mathcal{H}^{-1}]_{eh}\\
	& -\sum_{e,f}(\partial_{\beta^{a}\phi\phi'\phi_{e}\phi_{f}}\mathcal{L})[\mathcal{R}_{bs_{g}}^{1}]_{f}[\mathcal{H}^{-1}]_{eh}\\
	& +\sum_{e,f}(\partial_{\beta^{a}\phi\phi'\phi_{e}\phi_{f}}\mathcal{L})[\mathcal{R}_{s_{g}}^{1}]_{f}[\mathcal{H}^{-1}\mathcal{H}_{b}\mathcal{H}^{-1}]_{eh}\\
	& -W^{-1}\sum_{e,f}(\partial_{\beta^{a+1}\phi\phi'\phi_{e}\phi_{f}}\mathcal{L})[\mathcal{R}^{1}]_{f}[\mathcal{H}^{-1}\mathcal{H}_{s_{g}}\mathcal{H}^{-1}]_{eh}\\
	& -W^{-1}\sum_{d,e,f}(\partial_{\beta^{a}\phi\phi'\phi_{d}\phi_{e}\phi_{f}}\mathcal{L})[\mathcal{R}^{1}]_{d}[\mathcal{R}^{1}]_{f}[\mathcal{H}^{-1}\mathcal{H}_{s_{g}}\mathcal{H}^{-1}]_{eh}\\
	& +\sum_{e,f}(\partial_{\beta^{a}\phi\phi'\phi_{e}\phi_{f}}\mathcal{L})[\mathcal{R}_{b}^{1}]_{f}[\mathcal{H}^{-1}\mathcal{H}_{s_{g}}\mathcal{H}^{-1}]_{eh}\\
	& -\sum_{e,f}(\partial_{\beta^{a}\phi\phi'\phi_{e}\phi_{f}}\mathcal{L})[\mathcal{R}^{1}]_{f}[\mathcal{H}^{-1}\mathcal{H}_{b}\mathcal{H}^{-1}\mathcal{H}_{s_{g}}\mathcal{H}^{-1}]_{eh}\\
	& +\sum_{e,f}(\partial_{\beta^{a}\phi\phi'\phi_{e}\phi_{f}}\mathcal{L})[\mathcal{R}^{1}]_{f}[\mathcal{H}^{-1}\mathcal{H}_{bs_{g}}\mathcal{H}^{-1}]_{eh}\\
	& -\sum_{e,f}(\partial_{\beta^{a}\phi\phi'\phi_{e}\phi_{f}}\mathcal{L})[\mathcal{R}^{1}]_{f}[\mathcal{H}^{-1}\mathcal{H}_{s_{g}}\mathcal{H}^{-1}\mathcal{H}_{b}\mathcal{H}^{-1}]_{eh}\\
	& -2W^{-3}W_{b}W_{s_{g}}\sum_{e,f}(\partial_{\beta^{a}\phi\phi'\phi_{e}\phi_{f}}\mathcal{L})[\mathcal{R}^{1}]_{f}[\mathcal{R}^{1}]_{e}[\mathcal{R}^{1}]_{h}\\
	& +W^{-2}W_{bs_{g}}\sum_{e,f}(\partial_{\beta^{a}\phi\phi'\phi_{e}\phi_{f}}\mathcal{L})[\mathcal{R}^{1}]_{f}[\mathcal{R}^{1}]_{e}[\mathcal{R}^{1}]_{h}\\
	& -W^{-3}W_{s_{g}}\sum_{e,f}(\partial_{\beta^{a+1}\phi\phi'\phi_{e}\phi_{f}}\mathcal{L})[\mathcal{R}^{1}]_{f}[\mathcal{R}^{1}]_{e}[\mathcal{R}^{1}]_{h}\\
	& -W^{-3}W_{s_{g}}\sum_{d,e,f}(\partial_{\beta^{a}\phi\phi'\phi_{d}\phi_{e}\phi_{f}}\mathcal{L})[\mathcal{R}^{1}]_{d}[\mathcal{R}^{1}]_{f}[\mathcal{R}^{1}]_{e}[\mathcal{R}^{1}]_{h}\\
	& +W^{-2}W_{s_{g}}\sum_{e,f}(\partial_{\beta^{a}\phi\phi'\phi_{e}\phi_{f}}\mathcal{L})[\mathcal{R}_{b}^{1}]_{f}[\mathcal{R}^{1}]_{e}[\mathcal{R}^{1}]_{h}\\
	& +W^{-2}W_{s_{g}}\sum_{e,f}(\partial_{\beta^{a}\phi\phi'\phi_{e}\phi_{f}}\mathcal{L})[\mathcal{R}^{1}]_{f}[\mathcal{R}_{b}^{1}]_{e}[\mathcal{R}^{1}]_{h}\\
	& +W^{-2}W_{s_{g}}\sum_{e,f}(\partial_{\beta^{a}\phi\phi'\phi_{e}\phi_{f}}\mathcal{L})[\mathcal{R}^{1}]_{f}[\mathcal{R}^{1}]_{e}[\mathcal{R}_{b}^{1}]_{h}\\
	& -2W^{-3}W_{b}\sum_{e,f}(\partial_{\beta^{a+1}\phi\phi'\phi_{e}\phi_{f}}\mathcal{L})[\mathcal{R}^{1}]_{f}[\mathcal{R}^{1}]_{e}[\mathcal{R}^{1}]_{g}[\mathcal{R}^{1}]_{h}\\
	& -W^{-3}\sum_{e,f}(\partial_{\beta^{a+2}\phi\phi'\phi_{e}\phi_{f}}\mathcal{L})[\mathcal{R}^{1}]_{f}[\mathcal{R}^{1}]_{e}[\mathcal{R}^{1}]_{g}[\mathcal{R}^{1}]_{h}\\
	& -W^{-3}\sum_{d,e,f}(\partial_{\beta^{a+1}\phi\phi'\phi_{d}\phi_{e}\phi_{f}}\mathcal{L})[\mathcal{R}^{1}]_{d}[\mathcal{R}^{1}]_{f}[\mathcal{R}^{1}]_{e}[\mathcal{R}^{1}]_{g}[\mathcal{R}^{1}]_{h}\\
	& +W^{-2}\sum_{e,f}(\partial_{\beta^{a+1}\phi\phi'\phi_{e}\phi_{f}}\mathcal{L})[\mathcal{R}_{b}^{1}]_{f}[\mathcal{R}^{1}]_{e}[\mathcal{R}^{1}]_{g}[\mathcal{R}^{1}]_{h}\\
	& +W^{-2}\sum_{e,f}(\partial_{\beta^{a+1}\phi\phi'\phi_{e}\phi_{f}}\mathcal{L})[\mathcal{R}^{1}]_{f}[\mathcal{R}_{b}^{1}]_{e}[\mathcal{R}^{1}]_{g}[\mathcal{R}^{1}]_{h}\\
	& +W^{-2}\sum_{e,f}(\partial_{\beta^{a+1}\phi\phi'\phi_{e}\phi_{f}}\mathcal{L})[\mathcal{R}^{1}]_{f}[\mathcal{R}^{1}]_{e}[\mathcal{R}_{b}^{1}]_{g}[\mathcal{R}^{1}]_{h}\\
	& +W^{-2}\sum_{e,f}(\partial_{\beta^{a+1}\phi\phi'\phi_{e}\phi_{f}}\mathcal{L})[\mathcal{R}^{1}]_{f}[\mathcal{R}^{1}]_{e}[\mathcal{R}^{1}]_{g}[\mathcal{R}_{b}^{1}]_{h}\\
	& -W^{-2}W_{b}\sum_{d,e,f}(\partial_{\beta^{a}\phi\phi'\phi_{d}\phi_{e}\phi_{f}}\mathcal{L})[\mathcal{R}^{1}]_{f}[\mathcal{R}^{1}]_{e}[\mathcal{R}^{1}]_{h}[\mathcal{H}^{-1}]_{dg}\\
	& -W^{-2}\sum_{d,e,f}(\partial_{\beta^{a+1}\phi\phi'\phi_{d}\phi_{e}\phi_{f}}\mathcal{L})[\mathcal{R}^{1}]_{f}[\mathcal{R}^{1}]_{e}[\mathcal{R}^{1}]_{h}[\mathcal{H}^{-1}]_{dg}\\
	& -W^{-2}\sum_{c,d,e,f}(\partial_{\beta^{a}\phi\phi'\phi_{c}\phi_{d}\phi_{e}\phi_{f}}\mathcal{L})[\mathcal{R}^{1}]_{c}[\mathcal{R}^{1}]_{f}[\mathcal{R}^{1}]_{e}[\mathcal{R}^{1}]_{h}[\mathcal{H}^{-1}]_{dg}\\
	& +W^{-1}\sum_{d,e,f}(\partial_{\beta^{a}\phi\phi'\phi_{d}\phi_{e}\phi_{f}}\mathcal{L})[\mathcal{R}_{b}^{1}]_{f}[\mathcal{R}^{1}]_{e}[\mathcal{R}^{1}]_{h}[\mathcal{H}^{-1}]_{dg}\\
	& +W^{-1}\sum_{d,e,f}(\partial_{\beta^{a}\phi\phi'\phi_{d}\phi_{e}\phi_{f}}\mathcal{L})[\mathcal{R}^{1}]_{f}[\mathcal{R}_{b}^{1}]_{e}[\mathcal{R}^{1}]_{h}[\mathcal{H}^{-1}]_{dg}\\
	& +W^{-1}\sum_{d,e,f}(\partial_{\beta^{a}\phi\phi'\phi_{d}\phi_{e}\phi_{f}}\mathcal{L})[\mathcal{R}^{1}]_{f}[\mathcal{R}^{1}]_{e}[\mathcal{R}_{b}^{1}]_{h}[\mathcal{H}^{-1}]_{dg}\\
	& -W^{-1}\sum_{d,e,f}(\partial_{\beta^{a}\phi\phi'\phi_{d}\phi_{e}\phi_{f}}\mathcal{L})[\mathcal{R}^{1}]_{f}[\mathcal{R}^{1}]_{e}[\mathcal{R}^{1}]_{h}[\mathcal{H}^{-1}\mathcal{H}_{b}\mathcal{H}^{-1}]_{dg}\\
	& -2W^{-3}W_{b}\sum_{d,e,f}(\partial_{\beta^{a}\phi\phi'\phi_{d}\phi_{e}\phi_{f}}\mathcal{L})[\mathcal{R}^{1}]_{f}[\mathcal{R}^{1}]_{d}[\mathcal{R}^{1}]_{e}[\mathcal{R}^{1}]_{g}[\mathcal{R}^{1}]_{h}\\
	& -W^{-3}\sum_{d,e,f}(\partial_{\beta^{a+1}\phi\phi'\phi_{d}\phi_{e}\phi_{f}}\mathcal{L})[\mathcal{R}^{1}]_{f}[\mathcal{R}^{1}]_{d}[\mathcal{R}^{1}]_{e}[\mathcal{R}^{1}]_{g}[\mathcal{R}^{1}]_{h}\\
	& -W^{-3}\sum_{c,d,e,f}(\partial_{\beta^{a}\phi\phi'\phi_{c}\phi_{d}\phi_{e}\phi_{f}}\mathcal{L})[\mathcal{R}^{1}]_{c}[\mathcal{R}^{1}]_{f}[\mathcal{R}^{1}]_{d}[\mathcal{R}^{1}]_{e}[\mathcal{R}^{1}]_{g}[\mathcal{R}^{1}]_{h}\\
	& +W^{-2}\sum_{d,e,f}(\partial_{\beta^{a}\phi\phi'\phi_{d}\phi_{e}\phi_{f}}\mathcal{L})[\mathcal{R}_{b}^{1}]_{f}[\mathcal{R}^{1}]_{d}[\mathcal{R}^{1}]_{e}[\mathcal{R}^{1}]_{g}[\mathcal{R}^{1}]_{h}\\
	& +W^{-2}\sum_{d,e,f}(\partial_{\beta^{a}\phi\phi'\phi_{d}\phi_{e}\phi_{f}}\mathcal{L})[\mathcal{R}^{1}]_{f}[\mathcal{R}_{b}^{1}]_{d}[\mathcal{R}^{1}]_{e}[\mathcal{R}^{1}]_{g}[\mathcal{R}^{1}]_{h}\\
	& +W^{-2}\sum_{d,e,f}(\partial_{\beta^{a}\phi\phi'\phi_{d}\phi_{e}\phi_{f}}\mathcal{L})[\mathcal{R}^{1}]_{f}[\mathcal{R}^{1}]_{d}[\mathcal{R}_{b}^{1}]_{e}[\mathcal{R}^{1}]_{g}[\mathcal{R}^{1}]_{h}\\
	& +W^{-2}\sum_{d,e,f}(\partial_{\beta^{a}\phi\phi'\phi_{d}\phi_{e}\phi_{f}}\mathcal{L})[\mathcal{R}^{1}]_{f}[\mathcal{R}^{1}]_{d}[\mathcal{R}^{1}]_{e}[\mathcal{R}_{b}^{1}]_{g}[\mathcal{R}^{1}]_{h}\\
	& +W^{-2}\sum_{d,e,f}(\partial_{\beta^{a}\phi\phi'\phi_{d}\phi_{e}\phi_{f}}\mathcal{L})[\mathcal{R}^{1}]_{f}[\mathcal{R}^{1}]_{d}[\mathcal{R}^{1}]_{e}[\mathcal{R}^{1}]_{g}[\mathcal{R}_{b}^{1}]_{h}\\
	& +W^{-2}W_{b}\sum_{e,f}(\partial_{\beta^{a}\phi\phi'\phi_{e}\phi_{f}}\mathcal{L})[\mathcal{R}_{s_{g}}^{1}]_{f}[\mathcal{R}^{1}]_{e}[\mathcal{R}^{1}]_{h}\\
	& +W^{-2}\sum_{e,f}(\partial_{\beta^{a+1}\phi\phi'\phi_{e}\phi_{f}}\mathcal{L})[\mathcal{R}_{s_{g}}^{1}]_{f}[\mathcal{R}^{1}]_{e}[\mathcal{R}^{1}]_{h}\\
	& +W^{-2}\sum_{d,e,f}(\partial_{\beta^{a}\phi\phi'\phi_{d}\phi_{e}\phi_{f}}\mathcal{L})[\mathcal{R}^{1}]_{d}[\mathcal{R}_{s_{g}}^{1}]_{f}[\mathcal{R}^{1}]_{e}[\mathcal{R}^{1}]_{h}\\
	& -W^{-1}\sum_{e,f}(\partial_{\beta^{a}\phi\phi'\phi_{e}\phi_{f}}\mathcal{L})[\mathcal{R}_{bs_{g}}^{1}]_{f}[\mathcal{R}^{1}]_{e}[\mathcal{R}^{1}]_{h}\\
	& -W^{-1}\sum_{e,f}(\partial_{\beta^{a}\phi\phi'\phi_{e}\phi_{f}}\mathcal{L})[\mathcal{R}_{s_{g}}^{1}]_{f}[\mathcal{R}_{b}^{1}]_{e}[\mathcal{R}^{1}]_{h}\\
	& -W^{-1}\sum_{e,f}(\partial_{\beta^{a}\phi\phi'\phi_{e}\phi_{f}}\mathcal{L})[\mathcal{R}_{s_{g}}^{1}]_{f}[\mathcal{R}^{1}]_{e}[\mathcal{R}_{b}^{1}]_{h}\\
	& +W^{-2}W_{b}\sum_{e,f}(\partial_{\beta^{a}\phi\phi'\phi_{e}\phi_{f}}\mathcal{L})[\mathcal{R}^{1}]_{f}[\mathcal{R}_{s_{g}}^{1}]_{e}[\mathcal{R}^{1}]_{h}\\
	& +W^{-2}\sum_{e,f}(\partial_{\beta^{a+1}\phi\phi'\phi_{e}\phi_{f}}\mathcal{L})[\mathcal{R}^{1}]_{f}[\mathcal{R}_{s_{g}}^{1}]_{e}[\mathcal{R}^{1}]_{h}\\
	& +W^{-2}\sum_{d,e,f}(\partial_{\beta^{a}\phi\phi'\phi_{d}\phi_{e}\phi_{f}}\mathcal{L})[\mathcal{R}^{1}]_{d}[\mathcal{R}^{1}]_{f}[\mathcal{R}_{s_{g}}^{1}]_{e}[\mathcal{R}^{1}]_{h}\\
	& -W^{-1}\sum_{e,f}(\partial_{\beta^{a}\phi\phi'\phi_{e}\phi_{f}}\mathcal{L})[\mathcal{R}_{b}^{1}]_{f}[\mathcal{R}_{s_{g}}^{1}]_{e}[\mathcal{R}^{1}]_{h}\\
	& -W^{-1}\sum_{e,f}(\partial_{\beta^{a}\phi\phi'\phi_{e}\phi_{f}}\mathcal{L})[\mathcal{R}^{1}]_{f}[\mathcal{R}_{bs_{g}}^{1}]_{e}[\mathcal{R}^{1}]_{h}\\
	& -W^{-1}\sum_{e,f}(\partial_{\beta^{a}\phi\phi'\phi_{e}\phi_{f}}\mathcal{L})[\mathcal{R}^{1}]_{f}[\mathcal{R}_{s_{g}}^{1}]_{e}[\mathcal{R}_{b}^{1}]_{h}\\
	& +W^{-2}W_{b}\sum_{e,f}(\partial_{\beta^{a}\phi\phi'\phi_{e}\phi_{f}}\mathcal{L})[\mathcal{R}^{1}]_{f}[\mathcal{R}^{1}]_{e}[\mathcal{R}_{s_{g}}^{1}]_{h}\\
	& +W^{-2}\sum_{e,f}(\partial_{\beta^{a+1}\phi\phi'\phi_{e}\phi_{f}}\mathcal{L})[\mathcal{R}^{1}]_{f}[\mathcal{R}^{1}]_{e}[\mathcal{R}_{s_{g}}^{1}]_{h}\\
	& +W^{-2}\sum_{d,e,f}(\partial_{\beta^{a}\phi\phi'\phi_{d}\phi_{e}\phi_{f}}\mathcal{L})[\mathcal{R}^{1}]_{d}[\mathcal{R}^{1}]_{f}[\mathcal{R}^{1}]_{e}[\mathcal{R}_{s_{g}}^{1}]_{h}\\
	& -W^{-1}\sum_{e,f}(\partial_{\beta^{a}\phi\phi'\phi_{e}\phi_{f}}\mathcal{L})[\mathcal{R}_{b}^{1}]_{f}[\mathcal{R}^{1}]_{e}[\mathcal{R}_{s_{g}}^{1}]_{h}\\
	& -W^{-1}\sum_{e,f}(\partial_{\beta^{a}\phi\phi'\phi_{e}\phi_{f}}\mathcal{L})[\mathcal{R}^{1}]_{f}[\mathcal{R}_{b}^{1}]_{e}[\mathcal{R}_{s_{g}}^{1}]_{h}\\
	& -W^{-1}\sum_{e,f}(\partial_{\beta^{a}\phi\phi'\phi_{e}\phi_{f}}\mathcal{L})[\mathcal{R}^{1}]_{f}[\mathcal{R}^{1}]_{e}[\mathcal{R}_{bs_{g}}^{1}]_{h}\\
	& +W^{-2}W_{b}\sum_{f}(\partial_{\beta^{a+1}\phi\phi'\phi_{f}}\mathcal{L})[\mathcal{R}_{s_{h}}^{1}]_{f}[\mathcal{R}^{1}]_{g}\\
	& +W^{-2}\sum_{f}(\partial_{\beta^{a+2}\phi\phi'\phi_{f}}\mathcal{L})[\mathcal{R}_{s_{h}}^{1}]_{f}[\mathcal{R}^{1}]_{g}\\
	& +W^{-2}\sum_{e,f}(\partial_{\beta^{a+1}\phi\phi'\phi_{e}\phi_{f}}\mathcal{L})[\mathcal{R}^{1}]_{e}[\mathcal{R}_{s_{h}}^{1}]_{f}[\mathcal{R}^{1}]_{g}\\
	& -W^{-1}\sum_{f}(\partial_{\beta^{a+1}\phi\phi'\phi_{f}}\mathcal{L})[\mathcal{R}_{bs_{h}}^{1}]_{f}[\mathcal{R}^{1}]_{g}\\
	& -W^{-1}\sum_{f}(\partial_{\beta^{a+1}\phi\phi'\phi_{f}}\mathcal{L})[\mathcal{R}_{s_{h}}^{1}]_{f}[\mathcal{R}_{b}^{1}]_{g}\\
	& +W^{-1}\sum_{e,f}(\partial_{\beta^{a+1}\phi\phi'\phi_{e}\phi_{f}}\mathcal{L})[\mathcal{R}_{s_{h}}^{1}]_{f}[\mathcal{H}^{-1}]_{eg}\\
	& +W^{-1}\sum_{d,e,f}(\partial_{\beta^{a}\phi\phi'\phi_{d}\phi_{e}\phi_{f}}\mathcal{L})[\mathcal{R}^{1}]_{d}[\mathcal{R}_{s_{h}}^{1}]_{f}[\mathcal{H}^{-1}]_{eg}\\
	& -\sum_{e,f}(\partial_{\beta^{a}\phi\phi'\phi_{e}\phi_{f}}\mathcal{L})[\mathcal{R}_{bs_{h}}^{1}]_{f}[\mathcal{H}^{-1}]_{eg}\\
	& +\sum_{e,f}(\partial_{\beta^{a}\phi\phi'\phi_{e}\phi_{f}}\mathcal{L})[\mathcal{R}_{s_{h}}^{1}]_{f}[\mathcal{H}^{-1}\mathcal{H}_{b}\mathcal{H}^{-1}]_{eg}\\
	& -W^{-2}W_{b}\sum_{e,f}(\partial_{\beta^{a}\phi\phi'\phi_{e}\phi_{f}}\mathcal{L})[\mathcal{R}_{s_{h}}^{1}]_{f}[\mathcal{R}^{1}]_{e}[\mathcal{R}^{1}]_{g}\\
	& -W^{-2}\sum_{e,f}(\partial_{\beta^{a+1}\phi\phi'\phi_{e}\phi_{f}}\mathcal{L})[\mathcal{R}_{s_{h}}^{1}]_{f}[\mathcal{R}^{1}]_{e}[\mathcal{R}^{1}]_{g}\\
	& -W^{-2}\sum_{d,e,f}(\partial_{\beta^{a}\phi\phi'\phi_{d}\phi_{e}\phi_{f}}\mathcal{L})[\mathcal{R}^{1}]_{d}[\mathcal{R}_{s_{h}}^{1}]_{f}[\mathcal{R}^{1}]_{e}[\mathcal{R}^{1}]_{g}\\
	& +W^{-1}\sum_{e,f}(\partial_{\beta^{a}\phi\phi'\phi_{e}\phi_{f}}\mathcal{L})[\mathcal{R}_{bs_{h}}^{1}]_{f}[\mathcal{R}^{1}]_{e}[\mathcal{R}^{1}]_{g}\\
	& +W^{-1}\sum_{e,f}(\partial_{\beta^{a}\phi\phi'\phi_{e}\phi_{f}}\mathcal{L})[\mathcal{R}_{s_{h}}^{1}]_{f}[\mathcal{R}_{b}^{1}]_{e}[\mathcal{R}^{1}]_{g}\\
	& +W^{-1}\sum_{e,f}(\partial_{\beta^{a}\phi\phi'\phi_{e}\phi_{f}}\mathcal{L})[\mathcal{R}_{s_{h}}^{1}]_{f}[\mathcal{R}^{1}]_{e}[\mathcal{R}_{b}^{1}]_{g}\\
	& -W^{-1}\sum_{f}(\partial_{\beta^{a+1}\phi\phi'\phi_{f}}\mathcal{L})[\mathcal{R}_{s_{g}s_{h}}^{1}]_{f}\\
	& -W^{-1}\sum_{e,f}(\partial_{\beta^{a}\phi\phi'\phi_{e}\phi_{f}}\mathcal{L})[\mathcal{R}^{1}]_{e}[\mathcal{R}_{s_{g}s_{h}}^{1}]_{f}\\
	& +\sum_{f}(\partial_{\beta^{a}\phi\phi'\phi_{f}}\mathcal{L})[\mathcal{R}_{bs_{g}s_{h}}^{1}]_{f}
\end{align*}

\begin{align*}
	\mathcal{P}_{(g),bs_{k}s_{h}}^{r} & =-2W^{-3}W_{b}W_{s_{k}}[\mathcal{R}^{1}]_{h}\mathcal{P}_{(g)}^{r+1}+W^{-2}W_{bs_{k}}[\mathcal{R}^{1}]_{h}\mathcal{P}_{(g)}^{r+1}\\
	& +W^{-2}W_{s_{k}}[\mathcal{R}_{b}^{1}]_{h}\mathcal{P}_{(g)}^{r+1}+W^{-2}W_{s_{k}}[\mathcal{R}^{1}]_{h}\mathcal{P}_{(g),b}^{r+1}\\
	& +W^{-2}W_{b}[\mathcal{R}_{s_{k}}^{1}]_{h}\mathcal{P}_{(g)}^{r+1}-W^{-1}[\mathcal{R}_{bs_{k}}^{1}]_{h}\mathcal{P}_{(g)}^{r+1}-W^{-1}[\mathcal{R}_{s_{k}}^{1}]_{h}\mathcal{P}_{(g),b}^{r+1}\\
	& +W^{-2}W_{b}[\mathcal{R}^{1}]_{h}\mathcal{P}_{(g),s_{k}}^{r+1}-W^{-1}[\mathcal{R}_{b}^{1}]_{h}\mathcal{P}_{(g),s_{k}}^{r+1}-W^{-1}[\mathcal{R}^{1}]_{h}\mathcal{P}_{(g),bs_{k}}^{r+1}\\
	& -W^{-2}W_{b}\sum_{e,f}(\partial_{\beta^{r+1}\phi\phi'\phi_{e}\phi_{f}}\mathcal{L})\big[\mathcal{H}^{-1}\big]_{fg}\big[\mathcal{H}^{-1}\big]_{eh}[\mathcal{R}^{1}]_{k}\\
	& -W^{-2}\sum_{e,f}(\partial_{\beta^{r+2}\phi\phi'\phi_{e}\phi_{f}}\mathcal{L})\big[\mathcal{H}^{-1}\big]_{fg}\big[\mathcal{H}^{-1}\big]_{eh}[\mathcal{R}^{1}]_{k}\\
	& -W^{-2}\sum_{d,e,f}(\partial_{\beta^{r+1}\phi\phi'\phi_{d}\phi_{e}\phi_{f}}\mathcal{L})\big[\mathcal{H}^{-1}\big]_{fg}\big[\mathcal{H}^{-1}\big]_{eh}[\mathcal{R}^{1}]_{d}[\mathcal{R}^{1}]_{k}\\
	& -W^{-1}\sum_{e,f}(\partial_{\beta^{r+1}\phi\phi'\phi_{e}\phi_{f}}\mathcal{L})\big[\mathcal{H}^{-1}\mathcal{H}_{b}\mathcal{H}^{-1}\big]_{fg}\big[\mathcal{H}^{-1}\big]_{eh}[\mathcal{R}^{1}]_{k}\\
	& -W^{-1}\sum_{e,f}(\partial_{\beta^{r+1}\phi\phi'\phi_{e}\phi_{f}}\mathcal{L})\big[\mathcal{H}^{-1}\big]_{fg}\big[\mathcal{H}^{-1}\mathcal{H}_{b}\mathcal{H}^{-1}\big]_{eh}[\mathcal{R}^{1}]_{k}\\
	& +W^{-1}\sum_{e,f}(\partial_{\beta^{r+1}\phi\phi'\phi_{e}\phi_{f}}\mathcal{L})\big[\mathcal{H}^{-1}\big]_{fg}\big[\mathcal{H}^{-1}\big]_{eh}[\mathcal{R}_{b}^{1}]_{k}\\
	& -W^{-1}\sum_{d,e,f}(\partial_{\beta^{r+1}\phi\phi'\phi_{d}\phi_{e}\phi_{f}}\mathcal{L})\big[\mathcal{H}^{-1}\big]_{fg}\big[\mathcal{H}^{-1}\big]_{eh}\big[\mathcal{H}^{-1}\big]_{dk}\\
	& -W^{-1}\sum_{c,d,e,f}(\partial_{\beta^{r}\phi\phi'\phi_{c}\phi_{d}\phi_{e}\phi_{f}}\mathcal{L})[\mathcal{R}^{1}]_{c}\big[\mathcal{H}^{-1}\big]_{fg}\big[\mathcal{H}^{-1}\big]_{eh}\big[\mathcal{H}^{-1}\big]_{dk}\\
	& -\sum_{d,e,f}(\partial_{\beta^{r}\phi\phi'\phi_{d}\phi_{e}\phi_{f}}\mathcal{L})\big[\mathcal{H}^{-1}\mathcal{H}_{b}\mathcal{H}^{-1}\big]_{fg}\big[\mathcal{H}^{-1}\big]_{eh}\big[\mathcal{H}^{-1}\big]_{dk}\\
	& -\sum_{d,e,f}(\partial_{\beta^{r}\phi\phi'\phi_{d}\phi_{e}\phi_{f}}\mathcal{L})\big[\mathcal{H}^{-1}\big]_{fg}\big[\mathcal{H}^{-1}\mathcal{H}_{b}\mathcal{H}^{-1}\big]_{eh}\big[\mathcal{H}^{-1}\big]_{dk}\\
	& -\sum_{d,e,f}(\partial_{\beta^{r}\phi\phi'\phi_{d}\phi_{e}\phi_{f}}\mathcal{L})\big[\mathcal{H}^{-1}\big]_{fg}\big[\mathcal{H}^{-1}\big]_{eh}\big[\mathcal{H}^{-1}\mathcal{H}_{b}\mathcal{H}^{-1}\big]_{dk}\\
	& -W^{-2}W_{b}\sum_{d,e,f}(\partial_{\beta^{r}\phi\phi'\phi_{d}\phi_{e}\phi_{f}}\mathcal{L})\big[\mathcal{H}^{-1}\big]_{fg}\big[\mathcal{H}^{-1}\big]_{eh}[\mathcal{R}^{1}]_{d}[\mathcal{R}^{1}]_{k}\\
	& -W^{-2}\sum_{d,e,f}(\partial_{\beta^{r+1}\phi\phi'\phi_{d}\phi_{e}\phi_{f}}\mathcal{L})\big[\mathcal{H}^{-1}\big]_{fg}\big[\mathcal{H}^{-1}\big]_{eh}[\mathcal{R}^{1}]_{d}[\mathcal{R}^{1}]_{k}\\
	& -W^{-2}\sum_{c,d,e,f}(\partial_{\beta^{r}\phi\phi'\phi_{c}\phi_{d}\phi_{e}\phi_{f}}\mathcal{L})\big[\mathcal{H}^{-1}\big]_{fg}\big[\mathcal{H}^{-1}\big]_{eh}[\mathcal{R}^{1}]_{c}[\mathcal{R}^{1}]_{d}[\mathcal{R}^{1}]_{k}\\
	& -W^{-1}\sum_{d,e,f}(\partial_{\beta^{r}\phi\phi'\phi_{d}\phi_{e}\phi_{f}}\mathcal{L})\big[\mathcal{H}^{-1}\mathcal{H}_{b}\mathcal{H}^{-1}\big]_{fg}\big[\mathcal{H}^{-1}\big]_{eh}[\mathcal{R}^{1}]_{d}[\mathcal{R}^{1}]_{k}\\
	& -W^{-1}\sum_{d,e,f}(\partial_{\beta^{r}\phi\phi'\phi_{d}\phi_{e}\phi_{f}}\mathcal{L})\big[\mathcal{H}^{-1}\big]_{fg}\big[\mathcal{H}^{-1}\mathcal{H}_{b}\mathcal{H}^{-1}\big]_{eh}[\mathcal{R}^{1}]_{d}[\mathcal{R}^{1}]_{k}\\
	& +W^{-1}\sum_{d,e,f}(\partial_{\beta^{r}\phi\phi'\phi_{d}\phi_{e}\phi_{f}}\mathcal{L})\big[\mathcal{H}^{-1}\big]_{fg}\big[\mathcal{H}^{-1}\big]_{eh}[\mathcal{R}_{b}^{1}]_{d}[\mathcal{R}^{1}]_{k}\\
	& +W^{-1}\sum_{d,e,f}(\partial_{\beta^{r}\phi\phi'\phi_{d}\phi_{e}\phi_{f}}\mathcal{L})\big[\mathcal{H}^{-1}\big]_{fg}\big[\mathcal{H}^{-1}\big]_{eh}[\mathcal{R}^{1}]_{d}[\mathcal{R}_{b}^{1}]_{k}\\
	& -W^{-1}\sum_{e,f}(\partial_{\beta^{r+1}\phi\phi'\phi_{e}\phi_{f}}\mathcal{L})\big[\mathcal{H}^{-1}\mathcal{H}_{s_{k}}\mathcal{H}^{-1}\big]_{fg}\big[\mathcal{H}^{-1}\big]_{eh}\\
	& -W^{-1}\sum_{d,e,f}(\partial_{\beta^{r}\phi\phi'\phi_{d}\phi_{e}\phi_{f}}\mathcal{L})\big[\mathcal{H}^{-1}\mathcal{H}_{s_{k}}\mathcal{H}^{-1}\big]_{fg}\big[\mathcal{H}^{-1}\big]_{eh}[\mathcal{R}^{1}]_{d}\\
	& -\sum_{e,f}(\partial_{\beta^{r}\phi\phi'\phi_{e}\phi_{f}}\mathcal{L})\big[\mathcal{H}^{-1}\mathcal{H}_{b}\mathcal{H}^{-1}\mathcal{H}_{s_{k}}\mathcal{H}^{-1}\big]_{fg}\big[\mathcal{H}^{-1}\big]_{eh}\\
	& +\sum_{e,f}(\partial_{\beta^{r}\phi\phi'\phi_{e}\phi_{f}}\mathcal{L})\big[\mathcal{H}^{-1}\mathcal{H}_{bs_{k}}\mathcal{H}^{-1}\big]_{fg}\big[\mathcal{H}^{-1}\big]_{eh}\\
	& -\sum_{e,f}(\partial_{\beta^{r}\phi\phi'\phi_{e}\phi_{f}}\mathcal{L})\big[\mathcal{H}^{-1}\mathcal{H}_{s_{k}}\mathcal{H}^{-1}\mathcal{H}_{b}\mathcal{H}^{-1}\big]_{fg}\big[\mathcal{H}^{-1}\big]_{eh}\\
	& -\sum_{e,f}(\partial_{\beta^{r}\phi\phi'\phi_{e}\phi_{f}}\mathcal{L})\big[\mathcal{H}^{-1}\mathcal{H}_{s_{k}}\mathcal{H}^{-1}\big]_{fg}\big[\mathcal{H}^{-1}\mathcal{H}_{b}\mathcal{H}^{-1}\big]_{eh}\\
	& -W^{-1}\sum_{e,f}(\partial_{\beta^{r+1}\phi\phi'\phi_{e}\phi_{f}}\mathcal{L})\big[\mathcal{H}^{-1}\big]_{fg}\big[\mathcal{H}^{-1}\mathcal{H}_{s_{k}}\mathcal{H}^{-1}\big]_{eh}\\
	& -W^{-1}\sum_{d,e,f}(\partial_{\beta^{r}\phi\phi'\phi_{d}\phi_{e}\phi_{f}}\mathcal{L})\big[\mathcal{H}^{-1}\big]_{fg}\big[\mathcal{H}^{-1}\mathcal{H}_{s_{k}}\mathcal{H}^{-1}\big]_{eh}[\mathcal{R}^{1}]_{d}\\
	& -\sum_{e,f}(\partial_{\beta^{r}\phi\phi'\phi_{e}\phi_{f}}\mathcal{L})\big[\mathcal{H}^{-1}\mathcal{H}_{b}\mathcal{H}^{-1}\big]_{fg}\big[\mathcal{H}^{-1}\mathcal{H}_{s_{k}}\mathcal{H}^{-1}\big]_{eh}\\
	& -\sum_{e,f}(\partial_{\beta^{r}\phi\phi'\phi_{e}\phi_{f}}\mathcal{L})\big[\mathcal{H}^{-1}\big]_{fg}\big[\mathcal{H}^{-1}\mathcal{H}_{b}\mathcal{H}^{-1}\mathcal{H}_{s_{k}}\mathcal{H}^{-1}\big]_{eh}\\
	& +\sum_{e,f}(\partial_{\beta^{r}\phi\phi'\phi_{e}\phi_{f}}\mathcal{L})\big[\mathcal{H}^{-1}\big]_{fg}\big[\mathcal{H}^{-1}\mathcal{H}_{bs_{k}}\mathcal{H}^{-1}\big]_{eh}\\
	& -\sum_{e,f}(\partial_{\beta^{r}\phi\phi'\phi_{e}\phi_{f}}\mathcal{L})\big[\mathcal{H}^{-1}\big]_{fg}\big[\mathcal{H}^{-1}\mathcal{H}_{s_{k}}\mathcal{H}^{-1}\mathcal{H}_{b}\mathcal{H}^{-1}\big]_{eh}\\
	& -2W^{-3}W_{b}W_{s_{k}}\sum_{e,f}(\partial_{\beta^{r}\phi\phi'\phi_{e}\phi_{f}}\mathcal{L})\big[\mathcal{H}^{-1}\big]_{fg}[\mathcal{R}^{1}]_{e}[\mathcal{R}^{1}]_{h}\\
	& +W^{-2}W_{bs_{k}}\sum_{e,f}(\partial_{\beta^{r}\phi\phi'\phi_{e}\phi_{f}}\mathcal{L})\big[\mathcal{H}^{-1}\big]_{fg}[\mathcal{R}^{1}]_{e}[\mathcal{R}^{1}]_{h}\\
	& -W^{-3}W_{s_{k}}\sum_{e,f}(\partial_{\beta^{r+1}\phi\phi'\phi_{e}\phi_{f}}\mathcal{L})\big[\mathcal{H}^{-1}\big]_{fg}[\mathcal{R}^{1}]_{e}[\mathcal{R}^{1}]_{h}\\
	& -W^{-3}W_{s_{k}}\sum_{d,e,f}(\partial_{\beta^{r}\phi\phi'\phi_{d}\phi_{e}\phi_{f}}\mathcal{L})\big[\mathcal{H}^{-1}\big]_{fg}[\mathcal{R}^{1}]_{d}[\mathcal{R}^{1}]_{e}[\mathcal{R}^{1}]_{h}\\
	& -W^{-2}W_{s_{k}}\sum_{e,f}(\partial_{\beta^{r}\phi\phi'\phi_{e}\phi_{f}}\mathcal{L})\big[\mathcal{H}^{-1}\mathcal{H}_{b}\mathcal{H}^{-1}\big]_{fg}[\mathcal{R}^{1}]_{e}[\mathcal{R}^{1}]_{h}\\
	& +W^{-2}W_{s_{k}}\sum_{e,f}(\partial_{\beta^{r}\phi\phi'\phi_{e}\phi_{f}}\mathcal{L})\big[\mathcal{H}^{-1}\big]_{fg}[\mathcal{R}_{b}^{1}]_{e}[\mathcal{R}^{1}]_{h}\\
	& +W^{-2}W_{s_{k}}\sum_{e,f}(\partial_{\beta^{r}\phi\phi'\phi_{e}\phi_{f}}\mathcal{L})\big[\mathcal{H}^{-1}\big]_{fg}[\mathcal{R}^{1}]_{e}[\mathcal{R}_{b}^{1}]_{h}\\
	& -2W^{-3}W_{b}\sum_{e,f}(\partial_{\beta^{r+1}\phi\phi'\phi_{e}\phi_{f}}\mathcal{L})\big[\mathcal{H}^{-1}\big]_{fg}[\mathcal{R}^{1}]_{e}[\mathcal{R}^{1}]_{h}[\mathcal{R}^{1}]_{k}\\
	& -W^{-3}\sum_{e,f}(\partial_{\beta^{r+2}\phi\phi'\phi_{e}\phi_{f}}\mathcal{L})\big[\mathcal{H}^{-1}\big]_{fg}[\mathcal{R}^{1}]_{e}[\mathcal{R}^{1}]_{h}[\mathcal{R}^{1}]_{k}\\
	& -W^{-3}\sum_{d,e,f}(\partial_{\beta^{r+1}\phi\phi'\phi_{d}\phi_{e}\phi_{f}}\mathcal{L})\big[\mathcal{H}^{-1}\big]_{fg}[\mathcal{R}^{1}]_{d}[\mathcal{R}^{1}]_{e}[\mathcal{R}^{1}]_{h}[\mathcal{R}^{1}]_{k}\\
	& -W^{-2}\sum_{e,f}(\partial_{\beta^{r+1}\phi\phi'\phi_{e}\phi_{f}}\mathcal{L})\big[\mathcal{H}^{-1}\mathcal{H}_{b}\mathcal{H}^{-1}\big]_{fg}[\mathcal{R}^{1}]_{e}[\mathcal{R}^{1}]_{h}[\mathcal{R}^{1}]_{k}\\
	& +W^{-2}\sum_{e,f}(\partial_{\beta^{r+1}\phi\phi'\phi_{e}\phi_{f}}\mathcal{L})\big[\mathcal{H}^{-1}\big]_{fg}[\mathcal{R}_{b}^{1}]_{e}[\mathcal{R}^{1}]_{h}[\mathcal{R}^{1}]_{k}\\
	& +W^{-2}\sum_{e,f}(\partial_{\beta^{r+1}\phi\phi'\phi_{e}\phi_{f}}\mathcal{L})\big[\mathcal{H}^{-1}\big]_{fg}[\mathcal{R}^{1}]_{e}[\mathcal{R}_{b}^{1}]_{h}[\mathcal{R}^{1}]_{k}\\
	& +W^{-2}\sum_{e,f}(\partial_{\beta^{r+1}\phi\phi'\phi_{e}\phi_{f}}\mathcal{L})\big[\mathcal{H}^{-1}\big]_{fg}[\mathcal{R}^{1}]_{e}[\mathcal{R}^{1}]_{h}[\mathcal{R}_{b}^{1}]_{k}\\
	& -W^{-2}W_{b}\sum_{d,e,f}(\partial_{\beta^{r}\phi\phi'\phi_{d}\phi_{e}\phi_{f}}\mathcal{L})\big[\mathcal{H}^{-1}\big]_{fg}\big[\mathcal{H}^{-1}\big]_{dk}[\mathcal{R}^{1}]_{e}[\mathcal{R}^{1}]_{h}\\
	& -W^{-2}\sum_{d,e,f}(\partial_{\beta^{r+1}\phi\phi'\phi_{d}\phi_{e}\phi_{f}}\mathcal{L})\big[\mathcal{H}^{-1}\big]_{fg}\big[\mathcal{H}^{-1}\big]_{dk}[\mathcal{R}^{1}]_{e}[\mathcal{R}^{1}]_{h}\\
	& -W^{-2}\sum_{c,d,e,f}(\partial_{\beta^{r}\phi\phi'\phi_{c}\phi_{d}\phi_{e}\phi_{f}}\mathcal{L})\big[\mathcal{H}^{-1}\big]_{fg}\big[\mathcal{H}^{-1}\big]_{dk}[\mathcal{R}^{1}]_{c}[\mathcal{R}^{1}]_{e}[\mathcal{R}^{1}]_{h}\\
	& -W^{-1}\sum_{d,e,f}(\partial_{\beta^{r}\phi\phi'\phi_{d}\phi_{e}\phi_{f}}\mathcal{L})\big[\mathcal{H}^{-1}\mathcal{H}_{b}\mathcal{H}^{-1}\big]_{fg}\big[\mathcal{H}^{-1}\big]_{dk}[\mathcal{R}^{1}]_{e}[\mathcal{R}^{1}]_{h}\\
	& -W^{-1}\sum_{d,e,f}(\partial_{\beta^{r}\phi\phi'\phi_{d}\phi_{e}\phi_{f}}\mathcal{L})\big[\mathcal{H}^{-1}\big]_{fg}\big[\mathcal{H}^{-1}\mathcal{H}_{b}\mathcal{H}^{-1}\big]_{dk}[\mathcal{R}^{1}]_{e}[\mathcal{R}^{1}]_{h}\\
	& +W^{-1}\sum_{d,e,f}(\partial_{\beta^{r}\phi\phi'\phi_{d}\phi_{e}\phi_{f}}\mathcal{L})\big[\mathcal{H}^{-1}\big]_{fg}\big[\mathcal{H}^{-1}\big]_{dk}[\mathcal{R}_{b}^{1}]_{e}[\mathcal{R}^{1}]_{h}\\
	& +W^{-1}\sum_{d,e,f}(\partial_{\beta^{r}\phi\phi'\phi_{d}\phi_{e}\phi_{f}}\mathcal{L})\big[\mathcal{H}^{-1}\big]_{fg}\big[\mathcal{H}^{-1}\big]_{dk}[\mathcal{R}^{1}]_{e}[\mathcal{R}_{b}^{1}]_{h}\\
	& -2W^{-3}W_{b}\sum_{d,e,f}(\partial_{\beta^{r}\phi\phi'\phi_{d}\phi_{e}\phi_{f}}\mathcal{L})\big[\mathcal{H}^{-1}\big]_{fg}[\mathcal{R}^{1}]_{e}[\mathcal{R}^{1}]_{h}[\mathcal{R}^{1}]_{d}[\mathcal{R}^{1}]_{k}\\
	& -W^{-2}\sum_{d,e,f}(\partial_{\beta^{r+1}\phi\phi'\phi_{d}\phi_{e}\phi_{f}}\mathcal{L})\big[\mathcal{H}^{-1}\big]_{fg}[\mathcal{R}^{1}]_{e}[\mathcal{R}^{1}]_{h}[\mathcal{R}^{1}]_{d}[\mathcal{R}^{1}]_{k}\\
	& -W^{-2}\sum_{c,d,e,f}(\partial_{\beta^{r}\phi\phi'\phi_{c}\phi_{d}\phi_{e}\phi_{f}}\mathcal{L})\big[\mathcal{H}^{-1}\big]_{fg}[\mathcal{R}^{1}]_{c}[\mathcal{R}^{1}]_{e}[\mathcal{R}^{1}]_{h}[\mathcal{R}^{1}]_{d}[\mathcal{R}^{1}]_{k}\\
	& -W^{-2}\sum_{d,e,f}(\partial_{\beta^{r}\phi\phi'\phi_{d}\phi_{e}\phi_{f}}\mathcal{L})\big[\mathcal{H}^{-1}\mathcal{H}_{b}\mathcal{H}^{-1}\big]_{fg}[\mathcal{R}^{1}]_{e}[\mathcal{R}^{1}]_{h}[\mathcal{R}^{1}]_{d}[\mathcal{R}^{1}]_{k}\\
	& +W^{-2}\sum_{d,e,f}(\partial_{\beta^{r}\phi\phi'\phi_{d}\phi_{e}\phi_{f}}\mathcal{L})\big[\mathcal{H}^{-1}\big]_{fg}[\mathcal{R}_{b}^{1}]_{e}[\mathcal{R}^{1}]_{h}[\mathcal{R}^{1}]_{d}[\mathcal{R}^{1}]_{k}\\
	& +W^{-2}\sum_{d,e,f}(\partial_{\beta^{r}\phi\phi'\phi_{d}\phi_{e}\phi_{f}}\mathcal{L})\big[\mathcal{H}^{-1}\big]_{fg}[\mathcal{R}^{1}]_{e}[\mathcal{R}_{b}^{1}]_{h}[\mathcal{R}^{1}]_{d}[\mathcal{R}^{1}]_{k}\\
	& +W^{-2}\sum_{d,e,f}(\partial_{\beta^{r}\phi\phi'\phi_{d}\phi_{e}\phi_{f}}\mathcal{L})\big[\mathcal{H}^{-1}\big]_{fg}[\mathcal{R}^{1}]_{e}[\mathcal{R}^{1}]_{h}[\mathcal{R}_{b}^{1}]_{d}[\mathcal{R}^{1}]_{k}\\
	& +W^{-2}\sum_{d,e,f}(\partial_{\beta^{r}\phi\phi'\phi_{d}\phi_{e}\phi_{f}}\mathcal{L})\big[\mathcal{H}^{-1}\big]_{fg}[\mathcal{R}^{1}]_{e}[\mathcal{R}^{1}]_{h}[\mathcal{R}^{1}]_{d}[\mathcal{R}_{b}^{1}]_{k}\\
	& -W^{-2}W_{b}\sum_{e,f}(\partial_{\beta^{r}\phi\phi'\phi_{e}\phi_{f}}\mathcal{L})\big[\mathcal{H}^{-1}\mathcal{H}_{s_{k}}\mathcal{H}^{-1}\big]_{fg}[\mathcal{R}^{1}]_{e}[\mathcal{R}^{1}]_{h}\\
	& -W^{-2}\sum_{e,f}(\partial_{\beta^{r+1}\phi\phi'\phi_{e}\phi_{f}}\mathcal{L})\big[\mathcal{H}^{-1}\mathcal{H}_{s_{k}}\mathcal{H}^{-1}\big]_{fg}[\mathcal{R}^{1}]_{e}[\mathcal{R}^{1}]_{h}\\
	& -W^{-2}\sum_{d,e,f}(\partial_{\beta^{r}\phi\phi'\phi_{d}\phi_{e}\phi_{f}}\mathcal{L})\big[\mathcal{H}^{-1}\mathcal{H}_{s_{k}}\mathcal{H}^{-1}\big]_{fg}[\mathcal{R}^{1}]_{d}[\mathcal{R}^{1}]_{e}[\mathcal{R}^{1}]_{h}\\
	& -W^{-1}\sum_{e,f}(\partial_{\beta^{r}\phi\phi'\phi_{e}\phi_{f}}\mathcal{L})\big[\mathcal{H}^{-1}\mathcal{H}_{b}\mathcal{H}^{-1}\mathcal{H}_{s_{k}}\mathcal{H}^{-1}\big]_{fg}[\mathcal{R}^{1}]_{e}[\mathcal{R}^{1}]_{h}\\
	& +W^{-1}\sum_{e,f}(\partial_{\beta^{r}\phi\phi'\phi_{e}\phi_{f}}\mathcal{L})\big[\mathcal{H}^{-1}\mathcal{H}_{bs_{k}}\mathcal{H}^{-1}\big]_{fg}[\mathcal{R}^{1}]_{e}[\mathcal{R}^{1}]_{h}\\
	& -W^{-1}\sum_{e,f}(\partial_{\beta^{r}\phi\phi'\phi_{e}\phi_{f}}\mathcal{L})\big[\mathcal{H}^{-1}\mathcal{H}_{s_{k}}\mathcal{H}^{-1}\mathcal{H}_{b}\mathcal{H}^{-1}\big]_{fg}[\mathcal{R}^{1}]_{e}[\mathcal{R}^{1}]_{h}\\
	& +W^{-1}\sum_{e,f}(\partial_{\beta^{r}\phi\phi'\phi_{e}\phi_{f}}\mathcal{L})\big[\mathcal{H}^{-1}\mathcal{H}_{s_{k}}\mathcal{H}^{-1}\big]_{fg}[\mathcal{R}_{b}^{1}]_{e}[\mathcal{R}^{1}]_{h}\\
	& +W^{-1}\sum_{e,f}(\partial_{\beta^{r}\phi\phi'\phi_{e}\phi_{f}}\mathcal{L})\big[\mathcal{H}^{-1}\mathcal{H}_{s_{k}}\mathcal{H}^{-1}\big]_{fg}[\mathcal{R}^{1}]_{e}[\mathcal{R}_{b}^{1}]_{h}\\
	& +W^{-2}W_{b}\sum_{e,f}(\partial_{\beta^{r}\phi\phi'\phi_{e}\phi_{f}}\mathcal{L})\big[\mathcal{H}^{-1}\big]_{fg}[\mathcal{R}_{s_{k}}^{1}]_{e}[\mathcal{R}^{1}]_{h}\\
	& +W^{-2}\sum_{e,f}(\partial_{\beta^{r+1}\phi\phi'\phi_{e}\phi_{f}}\mathcal{L})\big[\mathcal{H}^{-1}\big]_{fg}[\mathcal{R}_{s_{k}}^{1}]_{e}[\mathcal{R}^{1}]_{h}\\
	& +W^{-2}\sum_{d,e,f}(\partial_{\beta^{r}\phi\phi'\phi_{d}\phi_{e}\phi_{f}}\mathcal{L})\big[\mathcal{H}^{-1}\big]_{fg}[\mathcal{R}_{s_{k}}^{1}]_{e}[\mathcal{R}^{1}]_{d}[\mathcal{R}^{1}]_{h}\\
	& +W^{-1}\sum_{e,f}(\partial_{\beta^{r}\phi\phi'\phi_{e}\phi_{f}}\mathcal{L})\big[\mathcal{H}^{-1}\mathcal{H}_{b}\mathcal{H}^{-1}\big]_{fg}[\mathcal{R}_{s_{k}}^{1}]_{e}[\mathcal{R}^{1}]_{h}\\
	& -W^{-1}\sum_{e,f}(\partial_{\beta^{r}\phi\phi'\phi_{e}\phi_{f}}\mathcal{L})\big[\mathcal{H}^{-1}\big]_{fg}[\mathcal{R}_{bs_{k}}^{1}]_{e}[\mathcal{R}^{1}]_{h}\\
	& -W^{-1}\sum_{e,f}(\partial_{\beta^{r}\phi\phi'\phi_{e}\phi_{f}}\mathcal{L})\big[\mathcal{H}^{-1}\big]_{fg}[\mathcal{R}_{s_{k}}^{1}]_{e}[\mathcal{R}_{b}^{1}]_{h}\\
	& +W^{-2}W_{b}\sum_{e,f}(\partial_{\beta^{r}\phi\phi'\phi_{e}\phi_{f}}\mathcal{L})\big[\mathcal{H}^{-1}\big]_{fg}[\mathcal{R}^{1}]_{e}[\mathcal{R}_{s_{k}}^{1}]_{h}\\
	& +W^{-2}\sum_{e,f}(\partial_{\beta^{r+1}\phi\phi'\phi_{e}\phi_{f}}\mathcal{L})\big[\mathcal{H}^{-1}\big]_{fg}[\mathcal{R}^{1}]_{e}[\mathcal{R}_{s_{k}}^{1}]_{h}\\
	& +W^{-2}\sum_{d,e,f}(\partial_{\beta^{r}\phi\phi'\phi_{d}\phi_{e}\phi_{f}}\mathcal{L})\big[\mathcal{H}^{-1}\big]_{fg}[\mathcal{R}^{1}]_{d}[\mathcal{R}^{1}]_{e}[\mathcal{R}_{s_{k}}^{1}]_{h}\\
	& +W^{-1}\sum_{e,f}(\partial_{\beta^{r}\phi\phi'\phi_{e}\phi_{f}}\mathcal{L})\big[\mathcal{H}^{-1}\mathcal{H}_{b}\mathcal{H}^{-1}\big]_{fg}[\mathcal{R}^{1}]_{e}[\mathcal{R}_{s_{k}}^{1}]_{h}\\
	& -W^{-1}\sum_{e,f}(\partial_{\beta^{r}\phi\phi'\phi_{e}\phi_{f}}\mathcal{L})\big[\mathcal{H}^{-1}\big]_{fg}[\mathcal{R}_{b}^{1}]_{e}[\mathcal{R}_{s_{k}}^{1}]_{h}\\
	& -W^{-1}\sum_{e,f}(\partial_{\beta^{r}\phi\phi'\phi_{e}\phi_{f}}\mathcal{L})\big[\mathcal{H}^{-1}\big]_{fg}[\mathcal{R}^{1}]_{e}[\mathcal{R}_{bs_{k}}^{1}]_{h}\\
	& -W^{-2}W_{b}\sum_{f}(\partial_{\beta^{r+1}\phi\phi'\phi_{f}}\mathcal{L})\big[\mathcal{H}^{-1}\mathcal{H}_{s_{h}}\mathcal{H}^{-1}\big]_{fg}[\mathcal{R}^{1}]_{k}\\
	& -W^{-2}\sum_{f}(\partial_{\beta^{r+2}\phi\phi'\phi_{f}}\mathcal{L})\big[\mathcal{H}^{-1}\mathcal{H}_{s_{h}}\mathcal{H}^{-1}\big]_{fg}[\mathcal{R}^{1}]_{k}\\
	& -W^{-2}\sum_{e,f}(\partial_{\beta^{r+1}\phi\phi'\phi_{e}\phi_{f}}\mathcal{L})\big[\mathcal{H}^{-1}\mathcal{H}_{s_{h}}\mathcal{H}^{-1}\big]_{fg}[\mathcal{R}^{1}]_{e}[\mathcal{R}^{1}]_{k}\\
	& -W^{-1}\sum_{f}(\partial_{\beta^{r+1}\phi\phi'\phi_{f}}\mathcal{L})\big[\mathcal{H}^{-1}\mathcal{H}_{b}\mathcal{H}^{-1}\mathcal{H}_{s_{h}}\mathcal{H}^{-1}\big]_{fg}[\mathcal{R}^{1}]_{k}\\
	& +W^{-1}\sum_{f}(\partial_{\beta^{r+1}\phi\phi'\phi_{f}}\mathcal{L})\big[\mathcal{H}^{-1}\mathcal{H}_{bs_{h}}\mathcal{H}^{-1}\big]_{fg}[\mathcal{R}^{1}]_{k}\\
	& -W^{-1}\sum_{f}(\partial_{\beta^{r+1}\phi\phi'\phi_{f}}\mathcal{L})\big[\mathcal{H}^{-1}\mathcal{H}_{s_{h}}\mathcal{H}^{-1}\mathcal{H}_{b}\mathcal{H}^{-1}\big]_{fg}[\mathcal{R}^{1}]_{k}\\
	& +W^{-1}\sum_{f}(\partial_{\beta^{r+1}\phi\phi'\phi_{f}}\mathcal{L})\big[\mathcal{H}^{-1}\mathcal{H}_{s_{h}}\mathcal{H}^{-1}\big]_{fg}[\mathcal{R}_{b}^{1}]_{k}\\
	& -W^{-1}\sum_{e,f}(\partial_{\beta^{r+1}\phi\phi'\phi_{e}\phi_{f}}\mathcal{L})\big[\mathcal{H}^{-1}\mathcal{H}_{s_{h}}\mathcal{H}^{-1}\big]_{fg}\big[\mathcal{H}^{-1}\big]_{ek}\\
	& -W^{-1}\sum_{d,e,f}(\partial_{\beta^{r}\phi\phi'\phi_{d}\phi_{e}\phi_{f}}\mathcal{L})\big[\mathcal{H}^{-1}\mathcal{H}_{s_{h}}\mathcal{H}^{-1}\big]_{fg}\big[\mathcal{H}^{-1}\big]_{ek}[\mathcal{R}^{1}]_{d}\\
	& -\sum_{e,f}(\partial_{\beta^{r}\phi\phi'\phi_{e}\phi_{f}}\mathcal{L})\big[\mathcal{H}^{-1}\mathcal{H}_{b}\mathcal{H}^{-1}\mathcal{H}_{s_{h}}\mathcal{H}^{-1}\big]_{fg}\big[\mathcal{H}^{-1}\big]_{ek}\\
	& +\sum_{e,f}(\partial_{\beta^{r}\phi\phi'\phi_{e}\phi_{f}}\mathcal{L})\big[\mathcal{H}^{-1}\mathcal{H}_{bs_{h}}\mathcal{H}^{-1}\big]_{fg}\big[\mathcal{H}^{-1}\big]_{ek}\\
	& -\sum_{e,f}(\partial_{\beta^{r}\phi\phi'\phi_{e}\phi_{f}}\mathcal{L})\big[\mathcal{H}^{-1}\mathcal{H}_{s_{h}}\mathcal{H}^{-1}\mathcal{H}_{b}\mathcal{H}^{-1}\big]_{fg}\big[\mathcal{H}^{-1}\big]_{ek}\\
	& -\sum_{e,f}(\partial_{\beta^{r}\phi\phi'\phi_{e}\phi_{f}}\mathcal{L})\big[\mathcal{H}^{-1}\mathcal{H}_{s_{h}}\mathcal{H}^{-1}\big]_{fg}\big[\mathcal{H}^{-1}\mathcal{H}_{b}\mathcal{H}^{-1}\big]_{ek}\\
	& -W^{-2}W_{b}\sum_{e,f}(\partial_{\beta^{r}\phi\phi'\phi_{e}\phi_{f}}\mathcal{L})\big[\mathcal{H}^{-1}\mathcal{H}_{s_{h}}\mathcal{H}^{-1}\big]_{fg}[\mathcal{R}^{1}]_{e}[\mathcal{R}^{1}]_{k}\\
	& -W^{-2}\sum_{e,f}(\partial_{\beta^{r+1}\phi\phi'\phi_{e}\phi_{f}}\mathcal{L})\big[\mathcal{H}^{-1}\mathcal{H}_{s_{h}}\mathcal{H}^{-1}\big]_{fg}[\mathcal{R}^{1}]_{e}[\mathcal{R}^{1}]_{k}\\
	& -W^{-2}\sum_{d,e,f}(\partial_{\beta^{r}\phi\phi'\phi_{d}\phi_{e}\phi_{f}}\mathcal{L})\big[\mathcal{H}^{-1}\mathcal{H}_{s_{h}}\mathcal{H}^{-1}\big]_{fg}[\mathcal{R}^{1}]_{d}[\mathcal{R}^{1}]_{e}[\mathcal{R}^{1}]_{k}\\
	& -W^{-1}\sum_{e,f}(\partial_{\beta^{r}\phi\phi'\phi_{e}\phi_{f}}\mathcal{L})\big[\mathcal{H}^{-1}\mathcal{H}_{b}\mathcal{H}^{-1}\mathcal{H}_{s_{h}}\mathcal{H}^{-1}\big]_{fg}[\mathcal{R}^{1}]_{e}[\mathcal{R}^{1}]_{k}\\
	& +W^{-1}\sum_{e,f}(\partial_{\beta^{r}\phi\phi'\phi_{e}\phi_{f}}\mathcal{L})\big[\mathcal{H}^{-1}\mathcal{H}_{bs_{h}}\mathcal{H}^{-1}\big]_{fg}[\mathcal{R}^{1}]_{e}[\mathcal{R}^{1}]_{k}\\
	& -W^{-1}\sum_{e,f}(\partial_{\beta^{r}\phi\phi'\phi_{e}\phi_{f}}\mathcal{L})\big[\mathcal{H}^{-1}\mathcal{H}_{s_{h}}\mathcal{H}^{-1}\mathcal{H}_{b}\mathcal{H}^{-1}\big]_{fg}[\mathcal{R}^{1}]_{e}[\mathcal{R}^{1}]_{k}\\
	& +W^{-1}\sum_{e,f}(\partial_{\beta^{r}\phi\phi'\phi_{e}\phi_{f}}\mathcal{L})\big[\mathcal{H}^{-1}\mathcal{H}_{s_{h}}\mathcal{H}^{-1}\big]_{fg}[\mathcal{R}_{b}^{1}]_{e}[\mathcal{R}^{1}]_{k}\\
	& +W^{-1}\sum_{e,f}(\partial_{\beta^{r}\phi\phi'\phi_{e}\phi_{f}}\mathcal{L})\big[\mathcal{H}^{-1}\mathcal{H}_{s_{h}}\mathcal{H}^{-1}\big]_{fg}[\mathcal{R}^{1}]_{e}[\mathcal{R}_{b}^{1}]_{k}\\
	& -W^{-1}\sum_{f}(\partial_{\beta^{r+1}\phi\phi'\phi_{f}}\mathcal{L})\big[\mathcal{H}^{-1}\mathcal{H}_{s_{k}}\mathcal{H}^{-1}\mathcal{H}_{s_{h}}\mathcal{H}^{-1}\big]_{fg}\\
	& -W^{-1}\sum_{e,f}(\partial_{\beta^{r}\phi\phi'\phi_{e}\phi_{f}}\mathcal{L})\big[\mathcal{H}^{-1}\mathcal{H}_{s_{k}}\mathcal{H}^{-1}\mathcal{H}_{s_{h}}\mathcal{H}^{-1}\big]_{fg}[\mathcal{R}^{1}]_{e}\\
	& -\sum_{f}(\partial_{\beta^{r}\phi\phi'\phi_{f}}\mathcal{L})\big[\mathcal{H}^{-1}\mathcal{H}_{b}\mathcal{H}^{-1}\mathcal{H}_{s_{k}}\mathcal{H}^{-1}\mathcal{H}_{s_{h}}\mathcal{H}^{-1}\big]_{fg}\\
	& +\sum_{f}(\partial_{\beta^{r}\phi\phi'\phi_{f}}\mathcal{L})\big[\mathcal{H}^{-1}\mathcal{H}_{bs_{k}}\mathcal{H}^{-1}\mathcal{H}_{s_{h}}\mathcal{H}^{-1}\big]_{fg}\\
	& -\sum_{f}(\partial_{\beta^{r}\phi\phi'\phi_{f}}\mathcal{L})\big[\mathcal{H}^{-1}\mathcal{H}_{s_{k}}\mathcal{H}^{-1}\mathcal{H}_{b}\mathcal{H}^{-1}\mathcal{H}_{s_{h}}\mathcal{H}^{-1}\big]_{fg}\\
	& +\sum_{f}(\partial_{\beta^{r}\phi\phi'\phi_{f}}\mathcal{L})\big[\mathcal{H}^{-1}\mathcal{H}_{s_{k}}\mathcal{H}^{-1}\mathcal{H}_{bs_{h}}\mathcal{H}^{-1}\big]_{fg}\\
	& -\sum_{f}(\partial_{\beta^{r}\phi\phi'\phi_{f}}\mathcal{L})\big[\mathcal{H}^{-1}\mathcal{H}_{s_{k}}\mathcal{H}^{-1}\mathcal{H}_{s_{h}}\mathcal{H}^{-1}\mathcal{H}_{b}\mathcal{H}^{-1}\big]_{fg}\\
	& +W^{-1}\sum_{f}(\partial_{\beta^{r+1}\phi\phi'\phi_{f}}\mathcal{L})\big[\mathcal{H}^{-1}\mathcal{H}_{s_{k}s_{h}}\mathcal{H}^{-1}\big]_{fg}\\
	& +W^{-1}\sum_{e,f}(\partial_{\beta^{r}\phi\phi'\phi_{e}\phi_{f}}\mathcal{L})\big[\mathcal{H}^{-1}\mathcal{H}_{s_{k}s_{h}}\mathcal{H}^{-1}\big]_{fg}[\mathcal{R}^{1}]_{e}\\
	& +\sum_{f}(\partial_{\beta^{r}\phi\phi'\phi_{f}}\mathcal{L})\big[\mathcal{H}^{-1}\mathcal{H}_{b}\mathcal{H}^{-1}\mathcal{H}_{s_{k}s_{h}}\mathcal{H}^{-1}\big]_{fg}\\
	& -\sum_{f}(\partial_{\beta^{r}\phi\phi'\phi_{f}}\mathcal{L})\big[\mathcal{H}^{-1}\mathcal{H}_{bs_{k}s_{h}}\mathcal{H}^{-1}\big]_{fg}\\
	& +\sum_{f}(\partial_{\beta^{r}\phi\phi'\phi_{f}}\mathcal{L})\big[\mathcal{H}^{-1}\mathcal{H}_{s_{k}s_{h}}\mathcal{H}^{-1}\mathcal{H}_{b}\mathcal{H}^{-1}\big]_{fg}\\
	& -W^{-1}\sum_{f}(\partial_{\beta^{r+1}\phi\phi'\phi_{f}}\mathcal{L})\big[\mathcal{H}^{-1}\mathcal{H}_{s_{k}}\mathcal{H}^{-1}\mathcal{H}_{s_{h}}\mathcal{H}^{-1}\big]_{fg}\\
	& -W^{-1}\sum_{e,f}(\partial_{\beta^{r}\phi\phi'\phi_{e}\phi_{f}}\mathcal{L})\big[\mathcal{H}^{-1}\mathcal{H}_{s_{h}}\mathcal{H}^{-1}\mathcal{H}_{s_{k}}\mathcal{H}^{-1}\big]_{fg}[\mathcal{R}^{1}]_{e}\\
	& -\sum_{f}(\partial_{\beta^{r}\phi\phi'\phi_{f}}\mathcal{L})\big[\mathcal{H}^{-1}\mathcal{H}_{b}\mathcal{H}^{-1}\mathcal{H}_{s_{h}}\mathcal{H}^{-1}\mathcal{H}_{s_{k}}\mathcal{H}^{-1}\big]_{fg}\\
	& +\sum_{f}(\partial_{\beta^{r}\phi\phi'\phi_{f}}\mathcal{L})\big[\mathcal{H}^{-1}\mathcal{H}_{bs_{h}}\mathcal{H}^{-1}\mathcal{H}_{s_{k}}\mathcal{H}^{-1}\big]_{fg}\\
	& -\sum_{f}(\partial_{\beta^{r}\phi\phi'\phi_{f}}\mathcal{L})\big[\mathcal{H}^{-1}\mathcal{H}_{s_{h}}\mathcal{H}^{-1}\mathcal{H}_{b}\mathcal{H}^{-1}\mathcal{H}_{s_{k}}\mathcal{H}^{-1}\big]_{fg}\\
	& +\sum_{f}(\partial_{\beta^{r}\phi\phi'\phi_{f}}\mathcal{L})\big[\mathcal{H}^{-1}\mathcal{H}_{s_{h}}\mathcal{H}^{-1}\mathcal{H}_{bs_{k}}\mathcal{H}^{-1}\big]_{fg}\\
	& -\sum_{f}(\partial_{\beta^{r}\phi\phi'\phi_{f}}\mathcal{L})\big[\mathcal{H}^{-1}\mathcal{H}_{s_{h}}\mathcal{H}^{-1}\mathcal{H}_{s_{k}}\mathcal{H}^{-1}\mathcal{H}_{b}\mathcal{H}^{-1}\big]_{fg}
\end{align*}

\subsection{Expressions for $\mathcal{R}$ terms}

First derivatives:

\begin{align*}
	\mathcal{R}_{s_{h}}^{a} & =-W^{-1}[\mathcal{R}^{1}]_{h}\mathcal{R}^{a+1}\\
	& -[\mathcal{H}^{-1}]_{h,\cdot}(\partial_{\beta^{a}\phi\phi'}\mathcal{L})\mathcal{H}^{-1}\\
	& -W^{-1}[\mathcal{R}^{1}]_{h}\mathcal{R}^{1}(\partial_{\beta^{a}\phi\phi'}\mathcal{L})\mathcal{H}^{-1}\\
	& -\mathcal{R}^{a}\mathcal{H}_{s_{h}}\mathcal{H}^{-1}
\end{align*}

\begin{align*}
	\mathcal{R}_{b}^{a} & =-W^{-1}\mathcal{R}^{a+1}\\
	& -W^{-1}\mathcal{R}_{}^{1}{}'(\partial_{\beta^{a}\phi\phi'}\mathcal{L})\mathcal{H}^{-1}\\
	& -\mathcal{R}^{a+1}{}'\mathcal{H}_{b}\mathcal{H}^{-1}
\end{align*}

Second derivatives:

\begin{align*}
	\mathcal{R}_{bs_{h}}^{a} & =W^{-2}W_{b}[\mathcal{R}^{1}]_{h}\mathcal{R}^{a+1}-W^{-1}[\mathcal{R}_{b}^{1}]_{h}\mathcal{R}^{a+1}\\
	& +W^{-2}[\mathcal{R}^{1}]_{h}\mathcal{R}_{b}^{a+1}\\
	& +[\mathcal{H}^{-1}\mathcal{H}_{b}\mathcal{H}^{-1}]_{h,\cdot}(\partial_{\beta^{a}\phi\phi'}\mathcal{L})\mathcal{H}^{-1}\\
	& +W^{-1}[\mathcal{H}^{-1}]_{h,\cdot}(\partial_{\beta^{a+1}\phi\phi'}\mathcal{L})\mathcal{H}^{-1}+W^{-1}[\mathcal{H}^{-1}]_{h,\cdot}\mathcal{P}^{a+1,1}\mathcal{H}^{-1}\\
	& +[\mathcal{H}^{-1}]_{h,\cdot}(\partial_{\beta^{a}\phi\phi'}\mathcal{L})\mathcal{H}^{-1}\mathcal{H}_{b}\mathcal{H}^{-1}\\
	& +W^{-2}W_{b}[\mathcal{R}^{1}]_{h}\mathcal{R}^{1}(\partial_{\beta^{a}\phi\phi'}\mathcal{L})\mathcal{H}^{-1}-W^{-1}[\mathcal{R}_{b}^{1}]_{h}\mathcal{R}^{1}(\partial_{\beta^{a}\phi\phi'}\mathcal{L})\mathcal{H}^{-1}\\
	& -W^{-1}[\mathcal{R}^{1}]_{h}\mathcal{R}_{b}^{1}(\partial_{\beta^{a}\phi\phi'}\mathcal{L})\mathcal{H}^{-1}\\
	& +W^{-2}[\mathcal{R}^{1}]_{h}\mathcal{R}^{1}(\partial_{\beta^{a+1}\phi\phi'}\mathcal{L})\mathcal{H}^{-1}+W^{-2}[\mathcal{R}^{1}]_{h}\mathcal{R}^{1}\mathcal{P}^{a+1,1}\mathcal{H}^{-1}\\
	& +W^{-1}[\mathcal{R}^{1}]_{h}\mathcal{R}^{1}(\partial_{\beta^{a}\phi\phi'}\mathcal{L})\mathcal{H}^{-1}\mathcal{H}_{b}\mathcal{H}^{-1}\\
	& -\mathcal{R}_{b}^{a}\mathcal{H}_{s_{h}}\mathcal{H}^{-1}-\mathcal{R}^{a}\mathcal{H}_{bs_{h}}\mathcal{H}^{-1}\\
	& +\mathcal{R}^{a}\mathcal{H}_{s_{h}}\mathcal{H}^{-1}\mathcal{H}_{b}\mathcal{H}^{-1}
\end{align*}

\begin{align*}
	\mathcal{R}_{s_{g}s_{h}}^{a} & =W^{-2}W_{s_{g}}[\mathcal{R}^{1}]_{h}\mathcal{R}^{a+1}\\
	& -W^{-1}[\mathcal{R}_{s_{g}}^{1}]_{h}\mathcal{R}^{a+1}\\
	& -W^{-1}[\mathcal{R}^{1}]_{h}\mathcal{R}_{s_{g}}^{a+1}\\
	& +[\mathcal{H}^{-1}\mathcal{H}_{s_{g}}^{-1}\mathcal{H}^{-1}]_{h,\cdot}(\partial_{\beta^{a}\phi\phi'}\mathcal{L})\mathcal{H}^{-1}\\
	& +W^{-1}[\mathcal{H}^{-1}]_{h,\cdot}(\partial_{\beta^{a+1}\phi\phi'}\mathcal{L})\mathcal{H}^{-1}[\mathcal{R}^{1}]_{g}\\
	& +[\mathcal{H}^{-1}]_{h,\cdot}\sum_{e}(\partial_{\beta^{a}\phi\phi'\phi_{e}}\mathcal{L})\mathcal{H}^{-1}[\mathcal{H}^{-1}]_{eg}\\
	& +W^{-1}[\mathcal{H}^{-1}]_{h,\cdot}\sum_{e}(\partial_{\beta^{a}\phi\phi'\phi_{e}}\mathcal{L})\mathcal{H}^{-1}[\mathcal{R}^{1}]_{e}[\mathcal{R}^{1}]_{g}\\
	& +[\mathcal{H}^{-1}]_{h,\cdot}(\partial_{\beta^{a}\phi\phi'}\mathcal{L})\mathcal{H}^{-1}\mathcal{H}_{s_{g}}^{-1}\mathcal{H}^{-1}\\
	& +W^{-2}W_{s_{g}}[\mathcal{R}^{1}]_{h}\mathcal{R}^{1}(\partial_{\beta^{a}\phi\phi'}\mathcal{L})\mathcal{H}^{-1}\\
	& -W^{-1}[\mathcal{R}_{s_{g}}^{1}]_{h}\mathcal{R}^{1}(\partial_{\beta^{a}\phi\phi'}\mathcal{L})\mathcal{H}^{-1}\\
	& -W^{-1}[\mathcal{R}^{1}]_{h}\mathcal{R}_{s_{g}}^{1}(\partial_{\beta^{a}\phi\phi'}\mathcal{L})\mathcal{H}^{-1}\\
	& +W^{-2}[\mathcal{R}^{1}]_{h}\mathcal{R}^{1}(\partial_{\beta^{a+1}\phi\phi'}\mathcal{L})\mathcal{H}^{-1}[\mathcal{R}^{1}]_{g}\\
	& +W^{-1}[\mathcal{R}^{1}]_{h}\mathcal{R}^{1}\sum_{e}(\partial_{\beta^{a}\phi\phi'\phi_{e}}\mathcal{L})\mathcal{H}^{-1}[\mathcal{H}^{-1}]_{eg}\\
	& +W^{-2}[\mathcal{R}^{1}]_{h}\mathcal{R}^{1}\sum_{e}(\partial_{\beta^{a}\phi\phi'\phi_{e}}\mathcal{L})\mathcal{H}^{-1}[\mathcal{R}^{1}]_{e}[\mathcal{R}^{1}]_{g}\\
	& +W^{-1}[\mathcal{R}^{1}]_{h}\mathcal{R}^{1}(\partial_{\beta^{a}\phi\phi'}\mathcal{L})\mathcal{H}^{-1}\mathcal{H}_{s_{g}}^ {}\mathcal{H}^{-1}\\
	& -\mathcal{R}_{s_{g}}^{a}\mathcal{H}_{s_{h}}\mathcal{H}^{-1}\\
	& -\mathcal{R}^{a}\mathcal{H}_{s_{g}s_{h}}\mathcal{H}^{-1}\\
	& +\mathcal{R}^{a}\mathcal{H}_{s_{h}}\mathcal{H}^{-1}\mathcal{H}_{s_{g}}^{-1}\mathcal{H}^{-1}
\end{align*}

Third derivatives:

\begin{align*}
	\mathcal{R}_{bs_{g}s_{h}}^{a} & =-2W^{-3}W_{b}W_{s_{g}}[\mathcal{R}^{1}]_{h}\mathcal{R}^{a+1}+W^{-2}W_{bs_{g}}[\mathcal{R}^{1}]_{h}\mathcal{R}^{a+1}\\
	& +W^{-2}W_{s_{g}}[\mathcal{R}_{b}^{1}]_{h}\mathcal{R}^{a+1}+W^{-2}W_{s_{g}}[\mathcal{R}^{1}]_{h}\mathcal{R}_{b}^{a+1}\\
	& +W^{-2}W_{b}[\mathcal{R}_{s_{g}}^{1}]_{h}\mathcal{R}^{a+1}-W^{-1}[\mathcal{R}_{bs_{g}}^{1}]_{h}\mathcal{R}^{a+1}\\
	& -W^{-1}[\mathcal{R}_{s_{g}}^{1}]_{h}\mathcal{R}_{b}^{a+1}+W^{-2}W_{b}[\mathcal{R}^{1}]_{h}\mathcal{R}_{s_{g}}^{a+1}\\
	& -W^{-1}[\mathcal{R}_{b}^{1}]_{h}\mathcal{R}_{s_{g}}^{a+1}-W^{-1}[\mathcal{R}^{1}]_{h}\mathcal{R}_{bs_{g}}^{a+1}\\
	& -[\mathcal{H}^{-1}\mathcal{H}_{b}^{-1}\mathcal{H}^{-1}\mathcal{H}_{s_{g}}^{-1}\mathcal{H}^{-1}]_{h,\cdot}(\partial_{\beta^{a}\phi\phi'}\mathcal{L})\mathcal{H}^{-1}\\
	& +[\mathcal{H}^{-1}\mathcal{H}_{bs_{g}}^{-1}\mathcal{H}^{-1}]_{h,\cdot}(\partial_{\beta^{a}\phi\phi'}\mathcal{L})\mathcal{H}^{-1}\\
	& -[\mathcal{H}^{-1}\mathcal{H}_{s_{g}}^{-1}\mathcal{H}^{-1}\mathcal{H}_{b}^{-1}\mathcal{H}^{-1}]_{h,\cdot}(\partial_{\beta^{a}\phi\phi'}\mathcal{L})\mathcal{H}^{-1}\\
	& -W^{-1}[\mathcal{H}^{-1}\mathcal{H}_{s_{g}}^{-1}\mathcal{H}^{-1}]_{h,\cdot}(\partial_{\beta^{a+1}\phi\phi'}\mathcal{L})\mathcal{H}^{-1}\\
	& -W^{-1}[\mathcal{H}^{-1}\mathcal{H}_{s_{g}}^{-1}\mathcal{H}^{-1}]_{h,\cdot}\sum_{f}(\partial_{\beta^{a}\phi\phi'\phi_{f}}\mathcal{L})\mathcal{H}^{-1}[\mathcal{R}^{1}]_{f}\\
	& -[\mathcal{H}^{-1}\mathcal{H}_{s_{g}}^{-1}\mathcal{H}^{-1}]_{h,\cdot}(\partial_{\beta^{a}\phi\phi'}\mathcal{L})\mathcal{H}^{-1}\mathcal{H}_{b}^{-1}\mathcal{H}^{-1}\\
	& -W^{-2}W_{b}[\mathcal{H}^{-1}]_{h,\cdot}(\partial_{\beta^{a+1}\phi\phi'}\mathcal{L})\mathcal{H}^{-1}[\mathcal{R}^{1}]_{g}\\
	& -W^{-1}[\mathcal{H}^{-1}\mathcal{H}_{b}^{-1}\mathcal{H}^{-1}]_{h,\cdot}(\partial_{\beta^{a+1}\phi\phi'}\mathcal{L})\mathcal{H}^{-1}[\mathcal{R}^{1}]_{g}\\
	& -W^{-2}[\mathcal{H}^{-1}]_{h,\cdot}(\partial_{\beta^{a+2}\phi\phi'}\mathcal{L})\mathcal{H}^{-1}[\mathcal{R}^{1}]_{g}\\
	& -W^{-2}[\mathcal{H}^{-1}]_{h,\cdot}\sum_{f}(\partial_{\beta^{a+1}\phi\phi'\phi_{f}}\mathcal{L})\mathcal{H}^{-1}[\mathcal{R}^{1}]_{f}[\mathcal{R}^{1}]_{g}\\
	& -W^{-1}[\mathcal{H}^{-1}]_{h,\cdot}(\partial_{\beta^{a+1}\phi\phi'}\mathcal{L})\mathcal{H}^{-1}\mathcal{H}_{b}^{-1}\mathcal{H}^{-1}[\mathcal{R}^{1}]_{g}\\
	& +W^{-1}[\mathcal{H}^{-1}]_{h,\cdot}(\partial_{\beta^{a+1}\phi\phi'}\mathcal{L})\mathcal{H}^{-1}[\mathcal{R}_{b}^{1}]_{g}\\
	& -[\mathcal{H}^{-1}\mathcal{H}_{b}^{-1}\mathcal{H}^{-1}]_{h,\cdot}\sum_{e}(\partial_{\beta^{a}\phi\phi'\phi_{e}}\mathcal{L})\mathcal{H}^{-1}[\mathcal{H}^{-1}]_{eg}\\
	& -W^{-1}[\mathcal{H}^{-1}]_{h,\cdot}\sum_{e}(\partial_{\beta^{a+1}\phi\phi'\phi_{e}}\mathcal{L})\mathcal{H}^{-1}[\mathcal{H}^{-1}]_{eg}\\
	& -W^{-1}[\mathcal{H}^{-1}]_{h,\cdot}\sum_{e,f}(\partial_{\beta^{a}\phi\phi'\phi_{e}\phi_{f}}\mathcal{L})\mathcal{H}^{-1}[\mathcal{H}^{-1}]_{eg}[\mathcal{R}^{1}]_{f}\\
	& -[\mathcal{H}^{-1}]_{h,\cdot}\sum_{e}(\partial_{\beta^{a}\phi\phi'\phi_{e}}\mathcal{L})\mathcal{H}^{-1}\mathcal{H}_{b}^{-1}\mathcal{H}^{-1}[\mathcal{H}^{-1}]_{eg}\\
	& -[\mathcal{H}^{-1}]_{h,\cdot}\sum_{e}(\partial_{\beta^{a}\phi\phi'\phi_{e}}\mathcal{L})\mathcal{H}^{-1}[\mathcal{H}^{-1}\mathcal{H}_{b}^{-1}\mathcal{H}^{-1}]_{eg}\\
	& -W^{-2}W_{b}[\mathcal{H}^{-1}]_{h,\cdot}\sum_{e}(\partial_{\beta^{a}\phi\phi'\phi_{e}}\mathcal{L})\mathcal{H}^{-1}[\mathcal{R}^{1}]_{e}[\mathcal{R}^{1}]_{g}\\
	& -W^{-1}[\mathcal{H}^{-1}\mathcal{H}_{b}^{-1}\mathcal{H}^{-1}]_{h,\cdot}\sum_{e}(\partial_{\beta^{a}\phi\phi'\phi_{e}}\mathcal{L})\mathcal{H}^{-1}[\mathcal{R}^{1}]_{e}[\mathcal{R}^{1}]_{g}\\
	& -W^{-2}[\mathcal{H}^{-1}]_{h,\cdot}\sum_{e}(\partial_{\beta^{a+1}\phi\phi'\phi_{e}}\mathcal{L})\mathcal{H}^{-1}[\mathcal{R}^{1}]_{e}[\mathcal{R}^{1}]_{g}\\
	& -W^{-2}[\mathcal{H}^{-1}]_{h,\cdot}\sum_{e,f}(\partial_{\beta^{a}\phi\phi'\phi_{e}\phi_{f}}\mathcal{L})\mathcal{H}^{-1}[\mathcal{R}^{1}]_{e}[\mathcal{R}^{1}]_{f}[\mathcal{R}^{1}]_{g}\\
	& -W^{-1}[\mathcal{H}^{-1}]_{h,\cdot}\sum_{e}(\partial_{\beta^{a}\phi\phi'\phi_{e}}\mathcal{L})\mathcal{H}^{-1}\mathcal{H}_{b}^{-1}\mathcal{H}^{-1}[\mathcal{R}^{1}]_{e}[\mathcal{R}^{1}]_{g}\\
	& +W^{-1}[\mathcal{H}^{-1}]_{h,\cdot}\sum_{e}(\partial_{\beta^{a}\phi\phi'\phi_{e}}\mathcal{L})\mathcal{H}^{-1}[\mathcal{R}_{b}^{1}]_{e}[\mathcal{R}^{1}]_{g}\\
	& +W^{-1}[\mathcal{H}^{-1}]_{h,\cdot}\sum_{e}(\partial_{\beta^{a}\phi\phi'\phi_{e}}\mathcal{L})\mathcal{H}^{-1}[\mathcal{R}^{1}]_{e}[\mathcal{R}_{b}^{1}]_{g}\\
	& -[\mathcal{H}^{-1}\mathcal{H}_{b}^{-1}\mathcal{H}^{-1}]_{h,\cdot}(\partial_{\beta^{a}\phi\phi'}\mathcal{L})\mathcal{H}^{-1}\mathcal{H}_{s_{g}}^{-1}\mathcal{H}^{-1}\\
	& -W^{-1}[\mathcal{H}^{-1}]_{h,\cdot}(\partial_{\beta^{a+1}\phi\phi'}\mathcal{L})\mathcal{H}^{-1}\mathcal{H}_{s_{g}}^ {}\mathcal{H}^{-1}\\
	& -W^{-1}[\mathcal{H}^{-1}]_{h,\cdot}\sum_{f}(\partial_{\beta^{a}\phi\phi'\phi_{f}}\mathcal{L})\mathcal{H}^{-1}\mathcal{H}_{s_{g}}^ {}\mathcal{H}^{-1}[\mathcal{R}^{1}]_{f}\\
	& -[\mathcal{H}^{-1}]_{h,\cdot}(\partial_{\beta^{a}\phi\phi'}\mathcal{L})\mathcal{H}^{-1}\mathcal{H}_{b}^{-1}\mathcal{H}^{-1}\mathcal{H}_{s_{g}}^ {}\mathcal{H}^{-1}\\
	& +[\mathcal{H}^{-1}]_{h,\cdot}(\partial_{\beta^{a}\phi\phi'}\mathcal{L})\mathcal{H}^{-1}\mathcal{H}_{bs_{g}}^ {}\mathcal{H}^{-1}\\
	& -[\mathcal{H}^{-1}]_{h,\cdot}(\partial_{\beta^{a}\phi\phi'}\mathcal{L})\mathcal{H}^{-1}\mathcal{H}_{s_{g}}^ {}\mathcal{H}^{-1}\mathcal{H}_{b}^{-1}\mathcal{H}^{-1}\\
	& -2W^{-3}W_{b}W_{s_{g}}[\mathcal{R}^{1}]_{h}\mathcal{R}^{1}(\partial_{\beta^{a}\phi\phi'}\mathcal{L})\mathcal{H}^{-1}\\
	& +W^{-2}W_{bs_{g}}[\mathcal{R}^{1}]_{h}\mathcal{R}^{1}(\partial_{\beta^{a}\phi\phi'}\mathcal{L})\mathcal{H}^{-1}\\
	& +W^{-2}W_{s_{g}}[\mathcal{R}_{b}^{1}]_{h}\mathcal{R}^{1}(\partial_{\beta^{a}\phi\phi'}\mathcal{L})\mathcal{H}^{-1}\\
	& +W^{-2}W_{s_{g}}[\mathcal{R}^{1}]_{h}\mathcal{R}_{b}^{1}(\partial_{\beta^{a}\phi\phi'}\mathcal{L})\mathcal{H}^{-1}\\
	& -W^{-3}W_{s_{g}}[\mathcal{R}^{1}]_{h}\mathcal{R}^{1}(\partial_{\beta^{a+1}\phi\phi'}\mathcal{L})\mathcal{H}^{-1}\\
	& -W^{-3}W_{s_{g}}[\mathcal{R}^{1}]_{h}\mathcal{R}^{1}\sum_{f}(\partial_{\beta^{a}\phi\phi'\phi_{f}}\mathcal{L})\mathcal{H}^{-1}[\mathcal{R}^{1}]_{f}\\
	& -W^{-2}W_{s_{g}}[\mathcal{R}^{1}]_{h}\mathcal{R}^{1}(\partial_{\beta^{a}\phi\phi'}\mathcal{L})\mathcal{H}^{-1}\mathcal{H}_{b}^{-1}\mathcal{H}^{-1}\\
	& +W^{-2}W_{b}[\mathcal{R}_{s_{g}}^{1}]_{h}\mathcal{R}^{1}(\partial_{\beta^{a}\phi\phi'}\mathcal{L})\mathcal{H}^{-1}\\
	& -W^{-1}[\mathcal{R}_{bs_{g}}^{1}]_{h}\mathcal{R}^{1}(\partial_{\beta^{a}\phi\phi'}\mathcal{L})\mathcal{H}^{-1}\\
	& -W^{-1}[\mathcal{R}_{s_{g}}^{1}]_{h}\mathcal{R}_{b}^{1}(\partial_{\beta^{a}\phi\phi'}\mathcal{L})\mathcal{H}^{-1}\\
	& +W^{-2}[\mathcal{R}_{s_{g}}^{1}]_{h}\mathcal{R}^{1}(\partial_{\beta^{a+1}\phi\phi'}\mathcal{L})\mathcal{H}^{-1}\\
	& +W^{-2}[\mathcal{R}_{s_{g}}^{1}]_{h}\mathcal{R}^{1}\sum_{f}(\partial_{\beta^{a}\phi\phi'\phi_{f}}\mathcal{L})\mathcal{H}^{-1}[\mathcal{R}^{1}]_{f}\\
	& +W^{-1}[\mathcal{R}_{s_{g}}^{1}]_{h}\mathcal{R}^{1}(\partial_{\beta^{a}\phi\phi'}\mathcal{L})\mathcal{H}^{-1}\mathcal{H}_{b}^{-1}\mathcal{H}^{-1}\\
	& +W^{-2}W_{b}[\mathcal{R}^{1}]_{h}\mathcal{R}_{s_{g}}^{1}(\partial_{\beta^{a}\phi\phi'}\mathcal{L})\mathcal{H}^{-1}\\
	& -W^{-1}[\mathcal{R}_{b}^{1}]_{h}\mathcal{R}_{s_{g}}^{1}(\partial_{\beta^{a}\phi\phi'}\mathcal{L})\mathcal{H}^{-1}\\
	& -W^{-1}[\mathcal{R}^{1}]_{h}\mathcal{R}_{bs_{g}}^{1}(\partial_{\beta^{a}\phi\phi'}\mathcal{L})\mathcal{H}^{-1}\\
	& +W^{-2}[\mathcal{R}^{1}]_{h}\mathcal{R}_{s_{g}}^{1}(\partial_{\beta^{a+1}\phi\phi'}\mathcal{L})\mathcal{H}^{-1}\\
	& +W^{-2}[\mathcal{R}^{1}]_{h}\mathcal{R}_{s_{g}}^{1}\sum_{f}(\partial_{\beta^{a}\phi\phi'\phi_{f}}\mathcal{L})\mathcal{H}^{-1}[\mathcal{R}^{1}]_{f}\\
	& +W^{-1}[\mathcal{R}^{1}]_{h}\mathcal{R}_{s_{g}}^{1}(\partial_{\beta^{a}\phi\phi'}\mathcal{L})\mathcal{H}^{-1}\mathcal{H}_{b}^{-1}\mathcal{H}^{-1}\\
	& -2W^{-3}W_{b}[\mathcal{R}^{1}]_{h}\mathcal{R}^{1}(\partial_{\beta^{a+1}\phi\phi'}\mathcal{L})\mathcal{H}^{-1}[\mathcal{R}^{1}]_{g}\\
	& +W^{-2}[\mathcal{R}_{b}^{1}]_{h}\mathcal{R}^{1}(\partial_{\beta^{a+1}\phi\phi'}\mathcal{L})\mathcal{H}^{-1}[\mathcal{R}^{1}]_{g}\\
	& +W^{-2}[\mathcal{R}^{1}]_{h}\mathcal{R}_{b}^{1}(\partial_{\beta^{a+1}\phi\phi'}\mathcal{L})\mathcal{H}^{-1}[\mathcal{R}^{1}]_{g}\\
	& -W^{-3}[\mathcal{R}^{1}]_{h}\mathcal{R}^{1}(\partial_{\beta^{a+2}\phi\phi'}\mathcal{L})\mathcal{H}^{-1}[\mathcal{R}^{1}]_{g}\\
	& -W^{-3}[\mathcal{R}^{1}]_{h}\mathcal{R}^{1}\sum_{f}(\partial_{\beta^{a+1}\phi\phi'\phi_{f}}\mathcal{L})\mathcal{H}^{-1}[\mathcal{R}^{1}]_{f}[\mathcal{R}^{1}]_{g}\\
	& -W^{-2}[\mathcal{R}^{1}]_{h}\mathcal{R}^{1}(\partial_{\beta^{a+1}\phi\phi'}\mathcal{L})\mathcal{H}^{-1}\mathcal{H}_{b}^{-1}\mathcal{H}^{-1}[\mathcal{R}^{1}]_{g}\\
	& +W^{-2}[\mathcal{R}^{1}]_{h}\mathcal{R}^{1}(\partial_{\beta^{a+1}\phi\phi'}\mathcal{L})\mathcal{H}^{-1}[\mathcal{R}_{b}^{1}]_{g}\\
	& -W_{b}^{-2}[\mathcal{R}^{1}]_{h}\mathcal{R}^{1}\sum_{e}(\partial_{\beta^{a}\phi\phi'\phi_{e}}\mathcal{L})\mathcal{H}^{-1}[\mathcal{H}^{-1}]_{eg}\\
	& +W^{-1}[\mathcal{R}_{b}^{1}]_{h}\mathcal{R}^{1}\sum_{e}(\partial_{\beta^{a}\phi\phi'\phi_{e}}\mathcal{L})\mathcal{H}^{-1}[\mathcal{H}^{-1}]_{eg}\\
	& +W^{-1}[\mathcal{R}^{1}]_{h}\mathcal{R}_{b}^{1}\sum_{e}(\partial_{\beta^{a}\phi\phi'\phi_{e}}\mathcal{L})\mathcal{H}^{-1}[\mathcal{H}^{-1}]_{eg}\\
	& -W^{-2}[\mathcal{R}^{1}]_{h}\mathcal{R}^{1}\sum_{e}(\partial_{\beta^{a+1}\phi\phi'\phi_{e}}\mathcal{L})\mathcal{H}^{-1}[\mathcal{H}^{-1}]_{eg}\\
	& -W^{-2}[\mathcal{R}^{1}]_{h}\mathcal{R}^{1}\sum_{e,f}(\partial_{\beta^{a}\phi\phi'\phi_{e}\phi_{f}}\mathcal{L})\mathcal{H}^{-1}[\mathcal{H}^{-1}]_{eg}[\mathcal{R}^{1}]_{f}\\
	& -W^{-1}[\mathcal{R}^{1}]_{h}\mathcal{R}^{1}\sum_{e}(\partial_{\beta^{a}\phi\phi'\phi_{e}}\mathcal{L})\mathcal{H}^{-1}\mathcal{H}_{b}^{-1}\mathcal{H}^{-1}[\mathcal{H}^{-1}]_{eg}\\
	& -W^{-1}[\mathcal{R}^{1}]_{h}\mathcal{R}^{1}\sum_{e}(\partial_{\beta^{a}\phi\phi'\phi_{e}}\mathcal{L})\mathcal{H}^{-1}[\mathcal{H}^{-1}\mathcal{H}_{b}^{-1}\mathcal{H}^{-1}]_{eg}\\
	& -2W^{-3}W_{b}[\mathcal{R}^{1}]_{h}\mathcal{R}^{1}\sum_{e}(\partial_{\beta^{a}\phi\phi'\phi_{e}}\mathcal{L})\mathcal{H}^{-1}[\mathcal{R}^{1}]_{e}[\mathcal{R}^{1}]_{g}\\
	& +W^{-2}[\mathcal{R}_{b}^{1}]_{h}\mathcal{R}^{1}\sum_{e}(\partial_{\beta^{a}\phi\phi'\phi_{e}}\mathcal{L})\mathcal{H}^{-1}[\mathcal{R}^{1}]_{e}[\mathcal{R}^{1}]_{g}\\
	& +W^{-2}[\mathcal{R}^{1}]_{h}\mathcal{R}_{b}^{1}\sum_{e}(\partial_{\beta^{a}\phi\phi'\phi_{e}}\mathcal{L})\mathcal{H}^{-1}[\mathcal{R}^{1}]_{e}[\mathcal{R}^{1}]_{g}\\
	& -W^{-3}[\mathcal{R}^{1}]_{h}\mathcal{R}^{1}\sum_{e}(\partial_{\beta^{a+1}\phi\phi'\phi_{e}}\mathcal{L})\mathcal{H}^{-1}[\mathcal{R}^{1}]_{e}[\mathcal{R}^{1}]_{g}\\
	& -W^{-3}[\mathcal{R}^{1}]_{h}\mathcal{R}^{1}\sum_{e,f}(\partial_{\beta^{a}\phi\phi'\phi_{e}\phi_{f}}\mathcal{L})\mathcal{H}^{-1}[\mathcal{R}^{1}]_{e}[\mathcal{R}^{1}]_{f}[\mathcal{R}^{1}]_{g}\\
	& -W^{-2}[\mathcal{R}^{1}]_{h}\mathcal{R}^{1}\sum_{e}(\partial_{\beta^{a}\phi\phi'\phi_{e}}\mathcal{L})\mathcal{H}^{-1}\mathcal{H}_{b}^{-1}\mathcal{H}^{-1}[\mathcal{R}^{1}]_{e}[\mathcal{R}^{1}]_{g}\\
	& +W^{-2}[\mathcal{R}^{1}]_{h}\mathcal{R}^{1}\sum_{e}(\partial_{\beta^{a}\phi\phi'\phi_{e}}\mathcal{L})\mathcal{H}^{-1}[\mathcal{R}_{b}^{1}]_{e}[\mathcal{R}^{1}]_{g}\\
	& +W^{-2}[\mathcal{R}^{1}]_{h}\mathcal{R}^{1}\sum_{e}(\partial_{\beta^{a}\phi\phi'\phi_{e}}\mathcal{L})\mathcal{H}^{-1}[\mathcal{R}^{1}]_{e}[\mathcal{R}_{b}^{1}]_{g}\\
	& -W^{-2}W_{b}[\mathcal{R}^{1}]_{h}\mathcal{R}^{1}(\partial_{\beta^{a}\phi\phi'}\mathcal{L})\mathcal{H}^{-1}\mathcal{H}_{s_{g}}^{-1}\mathcal{H}^{-1}\\
	& +W^{-1}[\mathcal{R}_{b}^{1}]_{h}\mathcal{R}^{1}(\partial_{\beta^{a}\phi\phi'}\mathcal{L})\mathcal{H}^{-1}\mathcal{H}_{s_{g}}^{-1}\mathcal{H}^{-1}\\
	& +W^{-1}[\mathcal{R}^{1}]_{h}\mathcal{R}_{b}^{1}(\partial_{\beta^{a}\phi\phi'}\mathcal{L})\mathcal{H}^{-1}\mathcal{H}_{s_{g}}^{-1}\mathcal{H}^{-1}\\
	& -W^{-2}[\mathcal{R}^{1}]_{h}\mathcal{R}^{1}(\partial_{\beta^{a+1}\phi\phi'}\mathcal{L})\mathcal{H}^{-1}\mathcal{H}_{s_{g}}^{-1}\mathcal{H}^{-1}\\
	& -W^{-2}[\mathcal{R}^{1}]_{h}\mathcal{R}^{1}\sum_{f}(\partial_{\beta^{a}\phi\phi'\phi_{f}}\mathcal{L})\mathcal{H}^{-1}\mathcal{H}_{s_{g}}^{-1}\mathcal{H}^{-1}[\mathcal{R}^{1}]_{f}\\
	& -W^{-1}[\mathcal{R}^{1}]_{h}\mathcal{R}^{1}(\partial_{\beta^{a}\phi\phi'}\mathcal{L})\mathcal{H}^{-1}\mathcal{H}_{b}^{-1}\mathcal{H}^{-1}\mathcal{H}_{s_{g}}^{-1}\mathcal{H}^{-1}\\
	& +W^{-1}[\mathcal{R}^{1}]_{h}\mathcal{R}^{1}(\partial_{\beta^{a}\phi\phi'}\mathcal{L})\mathcal{H}^{-1}\mathcal{H}_{bs_{g}}^{-1}\mathcal{H}^{-1}\\
	& -W^{-1}[\mathcal{R}^{1}]_{h}\mathcal{R}^{1}(\partial_{\beta^{a}\phi\phi'}\mathcal{L})\mathcal{H}^{-1}\mathcal{H}_{s_{g}}^{-1}\mathcal{H}^{-1}\mathcal{H}_{b}^{-1}\mathcal{H}^{-1}\\
	& -\mathcal{R}_{bs_{g}}^{a}\mathcal{H}_{s_{h}}\mathcal{H}^{-1}-\mathcal{R}_{s_{g}}^{a}\mathcal{H}_{bs_{h}}\mathcal{H}^{-1}+\mathcal{R}_{s_{g}}^{a}\mathcal{H}_{s_{h}}\mathcal{H}^{-1}\mathcal{H}_{b}^{-1}\mathcal{H}^{-1}\\
	& -\mathcal{R}_{b}^{a}\mathcal{H}_{s_{g}s_{h}}\mathcal{H}^{-1}-\mathcal{R}^{a}\mathcal{H}_{bs_{g}s_{h}}\mathcal{H}^{-1}+\mathcal{R}^{a}\mathcal{H}_{s_{g}s_{h}}\mathcal{H}^{-1}\mathcal{H}_{b}^{-1}\mathcal{H}^{-1}\\
	& +\mathcal{R}_{b}^{a}\mathcal{H}_{s_{h}}\mathcal{H}^{-1}\mathcal{H}_{s_{g}}^{-1}\mathcal{H}^{-1}+\mathcal{R}^{a}\mathcal{H}_{bs_{h}}\mathcal{H}^{-1}\mathcal{H}_{s_{g}}^{-1}\mathcal{H}^{-1}\\
	& -\mathcal{R}^{a}\mathcal{H}_{s_{h}}\mathcal{H}^{-1}\mathcal{H}_{b}^{-1}\mathcal{H}^{-1}\mathcal{H}_{s_{g}}^{-1}\mathcal{H}^{-1}+\mathcal{R}^{a}\mathcal{H}_{s_{h}}\mathcal{H}^{-1}\mathcal{H}_{bs_{g}}^{-1}\mathcal{H}^{-1}\\
	& -\mathcal{R}^{a}\mathcal{H}_{s_{h}}\mathcal{H}^{-1}\mathcal{H}_{s_{g}}^{-1}\mathcal{H}^{-1}\mathcal{H}_{b}^{-1}\mathcal{H}^{-1}
\end{align*}

\subsection{Expressions for $\mathcal{F}$ terms}

Define $\mathcal{E}^{s}=\partial_{\beta^{s}\phi\phi'}\mathcal{L}$,
$\mathcal{F}^{s,t}=(\partial_{\beta^{s}\phi'}\mathcal{L})\mathcal{H}^{-1}(\partial_{\beta^{t}\phi}\mathcal{L})$,
and
\begin{align*}
\mathcal{F}^{s,(r),t} & =(\partial_{\beta^{s}\phi'}\mathcal{L})\mathcal{H}^{-1}\mathcal{E}^{r}\mathcal{H}^{-1}(\partial_{\beta^{t}\phi}\mathcal{L})\\
\mathcal{F}^{s,(r_{1},r_{2}),t} & =(\partial_{\beta^{s}\phi'}\mathcal{L})\mathcal{H}^{-1}\mathcal{E}^{r_{1}}\mathcal{H}^{-1}\mathcal{E}^{r_{2}}\mathcal{H}^{-1}(\partial_{\beta^{t}\phi}\mathcal{L})
\end{align*}
 Then, differentiating gives

(1)
\begin{align*}
\mathcal{F}_{b}^{s,t} & =-W^{-1}\big((\partial_{\beta^{s+1}\phi'}\mathcal{L})+(\partial_{\beta\phi'}\mathcal{L})\mathcal{H}^{-1}\mathcal{E}^{s}\big)\mathcal{H}^{-1}(\partial_{\beta^{t}\phi}\mathcal{L})\\
 & \quad-W^{-1}(\partial_{\beta^{s}\phi'}\mathcal{L})\mathcal{H}^{-1}\big((\partial_{\beta^{t+1}\phi}\mathcal{L})+\mathcal{E}^{t}\mathcal{H}^{-1}(\partial_{\beta\phi}\mathcal{L})\big)\\
 & \quad-(\partial_{\beta^{s}\phi'}\mathcal{L})\mathcal{H}^{-1}\mathcal{H}_{b}\mathcal{H}^{-1}(\partial_{\beta^{t}\phi}\mathcal{L})\\
 & =-W^{-1}\Big(\mathcal{F}^{s+1,t}+\mathcal{F}^{1,(s),t}+\mathcal{F}^{s,t+1}+\mathcal{F}^{s,(t),1}\Big)\\
 & \quad-(\partial_{\beta^{s}\phi'}\mathcal{L})\mathcal{H}^{-1}\mathcal{H}_{b}\mathcal{H}^{-1}(\partial_{\beta^{t}\phi}\mathcal{L})
\end{align*}

(2)
\begin{align*}
\mathcal{F}_{s}^{s,t} & =-W^{-1}\big(\mathcal{H}^{-1}(\partial_{\beta\phi}\mathcal{L})(\partial_{\beta^{s+1}\phi'}\mathcal{L})+\mathcal{G}(\partial_{\beta^{s}\phi\phi'}\mathcal{L})\big)\mathcal{H}^{-1}(\partial_{\beta^{t}\phi}\mathcal{L})\\
 & \quad-\mathcal{H}^{-1}(\partial_{\beta^{s}\phi\phi'}\mathcal{L})\mathcal{H}^{-1}(\partial_{\beta^{t}\phi}\mathcal{L})\\
 & \quad-W^{-1}(\partial_{\beta^{s}\phi'}\mathcal{L})\mathcal{H}^{-1}\big((\partial_{\beta^{t+1}\phi}\mathcal{L})(\partial_{\beta\phi'}\mathcal{L})\mathcal{H}^{-1}+(\partial_{\beta^{t}\phi\phi'}\mathcal{L})\mathcal{G}\big)\\
 & \quad-(\partial_{\beta^{s}\phi'}\mathcal{L})\mathcal{H}^{-1}(\partial_{\beta^{t}\phi\phi'}\mathcal{L})\mathcal{H}^{-1}\\
 & \quad-(\partial_{\beta^{s}\phi'}\mathcal{L})\mathcal{H}^{-1}\mathcal{H}_{s}\mathcal{H}^{-1}(\partial_{\beta^{t}\phi}\mathcal{L})\\
 & =-W^{-1}\big(\mathcal{G}^{1,s+1}+\mathcal{G}\mathcal{E}^{s}\mathcal{H}^{-1}\big)(\partial_{\beta^{t}\phi}\mathcal{L})-\mathcal{H}^{-1}\mathcal{E}^{s}\mathcal{H}^{-1}(\partial_{\beta^{t}\phi}\mathcal{L})\\
 & \quad-W^{-1}(\partial_{\beta^{s}\phi}\mathcal{L})\big(\mathcal{G}^{t+1,1}+\mathcal{H}^{-1}\mathcal{E}^{t}\mathcal{G}\big)-(\partial_{\beta^{s}\phi'}\mathcal{L})\mathcal{H}^{-1}\mathcal{E}^{t}\mathcal{H}^{-1}\\
 & \quad-(\partial_{\beta^{s}\phi'}\mathcal{L})\mathcal{H}^{-1}\mathcal{H}_{s}\mathcal{H}^{-1}(\partial_{\beta^{t}\phi}\mathcal{L})
\end{align*}

Differentiating a second time gives

(1)
\begin{align*}
\mathcal{F}_{bb}^{s,t} & =W^{-1}W_{b}\mathcal{F}_{b}^{s,t}\\
 & -W^{-1}\Big(\mathcal{F}_{b}^{s+1,t}+\mathcal{F}_{b}^{1,(s),t}+\mathcal{F}_{b}^{s,t+1}+\mathcal{F}_{b}^{s,(t),1}\Big)\\
 & +W^{-1}(\partial_{\beta^{s+1}\phi'}\mathcal{L})\mathcal{H}^{-1}\mathcal{H}_{b}\mathcal{H}^{-1}(\partial_{\beta^{t}\phi}\mathcal{L})\\
 & +W^{-1}(\partial_{\beta\phi'}\mathcal{L})\mathcal{H}^{-1}\mathcal{E}^{s}\mathcal{H}^{-1}\mathcal{H}_{b}\mathcal{H}^{-1}(\partial_{\beta^{t}\phi}\mathcal{L})\\
 & -2(\partial_{\beta^{s}\phi'}\mathcal{L})\mathcal{H}^{-1}\mathcal{H}_{b}\mathcal{H}^{-1}\mathcal{H}_{b}\mathcal{H}^{-1}(\partial_{\beta^{t}\phi}\mathcal{L})\\
 & -(\partial_{\beta^{s}\phi'}\mathcal{L})\mathcal{H}^{-1}\mathcal{H}_{bb}\mathcal{H}^{-1}(\partial_{\beta^{t}\phi}\mathcal{L})\\
 & +W^{-1}(\partial_{\beta^{s}\phi'}\mathcal{L})\mathcal{H}^{-1}\mathcal{H}_{b}\mathcal{H}^{-1}(\partial_{\beta^{t+1}\phi}\mathcal{L})\\
 & +W^{-1}(\partial_{\beta^{s}\phi'}\mathcal{L})\mathcal{H}^{-1}\mathcal{H}_{b}\mathcal{H}^{-1}\mathcal{E}^{t}\mathcal{H}^{-1}(\partial_{\beta\phi}\mathcal{L})
\end{align*}

(2)
\begin{align*}
\mathcal{F}_{bs}^{s,t} & =-W^{-1}\big(\mathcal{G}^{1,s+1}+\mathcal{G}\mathcal{E}^{s}\mathcal{H}^{-1}\big)(\partial_{\beta^{t}\phi}\mathcal{L})\\
 & -\mathcal{H}^{-1}\mathcal{E}^{s}\mathcal{H}^{-1}(\partial_{\beta^{t}\phi}\mathcal{L})\\
 & \quad-W^{-1}(\partial_{\beta^{s}\phi}\mathcal{L})\big(\mathcal{G}^{t+1,1}+\mathcal{H}^{-1}\mathcal{E}^{t}\mathcal{G}\big)\\
 & -(\partial_{\beta^{s}\phi'}\mathcal{L})\mathcal{H}^{-1}\mathcal{E}^{t}\mathcal{H}^{-1}\\
 & \quad-(\partial_{\beta^{s}\phi'}\mathcal{L})\mathcal{H}^{-1}\mathcal{H}_{s}\mathcal{H}^{-1}(\partial_{\beta^{t}\phi}\mathcal{L})\\
\end{align*}

(3)

\begin{align*}\mathcal{F}_{s_{f}s_{g}} & =W^{-2}W_{s_{f}}(\mathcal{F}^{2,1}+\mathcal{F}^{1,2}+2\mathcal{F}^{1,(1),1})[\mathcal{R}^{1}]_{g}\\
	& -W^{-1}(\mathcal{F}_{s_{f}}^{2,1}+\mathcal{F}_{s_{f}}^{1,2}+2\mathcal{F}_{s_{f}}^{1,(1),1})[\mathcal{R}^{1}]_{g}\\
	& +W^{-1}(\mathcal{F}^{2,1}+\mathcal{F}^{1,2}+2\mathcal{F}^{1,(1),1})\big[\mathcal{H}^{-1}\mathcal{H}_{s_{f}}\mathcal{R}^{1}\big]_{g}\\
	& +W^{-2}(\mathcal{F}^{2,1}+\mathcal{F}^{1,2}+2\mathcal{F}^{1,(1),1})[\mathcal{R}^{2}]_{g}[\mathcal{R}^{1}]_{f}\\
	& +W^{-2}(\mathcal{F}^{2,1}+\mathcal{F}^{1,2}+2\mathcal{F}^{1,(1),1})\big[\mathcal{H}^{-1}\mathcal{E}\mathcal{R}^{1}\big]_{g}[\mathcal{R}^{1}]_{f}\\
	& +W^{-1}(\mathcal{F}^{2,1}+\mathcal{F}^{1,2}+2\mathcal{F}^{1,(1),1})\big[\mathcal{H}^{-1}\mathcal{E}\mathcal{H}^{-1}\big]_{fg}\\
	& +2[\mathcal{H}^{-1}\mathcal{H}_{s_{f}}\mathcal{H}^{-1}]_{\cdot,g}\mathcal{E}\mathcal{R}^{1}-2[\mathcal{H}^{-1}]_{\cdot,g}\mathcal{E}_{s_{f}}\mathcal{R}^{1}\\
	& +2[\mathcal{H}^{-1}]_{\cdot,g}\mathcal{E}\mathcal{H}^{-1}\mathcal{H}_{s_{f}}\mathcal{R}^{1}+2W^{-1}[\mathcal{H}^{-1}]_{\cdot,g}\mathcal{E}\mathcal{R}^{2}[\mathcal{R}^{1}]_{f}\\
	& +2W^{-1}[\mathcal{H}^{-1}]_{\cdot,g}\mathcal{E}\mathcal{H}^{-1}\mathcal{E}\mathcal{R}^{1}[\mathcal{R}^{1}]_{f}+2[\mathcal{H}^{-1}\mathcal{E}\mathcal{H}^{-1}\mathcal{E}\mathcal{H}^{-1}]_{fg}\\
	& +2W^{-1}(\mathcal{R}^{2})'\mathcal{H}_{s_{g}}\mathcal{R}^{1}[\mathcal{R}^{1}]_{f}+2W^{-1}(\mathcal{R}^{1})'\mathcal{E}\mathcal{H}^{-1}\mathcal{H}_{s_{g}}\mathcal{R}^{1}[\mathcal{R}^{1}]_{f}\\
	& +2[\mathcal{H}^{-1}]_{\cdot,g}\mathcal{E}\mathcal{H}^{-1}\mathcal{H}_{s_{g}}\mathcal{R}^{1}+2(\mathcal{R}^{1})'\mathcal{H}_{s_{f}}\mathcal{H}^{-1}\mathcal{H}_{s_{g}}\mathcal{R}^{1}\\
	& -(\mathcal{R}^{1})'\mathcal{H}_{s_{f}s_{g}}\mathcal{R}^{1}
\end{align*}

Third derivative:
\begin{align*}\mathcal{F}_{bs_{f}s_{g}} & =-2W^{-3}W_{b}W_{s_{f}}(\mathcal{F}^{2,1}+\mathcal{F}^{1,2}+2\mathcal{F}^{1,(1),1})[\mathcal{R}^{1}]_{g}\\
	& +W^{-2}W_{bs_{f}}(\mathcal{F}^{2,1}+\mathcal{F}^{1,2}+2\mathcal{F}^{1,(1),1})[\mathcal{R}^{1}]_{g}\\
	& +W^{-2}W_{s_{f}}(\mathcal{F}_{b}^{2,1}+\mathcal{F}_{b}^{1,2}+2\mathcal{F}_{b}^{1,(1),1})[\mathcal{R}^{1}]_{g}\\
	& +W^{-2}W_{s_{f}}(\mathcal{F}^{2,1}+\mathcal{F}^{1,2}+2\mathcal{F}^{1,(1),1})[\mathcal{R}_{b}^{1}]_{g}\\
	& +W^{-2}W_{b}(\mathcal{F}_{s_{f}}^{2,1}+\mathcal{F}_{s_{f}}^{1,2}+2\mathcal{F}_{s_{f}}^{1,(1),1})[\mathcal{R}^{1}]_{g}\\
	& -W^{-1}(\mathcal{F}_{bs_{f}}^{2,1}+\mathcal{F}_{bs_{f}}^{1,2}+2\mathcal{F}_{bs_{f}}^{1,(1),1})[\mathcal{R}^{1}]_{g}\\
	& -W^{-1}(\mathcal{F}_{s_{f}}^{2,1}+\mathcal{F}_{s_{f}}^{1,2}+2\mathcal{F}_{s_{f}}^{1,(1),1})[\mathcal{R}_{b}^{1}]_{g}\\
	& -W^{-2}W_{b}(\mathcal{F}^{2,1}+\mathcal{F}^{1,2}+2\mathcal{F}^{1,(1),1})\big[\mathcal{H}^{-1}\mathcal{H}_{s_{f}}\mathcal{R}^{1}\big]_{g}\\
	& +W^{-1}(\mathcal{F}_{b}^{2,1}+\mathcal{F}_{b}^{1,2}+2\mathcal{F}_{b}^{1,(1),1})\big[\mathcal{H}^{-1}\mathcal{H}_{s_{f}}\mathcal{R}^{1}\big]_{g}\\
	& -W^{-1}(\mathcal{F}^{2,1}+\mathcal{F}^{1,2}+2\mathcal{F}^{1,(1),1})\big[\mathcal{H}^{-1}\mathcal{H}_{b}\mathcal{H}^{-1}\mathcal{H}_{s_{f}}\mathcal{R}^{1}\big]_{g}\\
	& +W^{-1}(\mathcal{F}^{2,1}+\mathcal{F}^{1,2}+2\mathcal{F}^{1,(1),1})\big[\mathcal{H}^{-1}\mathcal{H}_{bs_{f}}\mathcal{R}^{1}\big]_{g}\\
	& +W^{-1}(\mathcal{F}^{2,1}+\mathcal{F}^{1,2}+2\mathcal{F}^{1,(1),1})\big[\mathcal{H}^{-1}\mathcal{H}_{s_{f}}\mathcal{R}_{b}^{1}\big]_{g}\\
	& -2W^{-3}W_{b}(\mathcal{F}^{2,1}+\mathcal{F}^{1,2}+2\mathcal{F}^{1,(1),1})[\mathcal{R}^{2}]_{g}[\mathcal{R}^{1}]_{f}\\
	& +W^{-2}(\mathcal{F}_{b}^{2,1}+\mathcal{F}_{b}^{1,2}+2\mathcal{F}_{b}^{1,(1),1})[\mathcal{R}^{2}]_{g}[\mathcal{R}^{1}]_{f}\\
	& +W^{-2}(\mathcal{F}^{2,1}+\mathcal{F}^{1,2}+2\mathcal{F}^{1,(1),1})[\mathcal{R}_{b}^{2}]_{g}[\mathcal{R}^{1}]_{f}\\
	& +W^{-2}(\mathcal{F}^{2,1}+\mathcal{F}^{1,2}+2\mathcal{F}^{1,(1),1})[\mathcal{R}^{2}]_{g}[\mathcal{R}_{b}^{1}]_{f}\\
	& -2W^{-3}W_{b}(\mathcal{F}^{2,1}+\mathcal{F}^{1,2}+2\mathcal{F}^{1,(1),1})\big[\mathcal{H}^{-1}\mathcal{E}\mathcal{R}^{1}\big]_{g}[\mathcal{R}^{1}]_{f}\\
	& +W^{-2}(\mathcal{F}_{b}^{2,1}+\mathcal{F}_{b}^{1,2}+2\mathcal{F}_{b}^{1,(1),1})\big[\mathcal{H}^{-1}\mathcal{E}\mathcal{R}^{1}\big]_{g}[\mathcal{R}^{1}]_{f}\\
	& -W^{-2}(\mathcal{F}^{2,1}+\mathcal{F}^{1,2}+2\mathcal{F}^{1,(1),1})\big[\mathcal{H}^{-1}\mathcal{H}_{b}\mathcal{H}^{-1}\mathcal{E}\mathcal{R}^{1}\big]_{g}[\mathcal{R}^{1}]_{f}\\
	& +W^{-2}(\mathcal{F}^{2,1}+\mathcal{F}^{1,2}+2\mathcal{F}^{1,(1),1})\big[\mathcal{H}^{-1}\mathcal{E}_{b}\mathcal{R}^{1}\big]_{g}[\mathcal{R}^{1}]_{f}\\
	& +W^{-2}(\mathcal{F}^{2,1}+\mathcal{F}^{1,2}+2\mathcal{F}^{1,(1),1})\big[\mathcal{H}^{-1}\mathcal{E}\mathcal{R}_{b}^{1}\big]_{g}[\mathcal{R}^{1}]_{f}\\
	& +W^{-2}(\mathcal{F}^{2,1}+\mathcal{F}^{1,2}+2\mathcal{F}^{1,(1),1})\big[\mathcal{H}^{-1}\mathcal{E}\mathcal{R}^{1}\big]_{g}[\mathcal{R}_{b}^{1}]_{f}\\
	& -W^{-2}W_{b}(\mathcal{F}^{2,1}+\mathcal{F}^{1,2}+2\mathcal{F}^{1,(1),1})\big[\mathcal{H}^{-1}\mathcal{E}\mathcal{H}^{-1}\big]_{fg}\\
	& +W^{-1}(\mathcal{F}_{b}^{2,1}+\mathcal{F}_{b}^{1,2}+2\mathcal{F}_{b}^{1,(1),1})\big[\mathcal{H}^{-1}\mathcal{E}\mathcal{H}^{-1}\big]_{fg}\\
	& -W^{-1}(\mathcal{F}^{2,1}+\mathcal{F}^{1,2}+2\mathcal{F}_{}^{1,(1),1})\big[\mathcal{H}^{-1}\mathcal{H}_{b}\mathcal{H}^{-1}\mathcal{E}\mathcal{H}^{-1}\big]_{fg}\\
	& +W^{-1}(\mathcal{F}^{2,1}+\mathcal{F}^{1,2}+2\mathcal{F}^{1,(1),1})\big[\mathcal{H}^{-1}\mathcal{E}_{b}\mathcal{H}^{-1}\big]_{fg}\\
	& -W^{-1}(\mathcal{F}^{2,1}+\mathcal{F}^{1,2}+2\mathcal{F}^{1,(1),1})\big[\mathcal{H}^{-1}\mathcal{E}\mathcal{H}^{-1}\mathcal{H}_{b}\mathcal{H}^{-1}\big]_{fg}\\
	& -2[\mathcal{H}^{-1}\mathcal{H}_{b}\mathcal{H}^{-1}\mathcal{H}_{s_{f}}\mathcal{H}^{-1}]_{\cdot,g}\mathcal{E}\mathcal{R}^{1}\\
	& +2[\mathcal{H}^{-1}\mathcal{H}_{bs_{f}}\mathcal{H}^{-1}]_{\cdot,g}\mathcal{E}\mathcal{R}^{1}\\
	& -2[\mathcal{H}^{-1}\mathcal{H}_{s_{f}}\mathcal{H}^{-1}\mathcal{H}_{b}\mathcal{H}^{-1}]_{\cdot,g}\mathcal{E}\mathcal{R}^{1}\\
	& +2[\mathcal{H}^{-1}\mathcal{H}_{s_{f}}\mathcal{H}^{-1}]_{\cdot,g}(\mathcal{E}_{b}\mathcal{R}^{1}+\mathcal{E}\mathcal{R}_{b}^{1})\\
	& +2[\mathcal{H}^{-1}\mathcal{H}_{b}\mathcal{H}^{-1}]_{\cdot,g}\mathcal{E}_{s_{f}}\mathcal{R}^{1}\\
	& -2[\mathcal{H}^{-1}]_{\cdot,g}(\mathcal{E}_{bs_{f}}\mathcal{R}^{1}+\mathcal{E}_{s_{f}}\mathcal{R}_{b}^{1})\\
	& -2[\mathcal{H}^{-1}\mathcal{H}_{b}\mathcal{H}^{-1}]_{\cdot,g}\mathcal{E}\mathcal{H}^{-1}\mathcal{H}_{s_{f}}\mathcal{R}^{1}\\
	& +2[\mathcal{H}^{-1}]_{\cdot,g}\mathcal{E}_{b}\mathcal{H}^{-1}\mathcal{H}_{s_{f}}\mathcal{R}^{1}\\
	& -2[\mathcal{H}^{-1}]_{\cdot,g}\mathcal{E}\mathcal{H}^{-1}\mathcal{H}_{b}\mathcal{H}^{-1}\mathcal{H}_{s_{f}}\mathcal{R}^{1}\\
	& +2[\mathcal{H}^{-1}]_{\cdot,g}\mathcal{E}\mathcal{H}^{-1}(\mathcal{H}_{bs_{f}}\mathcal{R}^{1}+\mathcal{H}_{s_{f}}\mathcal{R}_{b}^{1})\\
	& -2W^{-2}W_{b}[\mathcal{H}^{-1}]_{\cdot,g}\mathcal{E}\mathcal{R}^{2}[\mathcal{R}^{1}]_{f}\\
	& -2W^{-1}[\mathcal{H}^{-1}\mathcal{H}_{b}\mathcal{H}^{-1}]_{\cdot,g}\mathcal{E}\mathcal{R}^{2}[\mathcal{R}^{1}]_{f}\\
	& +2W^{-1}[\mathcal{H}^{-1}]_{\cdot,g}(\mathcal{E}_{b}\mathcal{R}^{2}[\mathcal{R}^{1}]_{f}+\mathcal{E}\mathcal{R}_{b}^{2}[\mathcal{R}^{1}]_{f}+\mathcal{E}\mathcal{R}^{2}[\mathcal{R}_{b}^{1}]_{f})\\
	& -2W^{-2}W_{b}[\mathcal{H}^{-1}]_{\cdot,g}\mathcal{E}\mathcal{H}^{-1}\mathcal{E}\mathcal{R}^{1}[\mathcal{R}^{1}]_{f}\\
	& +2W^{-1}[\mathcal{H}^{-1}\mathcal{H}_{b}\mathcal{H}^{-1}]_{\cdot,g}\mathcal{E}\mathcal{H}^{-1}\mathcal{E}\mathcal{R}^{1}[\mathcal{R}^{1}]_{f}\\
	& +2W^{-1}[\mathcal{H}^{-1}]_{\cdot,g}\mathcal{E}_{b}\mathcal{H}^{-1}\mathcal{E}\mathcal{R}^{1}[\mathcal{R}^{1}]_{f}\\
	& +2W^{-1}[\mathcal{H}^{-1}]_{\cdot,g}\mathcal{E}\mathcal{H}^{-1}\mathcal{H}_{b}\mathcal{H}^{-1}\mathcal{E}\mathcal{R}^{1}[\mathcal{R}^{1}]_{f}\\
	& +2W^{-1}[\mathcal{H}^{-1}]_{\cdot,g}\mathcal{E}\mathcal{H}^{-1}(\mathcal{E}_{b}\mathcal{R}^{1}[\mathcal{R}^{1}]_{f}+\mathcal{E}\mathcal{R}_{b}^{1}[\mathcal{R}^{1}]_{f}+\mathcal{E}\mathcal{R}^{1}[\mathcal{R}_{b}^{1}]_{f})\\
	& -2[\mathcal{H}^{-1}\mathcal{H}_{b}\mathcal{H}^{-1}\mathcal{E}\mathcal{H}^{-1}\mathcal{E}\mathcal{H}^{-1}]_{fg}-2[\mathcal{H}^{-1}\mathcal{E}\mathcal{H}^{-1}\mathcal{H}_{b}\mathcal{H}^{-1}\mathcal{E}\mathcal{H}^{-1}]_{fg}\\
	& -2[\mathcal{H}^{-1}\mathcal{E}\mathcal{H}^{-1}\mathcal{E}\mathcal{H}^{-1}\mathcal{H}_{b}\mathcal{H}^{-1}]_{fg}+2[\mathcal{H}^{-1}\mathcal{E}_{b}\mathcal{H}^{-1}\mathcal{E}\mathcal{H}^{-1}]_{fg}\\
	& +2[\mathcal{H}^{-1}\mathcal{E}\mathcal{H}^{-1}\mathcal{E}_{b}\mathcal{H}^{-1}]_{fg}-2W^{-2}W_{b}(\mathcal{R}^{2})'\mathcal{H}_{s_{g}}\mathcal{R}^{1}[\mathcal{R}^{1}]_{f}\\
	& +2W^{-1}(\mathcal{R}_{b}^{2})'\mathcal{H}_{s_{g}}\mathcal{R}^{1}[\mathcal{R}^{1}]_{f}+2W^{-1}(\mathcal{R}^{2})'\mathcal{H}_{bs_{g}}\mathcal{R}^{1}[\mathcal{R}^{1}]_{f}\\
	& +2W^{-1}(\mathcal{R}^{2})'\mathcal{H}_{s_{g}}\mathcal{R}_{b}^{1}[\mathcal{R}^{1}]_{f}+2W^{-1}(\mathcal{R}^{2})'\mathcal{H}_{s_{g}}\mathcal{R}^{1}[\mathcal{R}_{b}^{1}]_{f}\\
	& -2W^{-2}W_{b}(\mathcal{R}^{1})'\mathcal{E}\mathcal{H}^{-1}\mathcal{H}_{s_{g}}\mathcal{R}^{1}[\mathcal{R}^{1}]_{f}\\
	& +2W^{-1}(\mathcal{R}_{b}^{1})'\mathcal{E}\mathcal{H}^{-1}\mathcal{H}_{s_{g}}\mathcal{R}^{1}[\mathcal{R}^{1}]_{f}\\
	& +2W^{-1}(\mathcal{R}^{1})'\mathcal{E}_{b}\mathcal{H}^{-1}\mathcal{H}_{s_{g}}\mathcal{R}^{1}[\mathcal{R}^{1}]_{f}\\
	& -2W^{-1}(\mathcal{R}^{1})'\mathcal{E}\mathcal{H}^{-1}\mathcal{H}_{b}\mathcal{H}^{-1}\mathcal{H}_{s_{g}}\mathcal{R}^{1}[\mathcal{R}^{1}]_{f}\\
	& +2W^{-1}(\mathcal{R}^{1})'\mathcal{E}\mathcal{H}^{-1}\mathcal{H}_{bs_{g}}\mathcal{R}^{1}[\mathcal{R}^{1}]_{f}\\
	& +2W^{-1}(\mathcal{R}^{1})'\mathcal{E}\mathcal{H}^{-1}\mathcal{H}_{s_{g}}\mathcal{R}_{b}^{1}[\mathcal{R}^{1}]_{f}\\
	& +2W^{-1}(\mathcal{R}^{1})'\mathcal{E}\mathcal{H}^{-1}\mathcal{H}_{s_{g}}\mathcal{R}^{1}[\mathcal{R}_{b}^{1}]_{f}\\
	& -2[\mathcal{H}^{-1}\mathcal{H}_{b}\mathcal{H}^{-1}]_{\cdot,g}\mathcal{E}\mathcal{H}^{-1}\mathcal{H}_{s_{g}}\mathcal{R}^{1}\\
	& +2[\mathcal{H}^{-1}]_{\cdot,g}\mathcal{E}_{b}\mathcal{H}^{-1}\mathcal{H}_{s_{g}}\mathcal{R}^{1}-2[\mathcal{H}^{-1}]_{\cdot,g}\mathcal{E}\mathcal{H}^{-1}\mathcal{H}_{b}\mathcal{H}^{-1}\mathcal{H}_{s_{g}}\mathcal{R}^{1}\\
	& +2[\mathcal{H}^{-1}]_{\cdot,g}\mathcal{E}\mathcal{H}^{-1}\mathcal{H}_{bs_{g}}\mathcal{R}^{1}+2[\mathcal{H}^{-1}]_{\cdot,g}\mathcal{E}\mathcal{H}^{-1}\mathcal{H}_{s_{g}}\mathcal{R}_{b}^{1}\\
	& +2(\mathcal{R}_{b}^{1})'\mathcal{H}_{s_{f}}\mathcal{H}^{-1}\mathcal{H}_{s_{g}}\mathcal{R}^{1}+2(\mathcal{R}^{1})'\mathcal{H}_{bs_{f}}\mathcal{H}^{-1}\mathcal{H}_{s_{g}}\mathcal{R}^{1}\\
	& -2(\mathcal{R}^{1})'\mathcal{H}_{s_{f}}\mathcal{H}^{-1}\mathcal{H}_{b}\mathcal{H}^{-1}\mathcal{H}_{s_{g}}\mathcal{R}^{1}+2(\mathcal{R}^{1})'\mathcal{H}_{s_{f}}\mathcal{H}^{-1}\mathcal{H}_{bs_{g}}\mathcal{R}^{1}\\
	& +2(\mathcal{R}^{1})'\mathcal{H}_{s_{f}}\mathcal{H}^{-1}\mathcal{H}_{s_{g}}\mathcal{R}_{b}^{1}-(\mathcal{R}_{b}^{1})'\mathcal{H}_{s_{f}s_{g}}\mathcal{R}^{1}\\
	& -(\mathcal{R}^{1})'\mathcal{H}_{bs_{f}s_{g}}\mathcal{R}^{1}-(\mathcal{R}^{1})'\mathcal{H}_{s_{f}s_{g}}\mathcal{R}_{b}^{1}
\end{align*}

\subsection{Expressions for $\mathcal{G}$ terms}

Differentiating $\mathcal{G}=\mathcal{H}^{-1}(\partial_{\beta\phi}\mathcal{L})(\partial_{\beta\phi'}\mathcal{L})\mathcal{H}^{-1}$
with respect to $b$ and $s$ gives

\begin{align*}
\mathcal{G}_{b} & =\mathcal{H}^{-1}\mathcal{H}_{b}\mathcal{G}+\mathcal{G}\mathcal{H}_{b}\mathcal{H}^{-1}\\
 & +W^{-1}\mathcal{G}^{2,1}+W^{-1}\mathcal{H}^{-1}\mathcal{E}\mathcal{G}\\
 & +W^{-1}\mathcal{G}^{1,2}+W^{-1}\mathcal{G}\mathcal{E}\mathcal{H}^{-1}
\end{align*}

\begin{align*}
\mathcal{G}_{s_{g}} & =-\mathcal{H}^{-1}\mathcal{H}_{s_{g}}\mathcal{G}-\mathcal{G}\mathcal{H}_{s_{g}}\mathcal{H}^{-1}\\
 & -W^{-1}(\mathcal{G}^{2,1}+\mathcal{G}^{1,2})\big[\mathcal{H}^{-1}(\partial_{\beta\phi}\mathcal{L})\big]_{g}\\
 & +\mathcal{H}^{-1}\mathcal{E}^{1}\big[\mathcal{H}^{-1}\big]_{g}(\partial_{\beta^{t}\phi'}\mathcal{L})\mathcal{H}^{-1}\\
 & +\mathcal{H}^{-1}(\partial_{\beta^{s}\phi}\mathcal{L})\big[\mathcal{H}^{-1}\big]_{g}'\mathcal{E}^{1}\mathcal{H}^{-1}\\
 & -W^{-1}\mathcal{H}^{-1}\mathcal{E}^{1}\mathcal{G}\big[\mathcal{H}^{-1}(\partial_{\beta\phi}\mathcal{L})\big]_{g}\\
 & -W^{-1}\mathcal{G}\mathcal{E}^{1}\mathcal{H}^{-1}\big[\mathcal{H}^{-1}(\partial_{\beta\phi}\mathcal{L})\big]_{g}
\end{align*}
Differentiating a second time gives
\begin{align*}
\mathcal{G}_{bb} & =-\mathcal{H}^{-1}\mathcal{H}_{b}\mathcal{H}^{-1}\mathcal{H}_{b}\mathcal{G}+\mathcal{H}^{-1}\mathcal{H}_{bb}\mathcal{G}+\mathcal{H}^{-1}\mathcal{H}_{b}\mathcal{G}_{b}\\
 & +\mathcal{G}_{b}\mathcal{H}_{b}\mathcal{H}^{-1}+\mathcal{G}\mathcal{H}_{bb}\mathcal{H}^{-1}-\mathcal{G}\mathcal{H}_{b}\mathcal{H}^{-1}\mathcal{H}_{b}\mathcal{H}^{-1}\\
 & -W^{-2}W_{b}\mathcal{G}^{2,1}+W^{-1}\mathcal{G}_{b}^{2,1}-W^{-2}W_{b}\mathcal{G}^{1,2}+W^{-1}\mathcal{G}_{b}^{1,2}\\
 & -W^{-2}W_{b}\mathcal{H}^{-1}\mathcal{E}\mathcal{G}-W^{-2}W_{b}\mathcal{G}\mathcal{E}\mathcal{H}^{-1}\\
 & -W^{-1}\mathcal{H}^{-1}\mathcal{H}_{b}\mathcal{H}^{-1}\mathcal{E}\mathcal{G}-W^{-1}\mathcal{G}\mathcal{E}\mathcal{H}^{-1}\mathcal{H}_{b}\mathcal{H}^{-1}\\
 & +W^{-1}\mathcal{H}^{-1}\mathcal{E}_{b}\mathcal{G}\\
 & +W^{-1}\mathcal{G}\mathcal{E}_{b}\mathcal{H}^{-1}\\
 & +W^{-1}\mathcal{H}^{-1}\mathcal{E}\mathcal{G}_{b}+W^{-1}\mathcal{G}_{b}\mathcal{E}\mathcal{H}^{-1}
\end{align*}

\begin{align*}
\mathcal{G}_{bs_{g}} & =\mathcal{H}^{-1}\mathcal{H}_{s_{g}}\mathcal{H}^{-1}\mathcal{H}_{b}\mathcal{G}+\mathcal{G}\mathcal{H}_{b}\mathcal{H}^{-1}\mathcal{H}_{s_{g}}\mathcal{H}^{-1}\\
 & -\mathcal{H}^{-1}\mathcal{H}_{bs_{g}}\mathcal{G}-\mathcal{G}\mathcal{H}_{b_{s_{g}}}\mathcal{H}^{-1}\\
 & -\mathcal{H}^{-1}\mathcal{H}_{b}\mathcal{G}_{s_{g}}-\mathcal{G}_{s_{g}}\mathcal{H}_{b}\mathcal{H}^{-1}\\
 & +W^{-2}W_{s_{g}}\mathcal{G}^{2,1}-W^{-1}\mathcal{G}_{s_{g}}^{2,1}\\
 & +W^{-2}W_{s_{g}}\mathcal{G}^{1,2}-W^{-1}\mathcal{G}_{s_{g}}^{1,2}\\
 & +W^{-2}W_{s_{g}}(\mathcal{H}^{-1}\mathcal{E}\mathcal{G}+\mathcal{G}\mathcal{E}\mathcal{H}^{-1})\\
 & +W^{-1}\mathcal{H}^{-1}\mathcal{H}_{s_{g}}\mathcal{H}^{-1}\mathcal{E}\mathcal{G}+W^{-1}\mathcal{G}\mathcal{E}\mathcal{H}^{-1}\mathcal{H}_{s_{g}}\mathcal{H}^{-1}\\
 & -W^{-1}\mathcal{H}^{-1}\mathcal{E}_{s_{g}}\mathcal{G}-W^{-1}\mathcal{H}^{-1}\mathcal{E}\mathcal{G}_{s_{g}}\\
 & -W^{-1}\mathcal{G}\mathcal{E}_{s_{g}}\mathcal{H}^{-1}-W^{-1}\mathcal{G}_{s_{g}}\mathcal{E}\mathcal{H}^{-1}
\end{align*}

\begin{align*}
\mathcal{G}_{s_{f}s_{g}} & =\mathcal{H}^{-1}\mathcal{H}_{s_{f}}\mathcal{H}^{-1}\mathcal{H}_{s_{g}}\mathcal{G}+\mathcal{G}\mathcal{H}_{s_{g}}\mathcal{H}^{-1}\mathcal{H}_{s_{f}}\mathcal{H}^{-1}\\
 & -\mathcal{H}^{-1}\mathcal{H}_{s_{f}s_{g}}\mathcal{G}-\mathcal{G}\mathcal{H}_{s_{f}s_{g}}\mathcal{H}^{-1}\\
 & -\mathcal{H}^{-1}\mathcal{H}_{s_{g}}\mathcal{G}_{s_{f}}-\mathcal{G}_{s_{f}}\mathcal{H}_{s_{g}}\mathcal{H}^{-1}\\
 & +W^{-2}W_{s_{f}}(\mathcal{G}^{2,1}+\mathcal{G}^{1,2})\big[\mathcal{H}^{-1}(\partial_{\beta\phi}\mathcal{L})\big]_{g}\\
 & -W^{-1}(\mathcal{G}_{s_{f}}^{2,1}+\mathcal{G}_{s_{f}}^{1,2})\big[\mathcal{H}^{-1}(\partial_{\beta\phi}\mathcal{L})\big]_{g}\\
 & +W^{-1}(\mathcal{G}^{2,1}+\mathcal{G}^{1,2})\big[\mathcal{H}^{-1}\mathcal{H}_{s_{f}}\mathcal{H}^{-1}(\partial_{\beta\phi}\mathcal{L})\big]_{g}\\
 & +W^{-2}(\mathcal{G}^{2,1}+\mathcal{G}^{1,2})\big[\mathcal{H}^{-1}(\partial_{\beta\beta\phi}\mathcal{L})\big]_{g}\big[\mathcal{H}^{-1}(\partial_{\beta\phi}\mathcal{L})\big]_{f}\\
 & +W^{-1}(\mathcal{G}^{2,1}+\mathcal{G}^{1,2})\big[\mathcal{H}^{-1}\mathcal{E}\mathcal{H}^{-1}\big]_{gf}\\
 & +W^{-2}(\mathcal{G}^{2,1}+\mathcal{G}^{1,2})\big[\mathcal{H}^{-1}\mathcal{E}\mathcal{H}^{-1}(\partial_{\beta\phi}\mathcal{L})\big]_{g}\big[\mathcal{H}^{-1}(\partial_{\beta\phi}\mathcal{L})\big]_{f}\\
 & -\mathcal{H}^{-1}\mathcal{H}_{s_{f}}\mathcal{H}^{-1}\mathcal{E}\big[\mathcal{H}^{-1}\big]_{g,\cdot}(\partial_{\beta\phi'}\mathcal{L})\mathcal{H}^{-1}\\
 & +\mathcal{H}^{-1}\mathcal{E}_{s_{f}}\big[\mathcal{H}^{-1}\big]_{g,\cdot}(\partial_{\beta\phi'}\mathcal{L})\mathcal{H}^{-1}\\
 & -\mathcal{H}^{-1}\mathcal{E}\big[\mathcal{H}^{-1}\mathcal{H}_{s_{f}}\mathcal{H}^{-1}\big]_{g,\cdot}(\partial_{\beta\phi'}\mathcal{L})\mathcal{H}^{-1}\\
 & -W^{-1}\mathcal{H}^{-1}\mathcal{E}^{1}\big[\mathcal{H}^{-1}\big]_{g}(\partial_{\beta\beta\phi'}\mathcal{L})\mathcal{H}^{-1}\big[\mathcal{H}^{-1}(\partial_{\beta\phi}\mathcal{L})\big]_{f}\\
 & -\mathcal{H}^{-1}\mathcal{E}^{1}\big[\mathcal{H}^{-1}\big]_{g}\big[\mathcal{H}^{-1}\mathcal{E}\mathcal{H}^{-1}\big]_{\cdot,f}\\
 & -W^{-1}\mathcal{H}^{-1}\mathcal{E}^{1}\big[\mathcal{H}^{-1}\big]_{g}(\partial_{\beta\phi'}\mathcal{L})\mathcal{H}^{-1}\mathcal{E}\mathcal{H}^{-1}\big[\mathcal{H}^{-1}(\partial_{\beta\phi}\mathcal{L})\big]_{f}\\
 & -\mathcal{H}^{-1}\mathcal{E}^{1}\big[\mathcal{H}^{-1}\big]_{g}(\partial_{\beta\phi'}\mathcal{L})\mathcal{H}^{-1}\mathcal{H}_{s_{f}}\mathcal{H}^{-1}\\
 & -\mathcal{H}^{-1}\mathcal{H}_{s_{f}}\mathcal{H}^{-1}(\partial_{\beta\phi}\mathcal{L})\big[\mathcal{H}^{-1}\big]_{g}'\mathcal{E}\mathcal{H}^{-1}\\
 & -W^{-1}\mathcal{H}^{-1}(\partial_{\beta\beta\phi}\mathcal{L})\big[\mathcal{H}^{-1}\big]_{g}'\mathcal{E}\mathcal{H}^{-1}\big[\mathcal{H}^{-1}(\partial_{\beta\phi}\mathcal{L})\big]_{f}\\
 & -\big[\mathcal{H}^{-1}\mathcal{E}\mathcal{H}^{-1}\big]_{f,\cdot}\big[\mathcal{H}^{-1}\big]_{g,\cdot}'\mathcal{E}\mathcal{H}^{-1}\\
 & -W^{-1}\mathcal{H}^{-1}\mathcal{E}\mathcal{H}^{-1}(\partial_{\beta\phi}\mathcal{L})\big[\mathcal{H}^{-1}\big]_{g}'\mathcal{E}\mathcal{H}^{-1}\big[\mathcal{H}^{-1}(\partial_{\beta\phi}\mathcal{L})\big]_{f}\\
 & -\mathcal{H}^{-1}(\partial_{\beta^{s}\phi}\mathcal{L})\big[\mathcal{H}^{-1}\mathcal{H}_{s_{f}}\mathcal{H}^{-1}\big]_{g}'\mathcal{E}\mathcal{H}^{-1}\\
 & +\mathcal{H}^{-1}(\partial_{\beta^{s}\phi}\mathcal{L})\big[\mathcal{H}^{-1}\big]_{g}'\mathcal{E}_{s_{f}}\mathcal{H}^{-1}\\
 & -\mathcal{H}^{-1}(\partial_{\beta^{s}\phi}\mathcal{L})\big[\mathcal{H}^{-1}\big]_{g}'\mathcal{E}^{1}\mathcal{H}^{-1}\mathcal{H}_{s_{f}}\mathcal{H}^{-1}\\
 & +W^{-2}W_{s_{g}}\mathcal{H}^{-1}\mathcal{E}\mathcal{G}\big[\mathcal{H}^{-1}(\partial_{\beta\phi}\mathcal{L})\big]_{g}\\
 & +W^{-1}\mathcal{H}^{-1}\mathcal{H}_{s_{f}}\mathcal{H}^{-1}\mathcal{E}\mathcal{G}\big[\mathcal{H}^{-1}(\partial_{\beta\phi}\mathcal{L})\big]_{g}\\
 & -W^{-1}\mathcal{H}^{-1}\mathcal{E}_{s_{f}}\mathcal{G}\big[\mathcal{H}^{-1}(\partial_{\beta\phi}\mathcal{L})\big]_{g}\\
 & -W^{-1}\mathcal{H}^{-1}\mathcal{E}\mathcal{G}_{s_{f}}\big[\mathcal{H}^{-1}(\partial_{\beta\phi}\mathcal{L})\big]_{g}\\
 & +W^{-1}\mathcal{H}^{-1}\mathcal{E}\mathcal{G}\big[\mathcal{H}^{-1}\mathcal{H}_{s_{f}}\mathcal{H}^{-1}(\partial_{\beta\phi}\mathcal{L})\big]_{g}\\
 & +W^{-2}\mathcal{H}^{-1}\mathcal{E}\mathcal{G}\big[\mathcal{H}^{-1}(\partial_{\beta\beta\phi}\mathcal{L})\big]_{g}\big[\mathcal{H}^{-1}(\partial_{\beta\phi}\mathcal{L})\big]_{f}\\
 & +W^{-1}\mathcal{H}^{-1}\mathcal{E}\mathcal{G}\big[\mathcal{H}^{-1}(\partial_{\beta\phi\phi'}\mathcal{L})\mathcal{H}^{-1}\big]_{g}\\
 & +W^{-2}\mathcal{H}^{-1}\mathcal{E}\mathcal{G}\big[\mathcal{H}^{-1}\mathcal{E}\mathcal{H}^{-1}(\partial_{\beta\phi}\mathcal{L})\big]_{g}\big[\mathcal{H}^{-1}(\partial_{\beta\phi}\mathcal{L})\big]_{f}\\
 & +W^{-2}W_{s_{f}}\mathcal{G}\mathcal{E}\mathcal{H}^{-1}\big[\mathcal{H}^{-1}(\partial_{\beta\phi}\mathcal{L})\big]_{g}\\
 & -W^{-1}\mathcal{G}_{s_{f}}\mathcal{E}\mathcal{H}^{-1}\big[\mathcal{H}^{-1}(\partial_{\beta\phi}\mathcal{L})\big]_{g}\\
 & -W^{-1}\mathcal{G}\mathcal{E}_{s_{f}}\mathcal{H}^{-1}\big[\mathcal{H}^{-1}(\partial_{\beta\phi}\mathcal{L})\big]_{g}\\
 & +W^{-1}\mathcal{G}\mathcal{E}\mathcal{H}^{-1}\mathcal{H}_{s_{f}}\mathcal{H}^{-1}\big[\mathcal{H}^{-1}(\partial_{\beta\phi}\mathcal{L})\big]_{g}\\
 & +W^{-1}\mathcal{G}\mathcal{E}\mathcal{H}^{-1}\big[\mathcal{H}^{-1}\mathcal{H}_{s_{f}}\mathcal{H}^{-1}(\partial_{\beta\phi}\mathcal{L})\big]_{g}\\
 & +W^{-2}\mathcal{G}\mathcal{E}\mathcal{H}^{-1}\big[\mathcal{H}^{-1}(\partial_{\beta\beta\phi}\mathcal{L})\big]_{g}\big[\mathcal{H}^{-1}(\partial_{\beta\phi}\mathcal{L})\big]_{f}\\
 & +W^{-1}\mathcal{G}\mathcal{E}\mathcal{H}^{-1}\big[\mathcal{H}^{-1}(\partial_{\beta\phi\phi'}\mathcal{L})\mathcal{H}^{-1}\big]_{g}\\
 & +W^{-2}\mathcal{G}\mathcal{E}\mathcal{H}^{-1}\big[\mathcal{H}^{-1}\mathcal{E}\mathcal{H}^{-1}(\partial_{\beta\phi}\mathcal{L})\big]_{g}\big[\mathcal{H}^{-1}(\partial_{\beta\phi}\mathcal{L})\big]_{f}
\end{align*}

Finally, the third derivatives are
\begin{align*}
\mathcal{G}_{bbb} & =2\mathcal{H}^{-1}\mathcal{H}_{b}\mathcal{H}^{-1}\mathcal{H}_{b}\mathcal{H}^{-1}\mathcal{H}_{b}\mathcal{G}-2\mathcal{H}^{-1}\mathcal{H}_{b}\mathcal{H}^{-1}\mathcal{H}_{b}\mathcal{G}_{b}\\
 & -\mathcal{H}^{-1}\mathcal{H}_{bb}\mathcal{H}^{-1}\mathcal{H}_{b}\mathcal{G}-2\mathcal{H}^{-1}\mathcal{H}_{b}\mathcal{H}^{-1}\mathcal{H}_{bb}\mathcal{G}\\
 & +\mathcal{H}^{-1}\mathcal{H}_{bbb}\mathcal{G}+2\mathcal{H}^{-1}\mathcal{H}_{bb}\mathcal{G}_{b}\\
 & +\mathcal{H}^{-1}\mathcal{H}_{b}\mathcal{G}_{bb}\\
 & +\mathcal{G}_{bb}\mathcal{H}_{b}\mathcal{H}^{-1}+2\mathcal{G}_{b}\mathcal{H}_{bb}\mathcal{H}^{-1}-2\mathcal{G}_{b}\mathcal{H}_{b}\mathcal{H}^{-1}\mathcal{H}_{b}\mathcal{H}^{-1}\\
 & +\mathcal{G}\mathcal{H}_{bbb}\mathcal{H}^{-1}-2\mathcal{G}\mathcal{H}_{bb}\mathcal{H}^{-1}\mathcal{H}_{b}\mathcal{H}^{-1}-\mathcal{G}\mathcal{H}_{b}\mathcal{H}^{-1}\mathcal{H}_{bb}\mathcal{H}^{-1}\\
 & +2\mathcal{G}\mathcal{H}_{b}\mathcal{H}^{-1}\mathcal{H}_{b}\mathcal{H}^{-1}\mathcal{H}_{b}\mathcal{H}^{-1}\\
 & +2W^{-3}W_{b}^{2}(\mathcal{G}^{2,1}+\mathcal{G}^{1,2})-W^{-2}W_{bb}(\mathcal{G}^{2,1}+\mathcal{G}^{1,2})-W^{-2}W_{b}(\mathcal{G}_{b}^{2,1}+\mathcal{G}_{b}^{1,2})\\
 & -W^{-2}W_{b}(\mathcal{G}_{b}^{2,1}+\mathcal{G}_{b}^{1,2})+W^{-1}(\mathcal{G}_{bb}^{2,1}+\mathcal{G}_{bb}^{1,2})\\
 & +2W^{-3}W_{b}^{2}\mathcal{H}^{-1}\mathcal{E}\mathcal{G}-W^{-2}W_{bb}\mathcal{H}^{-1}\mathcal{E}\mathcal{G}+W^{-2}W_{b}\mathcal{H}^{-1}\mathcal{H}_{b}\mathcal{H}^{-1}\mathcal{E}\mathcal{G}\\
 & -W^{-2}W_{b}\mathcal{H}^{-1}\mathcal{E}_{b}\mathcal{G}-W^{-2}W_{b}\mathcal{H}^{-1}\mathcal{E}\mathcal{G}_{b}\\
 & +2W^{-3}W_{b}^{2}\mathcal{G}\mathcal{E}\mathcal{H}^{-1}-W^{-2}W_{bb}\mathcal{G}\mathcal{E}\mathcal{H}^{-1}-W^{-2}W_{b}\mathcal{G}_{b}\mathcal{E}\mathcal{H}^{-1}\\
 & -W^{-2}W_{b}\mathcal{G}\mathcal{E}_{b}\mathcal{H}^{-1}+W^{-2}W_{b}\mathcal{G}\mathcal{E}\mathcal{H}^{-1}\mathcal{H}_{b}\mathcal{H}^{-1}\\
 & +W^{-2}W_{b}\mathcal{H}^{-1}\mathcal{H}_{b}\mathcal{H}^{-1}\mathcal{E}\mathcal{G}+2W^{-1}\mathcal{H}^{-1}\mathcal{H}_{b}\mathcal{H}^{-1}\mathcal{H}_{b}\mathcal{H}^{-1}\mathcal{E}\mathcal{G}\\
 & -W^{-1}\mathcal{H}^{-1}\mathcal{H}_{bb}\mathcal{H}^{-1}\mathcal{E}\mathcal{G}-W^{-1}\mathcal{H}^{-1}\mathcal{H}_{b}\mathcal{H}^{-1}\mathcal{E}_{b}\mathcal{G}-W^{-1}\mathcal{H}^{-1}\mathcal{H}_{b}\mathcal{H}^{-1}\mathcal{E}\mathcal{G}_{b}\\
 & +W^{-2}W_{b}\mathcal{G}\mathcal{E}\mathcal{H}^{-1}\mathcal{H}_{b}\mathcal{H}^{-1}-W^{-1}\mathcal{G}_{b}\mathcal{E}\mathcal{H}^{-1}\mathcal{H}_{b}\mathcal{H}^{-1}-W^{-1}\mathcal{G}\mathcal{E}_{b}\mathcal{H}^{-1}\mathcal{H}_{b}\mathcal{H}^{-1}\\
 & -W^{-1}\mathcal{G}\mathcal{E}\mathcal{H}^{-1}\mathcal{H}_{bb}\mathcal{H}^{-1}+W^{-1}\mathcal{G}\mathcal{E}\mathcal{H}^{-1}\mathcal{H}_{b}\mathcal{H}^{-1}\mathcal{H}_{b}\mathcal{H}^{-1}\\
 & -W^{-2}W_{b}\mathcal{H}^{-1}(\mathcal{E}_{b}\mathcal{G}+\mathcal{E}\mathcal{G}_{b})-W^{-1}\mathcal{H}^{-1}\mathcal{H}_{b}\mathcal{H}^{-1}(\mathcal{E}_{b}\mathcal{G}+\mathcal{E}\mathcal{G}_{b})\\
 & +W^{-1}\mathcal{H}^{-1}(\mathcal{E}_{bb}\mathcal{G}+\mathcal{E}\mathcal{G}_{bb})\\
 & -W^{-2}W_{b}(\mathcal{E}_{b}\mathcal{G}+\mathcal{E}\mathcal{G}_{b})\mathcal{H}^{-1}+W^{-1}(\mathcal{E}_{bb}\mathcal{G}+\mathcal{E}\mathcal{G}_{bb})\mathcal{H}^{-1}\\
 & -W^{-1}(\mathcal{E}_{b}\mathcal{G}+\mathcal{E}\mathcal{G}_{b})\mathcal{H}^{-1}\mathcal{H}_{b}\mathcal{H}^{-1}
\end{align*}

\subsection{Expressions for $W$ terms}

We begin by expressing the first three derivatives of $W$ in terms
of $\mathcal{F}$. To do this, first note that by the definition of
$\mathcal{F}$

\begin{align*}
W & =\partial_{\beta\beta}\mathcal{L}+(\partial_{\beta\phi'}\mathcal{L})\mathcal{H}^{-1}(\partial_{\beta\phi}\mathcal{L})\\
 & =(\partial_{\beta\beta}\mathcal{L})+\mathcal{F}
\end{align*}

Differentiating with respect to $b$ and $s$ gives
\begin{align*}
W_{b} & =-W^{-1}(\partial_{\beta\beta\beta}\mathcal{L})-W^{-1}\mathcal{F}^{2,1}+\mathcal{F}_{b}
\end{align*}

\begin{align*}
W_{s} & =-W^{-1}(\partial_{\beta\beta\beta}\mathcal{L})\mathcal{H}^{-1}(\partial_{\beta\phi}\mathcal{L})-\Big[\mathcal{H}^{-1}+W^{-1}\mathcal{G}\big](\partial_{\beta\beta\phi}\mathcal{L})+\mathcal{F}_{s}
\end{align*}

Differentiating a second time gives

(1)

\begin{align*}
W_{bb} & =W^{-2}W_{b}(\partial_{\beta\beta\beta}\mathcal{L})+W^{-2}(\partial_{\beta^{4}}\mathcal{L})+W^{-2}\mathcal{F}^{3,1}\\
 & +W^{-2}W_{b}\mathcal{F}^{2,1}-W^{-1}\mathcal{F}_{b}^{2,1}\\
 & +\mathcal{F}_{bb}
\end{align*}

(2)

\begin{align*}
W_{bs} & =W^{-2}W_{b}(\partial_{\beta\beta\beta}\mathcal{L})\mathcal{H}^{-1}(\partial_{\beta\phi}\mathcal{L})+W^{-2}\big[(\partial_{\beta^{4}}\mathcal{L})+\mathcal{F}^{3,1}\big]\mathcal{H}^{-1}(\partial_{\beta\phi}\mathcal{L})\\
 & +W^{-1}(\partial_{\beta\beta\beta}\mathcal{L})\mathcal{H}^{-1}\mathcal{H}_{b}\mathcal{H}^{-1}(\partial_{\beta\phi}\mathcal{L})+W^{-2}(\partial_{\beta\beta\beta}\mathcal{L})\mathcal{H}^{-1}\big[(\partial_{\beta\beta\phi}\mathcal{L})+(\partial_{\beta\phi\phi'}\mathcal{L})\mathcal{H}^{-1}(\partial_{\beta\phi}\mathcal{L})\big]\\
 & +W^{-1}\big[\mathcal{H}^{-1}+W^{-1}\mathcal{G}\big]\big[(\partial_{\beta\beta\beta\phi}\mathcal{L})+(\partial_{\beta\beta\phi\phi'}\mathcal{L})\mathcal{H}^{-1}(\partial_{\beta\phi}\mathcal{L})\big]\\
 & +\Big[\mathcal{H}^{-1}\mathcal{H}_{b}\mathcal{H}^{-1}+W^{-2}W_{b}\mathcal{G}-W^{-1}\mathcal{G}_{b}\big](\partial_{\beta\beta\phi}\mathcal{L})\\
 & +\mathcal{F}_{sb}
\end{align*}

(3)

\begin{align*}W_{ss_{g}} & =W^{-2}W_{s_{g}}(\partial_{\beta\beta\beta}\mathcal{L})\mathcal{R}^{1}+W^{-2}(\partial_{\beta^{4}}\mathcal{L})[\mathcal{R}^{1}]_{g}\mathcal{R}^{1}\\
	& +W^{-1}(\partial_{\beta^{3}\phi'}\mathcal{L})[\mathcal{H}^{-1}]_{\cdot,g}\mathcal{R}^{1}+W^{-2}(\partial_{\beta^{3}\phi'}\mathcal{L})\mathcal{R}^{1}[\mathcal{R}^{1}]_{g}\mathcal{R}^{1}\\
	& -W^{-1}(\partial_{\beta\beta\beta}\mathcal{L})\mathcal{H}^{-1}\mathcal{R}_{b}^{1}+\mathcal{H}^{-1}\mathcal{H}_{s_{g}}\mathcal{R}^{2}\\
	& +W^{-2}W_{s_{g}}\mathcal{G}(\partial_{\beta\beta\phi}\mathcal{L})-W^{-1}\mathcal{G}_{b}(\partial_{\beta\beta\phi}\mathcal{L})\\
	& +W^{-1}\Big[\mathcal{H}^{-1}+W^{-1}\mathcal{G}\big](\partial_{\beta^{3}\phi}\mathcal{L})[\mathcal{R}^{1}]_{g}\\
	& +\Big[\mathcal{H}^{-1}+W^{-1}\mathcal{G}\big](\partial_{\beta\beta\phi\phi'}\mathcal{L})[\mathcal{H}^{-1}]_{\cdot,g}\\
	& +W^{-1}\Big[\mathcal{H}^{-1}+W^{-1}\mathcal{G}\big](\partial_{\beta\beta\phi\phi'}\mathcal{L})\mathcal{R}^{1}[\mathcal{R}^{1}]_{g}\\
	& +\mathcal{F}_{ss_{g}}
\end{align*}

Third derivative:
\begin{align*}W_{bss_{g}} & =-2W^{-3}W_{b}W_{s_{g}}(\partial_{\beta\beta\beta}\mathcal{L})\mathcal{R}^{1}+W^{-2}W_{bs_{g}}(\partial_{\beta\beta\beta}\mathcal{L})\mathcal{R}^{1}\\
	& -W^{-3}W_{s_{g}}(\partial_{\beta^{4}}\mathcal{L})\mathcal{R}^{1}-W^{-3}W_{s_{g}}\mathcal{R}^{1}(\partial_{\beta^{3}\phi'}\mathcal{L})\mathcal{R}^{1}\\
	& +W^{-2}W_{s_{g}}(\partial_{\beta\beta\beta}\mathcal{L})\mathcal{R}_{b}^{1}\\
	& -2W^{-3}W_{b}(\partial_{\beta^{4}}\mathcal{L})[\mathcal{R}^{1}]_{g}\mathcal{R}^{1}\\
	& -W^{-3}(\partial_{\beta^{5}}\mathcal{L})[\mathcal{R}^{1}]_{g}\mathcal{R}^{1}-W^{-3}\mathcal{R}^{1}(\partial_{\beta^{4}\phi'}\mathcal{L})\mathcal{R}^{1}[\mathcal{R}^{1}]_{g}\\
	& +W^{-2}(\partial_{\beta^{4}}\mathcal{L})[\mathcal{R}_{b}^{1}]_{g}\mathcal{R}^{1}+W^{-2}(\partial_{\beta^{4}}\mathcal{L})[\mathcal{R}_{b}^{1}]_{g}\mathcal{R}^{1}\\
	& -W^{-2}W_{b}(\partial_{\beta^{3}\phi'}\mathcal{L})[\mathcal{H}^{-1}]_{\cdot,g}\mathcal{R}^{1}\\
	& -W^{-2}(\partial_{\beta^{4}\phi'}\mathcal{L})[\mathcal{H}^{-1}]_{\cdot,g}\mathcal{R}^{1}-W^{-2}(\mathcal{R}^{1})'(\partial_{\beta^{3}\phi\phi'}\mathcal{L})[\mathcal{H}^{-1}]_{\cdot,g}\mathcal{R}^{1}\\
	& -W^{-1}(\partial_{\beta^{3}\phi'}\mathcal{L})[\mathcal{H}^{-1}\mathcal{H}_{b}\mathcal{H}^{-1}]_{\cdot,g}\mathcal{R}^{1}+W^{-1}(\partial_{\beta^{3}\phi'}\mathcal{L})[\mathcal{H}^{-1}]_{\cdot,g}\mathcal{R}_{b}^{1}\\
	& -2W^{-3}W_{b}(\partial_{\beta^{3}\phi'}\mathcal{L})\mathcal{R}^{1}[\mathcal{R}^{1}]_{g}\mathcal{R}^{1}\\
	& -W^{-3}(\partial_{\beta^{4}\phi'}\mathcal{L})\mathcal{R}^{1}[\mathcal{R}^{1}]_{g}\mathcal{R}^{1}-W^{-3}(\mathcal{R}^{1})'(\partial_{\beta^{3}\phi\phi'}\mathcal{L})\mathcal{R}^{1}[\mathcal{R}^{1}]_{g}\mathcal{R}^{1}\\
	& +W^{-2}(\partial_{\beta^{3}\phi'}\mathcal{L})\mathcal{R}_{b}^{1}[\mathcal{R}^{1}]_{g}\mathcal{R}^{1}+W^{-2}(\partial_{\beta^{3}\phi'}\mathcal{L})\mathcal{R}^{1}[\mathcal{R}_{b}^{1}]_{g}\mathcal{R}^{1}\\
	& +W^{-2}(\partial_{\beta^{3}\phi'}\mathcal{L})\mathcal{R}^{1}[\mathcal{R}^{1}]_{g}\mathcal{R}_{b}^{1}\\
	& +W^{-2}W_{b}(\partial_{\beta\beta\beta}\mathcal{L})\mathcal{H}^{-1}\mathcal{R}_{b}^{1}\\
	& +W^{-2}(\partial_{\beta^{4}}\mathcal{L})\mathcal{H}^{-1}\mathcal{R}_{b}^{1}+W^{-2}(\partial_{\beta^{3}\phi'}\mathcal{L})\mathcal{R}^{1}\mathcal{H}^{-1}\mathcal{R}_{b}^{1}\\
	& +W^{-1}(\partial_{\beta\beta\beta}\mathcal{L})\mathcal{H}^{-1}\mathcal{H}_{b}\mathcal{H}^{-1}\mathcal{R}_{b}^{1}-W^{-1}(\partial_{\beta\beta\beta}\mathcal{L})\mathcal{H}^{-1}\mathcal{R}_{bb}^{1}\\
	& -\mathcal{H}^{-1}\mathcal{H}_{b}\mathcal{H}^{-1}\mathcal{H}_{s_{g}}\mathcal{R}^{2}+\mathcal{H}^{-1}\mathcal{H}_{bs_{g}}\mathcal{R}^{2}\\
	& -\mathcal{H}^{-1}\mathcal{H}_{s_{g}}\mathcal{R}_{b}^{2}\\
	& -2W^{-3}W_{b}W_{s_{g}}\mathcal{G}(\partial_{\beta\beta\phi}\mathcal{L})+W^{-2}W_{bs_{g}}\mathcal{G}(\partial_{\beta\beta\phi}\mathcal{L})\\
	& +W^{-2}W_{s_{g}}\mathcal{G}_{b}(\partial_{\beta\beta\phi}\mathcal{L})\\
	& -W^{-3}W_{s_{g}}\mathcal{G}(\partial_{\beta^{3}\phi}\mathcal{L})-W^{-3}W_{s_{g}}\mathcal{G}(\partial_{\beta\beta\phi\phi'}\mathcal{L})\mathcal{R}^{1}\\
	& +W^{-2}W_{b}\mathcal{G}_{b}(\partial_{\beta\beta\phi}\mathcal{L})-W^{-1}\mathcal{G}_{bb}(\partial_{\beta\beta\phi}\mathcal{L})\\
	& +W^{-2}\mathcal{G}_{b}(\partial_{\beta^{3}\phi}\mathcal{L})+W^{-2}\mathcal{G}_{b}(\partial_{\beta\beta\phi\phi'}\mathcal{L})\mathcal{R}^{1}\\
	& -W^{-2}W_{b}\Big[\mathcal{H}^{-1}+W^{-1}\mathcal{G}\big](\partial_{\beta^{3}\phi}\mathcal{L})[\mathcal{R}^{1}]_{g}\\
	& -W^{-1}\Big[\mathcal{H}^{-1}\mathcal{H}_{b}\mathcal{H}^{-1}+W^{-2}W_{b}\mathcal{G}-W^{-1}\mathcal{G}_{b}\big](\partial_{\beta^{3}\phi}\mathcal{L})[\mathcal{R}^{1}]_{g}\\
	& -W^{-2}\Big[\mathcal{H}^{-1}+W^{-1}\mathcal{G}\big](\partial_{\beta^{4}\phi}\mathcal{L})[\mathcal{R}^{1}]_{g}\\
	& -W^{-2}\Big[\mathcal{H}^{-1}+W^{-1}\mathcal{G}\big](\partial_{\beta^{3}\phi\phi'}\mathcal{L})\mathcal{R}^{1}[\mathcal{R}^{1}]_{g}\\
	& +W^{-1}\Big[\mathcal{H}^{-1}+W^{-1}\mathcal{G}\big](\partial_{\beta^{3}\phi}\mathcal{L})[\mathcal{R}_{b}^{1}]_{g}\\
	& -\Big[\mathcal{H}^{-1}\mathcal{H}_{b}\mathcal{H}^{-1}+W^{-2}W_{b}\mathcal{G}-W^{-1}\mathcal{G}_{b}\big](\partial_{\beta\beta\phi\phi'}\mathcal{L})[\mathcal{H}^{-1}]_{\cdot,g}\\
	& -W^{-1}\Big[\mathcal{H}^{-1}+W^{-1}\mathcal{G}\big](\partial_{\beta^{3}\phi\phi'}\mathcal{L})[\mathcal{H}^{-1}]_{\cdot,g}\\
	& -W^{-1}\Big[\mathcal{H}^{-1}+W^{-1}\mathcal{G}\big]\sum_{f}(\partial_{\beta^{2}\phi\phi'\phi_{f}}\mathcal{L})[\mathcal{H}^{-1}]_{\cdot,g}[\mathcal{R}^{1}]_{f}\\
	& -\Big[\mathcal{H}^{-1}+W^{-1}\mathcal{G}\big](\partial_{\beta\beta\phi\phi'}\mathcal{L})[\mathcal{H}^{-1}\mathcal{H}_{b}\mathcal{H}^{-1}]_{\cdot,g}\\
	& -W^{-2}W_{b}\Big[\mathcal{H}^{-1}+W^{-1}\mathcal{G}\big](\partial_{\beta\beta\phi\phi'}\mathcal{L})\mathcal{R}^{1}[\mathcal{R}^{1}]_{g}\\
	& -W^{-1}\Big[\mathcal{H}^{-1}\mathcal{H}_{b}\mathcal{H}^{-1}+W^{-2}W_{b}\mathcal{G}-W^{-1}\mathcal{G}_{b}\big](\partial_{\beta\beta\phi\phi'}\mathcal{L})\mathcal{R}^{1}[\mathcal{R}^{1}]_{g}\\
	& -W^{-2}\Big[\mathcal{H}^{-1}+W^{-1}\mathcal{G}\big](\partial_{\beta^{3}\phi\phi'}\mathcal{L})\mathcal{R}^{1}[\mathcal{R}^{1}]_{g}\\
	& -W^{-2}\Big[\mathcal{H}^{-1}+W^{-1}\mathcal{G}\big]\sum_{f}(\partial_{\beta^{2}\phi\phi'\phi_{f}}\mathcal{L})[\mathcal{R}^{1}]_{f}\mathcal{R}^{1}[\mathcal{R}^{1}]_{g}\\
	& +W^{-1}\Big[\mathcal{H}^{-1}+W^{-1}\mathcal{G}\big](\partial_{\beta\beta\phi\phi'}\mathcal{L})\mathcal{R}_{b}^{1}[\mathcal{R}^{1}]_{g}\\
	& +W^{-1}\Big[\mathcal{H}^{-1}+W^{-1}\mathcal{G}\big](\partial_{\beta\beta\phi\phi'}\mathcal{L})\mathcal{R}^{1}[\mathcal{R}_{b}^{1}]_{g}\\
	& +\mathcal{F}_{bss_{g}}
\end{align*}

\subsection{\label{sec:beta_expansion} Expansion for $\widehat{\beta}$}

In Section \ref{sec:full_expansion} we provide bounds for each of the above terms that
by Assumptions 1 and 2 can be shown to hold uniformly over a neighborhood
of the truth. Using these results, we will be able to show that
\begin{align*}
	\partial_{b}\mathcal{L}^{*}(0,0) & =\partial_{b}\mathcal{L}^{*}-(\partial_{bb}\mathcal{L}^{*})\mathcal{S}_{\beta}-(\partial_{bs'}\mathcal{L}^{*})\mathcal{S}\\
	& +\frac{1}{2}(\partial_{bbb}\mathcal{L}^{*})\mathcal{S}_{\beta}^{2}+(\partial_{bbs'}\mathcal{L}^{*})\mathcal{S}\mathcal{S}_{\beta}+\frac{1}{2}\mathcal{S}'(\partial_{bss'}\mathcal{L}^{*})\mathcal{S}\\
	& -\frac{1}{2}\mathcal{S}'(\partial_{bbss'}\mathcal{L}_{(1)}^{*})\mathcal{S}\mathcal{S}_{\beta}-\frac{1}{6}\sum_{g}\mathcal{S}'(\partial_{bss's_{g}}\mathcal{L}_{(1)}^{*})\mathcal{S}\mathcal{S}_{g}\\
	& +\frac{1}{4}\mathcal{S}'(\partial_{bbbss'}\mathcal{L}_{(1)}^{*}(\bar{b},\bar{s}))\mathcal{S}\mathcal{S}_{\beta}^{2}+\frac{1}{24}\sum_{f,g}\mathcal{S}'(\partial_{bss's_{f}s_{g}}\mathcal{L}_{(2)}^{*})\mathcal{S}\mathcal{S}_{f}\mathcal{S}_{g}\\
	& +o_{p}(N^{-2}) \\
	&=\partial_{b}\mathcal{L}^{*}-(\partial_{bb}\mathcal{L}^{*})\mathcal{S}_{\beta}-(\partial_{bs'}\mathcal{L}^{*})\mathcal{S}+\frac{1}{2}\mathcal{S}'(\partial_{bss'}\mathcal{L}^{*})\mathcal{S}+o_{p}(1)
\end{align*}
where $(\bar{b},\bar{s})$ are intermediate values between $(0,0)$
and $(\mathcal{S}_{\beta},\mathcal{S})$, which
under Assumptions 1 and 2 must lie in $\mathcal{B}(r_\beta,\beta_0)\times\mathcal{B}_q(r_\phi,\phi_0)$ for sufficiently large $N$ (see Corollary B.3 in FW16).

Hence, we have the first-order expansion

\begin{align*}
	\partial_{b}\mathcal{L}^{*}(0,0) & =W_{N}^{-1}\mathcal{S}_{\beta}+W_{N}^{-1}(\partial_{\beta\phi'}\mathcal{L})\mathcal{H}^{-1}\mathcal{S}+N\frac{1}{2}\mathcal{S}'\mathcal{H}^{-1}\mathcal{H}_{b}\mathcal{H}^{-1}\mathcal{S}+o_{p}(1)\\
	NW_{N}(\widehat{\beta}-\beta) & =(\partial_{\beta}\mathcal{L})+(\partial_{\beta\phi'}\mathcal{L})\mathcal{H}^{-1}\mathcal{S}+\frac{1}{2}\mathcal{S}'\mathcal{H}^{-1}(\partial_{\beta\phi\phi'}\mathcal{L})\mathcal{H}^{-1}\mathcal{S}\\
	& \quad+\frac{1}{2}\mathcal{S}'\mathcal{H}^{-1}\sum_{f}(\partial_{\phi\phi'\phi_{f}}\mathcal{L})\big[\mathcal{H}^{-1}(\partial_{\beta\phi}\mathcal{L})\big]_{f}\mathcal{H}^{-1}\mathcal{S}+o_{p}(1)
\end{align*}

\section{\label{sec:basic_lemmas}Lemmas for bounding individual components}

In order to provide bounds on the components in the asymptotic expansion above, we first demonstrate some basic results that are used frequently. We begin with a result from FW16 (Lemma S.6 in that paper). Through the proofs we apply the theorem with $q=16$ which is justified by Assumption 2.

\begin{lem}
\label{lem:S6} Let Assumptions 1 and
2 hold for the dyadic network M-estimator described in the main paper.
Let $\mathcal{B}(r_{\beta},\beta_{0})$ an\textup{d} $\mathcal{B}_{q}(r_{\phi},\phi_{0})$
be neighborhoods of the true parameter values, with $r_{\beta}=o(N^{-1/q-2\epsilon})$
and $r_{\phi}=o(N^{-2\epsilon})$, and let $4<q\leq16$ and $\epsilon=\frac{1}{2q}$.
The following conditions hold.

(i) for $s\leq3$
\begin{align*}
\frac{1}{N-1}\sum_{i}\sum_{j\ne i}\partial_{\beta}\ell_{ij} & =O_{p}(1),\quad\frac{1}{N-1}\sum_{i}\sum_{j\ne i}\partial_{\beta^{3}}\tilde{\ell}_{ij}=O_{p}(1)\\
\sup_{\beta\in\mathcal{B}(r_{\beta},\beta_{0})}\sup_{\phi\in\mathcal{B}_{q}(r_{\phi},\phi_{0})} & \frac{1}{N(N-1)}\sum_{i}\sum_{j\ne i}\partial_{\beta^{s}}\ell_{ij} = O_p(1)
\end{align*}

(ii) for $s\leq3$
\[
\sup_{\beta\in\mathcal{B}(r_{\beta},\beta_{0})}\sup_{\phi\in\mathcal{B}_{q}(r_{\phi},\phi_{0})}\frac{1}{N}\sum_{i}\lvert\frac{1}{N-1}\sum_{j\ne i}\partial_{\beta^{s}\pi}\ell_{ij}\rvert^{q}=O_{p}(1)
\]

(iii) for $s\leq3$, $t\leq5$ and $s+t\leq6$
\[
\sup_{\beta\in\mathcal{B}(r_{\beta},\beta_{0})}\sup_{\phi\in\mathcal{B}_{q}(r_{\phi},\phi_{0})}\max_{i}\frac{1}{N-1}\sum_{j\ne i}\lvert\partial_{\beta^{s}\pi^{t}}\ell_{ij}\rvert^{q}=O_{p}(N^{2\epsilon})
\]

(iv) 

\[
\frac{1}{N}\sum_{i}\lvert\frac{1}{\sqrt{N-1}}\sum_{j\ne i}\partial_{\pi}\tilde{\ell}_{ij}\rvert^{q}=O_{p}(1),\quad\frac{1}{N}\sum_{i}\lvert\frac{1}{\sqrt{N-1}}\sum_{j\ne i}\partial_{\beta\pi}\tilde{\ell}_{ij}\rvert^{2}=O_{p}(1)
\]

(v) for $s\leq3$, $t\leq5$ and $s+t\leq6$
\begin{align*}
\max_{i}\bar{E}\big[\partial_{\beta^{s}\pi^{t}}\tilde{\ell}_{ij}^{q}\big] & \leq C,\quad\max_{i}\bar{E}\Big[\Big(\frac{1}{\sqrt{N-1}}\sum_{j\ne i}\partial_{\beta^{s}\pi^{t}}\tilde{\ell}_{ij}\Big)^{q}\Big]\leq C
\end{align*}
\end{lem}

\subsection{Bounds for $\bar{\mathcal{H}}^{-1}$ and $\overline{W}_{N}^{-1}$}

Here we provide important results related to $\bar{\mathcal{H}}^{-1}$ and $\overline{W}_{N}^{-1}$. The objective function for the estimator is given by
\[
\mathcal{L}_{N}(\beta,\phi_{N})=\frac{1}{N-1}\sum_{i}\sum_{j\ne i}\ell(Z_{ij},\beta,\alpha_{i},\gamma_{j})-\frac{1}{N}b(v_{N}'\phi_{N})^{2}/2
\]
 As in FW16 and Dzemski 2019, we may write
\[
\bar{\mathcal{H}}=\begin{bmatrix}\bar{\mathcal{H}}_{\alpha\alpha}^{*} & \bar{\mathcal{H}}_{\alpha\gamma}^{*}\\
\bar{\mathcal{H}}_{\gamma\alpha}^{*} & \bar{\mathcal{H}}_{\gamma\gamma}^{*}
\end{bmatrix}+\frac{1}{N}bv_{N}v_{N}'
\]
where $\bar{\mathcal{H}}_{\alpha\alpha}^{*}$ and $\bar{\mathcal{H}}_{\gamma\gamma}^{*}$
are the diagonal matrices with elements
\begin{align*}
\big(\bar{\mathcal{H}}_{\alpha\alpha}^{*}\big)_{ii} & =-\frac{1}{N-1}\sum_{j\ne i}\bar{E}[\partial_{\pi^{2}}\ell_{ij}]\\
\big(\bar{\mathcal{H}}_{\gamma\gamma}^{*}\big)_{ii} & =-\frac{1}{N-1}\sum_{j\ne i}\bar{E}[\partial_{\pi^{2}}\ell_{ji}]
\end{align*}
and $\bar{\mathcal{H}}_{\alpha\gamma}^{*}=(\bar{\mathcal{H}}_{\gamma\alpha}^{*})'$
has off-diagonal entries $\big(\bar{\mathcal{H}}_{\alpha\gamma}^{*}\big)_{ij}=-\bar{E}[\partial_{\pi^{2}}\ell_{ij}]/(N-1)$
and zeroes in diagonal entries. Following Lemma A.1 in D19 and Lemma
D.1 in FW16, we may prove the following approximation result.
\begin{lem}
\label{lem:H_approx}Under Assumptions 1 and 2
\begin{align*}
\lVert\bar{\mathcal{H}}^{-1}-D^{-1}\rVert_{max} & =O_{p}(N^{-1})\\
\lVert\bar{\mathcal{H}}^{-1}\rVert_{q} & =O_{p}(1)
\end{align*}
where $D=diag\big(\bar{\mathcal{H}}_{\alpha\alpha}^{*},\bar{\mathcal{H}}_{\gamma\gamma}^{*}\big)$.
\end{lem}
The proof of this Lemma is identical to the versions in D19 and FW16.
We next bound some deviations between the sample average values and
conditional expectations of the Hessian matrix.
\begin{lem}
\label{lem:H_tilde}Let Assumptions 1 and 2 hold. Then
\begin{align*}
\lVert\mathcal{H}-\bar{\mathcal{H}}\rVert & =O_{p}(N^{-\frac{1}{2}+2\epsilon})\\
\lVert\mathcal{H}_{b}-\bar{\mathcal{H}}_{b}\rVert & =O_{p}(N^{-\frac{3}{2}+6\epsilon})
\end{align*}
\end{lem}
\begin{proof}
The proof of the first result follows the equivalent result in FW16.
We have
\[
\lVert\mathcal{H}-\bar{\mathcal{H}}\rVert\leq\lVert\partial_{\alpha\alpha}\mathcal{L}-\partial_{\alpha\alpha}\bar{\mathcal{L}}\rVert+2\lVert\partial_{\alpha\gamma}\mathcal{L}-\partial_{\alpha\gamma}\bar{\mathcal{L}}\rVert+\lVert\partial_{\gamma\gamma}\mathcal{L}-\partial_{\gamma\gamma}\bar{\mathcal{L}}\rVert
\]
To bound the first term, note that from Lemma \ref{lem:S6}
\begin{align*}
\bar{E}\lVert\partial_{\alpha\alpha}\mathcal{L}-\partial_{\alpha\alpha}\bar{\mathcal{L}}\rVert^{q} & =\bar{E}\Big[\max_{i}\Big(\frac{1}{N-1}\sum_{j\ne i}\partial_{\pi^{2}}\tilde{\ell}_{ij}\Big)^{q}\Big]\\
 & =O_{p}(N^{1-q/2})
\end{align*}
and hence $\lVert\partial_{\alpha\alpha}\mathcal{L}-\partial_{\alpha\alpha}\bar{\mathcal{L}}\rVert=O_{p}(N^{-\frac{1}{2}+\frac{1}{q}})$
and similarly for $\lVert\partial_{\gamma\gamma}\mathcal{L}-\partial_{\gamma\gamma}\bar{\mathcal{L}}\rVert$.
The bound $\lVert\partial_{\alpha\gamma}\mathcal{L}-\partial_{\alpha\gamma}\bar{\mathcal{L}}\rVert=O_{p}(N^{-\frac{1}{2}+\frac{1}{q}})$
follows from Lemma \ref{lem:S6}. This gives $\lVert\mathcal{H}-\bar{\mathcal{H}}\rVert=O_{p}(N^{-\frac{1}{2}+\frac{1}{q}})=O_{p}(N^{-\frac{1}{2}+2\epsilon})$.
For the second result, recall that by definition we have
\[
W\mathcal{H}_{b}=\partial_{\beta\phi\phi'}\mathcal{L}+\sum_{f}(\partial_{\phi\phi'\phi_{f}}\mathcal{L})\big[\mathcal{H}^{-1}(\partial_{\beta\phi}\mathcal{L})\big]_{f}
\]
The bound $\lVert\partial_{\beta\phi\phi'}\tilde{\mathcal{L}}\rVert=O_{p}(N^{-\frac{1}{2}+2\epsilon})$
follows identically to the first result. For the second part
\begin{align*}
\lVert & \sum_{f}(\partial_{\phi\phi'\phi_{f}}\mathcal{L})\big[\mathcal{H}^{-1}(\partial_{\beta\phi}\mathcal{L})\big]_{f}-\sum_{f}(\partial_{\phi\phi'\phi_{f}}\bar{\mathcal{L}})\big[\bar{\mathcal{H}}^{-1}(\partial_{\beta\phi}\mathcal{\bar{L}})\big]_{f}\rVert\\
\leq & \rVert\sum_{f}(\partial_{\phi\phi'\phi_{f}}\tilde{\mathcal{L}})\big[\mathcal{H}^{-1}(\partial_{\beta\phi}\mathcal{L})\big]_{f}\rVert+\rVert\sum_{f}(\partial_{\phi\phi'\phi_{f}}\bar{\mathcal{L}})\big[(\mathcal{H}^{-1}-\bar{\mathcal{H}}^{-1})(\partial_{\beta\phi}\mathcal{L})\big]_{f}\rVert\\
 & +\rVert\sum_{f}(\partial_{\phi\phi'\phi_{f}}\bar{\mathcal{L}})\big[\mathcal{\bar{H}}^{-1}(\partial_{\beta\phi}\tilde{\mathcal{L}})\big]_{f}\rVert
\end{align*}
For the first term, we can write 
\begin{align*}
\rVert\sum_{f}(\partial_{\phi\phi'\phi_{f}}\tilde{\mathcal{L}})\big[\mathcal{H}^{-1}(\partial_{\beta\phi}\mathcal{L})\big]_{f}\rVert & \leq\rVert\sum_{f}(\partial_{\phi\phi'\phi_{f}}\tilde{\mathcal{L}})\big[\bar{\mathcal{H}}^{-1}(\partial_{\beta\phi}\mathcal{L})\big]_{f}\rVert+\rVert\partial_{\phi\phi\phi}\tilde{\mathcal{L}}\rVert\lVert\mathcal{H}^{-1}-\bar{\mathcal{H}}^{-1}\rVert\lVert\partial_{\beta\phi}\mathcal{L}\rVert\\
 & =\rVert\sum_{f}(\partial_{\phi\phi'\phi_{f}}\tilde{\mathcal{L}})\big[\bar{\mathcal{H}}^{-1}(\partial_{\beta\phi}\mathcal{L})\big]_{f}\rVert+O_{p}(N^{-\frac{1}{2}+4\epsilon})
\end{align*}
where the bound on $\rVert\partial_{\phi\phi\phi}\tilde{\mathcal{L}}\rVert$
follows from the same reasoning as the bound on $\rVert\partial_{\phi\phi}\tilde{\mathcal{L}}\rVert$,
applying the result on tensor norms in Lemma S.5 of FW16. Decompose
into four components based on $\phi=(\alpha',\gamma')'$ and consider
the first component, and let $\Gamma_{isjt}=\bar{\mathcal{H}}_{\alpha\alpha,i,s}^{-1}+\bar{\mathcal{H}}_{\alpha\gamma,i,t}^{-1}+\bar{\mathcal{H}}_{\alpha\alpha,j,s}^{-1}+\bar{\mathcal{H}}_{\alpha\gamma,j,t}^{-1}$
\begin{align*}
\rVert\sum_{f}(\partial_{\alpha\alpha'\phi_{f}}\tilde{\mathcal{L}})\big[\bar{\mathcal{H}}^{-1}(\partial_{\beta\phi}\mathcal{L})\big]_{f}\rVert & =\rVert\sum_{f}(\partial_{\alpha\alpha'\phi_{f}}\tilde{\mathcal{L}})\big[\bar{\mathcal{H}}^{-1}(\partial_{\beta\phi}\mathcal{L})\big]_{f}\rVert\\
 & =\rVert\frac{1}{N-1}\sum_{i}\sum_{s}\sum_{t\ne s}(\partial_{\alpha\alpha'\alpha_{i}}\tilde{\mathcal{L}})\big[\bar{\mathcal{H}}_{\alpha\alpha,i,s}^{-1}+\bar{\mathcal{H}}_{\alpha\gamma,i,t}^{-1}\big](\partial_{\beta\pi}\ell_{st})\\
 & +\frac{1}{N-1}\sum_{i}\sum_{s}\sum_{t\ne s}(\partial_{\alpha\alpha'\gamma_{i}}\tilde{\mathcal{L}})\big[\bar{\mathcal{H}}_{\gamma\alpha,i,s}^{-1}+\bar{\mathcal{H}}_{\gamma\gamma,i,t}^{-1}\big](\partial_{\beta\pi}\ell_{st})\rVert\\
 & =\max_{i}\lvert\frac{1}{(N-1)^{2}}\sum_{j\ne i}\sum_{s}\sum_{t\ne s}(\partial_{\pi^{3}}\tilde{\ell}_{ij})\Gamma_{isjt}(\partial_{\beta\pi}\ell_{st})\rvert\\
 & \leq\max_{i}\lvert\frac{1}{(N-1)^{2}}\sum_{j\ne i}\sum_{s}\sum_{t\ne s}(\partial_{\pi^{3}}\tilde{\ell}_{ij})\Gamma_{isjt}(\partial_{\beta\pi}\ell_{st})\rvert\\
\end{align*}
Since the max over $(i,s,j,t)$ of $\Gamma_{isjt}=O_{p}(1)$ when
$s=i$, $t=i$, $j=s$, or $j=t$ and is $O_{p}(N^{-1})$ otherwise,
the above sum is $O_{p}(N^{-\frac{1}{2}+2\epsilon})$ using the bounds
in Lemma \ref{lem:S6}. The same bound can be shown for remaining
components of $\rVert\sum_{f}(\partial_{\phi\phi'\phi_{f}}\tilde{\mathcal{L}})\big[\bar{\mathcal{H}}^{-1}(\partial_{\beta\phi}\mathcal{L})\big]_{f}\rVert$.
Next, from Lemmas \ref{lem:tensor_bounds} and \ref{lem:H1_approx}
\begin{align*}
\rVert\sum_{f}(\partial_{\phi\phi'\phi_{f}}\bar{\mathcal{L}})\big[(\mathcal{H}^{-1}-\bar{\mathcal{H}}^{-1})(\partial_{\beta\phi}\mathcal{L})\big]_{f}\rVert & \leq\rVert\sum_{f}(\partial_{\phi\phi'\phi_{f}}\bar{\mathcal{L}})\big[\bar{\mathcal{H}}^{-1}\tilde{\mathcal{H}}\bar{\mathcal{H}}^{-1}(\partial_{\beta\phi}\mathcal{L})\big]_{f}\rVert\\
 & +\rVert\sum_{f}(\partial_{\phi\phi'\phi_{f}}\bar{\mathcal{L}})\big[(\mathcal{H}^{-1}-\bar{\mathcal{H}}^{-1}-\bar{\mathcal{H}}^{-1}\tilde{\mathcal{H}}\bar{\mathcal{H}}^{-1})(\partial_{\beta\phi}\mathcal{L})\big]_{f}\rVert\\
 & \leq O_{p}(N^{-\frac{1}{2}+4\epsilon})+\rVert\partial_{\phi\phi\phi}\bar{\mathcal{L}}\rVert\lVert\mathcal{H}^{-1}-\bar{\mathcal{H}}^{-1}-\bar{\mathcal{H}}^{-1}\tilde{\mathcal{H}}\bar{\mathcal{H}}^{-1}\rVert\lVert\partial_{\beta\phi}\mathcal{L}\rVert\\
 & =O_{p}(N^{-\frac{1}{2}+6\epsilon})
\end{align*}

and $\rVert\sum_{f}(\partial_{\phi\phi'\phi_{f}}\bar{\mathcal{L}})\big[\mathcal{\bar{H}}^{-1}(\partial_{\beta\phi}\tilde{\mathcal{L}})\big]_{f}\rVert=O_{p}(N^{-1/2+2\epsilon})$
as in Lemma \ref{lem:tensor_bounds}. Since $\lVert W\rVert=O_{p}(N^{-1})$
and $\lVert W_{N}-\overline{W}_{N}\rVert=O_{p}(N^{-1/2+2\epsilon})$, this
gives the result.
\end{proof}
In the proofs below we will frequently replace $\mathcal{H}^{-1}$
and $W^{-1}$ by the following approximations.
\begin{lem}
\label{lem:H1_approx}Let Assumptions 1 and 2 hold. Then, for $k=0,1,2,3$
\begin{align*}
\lVert\mathcal{H}^{-1}-\sum_{j=0}^{k}(-1)^{j}(\bar{\mathcal{H}}^{-1}\tilde{\mathcal{H}})^{j}\bar{\mathcal{H}}^{-1}\rVert & =O_{p}(N^{-\frac{k+1}{2}+2(k+1)\epsilon})\\
W_N - \overline{W}_{N} &=O_{p}(N^{-\frac{1}{2}+2\epsilon}) \\
\lVert W_{N}^{-1}-\sum_{j=0}^{k}(-1)^{j}(\overline{W}_N^{-1}\tilde{W}_N)^{j}\overline{W}_N^{-1}\rVert & =O_{p}(N^{-\frac{k+1}{2}+2(k+1)\epsilon})
\end{align*}
\end{lem}
\begin{proof}
The approximations are based on the Neumann series expansion. Since
$\lVert\tilde{\mathcal{H}}\rVert=O_{p}(N^{-\frac{1}{2}+2\epsilon})=o_{p}(1)$,
the series converges with high probability for sufficiently large
$N$, and so we may write
\begin{align*}
\mathcal{H}^{-1} & =\sum_{j=0}^{\infty}(-1)^{j}(\bar{\mathcal{H}}^{-1}\tilde{\mathcal{H}})^{j}\bar{\mathcal{H}}^{-1}\\
 & =\sum_{j=0}^{k}(-1)^{j}(\bar{\mathcal{H}}^{-1}\tilde{\mathcal{H}})^{j}\bar{\mathcal{H}}^{-1}+\sum_{j=k+1}^{\infty}(-1)^{j}(\bar{\mathcal{H}}^{-1}\tilde{\mathcal{H}})^{j}\bar{\mathcal{H}}^{-1}\\
 & =\mathbf{H}_{k}+\zeta_{k+1}
\end{align*}
where $\mathbf{H}_{k}$ is the approximation for $\mathcal{H}^{-1}$
up to $k+1$ terms, and the approximation error satisfies
\begin{align*}
\lVert\zeta_{k+1}\rVert & =\lVert\sum_{j=k+1}^{\infty}(-1)^{j}(\bar{\mathcal{H}}^{-1}\tilde{\mathcal{H}})^{j}\bar{\mathcal{H}}^{-1}\rVert\\
 & \leq\lVert(\bar{\mathcal{H}}^{-1}\tilde{\mathcal{H}})^{k+1}\rVert\lVert\sum_{j=0}^{\infty}(-1)^{j}(\bar{\mathcal{H}}^{-1}\tilde{\mathcal{H}})^{j}\bar{\mathcal{H}}^{-1}\rVert\\
 & \leq\lVert\tilde{\mathcal{H}}\rVert^{k+1}\lVert\mathcal{\bar{H}}^{-1}\rVert^{k+1}\lVert\mathcal{H}^{-1}\rVert\\
 & =O_{p}(N^{-\frac{k+1}{2}-2(k+1)\epsilon})
\end{align*}

For $W$, define $\overline{W}=\partial_{\beta\beta}\bar{\mathcal{L}}+\frac{1}{N}(\partial_{\beta\phi'}\bar{\mathcal{L}})\bar{\mathcal{H}}^{-1}(\partial_{\beta\phi}\bar{\mathcal{L}})$
and
\[
\tilde{W}_{N}=\frac{1}{N}\partial_{\beta\beta}\mathcal{\tilde{L}}+\frac{1}{N}\big((\partial_{\beta\phi'}\mathcal{L})\mathcal{H}^{-1}(\partial_{\beta\phi}\mathcal{L})-(\partial_{\beta\phi'}\bar{\mathcal{L}})\bar{\mathcal{H}}^{-1}(\partial_{\beta\phi}\bar{\mathcal{L}})\big)
\]
The first term in the approximation error $\frac{1}{N}\partial_{\beta\beta}\mathcal{\tilde{L}}=O_{p}(N^{-1})$
by Lemma \ref{lem:S6}. For the remaining term, we can decompose it
as
\begin{align*}
\frac{1}{N}(\partial_{\beta\phi'}\mathcal{L}) & \mathcal{H}^{-1}(\partial_{\beta\phi}\mathcal{L})-\frac{1}{N}(\partial_{\beta\phi'}\bar{\mathcal{L}})\bar{\mathcal{H}}^{-1}(\partial_{\beta\phi}\bar{\mathcal{L}})\\
 & =\frac{1}{N}(\partial_{\beta\phi'}\tilde{\mathcal{L}})\bar{\mathcal{H}}^{-1}(\partial_{\beta\phi}\bar{\mathcal{L}})+\frac{1}{N}(\partial_{\beta\phi'}\bar{\mathcal{L}})\bar{\mathcal{H}}^{-1}(\partial_{\beta\phi}\tilde{\mathcal{L}})\\
 & +\frac{1}{N}(\partial_{\beta\phi'}\tilde{\mathcal{L}})\bar{\mathcal{H}}^{-1}(\partial_{\beta\phi}\tilde{\mathcal{L}})+\frac{1}{N}(\partial_{\beta\phi'}\mathcal{L})\big(\mathcal{H}^{-1}-\bar{\mathcal{H}}^{-1}\big)(\partial_{\beta\phi}\mathcal{L})
\end{align*}
From Assumption B.1 of FW16, we have $\frac{1}{N}\lVert(\partial_{\beta\phi'}\tilde{\mathcal{L}})\bar{\mathcal{H}}^{-1}(\partial_{\beta\phi}\bar{\mathcal{L}})\rVert\leq O_{p}(N^{-1/2})$
and also that $\frac{1}{N}\lVert(\partial_{\beta\phi'}\tilde{\mathcal{L}})\bar{\mathcal{H}}^{-1}(\partial_{\beta\phi}\tilde{\mathcal{L}})\rVert\leq O_{p}(N^{-1})$.
Also
\begin{align*}
\frac{1}{N}\lVert(\partial_{\beta\phi'}\mathcal{L})\big(\mathcal{H}^{-1}-\bar{\mathcal{H}}^{-1}\big)(\partial_{\beta\phi}\mathcal{L})\rVert & \leq\frac{1}{N}\lVert\partial_{\beta\phi'}\mathcal{L}\rVert^{2}\lVert\mathcal{H}^{-1}-\bar{\mathcal{H}}^{-1}\rVert\\
 & =O_{p}(N^{-\frac{1}{2}+2\epsilon})
\end{align*}
So we may write $\tilde{W}_{N}=O_{p}(N^{-\frac{1}{2}+2\epsilon})$,
and noting that $\overline{W}_{N}>0$ by Assumption 2, the approximation follows as for $\mathcal{H}^{-1}$.
\end{proof}
\begin{lem}
\label{lem:W_Op1}Let
\[
W_{N}=\frac{1}{N}W=\frac{1}{N}\partial_{\beta\beta}\mathcal{L}+\frac{1}{N}(\partial_{\beta\phi'}\mathcal{L})\mathcal{H}^{-1}(\partial_{\beta\phi}\mathcal{L})
\]
then, under Assumptions 1 and 2
\[
\lVert W_{N}\rVert=O_{p}(1)
\]
\end{lem}
\begin{proof}
Using the result in Lemma \ref{lem:H_tilde}, we have
\[
W_{N}=\frac{1}{N}W=\frac{1}{N}\partial_{\beta\beta}\mathcal{L}+\frac{1}{N}(\partial_{\beta\phi'}\mathcal{L})\bar{\mathcal{H}}^{-1}(\partial_{\beta\phi}\mathcal{L})
\]
The first term is $O_{p}(1)$, while the second term can be decomposed
into four parts
\begin{align*}
(\partial_{\beta\phi'}\mathcal{L})\mathcal{H}^{-1}(\partial_{\beta\phi}\mathcal{L}) & =(\partial_{\beta\alpha'}\mathcal{L})\mathcal{H}_{\alpha\alpha}^{-1}(\partial_{\beta\alpha}\mathcal{L})+(\partial_{\beta\alpha'}\mathcal{L})\mathcal{H}_{\alpha\gamma}^{-1}(\partial_{\beta\gamma}\mathcal{L})\\
 & +(\partial_{\beta\gamma'}\mathcal{L})\mathcal{H}_{\gamma\alpha}^{-1}(\partial_{\beta\alpha}\mathcal{L})+(\partial_{\beta\gamma'}\mathcal{L})\mathcal{H}_{\gamma\gamma}^{-1}(\partial_{\beta\gamma}\mathcal{L})
\end{align*}
The first part is
\begin{align*}
\frac{1}{N}(\partial_{\beta\alpha'}\mathcal{L})\mathcal{H}_{\alpha\alpha}^{-1}(\partial_{\beta\alpha}\mathcal{L}) & =\frac{1}{N(N-1)^{2}}\sum_{i,s}(\mathcal{H}_{\alpha\alpha}^{-1})_{is}\sum_{j\ne i}\sum_{t\ne s}(\partial_{\beta\pi}\ell_{ij})(\partial_{\beta\pi}\ell_{st})\\
 & =\frac{1}{N(N-1)^{2}}\sum_{i}(\mathcal{H}_{\alpha\alpha}^{-1})_{ii}\sum_{j\ne i}\sum_{t\ne i}(\partial_{\beta\pi}\ell_{ij})(\partial_{\beta\pi}\ell_{it})\\
 & +\frac{1}{N^{2}(N-1)^{2}}\sum_{i}\sum_{s\ne i}N(\mathcal{H}_{\alpha\alpha}^{-1})_{is}\sum_{j\ne i}\sum_{t\ne s}(\partial_{\beta\pi}\ell_{ij})(\partial_{\beta\pi}\ell_{st})\\
 & =O_{p}(1)
\end{align*}
where we use Lemma \ref{lem:H_approx}. Similar derivations for the
other parts give the result.
\end{proof}

\subsection{Products of matrices}
The next result demonstrates that the properties of $\mathcal{\bar{H}}^{-1}$ (a dominant diagonal and smaller off-diagonal terms) transfer to certain products of this matrix.
\begin{lem}
\label{lem:HAH_bound}Let Assumptions 1 and 2 hold, and define $A=\mathcal{\bar{H}}^{-1}M\mathcal{\bar{H}}^{-1}$,
where $M$ is a matrix of the form
\[
M=\begin{bmatrix}M_{\alpha\alpha} & M_{\alpha\gamma}\\
M_{\gamma\alpha} & M_{\gamma\gamma}
\end{bmatrix}
\]
with each block satisfying $\max_{i}\sum_{j}\vert M_{\alpha\alpha,ij}\vert=O_{p}(1)$.
Then $A_{ii}=O_{p}(1)$, and $\max_{i,j\ne i}\vert A_{ij}\vert=O_{p}(N^{-1})$.
\end{lem}
\begin{proof}
For $i\leq N$, we can write
\begin{align*}
\big[\mathcal{\bar{H}}^{-1}M\mathcal{\bar{H}}^{-1}\big]_{ij} & =(\bar{\mathcal{H}}_{\alpha\alpha}^{-1})_{i,\cdot}M_{\alpha\alpha'}(\bar{\mathcal{H}}_{\alpha\alpha}^{-1})_{\cdot,j}+(\bar{\mathcal{H}}_{\alpha\gamma}^{-1})_{i,\cdot}M_{\gamma\alpha'}(\bar{\mathcal{H}}_{\alpha\alpha}^{-1})_{\cdot,j}\\
 & \quad+(\bar{\mathcal{H}}_{\alpha\alpha}^{-1})_{i,\cdot}M_{\alpha\gamma}(\bar{\mathcal{H}}_{\gamma\alpha}^{-1})_{\cdot,j}+(\bar{\mathcal{H}}_{\alpha\gamma}^{-1})_{i,\cdot}M_{\gamma\gamma'}(\bar{\mathcal{H}}_{\gamma\alpha}^{-1})_{\cdot,j}
\end{align*}
For the first component, we have that
\begin{align*}
\max_{i,j\ne i}\lvert(\bar{\mathcal{H}}_{\alpha\alpha}^{-1})_{i,\cdot}M_{\alpha\alpha'}(\bar{\mathcal{H}}_{\alpha\alpha}^{-1})_{\cdot,j}\rvert & =\max_{i,j\ne i}\lvert\sum_{s,t}(\bar{\mathcal{H}}_{\alpha\alpha}^{-1})_{i,s}(\bar{\mathcal{H}}_{\alpha\alpha}^{-1})_{t,j}M_{\alpha\alpha,st}\rvert\\
 & \leq\max_{i,s\ne i}\vert(\bar{\mathcal{H}}_{\alpha\alpha}^{-1})_{i,s}\vert\max_{j,t\ne j}\vert(\bar{\mathcal{H}}_{\alpha\alpha}^{-1})_{t,j}\vert\max_{i,j}\sum_{s\ne i}\sum_{t\ne j}\vert M_{\alpha\alpha,st}\rvert\\
 & +\max_{i,s\ne i}\vert(\bar{\mathcal{H}}_{\alpha\alpha}^{-1})_{i,s}\vert\max_{j}\vert(\bar{\mathcal{H}}_{\alpha\alpha}^{-1})_{j,j}\vert\max_{i,j}\sum_{s\ne i}\vert M_{\alpha\alpha,sj}\rvert\\
 & +\max_{i}\vert(\bar{\mathcal{H}}_{\alpha\alpha}^{-1})_{i,i}\vert\max_{t,t\ne j}\vert(\bar{\mathcal{H}}_{\alpha\alpha}^{-1})_{t,j}\vert\max_{i,j}\sum_{t\ne j}\vert M_{\alpha\alpha,it}\rvert\\
 & +\max_{i}\vert(\bar{\mathcal{H}}_{\alpha\alpha}^{-1})_{i,i}\vert^{2}\max_{i,j}\vert M_{\alpha\alpha,ij}\vert\\
 & =1\{i=j\}\times O_{p}(1)+O_{p}(N^{-1})
\end{align*}
where the final line follows from the condition $\max_{i}\sum_{j}\vert M_{\alpha\alpha,ij}\vert=O_{p}(1)$,
$\max_{i}(\bar{\mathcal{H}}_{\alpha\alpha}^{-1})_{ii}=O_{p}(1)$,
and $\max_{i,j\ne i}(\bar{\mathcal{H}}_{\alpha\alpha}^{-1})_{ij}=O_{p}(N^{-1})$.
The next component is
\begin{align*}
\max_{i,j\ne i}\lvert(\bar{\mathcal{H}}_{\alpha\alpha}^{-1})_{i,\cdot}M_{\alpha\gamma'}(\bar{\mathcal{H}}_{\gamma\alpha}^{-1})_{\cdot,j}\rvert & =\max_{i,j\ne i}\lvert\sum_{s,t}(\bar{\mathcal{H}}_{\alpha\alpha}^{-1})_{i,s}(\bar{\mathcal{H}}_{\gamma\alpha}^{-1})_{t,j}M_{\alpha\gamma,st}\rvert\\
 & \leq\max_{i,s\ne i}\vert(\bar{\mathcal{H}}_{\alpha\alpha}^{-1})_{i,s}\vert\max_{j,t}\vert(\bar{\mathcal{H}}_{\gamma\alpha}^{-1})_{t,j}\vert\max_{i,j}\sum_{s\ne i}\sum_{t}\vert M_{\alpha\gamma,st}\rvert\\
 & +\max_{i}\vert(\bar{\mathcal{H}}_{\alpha\alpha}^{-1})_{i,i}\vert\max_{j,t}\vert(\bar{\mathcal{H}}_{\gamma\alpha}^{-1})_{t,j}\vert\max_{i,j}\sum_{t\ne j}\vert M_{\alpha\gamma,it}\rvert\\
 & =O_{p}(N^{-1})
\end{align*}

where we use $\max_{i,j}(\bar{\mathcal{H}}_{\gamma\alpha}^{-1})_{ij}=O_{p}(N^{-1}).$
Similar logic applies to the remaining components of $A$.
\end{proof}
\begin{lem}
\label{lem:HEH_bound}Let Assumptions 1 and 2 hold, and define as
either $M=\mathcal{\bar{H}}^{-1}\bar{\mathcal{E}}\mathcal{\bar{H}}^{-1}$,
or $M=W\mathcal{\bar{H}}^{-1}\bar{\mathcal{H}}_{b}\mathcal{\bar{H}}^{-1}$.
Then $M_{ii}=O_{p}(1)$, and $\max_{i,j\ne i}\vert M_{ij}\vert=O_{p}(N^{-1})$.
\end{lem}
\begin{proof}
For the first result, it remains only to show that $\bar{\mathcal{E}}$
satisfies the conditions of Lemma \ref{lem:HAH_bound}. This follows
directly from the form of $\partial_{\beta\phi\phi'}\bar{\mathcal{L}}$
and Assumption 2, which implies $\vert\partial_{\beta\pi^{2}}\bar{\ell}_{ij}\vert\leq C$
for all $i$ and $j$. For $W\bar{\mathcal{H}}_{b}$, we have shown
the result holds for $\partial_{\beta\phi\phi'}\bar{\mathcal{L}}=\bar{\mathcal{E}}$
and so it remains to demonstrate the same for the term $\sum_{f}(\partial_{\phi\phi'\phi_{f}}\bar{\mathcal{L}}_{ij})\big[\bar{\mathcal{H}}^{-1}(\partial_{\beta\phi}\bar{\mathcal{L}})\big]_{f}$.
We have
\begin{align*}
\max_{i}&\sum_{j}\vert\sum_{f}(\partial_{\alpha\alpha'\phi_{f}}\bar{\mathcal{L}}_{ij})\big[\bar{\mathcal{H}}^{-1}(\partial_{\beta\phi}\bar{\mathcal{L}})\big]_{f}\vert \\
& \leq\max_{i}\frac{1}{N-1}\vert\sum_{j\ne i}(\partial_{\pi^{3}}\bar{\ell}_{ij})\big[\bar{\mathcal{H}}^{-1}(\partial_{\beta\phi}\bar{\mathcal{L}})\big]_{i}\vert\\
 & +\max_{i}\frac{1}{N-1}\vert\sum_{j\ne i}(\partial_{\pi^{3}}\bar{\ell}_{ij})\big[\bar{\mathcal{H}}^{-1}(\partial_{\beta\phi}\bar{\mathcal{L}})\big]_{j}\vert\\
 & =\max_{i}\frac{1}{(N-1)^{2}}\vert\sum_{s}\sum_{t\ne s}\sum_{j\ne i}(\partial_{\pi^{3}}\bar{\ell}_{ij})\big([\bar{\mathcal{H}}_{\alpha\alpha,is}^{-1}](\partial_{\beta\pi}\bar{\ell}_{st})+[\bar{\mathcal{H}}_{\alpha\gamma,is}^{-1}](\partial_{\beta\pi}\bar{\ell}_{ts})\big)\vert\\
 & +\max_{i}\frac{1}{(N-1)^{2}}\vert\sum_{s}\sum_{t\ne s}\sum_{j\ne i}(\partial_{\pi^{3}}\bar{\ell}_{ij})\big([\bar{\mathcal{H}}_{\gamma\alpha,js}^{-1}](\partial_{\beta\pi}\bar{\ell}_{st})+[\bar{\mathcal{H}}_{\gamma\gamma,js}^{-1}](\partial_{\beta\pi}\bar{\ell}_{ts})\big)\vert\\
 & =\max_{i}\frac{1}{(N-1)^{2}}\vert\sum_{s}\sum_{t\ne s}\sum_{j\ne i}(\partial_{\pi^{3}}\bar{\ell}_{ij})\Gamma_{isjt}(\partial_{\beta\pi}\bar{\ell}_{st})\vert
\end{align*}
Then, using the fact that $\Gamma_{isjt}=O_{p}(1)$ only when $i=s$,
$i=t$, $j=s$ or $j=t$ and $\vert\partial_{\beta\pi}\bar{\ell}_{st}\vert\leq C$
the term is clearly $O_{p}(1)$. The same steps can be used for the
$\partial_{\alpha\gamma'\phi_{f}}\bar{\mathcal{L}}$, $\partial_{\gamma\alpha'\phi_{f}}\bar{\mathcal{L}}$,
and $\partial_{\gamma\gamma'\phi_{f}}\bar{\mathcal{L}}$ components,
and hence $W\bar{\mathcal{H}}_{b}$ satisfies the required condition.
\end{proof}

\subsection{Further terms}
The remaing parts of this section provide bounds on a series of terms that appear in the expansion. These bounds are used in later sections to bound the expansion terrms in Section \ref{SA:expansion}.
\begin{lem}
\label{lem:element1}Let Assumptions 1 and 2 hold. Then, for $s+t \leq 5$\\ \\
(i) $\lVert(\partial_{\beta^{s}\phi'}\mathcal{L})\mathcal{H}^{-1}\mathcal{S}\rVert  =O_{p}(1)$ \\\
(ii) $ \lVert(\partial_{\beta^{s}\phi'}\mathcal{L})\mathcal{H}^{-1}\mathcal{E}\mathcal{H}^{-1}\mathcal{S}\rVert =O_{p}(1)$\\
(iii) $ \lVert(\partial_{\beta^{s}\phi'}\mathcal{L})\mathcal{H}^{-1}(\partial_{\beta^{t}\phi}\mathcal{L})\rVert =O_{p}(N)$\\
(iv) $\lVert(\partial_{\beta^{s}\phi'}\mathcal{L})\mathcal{H}^{-1}\mathcal{E}\mathcal{H}^{-1}(\partial_{\beta^{t}\phi}\mathcal{L})\rVert =O_{p}(N)$
\end{lem}

\begin{proof}
Write
\begin{align*}
\lVert\mathcal{S}'\mathcal{H}^{-1}(\partial_{\beta^{s}\phi}\mathcal{L})\rVert & \leq\lVert\mathcal{S}'\mathcal{\bar{H}}^{-1}(\partial_{\beta^{s}\phi}\mathcal{L})\rVert+\lVert\mathcal{S}'(\mathcal{H}^{-1}-\mathcal{\bar{H}}^{-1})(\partial_{\beta^{s}\phi}\mathcal{L})\rVert\\
 & \leq\lVert\mathcal{S}'\mathcal{\bar{H}}^{-1}(\partial_{\beta^{s}\phi}\mathcal{L})\rVert+o_{p}(1)
\end{align*}
Decomposing this term gives
\begin{align*}
\lVert\mathcal{S}'\bar{\mathcal{H}}^{-1}(\partial_{\beta^{s}\phi}\mathcal{L})\rVert & =\lVert(\partial_{\alpha'}\mathcal{L})\bar{\mathcal{H}}_{\alpha\alpha}^{-1}(\partial_{\beta^{s}\alpha}\mathcal{L})\rVert+\lVert(\partial_{\alpha}\mathcal{L})\bar{\mathcal{H}}_{\alpha\gamma}^{-1}(\partial_{\beta^{s}\gamma}\mathcal{L})\rVert\\
 & \quad\lVert(\partial_{\gamma'}\mathcal{L})\bar{\mathcal{H}}_{\gamma\alpha}^{-1}(\partial_{\beta^{s}\alpha}\mathcal{L})\rVert+\lVert(\partial_{\gamma'}\mathcal{L})\bar{\mathcal{H}}_{\gamma\gamma}^{-1}(\partial_{\beta^{s}\gamma}\mathcal{L})\rVert
\end{align*}
the first term of which has second moment
\begin{align*}
&\bar{E}\Big[\lVert(\partial_{\alpha'}\mathcal{L})\bar{\mathcal{H}}_{\alpha\alpha}^{-1}(\partial_{\beta^{s}\alpha}\mathcal{L})\rVert^{2}\Big] \\
& =\frac{1}{(N-1)^{4}}\sum_{i,i'}\sum_{j,j'}\sum_{s\ne i}\sum_{s'\ne i'}\sum_{t\ne j}\sum_{t'\ne j'}(\bar{\mathcal{H}}_{\alpha\alpha}^{-1})_{st}(\bar{\mathcal{H}}_{\alpha\alpha}^{-1})_{s't'}\bar{E}\big[(\partial_{\pi}\ell_{is})(\partial_{\beta^{s}\pi}\ell_{jt})(\partial_{\pi}\ell_{i's'})(\partial_{\beta^{s}\pi}\ell_{j't'})\big]\\
 & =\frac{1}{(N-1)^{4}}\sum_{i,i'}\sum_{j,j'}\sum_{s\ne\{i,j\}}\sum_{s'\ne\{i',j'\}}(\bar{\mathcal{H}}_{\alpha\alpha}^{-1})_{ss}(\bar{\mathcal{H}}_{\alpha\alpha}^{-1})_{s's'}\bar{E}\big[(\partial_{\pi}\ell_{is})(\partial_{\beta^{s}\pi}\ell_{js})(\partial_{\pi}\ell_{i's'})(\partial_{\beta^{s}\pi}\ell_{j's'})\big]\\
 & +\frac{2}{(N-1)^{4}}\sum_{i,i'}\sum_{j,j'}\sum_{s\ne\{i,t\}}\sum_{s'\ne\{i',j'\}}\sum_{t\ne j}(\bar{\mathcal{H}}_{\alpha\alpha}^{-1})_{st}(\bar{\mathcal{H}}_{\alpha\alpha}^{-1})_{s's'}\bar{E}\big[(\partial_{\pi}\ell_{is})(\partial_{\beta^{s}\pi}\ell_{jt})(\partial_{\pi}\ell_{i's'})(\partial_{\beta^{s}\pi}\ell_{j's'})\big]\\
 & +\frac{2}{(N-1)^{4}}\sum_{i,i'}\sum_{j,j'}\sum_{s\ne\{i,t\}}\sum_{s'\ne\{i',t'\}}\sum_{t\ne j}\sum_{t'\ne j'}(\bar{\mathcal{H}}_{\alpha\alpha}^{-1})_{st}(\bar{\mathcal{H}}_{\alpha\alpha}^{-1})_{s't'}\bar{E}\big[(\partial_{\pi}\ell_{is})(\partial_{\beta^{s}\pi}\ell_{jt})(\partial_{\pi}\ell_{i's'})(\partial_{\beta^{s}\pi}\ell_{s't'})\big]
\end{align*}
Consider the first term above. Since $\bar{E}[\partial_{\pi}\ell_{is}]=0$,
we must have either: $(i',s')\in\{(i,s),(s,i)\}$, or $(i,s)\in\{(j,s),(s,j)\}$
and $(i',s')\in\{(j',s'),(s',j')\}$. In either case, there are at
most four unique subscripts in the summation, so that the sum is $O_{p}(1)$.
A similar argument applied to the remaining terms, combined with the
fact that $(\bar{\mathcal{H}}_{\alpha\alpha}^{-1})_{ss}=O_{p}(1)$
while $(\bar{\mathcal{H}}_{\alpha\alpha}^{-1})_{st}=O_{p}(N^{-1})$
for $s\ne t$, gives $O_{p}(1)$ for the whole term.

The second result is identical to the first, using Lemma \ref{lem:HEH_bound}
and $\lVert\tilde{\mathcal{E}}\rVert=o_{p}(1)$. For the third result,
we have $\mathcal{F}^{s,t}=(\partial_{\beta^{s}\phi'}\mathcal{L})\mathcal{H}^{-1}(\partial_{\beta^{t}\phi}\mathcal{L})$.
For $s=t=1$ we have
\begin{align*}
\lVert(\partial_{\beta^{s}\phi'}\mathcal{L})\mathcal{H}^{-1}(\partial_{\beta^{t}\phi}\mathcal{L})\rVert & \leq\lVert(\partial_{\beta^{s}\phi'}\mathcal{L})\bar{\mathcal{H}}^{-1}(\partial_{\beta^{t}\phi}\mathcal{L})\rVert+\lVert(\partial_{\beta^{s}\phi'}\mathcal{L})\big(\mathcal{H}^{-1}-\bar{\mathcal{H}}^{-1}\big)(\partial_{\beta^{t}\phi}\mathcal{L})\rVert\\
 & \leq\lVert(\partial_{\beta^{s}\phi'}\mathcal{L})\bar{\mathcal{H}}^{-1}(\partial_{\beta^{t}\phi}\mathcal{L})\rVert+\lVert\partial_{\beta^{s}\phi'}\mathcal{L}\rVert\lVert\mathcal{H}^{-1}-\bar{\mathcal{H}}^{-1}\rVert\lVert\partial_{\beta^{t}\phi}\mathcal{L}\rVert\\
 & \leq\lVert(\partial_{\beta^{s}\phi'}\mathcal{L})\bar{\mathcal{H}}^{-1}(\partial_{\beta^{t}\phi}\mathcal{L})\rVert+O_{p}(N^{-1/2+3/q})
\end{align*}
from lemma \ref{lem:S6} (iii) and lemma \ref{lem:H_tilde}. We can decompose the first term as 
\begin{align*}
\lVert(\partial_{\beta\phi'}\mathcal{L})\bar{\mathcal{H}}^{-1}(\partial_{\beta\phi}\mathcal{L})\rVert & =\lVert(\partial_{\beta\alpha'}\mathcal{L})\bar{\mathcal{H}}_{\alpha\alpha}^{-1}(\partial_{\beta\alpha}\mathcal{L})\rVert+\lVert(\partial_{\beta\alpha}\mathcal{L})\bar{\mathcal{H}}_{\alpha\gamma}^{-1}(\partial_{\beta\gamma}\mathcal{L})\rVert\\
 & \quad\lVert(\partial_{\beta\gamma'}\mathcal{L})\bar{\mathcal{H}}_{\gamma\alpha}^{-1}(\partial_{\beta\alpha}\mathcal{L})\rVert+\lVert(\partial_{\beta\gamma'}\mathcal{L})\bar{\mathcal{H}}_{\gamma\gamma}^{-1}(\partial_{\beta\gamma}\mathcal{L})\rVert
\end{align*}
the first term of which is
\begin{align*}
&\bar{E}\Big[\lVert\frac{1}{N}(\partial_{\beta\alpha'}\mathcal{L})\bar{\mathcal{H}}_{\alpha\alpha}^{-1}(\partial_{\beta\alpha}\mathcal{L})\rVert^{2}\Big] \\
& =\frac{1}{N^{2}(N-1)^{4}}\sum_{i,i'}\sum_{j,j'}\sum_{s\ne i}\sum_{s'\ne i'}\sum_{t\ne j}\sum_{t'\ne j'}(\bar{\mathcal{H}}_{\alpha\alpha}^{-1})_{st}(\bar{\mathcal{H}}_{\alpha\alpha}^{-1})_{s't'}\bar{E}\big[(\partial_{\beta\pi}\ell_{is})(\partial_{\beta\pi}\ell_{jt})(\partial_{\beta\pi}\ell_{i's'})(\partial_{\beta\pi}\ell_{j't'})\big]\\
 & =\frac{1}{N^{2}(N-1)^{4}}\sum_{i,i'}\sum_{j,j'}\sum_{s\ne\{i,j\}}\sum_{s'\ne\{i',j'\}}(\bar{\mathcal{H}}_{\alpha\alpha}^{-1})_{ss}(\bar{\mathcal{H}}_{\alpha\alpha}^{-1})_{s's'}\bar{E}\big[(\partial_{\beta\pi}\ell_{is})(\partial_{\beta\pi}\ell_{js})(\partial_{\beta\pi}\ell_{i's'})(\partial_{\beta\pi}\ell_{j's'})\big]\\
 & +\frac{2}{N^{2}(N-1)^{4}}\sum_{i,i'}\sum_{j,j'}\sum_{s\ne\{i,t\}}\sum_{s'\ne\{i',j'\}}\sum_{t\ne j}(\bar{\mathcal{H}}_{\alpha\alpha}^{-1})_{st}(\bar{\mathcal{H}}_{\alpha\alpha}^{-1})_{s's'}\bar{E}\big[(\partial_{\beta\pi}\ell_{is})(\partial_{\beta\pi}\ell_{jt})(\partial_{\beta\pi}\ell_{i's'})(\partial_{\beta\pi}\ell_{s's'})\big]\\
 & +\frac{2}{N^{2}(N-1)^{4}}\sum_{i,i'}\sum_{j,j'}\sum_{s\ne\{i,t\}}\sum_{s'\ne\{i',t'\}}\sum_{t\ne j}\sum_{t'\ne j'}(\bar{\mathcal{H}}_{\alpha\alpha}^{-1})_{st}(\bar{\mathcal{H}}_{\alpha\alpha}^{-1})_{s't'}\bar{E}\big[(\partial_{\beta\pi}\ell_{is})(\partial_{\beta\pi}\ell_{jt})(\partial_{\beta\pi}\ell_{i's'})(\partial_{\beta\pi}\ell_{s't'})\big]\\
 & =O_{p}(1)
\end{align*}
where the final line follows from the fact that $(\bar{\mathcal{H}}_{\alpha\alpha}^{-1})_{ss}=O_{p}(1)$
while $(\bar{\mathcal{H}}_{\alpha\alpha}^{-1})_{st}=O_{p}(N^{-1})$
for $s\ne t$. The remaining terms are $O_{p}(N)$ by similar reasoning.
The final result can be shown in the same way as the third result
using Lemma \ref{lem:HEH_bound}.
\end{proof}
\begin{lem}
\label{lem:tensor_bounds}Let Assumptions 1 and 2 hold. For $r+s\leq3$
and for $t_{1}=\{0,1\},t_{2}=\{1,2\}$\\ \\
(i) $\lVert\sum_{f}(\partial_{\beta^{r}\phi\phi'\phi_{f}}\mathcal{L})\big[\mathcal{H}^{-1}(\partial_{\beta^{s}\phi}\mathcal{L})\big]_{f}\rVert  =O_{p}(N^{4\epsilon})$\\
(ii) $\lVert\sum_{f,g}(\partial_{\beta^{t_{1}}\phi\phi'\phi_{f}\phi_{g}}\mathcal{L})\big[\mathcal{H}^{-1}(\partial_{\beta\phi}\mathcal{L})\big]_{f}\big[\mathcal{H}^{-1}(\partial_{\beta^{t_{2}}\phi}\mathcal{L})\big]_{g}\rVert =O_{p}(N^{6\epsilon})$\\
(iii) $\lVert\sum_{e,f,g}(\partial_{\phi\phi'\phi_{f}\phi_{g}}\mathcal{L})\big[\mathcal{H}^{-1}(\partial_{\beta\phi}\mathcal{L})\big]_{e}\big[\mathcal{H}^{-1}(\partial_{\beta\phi}\mathcal{L})\big]_{f}\big[\mathcal{H}^{-1}(\partial_{\beta\phi}\mathcal{L})\big]_{g}\rVert  =O_{p}(N^{8\epsilon}$)\\
(iv) $\lVert\sum_{f}(\partial_{\phi\phi'\phi_{f}}\mathcal{L})\big[\mathcal{H}^{-1}\mathcal{S}\big]_{f}\rVert  =O_{p}(N^{-\frac{1}{2}+2\epsilon})$\\
(v) $\lVert\sum_{e,f}(\partial_{\phi\phi'\phi_{e}\phi_{f}}\mathcal{L})\big[\mathcal{H}^{-1}(\partial_{\beta\phi}\mathcal{L})\big]_{f}\big[\mathcal{H}^{-1}\mathcal{S}\big]_{e}\rVert  =O_{p}(N^{-\frac{1}{2}+4\epsilon})$\\
(vi) $\lVert\sum_{f}(\partial_{\beta^{r}\phi\phi'\phi_{f}}\mathcal{L})\big[\mathcal{H}^{-1}\tilde{\mathcal{H}}\mathcal{H}^{-1}(\partial_{\beta^{s}\phi}\mathcal{L})\big]_{f}\rVert  =O_{p}(N^{-\frac{1}{2}+4\epsilon})$\\
(vii) $\lVert\sum_{e,f,g}(\partial_{\phi\phi'\phi_{f}\phi_{g}}\mathcal{L})\big[\mathcal{H}^{-1}(\partial_{\beta\phi}\mathcal{L})\big]_{e}\big[\mathcal{H}^{-1}(\partial_{\beta\phi}\mathcal{L})\big]_{f}\big[\mathcal{H}^{-1}\mathcal{S}\big]_{g}\rVert  =O_{p}(N^{-\frac{1}{2}+6\epsilon})$
\end{lem}

\begin{proof}
The proofs are substantially similar for each of the terms and so we
provide details for the second statement only, with $t=1$. We can
again decompose this term into
\begin{align*}
\lVert & \sum_{f,g}(\partial_{\phi\phi'\phi_{f}\phi_{g}}\mathcal{L})\big[\mathcal{H}^{-1}(\partial_{\beta\phi}\mathcal{L})\big]_{f}\big[\mathcal{H}^{-1}(\partial_{\beta\phi}\mathcal{L})\big]_{g}\rVert\\
 & \leq\lVert\sum_{f,g}(\partial_{\phi\phi'\phi_{f}\phi_{g}}\mathcal{L})\big[\mathcal{\bar{H}}^{-1}(\partial_{\beta\phi}\mathcal{L})\big]_{f}\big[\mathcal{\bar{H}}^{-1}(\partial_{\beta\phi}\mathcal{L})\big]_{g}\rVert\\
 & +\lVert\sum_{f,g}(\partial_{\phi\phi'\phi_{f}\phi_{g}}\mathcal{L})\big[\mathcal{\bar{H}}^{-1}\tilde{\mathcal{H}}\mathcal{\bar{H}}^{-1}(\partial_{\beta\phi}\mathcal{L})\big]_{f}\big[\mathcal{\bar{H}}^{-1}(\partial_{\beta\phi}\mathcal{L})\big]_{g}\rVert\\
 & +\lVert\sum_{f,g}(\partial_{\phi\phi'\phi_{f}\phi_{g}}\mathcal{L})\big[\mathcal{\bar{H}}^{-1}(\partial_{\beta\phi}\mathcal{L})\big]_{f}\big[\mathcal{\bar{H}}^{-1}\tilde{\mathcal{H}}\mathcal{\bar{H}}^{-1}(\partial_{\beta\phi}\mathcal{L})\big]_{g}\rVert\\
 & +O_{p}(N^{6\epsilon})
\end{align*}
with the bound on the remainder following from
\begin{align*}
\lVert & \sum_{f,g}(\partial_{\phi\phi'\phi_{f}\phi_{g}}\mathcal{L})\big[\big(\mathcal{H}^{-1}-\mathcal{\bar{H}}^{-1}+\mathcal{\bar{H}}^{-1}\tilde{\mathcal{H}}\mathcal{\bar{H}}^{-1}\big)(\partial_{\beta\phi}\mathcal{L})\big]_{f}\big[\mathcal{H}^{-1}(\partial_{\beta\phi}\mathcal{L})\big]_{g}\rVert\\
 & \leq\lVert\partial_{\phi^{4}}\mathcal{L}\rVert\cdot\lVert\mathcal{H}^{-1}-\mathcal{\bar{H}}^{-1}+\mathcal{\bar{H}}^{-1}\tilde{\mathcal{H}}\mathcal{\bar{H}}^{-1}\rVert\cdot\lVert\partial_{\beta\phi}\mathcal{L}\rVert^{2}\cdot\lVert\mathcal{H}^{-1}\rVert
\end{align*}
and Lemma \ref{lem:H1_approx}.

Focussing on the first term, we can decompose the matrix $\partial_{\phi\phi'\phi_{f}\phi_{g}}\mathcal{L}$
into four parts as $\partial_{\alpha\alpha'\phi_{f}\phi_{g}}\mathcal{L}$,
$\partial_{\alpha\gamma'\phi_{f}\phi_{g}}\mathcal{L}$ etc. The first
of these is
\begin{align*}
\lVert & \sum_{f,g}(\partial_{\alpha\alpha'\phi_{f}\phi_{g}}\mathcal{L})\big[\mathcal{\bar{H}}^{-1}(\partial_{\beta\phi}\mathcal{L})\big]_{f}\big[\mathcal{\bar{H}}^{-1}(\partial_{\beta\phi}\mathcal{L})\big]_{g}\rVert\\
 & \leq\lVert\sum_{f=1}^{N}(\partial_{\alpha\alpha'\alpha_{f}\alpha_{f}}\mathcal{L})\big[\mathcal{\bar{H}}_{\alpha,\cdot}^{-1}(\partial_{\beta\phi}\mathcal{L})\big]_{f}\big[\mathcal{\bar{H}}_{\alpha,\cdot}^{-1}(\partial_{\beta\phi}\mathcal{L})\big]_{f}\rVert\\
 & +2\lVert\sum_{f,g=1}^{N}(\partial_{\alpha\alpha'\alpha_{f}\gamma_{g}}\mathcal{L})\big[\mathcal{\bar{H}}_{\alpha,\cdot}^{-1}(\partial_{\beta\phi}\mathcal{L})\big]_{f}\big[\mathcal{\bar{H}}_{\gamma,\cdot}^{-1}(\partial_{\beta\phi}\mathcal{L})\big]_{g}\rVert\\
 & +\lVert\sum_{f=1}^{N}(\partial_{\alpha\alpha'\gamma_{f}\gamma_{f}}\mathcal{L})\big[\mathcal{\bar{H}}_{\gamma,\cdot}^{-1}(\partial_{\beta\phi}\mathcal{L})\big]_{f}\big[\mathcal{\bar{H}}_{\gamma,\cdot}^{-1}(\partial_{\beta\phi}\mathcal{L})\big]_{f}\rVert\\
 & =\max_{i}\lvert\frac{1}{(N-1)^{3}}\sum_{j\ne i}\partial_{\pi^{4}}\ell_{ij}\big(\sum_{k}\sum_{l\ne k}((\bar{\mathcal{H}}_{\alpha\alpha}^{-1})_{ik}+(\bar{\mathcal{H}}_{\alpha\gamma}^{-1})_{il})\partial_{\beta\pi}\ell_{kl}\big)^{2}\rvert\\
 & +\max_{i}\lvert\frac{1}{(N-1)^{3}}\sum_{j\ne i}\partial_{\pi^{4}}\ell_{ij}\big(\sum_{k}\sum_{l\ne k}((\bar{\mathcal{H}}_{\alpha\alpha}^{-1})_{ik}+(\bar{\mathcal{H}}_{\alpha\gamma}^{-1})_{il})\partial_{\beta\pi}\ell_{kl}\big)\\
 & \quad\times\big(\sum_{k}\sum_{l\ne k}((\bar{\mathcal{H}}_{\gamma\alpha}^{-1})_{jk}+(\bar{\mathcal{H}}_{\gamma\gamma}^{-1})_{jl})\partial_{\beta\pi}\ell_{kl}\big)\rvert\\
 & +\max_{i}\lvert\frac{1}{(N-1)^{3}}\sum_{j\ne i}\partial_{\pi^{4}}\ell_{ij}\big(\sum_{k}\sum_{l\ne k}((\bar{\mathcal{H}}_{\gamma\alpha}^{-1})_{jk}+(\bar{\mathcal{H}}_{\gamma\gamma}^{-1})_{jl})\partial_{\beta\pi}\ell_{kl}\big)^{2}\rvert
\end{align*}
Considering the first of this new set of terms
\begin{align*}
\max_{i} & \lvert\frac{1}{(N-1)^{3}}\sum_{j\ne i}\partial_{\pi^{4}}\ell_{ij}\big(\sum_{k}\sum_{l\ne k}((\bar{\mathcal{H}}_{\alpha\alpha}^{-1})_{ik}+(\bar{\mathcal{H}}_{\alpha\gamma}^{-1})_{il})\partial_{\beta\pi}\ell_{kl}\big)^{2}\rvert\\
 & =\max_{i}\lvert\frac{1}{(N-1)^{3}}\sum_{j\ne i}\partial_{\pi^{4}}\ell_{ij}\big(\sum_{k}\sum_{l\ne k}\sum_{s}\sum_{t\ne s}((\bar{\mathcal{H}}_{\alpha\alpha}^{-1})_{ik}(\bar{\mathcal{H}}_{\alpha\alpha}^{-1})_{is}+(\bar{\mathcal{H}}_{\alpha\alpha}^{-1})_{ik}(\bar{\mathcal{H}}_{\alpha\gamma}^{-1})_{it}\\
 & +(\bar{\mathcal{H}}_{\alpha\gamma}^{-1})_{il}(\bar{\mathcal{H}}_{\alpha\alpha}^{-1})_{is}+(\bar{\mathcal{H}}_{\alpha\gamma}^{-1})_{il}(\bar{\mathcal{H}}_{\alpha\gamma}^{-1})_{it})(\partial_{\beta\pi}\ell_{kl})(\partial_{\beta\pi}\ell_{st})\big)\rvert\\
 & =\max_{i}\lvert\frac{1}{(N-1)^{3}}\sum_{j\ne i}\partial_{\pi^{4}}\ell_{ij}\big(\sum_{l\ne i}\sum_{t\ne i}(\bar{\mathcal{H}}_{\alpha\alpha}^{-1})_{ii}(\bar{\mathcal{H}}_{\alpha\alpha}^{-1})_{ii}(\partial_{\beta\pi}\ell_{il})(\partial_{\beta\pi}\ell_{it})\big)\rvert\\
 & +2\max_{i}\lvert\frac{1}{N(N-1)^{3}}\sum_{j\ne i}\partial_{\pi^{4}}\ell_{ij}\big(\sum_{l\ne k}\sum_{s\ne i}\sum_{t\ne s}((\bar{\mathcal{H}}_{\alpha\alpha}^{-1})_{ii}(N\bar{\mathcal{H}}_{\alpha\alpha}^{-1})_{is})(\partial_{\beta\pi}\ell_{il})(\partial_{\beta\pi}\ell_{st})\big)\rvert\\
 & +\max_{i}\lvert\frac{1}{N^{2}(N-1)^{3}}\sum_{j\ne i}\partial_{\pi^{4}}\ell_{ij}\big(\sum_{k\ne i}\sum_{l\ne k}\sum_{s\ne i}\sum_{t\ne s}((N\bar{\mathcal{H}}_{\alpha\alpha}^{-1})_{ik}(N\bar{\mathcal{H}}_{\alpha\alpha}^{-1})_{is})(\partial_{\beta\pi}\ell_{kl})(\partial_{\beta\pi}\ell_{st})\big)\rvert\\
 & +2\max_{i}\lvert\frac{1}{N(N-1)^{3}}\sum_{j\ne i}\partial_{\pi^{4}}\ell_{ij}\big(\sum_{l\ne i}\sum_{s}\sum_{t\ne s}(\bar{\mathcal{H}}_{\alpha\alpha}^{-1})_{ii}(N\bar{\mathcal{H}}_{\alpha\gamma}^{-1})_{it}(\partial_{\beta\pi}\ell_{il})(\partial_{\beta\pi}\ell_{st})\big)\rvert\\
 & +2\max_{i}\lvert\frac{1}{N^{2}(N-1)^{3}}\sum_{j\ne i}\partial_{\pi^{4}}\ell_{ij}\big(\sum_{k\ne i}\sum_{l\ne k}\sum_{s}\sum_{t\ne s}(N\bar{\mathcal{H}}_{\alpha\alpha}^{-1})_{ik}(N\bar{\mathcal{H}}_{\alpha\gamma}^{-1})_{it}(\partial_{\beta\pi}\ell_{kl})(\partial_{\beta\pi}\ell_{st})\big)\rvert\\
 & +\max_{i}\lvert\frac{1}{N^{2}(N-1)^{3}}\sum_{j\ne i}\partial_{\pi^{4}}\ell_{ij}\big(\sum_{k}\sum_{l\ne k}\sum_{s}\sum_{t\ne s}(N\bar{\mathcal{H}}_{\alpha\gamma}^{-1})_{il}(N\bar{\mathcal{H}}_{\alpha\gamma}^{-1})_{it}(\partial_{\beta\pi}\ell_{kl})(\partial_{\beta\pi}\ell_{st})\big)\rvert\\
 & =O_{p}(N^{6\epsilon})
\end{align*}
where the final line applies Lemma \ref{lem:S6} to each term, e.g
\begin{align*}
\max_{i} & \lvert\frac{1}{(N-1)^{3}}\sum_{j\ne i}\partial_{\pi^{4}}\ell_{ij}\big(\sum_{l\ne i}\sum_{t\ne i}(\bar{\mathcal{H}}_{\alpha\alpha}^{-1})_{ii}(\bar{\mathcal{H}}_{\alpha\alpha}^{-1})_{ii}(\partial_{\beta\pi}\ell_{il})(\partial_{\beta\pi}\ell_{it})\big)\rvert\\
 & \leq\max_{i}\lvert(\bar{\mathcal{H}}_{\alpha\alpha}^{-1})_{ii}\rvert^{2}\times\max_{i}\lvert\frac{1}{N-1}\sum_{j\ne i}\partial_{\pi^{4}}\ell_{ij}\rvert\\
 & \times\max_{i}\lvert\frac{1}{N-1}\sum_{j\ne i}\partial_{\beta\pi}\ell_{ij}\rvert\times\max_{i}\lvert\frac{1}{N-1}\sum_{j\ne i}\partial_{\beta\pi}\ell_{ij}\rvert\\
 & =O_{p}(1)\times O_{p}(N^{2\epsilon})\times O_{p}(N^{2\epsilon})\times O_{p}(N^{2\epsilon})
\end{align*}
The remaining terms in $\partial_{\phi\phi'\phi_{f}\phi_{g}}\mathcal{L}$
can be dealt with identically. \\
The terms related to $\sum_{f,g}(\partial_{\phi\phi'\phi_{f}\phi_{g}}\mathcal{L})\big[\mathcal{\bar{H}}^{-1}\tilde{\mathcal{H}}\mathcal{\bar{H}}^{-1}(\partial_{\beta\phi}\mathcal{L})\big]_{f}\big[\mathcal{\bar{H}}^{-1}(\partial_{\beta\phi}\mathcal{L})\big]_{g}$
also have nearly identical proofs, applying Lemma \ref{lem:HEH_bound}
so that $\mathcal{\bar{H}}^{-1}\tilde{\mathcal{H}}\mathcal{\bar{H}}^{-1}$
behaves similarly to $\mathcal{\bar{H}}^{-1}$ in the proof. The proofs
for terms withe $[\mathcal{H}^{-1}\mathcal{S}]$ or $\tilde{\mathcal{H}}$
are essentially the same, but use the fact the one of the terms in
the sum is mean zero, e.g.
\begin{align*}
\bar{E}\Big[\big(\max_{i}\lvert\frac{1}{\sqrt{N-1}}\sum_{j\ne i}\partial_{\pi}\ell_{ij}\rvert\big)^{2}\Big] & \leq\sum_{i}\bar{E}\Big[\big(\lvert\frac{1}{\sqrt{N-1}}\sum_{j\ne i}\partial_{\pi}\ell_{ij}\rvert\big)^{2}\Big]\\
 & \leq CNN^{-1}=O_{p}(1)
\end{align*}
and hence $\max_{i}\lvert\frac{1}{N-1}\sum_{j\ne i}\partial_{\pi}\ell_{ij}\rvert=O_{p}(N^{-1/2})$.
\end{proof}
\begin{lem}
\label{lem:tensors2}Let Assumptions 1 and 2 hold. Then, $s+t \leq 5$\\ \\
(i) $\lVert\sum_{f}(\partial_{\phi\phi'\phi_{f}}\mathcal{L})\big[\mathcal{H}^{-1}\mathcal{H}_{b}\mathcal{H}^{-1}(\partial_{\beta\beta\phi}\mathcal{L})\big]_{f}\rVert  =O_{p}(N^{-1+8\epsilon})$\\
(ii) $(\partial_{\beta^{s}\phi'}\mathcal{L})\mathcal{H}^{-1}\mathcal{H}_{b}\mathcal{H}^{-1}\mathcal{E}^{r}\mathcal{H}^{-1}(\partial_{\beta^{t}\phi}\mathcal{L})  =O_{p}(1)$\\
(iii) $\mathcal{S}'\mathcal{H}^{-1}\sum_{f}(\partial_{\beta\phi\phi'\phi_{f}}\mathcal{L})\big[\mathcal{H}^{-1}(\partial_{\beta\phi}\mathcal{L})\big]_{f}\mathcal{H}^{-1}(\partial_{\beta\phi}\mathcal{L})  =O_{p}(1)$\\
(iv) $\mathcal{S}'\mathcal{H}^{-1}\sum_{f}(\partial_{\beta\phi\phi'\phi_{f}}\mathcal{L})\big[\mathcal{H}^{-1}(\partial_{\beta\phi}\mathcal{L})\big]_{f}\mathcal{H}^{-1}\mathcal{S}  =O_{p}(1)$\\
(v) $\mathcal{S}'\mathcal{H}^{-1}\sum_{e,f}(\partial_{\phi\phi'\phi_{e}\phi_{f}}\mathcal{L})\big[\mathcal{H}^{-1}(\partial_{\beta\phi}\mathcal{L})\big]_{e}\big[\mathcal{H}^{-1}(\partial_{\beta\phi}\mathcal{L})\big]_{f}\mathcal{H}^{-1}(\partial_{\beta\phi}\mathcal{L})  =O_{p}(N^{4\epsilon})$\\
(vi) $\lVert\sum_{g}\mathcal{S}'\mathcal{H}^{-1}\mathcal{H}_{b}\mathcal{H}^{-1}\mathcal{H}_{s_{g}}\mathcal{H}^{-1}\mathcal{S}\mathcal{S}_{g}\rVert  =O_{p}(N^{-\frac{3}{2}+4\epsilon})$\\
(vii) $ \lVert\mathcal{S}'\mathcal{H}^{-1}W^{-1}\sum_{f}(\partial_{\phi\phi'\phi_{f}}\mathcal{L})\big[\mathcal{H}^{-1}\mathcal{H}_{bb}\mathcal{H}^{-1}(\partial_{\beta\phi}\mathcal{L})\big]_{f}\mathcal{H}^{-1}\mathcal{S}\rVert  =O_{p}(N^{-3+14\epsilon})$\\
(vii) $\lVert\sum_{g,h}\mathcal{S}'\mathcal{H}^{-1}\mathcal{H}_{b}\mathcal{H}^{-1}\mathcal{H}_{s_{g}s_{h}}\mathcal{H}^{-1}\mathcal{S}\mathcal{S}_{g}\mathcal{S}_{h}\rVert  =O_{p}(N^{-2+10\epsilon})$
\end{lem}

\begin{proof}
For expression 1 we have 
\begin{align*}
\lVert\sum_{f}(\partial_{\phi\phi'\phi_{f}}\mathcal{L})\big[\mathcal{H}^{-1}\mathcal{H}_{b}\mathcal{H}^{-1}(\partial_{\beta\beta\phi}\mathcal{L})\big]_{f}\rVert & \leq\lVert\sum_{f}(\partial_{\phi\phi'\phi_{f}}\mathcal{L})\big[\mathcal{H}^{-1}\mathcal{\bar{H}}_{b}\mathcal{H}^{-1}(\partial_{\beta\beta\phi}\mathcal{L})\big]_{f}\rVert\\
 & +\lVert\sum_{f}(\partial_{\phi\phi'\phi_{f}}\mathcal{L})\big[\mathcal{H}^{-1}(\mathcal{H}_{b}-\bar{\mathcal{H}}_{b})\mathcal{H}^{-1}(\partial_{\beta\beta\phi}\mathcal{L})\big]_{f}\rVert\\
 & =\lVert\sum_{f}(\partial_{\phi\phi'\phi_{f}}\mathcal{L})\big[\mathcal{H}^{-1}\mathcal{\bar{H}}_{b}\mathcal{H}^{-1}(\partial_{\beta\beta\phi}\mathcal{L})\big]_{f}\rVert+O_{p}(N^{-1+8\epsilon})\\
 & =O_{p}(N^{-1+8\epsilon})
\end{align*}
using Lemmas \ref{lem:H_tilde} and \ref{lem:HEH_bound} combined
with Lemma \ref{lem:tensor_bounds}. Next 
\begin{align*}
\lVert(\partial_{\beta^{s}\phi'}\mathcal{L})\mathcal{H}^{-1}\mathcal{H}_{b}\mathcal{H}^{-1}\mathcal{E}^{r}\mathcal{H}^{-1}(\partial_{\beta^{t}\phi}\mathcal{L})\rVert & \leq\lVert(\partial_{\beta^{s}\phi'}\mathcal{L})\mathcal{H}^{-1}\mathcal{\bar{H}}_{b}\mathcal{H}^{-1}\mathcal{E}^{r}\mathcal{H}^{-1}(\partial_{\beta^{t}\phi}\mathcal{L})\rVert\\
 & +\lVert(\partial_{\beta^{s}\phi'}\mathcal{L})\mathcal{H}^{-1}(\mathcal{H}_{b}-\mathcal{\bar{H}}_{b})\mathcal{H}^{-1}\mathcal{E}^{r}\mathcal{H}^{-1}(\partial_{\beta^{t}\phi}\mathcal{L})\rVert\\
 & =O_{p}(1)+O_{p}(N^{-\frac{1}{2}+6\epsilon})=O_{p}(1)
\end{align*}

For the third and fourth terms, note that $\sum_{f}(\partial_{\beta\phi\phi'\phi_{f}}\mathcal{L})\big[\mathcal{H}^{-1}(\partial_{\beta\phi}\mathcal{L})\big]_{f}$
satisfies the same conditions as $W\mathcal{H}_{b}$ in the proof
in Lemmas \ref{lem:H_tilde} and \ref{lem:HEH_bound}, and hence we
can apply the results of those lemmas along with the result in Lemma
\ref{lem:tensor_bounds}. 

The fifth result can be shown similarly. First note that
\begin{align*}
\lVert & \sum_{e,f}(\partial_{\phi\phi'\phi_{e}\phi_{f}}\mathcal{L})\big[\mathcal{H}^{-1}(\partial_{\beta\phi}\mathcal{L})\big]_{e}\big[\mathcal{H}^{-1}(\partial_{\beta\phi}\mathcal{L})\big]_{f}-\sum_{e,f}(\partial_{\phi\phi'\phi_{e}\phi_{f}}\bar{\mathcal{L}})\big[\mathcal{\bar{H}}^{-1}(\partial_{\beta\phi}\bar{\mathcal{L}})\big]_{e}\big[\bar{\mathcal{H}}^{-1}(\partial_{\beta\phi}\bar{\mathcal{L}})\big]_{f}\rVert\\
 & =O_{p}(N^{-\frac{1}{2}+4\epsilon})
\end{align*}
following the proofs in Lemma \ref{lem:tensor_bounds}. Then, to apply
Lemma \ref{lem:HEH_bound}, let $i\leq N$ and so
\begin{align*}
\Big[\sum_{e,f}(\partial_{\phi\phi'\phi_{e}\phi_{f}}\bar{\mathcal{L}})\big[\mathcal{\bar{H}}^{-1}(\partial_{\beta\phi}\bar{\mathcal{L}})\big]_{e}\big[\bar{\mathcal{H}}^{-1}(\partial_{\beta\phi}\bar{\mathcal{L}})\big]_{f}\Big]_{ii} & =(\partial_{\alpha_{i}^{4}}\bar{\mathcal{L}})\big[\mathcal{\bar{H}}^{-1}(\partial_{\beta\phi}\bar{\mathcal{L}})\big]_{i}\big[\bar{\mathcal{H}}^{-1}(\partial_{\beta\phi}\bar{\mathcal{L}})\big]_{i}\\
 & +2\sum_{j\ne i}(\partial_{\alpha_{i}^{3}\gamma_{j}}\bar{\mathcal{L}})\big[\mathcal{\bar{H}}^{-1}(\partial_{\beta\phi}\bar{\mathcal{L}})\big]_{i}\big[\bar{\mathcal{H}}^{-1}(\partial_{\beta\phi}\bar{\mathcal{L}})\big]_{j}\\
 & +\sum_{j\ne i}(\partial_{\alpha_{i}^{2}\gamma_{j}^{2}}\bar{\mathcal{L}})\big[\mathcal{\bar{H}}^{-1}(\partial_{\beta\phi}\bar{\mathcal{L}})\big]_{j}\big[\bar{\mathcal{H}}^{-1}(\partial_{\beta\phi}\bar{\mathcal{L}})\big]_{j}
\end{align*}
and since $\big[\mathcal{\bar{H}}^{-1}(\partial_{\beta\phi}\bar{\mathcal{L}})\big]_{i}\leq C$,
$\partial_{\alpha_{i}^{4}}\bar{\mathcal{L}}\leq C$ and $\sum_{j\ne i}(\partial_{\pi^{4}}\bar{\mathcal{\ell}}_{ij})\leq C$
we have \\
$\max_{i}\vert\Big[\sum_{e,f}(\partial_{\phi\phi'\phi_{e}\phi_{f}}\bar{\mathcal{L}})\big[\mathcal{\bar{H}}^{-1}(\partial_{\beta\phi}\bar{\mathcal{L}})\big]_{e}\big[\bar{\mathcal{H}}^{-1}(\partial_{\beta\phi}\bar{\mathcal{L}})\big]_{f}\Big]_{ii}\vert=O_{p}(1)$
as required (other components of the matrix can be shown in the same
way). This then gives
\[
\mathcal{S}'\mathcal{H}^{-1}\sum_{e,f}(\partial_{\phi\phi'\phi_{e}\phi_{f}}\bar{\mathcal{L}})\big[\mathcal{\bar{H}}^{-1}(\partial_{\beta\phi}\bar{\mathcal{L}})\big]_{e}\big[\bar{\mathcal{H}}^{-1}(\partial_{\beta\phi}\bar{\mathcal{L}})\big]_{f}\mathcal{H}^{-1}(\partial_{\beta\phi}\mathcal{L})=O_{p}(1)
\]
 and so 
\begin{align*}
&\vert\mathcal{S}'\mathcal{H}^{-1}\sum_{e,f}(\partial_{\phi\phi'\phi_{e}\phi_{f}}\mathcal{L})\big[\mathcal{H}^{-1}(\partial_{\beta\phi}\mathcal{L})\big]_{e}\big[\mathcal{H}^{-1}(\partial_{\beta\phi}\mathcal{L})\big]_{f}\mathcal{H}^{-1}(\partial_{\beta\phi}\mathcal{L})\vert \\
& \leq\lVert\mathcal{S}'\mathcal{H}^{-1}\rVert\lVert\mathcal{H}^{-1}(\partial_{\beta\phi}\mathcal{L})\rVert O_{p}(N^{-1/2+4\epsilon})+O_{p}(1) =O_{p}(N^{4\epsilon})
\end{align*}

The 6th and 7th terms can be bounded as
\begin{align*}
\lVert\sum_{g}\mathcal{S}'\mathcal{H}^{-1}\mathcal{H}_{b}\mathcal{H}^{-1}\mathcal{H}_{s_{g}}\mathcal{H}^{-1}\mathcal{S}\mathcal{S}_{g}\rVert & \leq\lVert\sum_{g}\mathcal{S}'\bar{\mathcal{H}}^{-1}\bar{\mathcal{H}}_{b}\mathcal{\bar{H}}^{-1}\mathcal{H}_{s_{g}}\mathcal{H}^{-1}\mathcal{S}\mathcal{S}_{g}\rVert\\
 & +\lVert\sum_{g}\mathcal{S}'(\mathcal{H}^{-1}\mathcal{H}_{b}\mathcal{H}^{-1}-\bar{\mathcal{H}}^{-1}\bar{\mathcal{H}}_{b}\mathcal{\bar{H}}^{-1})\mathcal{H}_{s_{g}}\mathcal{H}^{-1}\mathcal{S}\mathcal{S}_{g}\rVert\\
 & \leq N^{1-1/q}\lVert\mathcal{H}^{-1}\rVert_{q}\lVert\bar{\mathcal{H}}^{-1}\bar{\mathcal{H}}_{b}\mathcal{\bar{H}}^{-1}\rVert_{q}\lVert\sum_{g}\mathcal{H}_{s_{g}}\mathcal{S}_{g}\rVert_{q}\lVert\mathcal{S}\rVert_{q}^{2}\\
 & +N^{1-1/q}\lVert\mathcal{H}^{-1}\mathcal{H}_{b}\mathcal{H}^{-1}-\bar{\mathcal{H}}^{-1}\bar{\mathcal{H}}_{b}\mathcal{\bar{H}}^{-1}\rVert\lVert\mathcal{H}^{-1}\rVert_{q}\lVert\sum_{g}\mathcal{H}_{s_{g}}\mathcal{S}_{g}\rVert_{q}\lVert\mathcal{S}\rVert_{q}^{2}\\
 & =O_{p}(N^{-\frac{3}{2}+4\epsilon})
\end{align*}
\begin{align*}
\lVert & \mathcal{S}'\mathcal{H}^{-1}W^{-1}\sum_{f}(\partial_{\phi\phi'\phi_{f}}\mathcal{L})\big[\mathcal{H}^{-1}\mathcal{H}_{bb}\mathcal{H}^{-1}(\partial_{\beta\phi}\mathcal{L})\big]_{f}\mathcal{H}^{-1}\mathcal{S}\rVert\\
 & \leq N^{1-1/q}\lVert W^{-1}\rVert\lVert\partial_{\phi\phi\phi}\mathcal{L}\rVert_{q}\lVert\mathcal{H}^{-1}\rVert_{q}^{4}\lVert\mathcal{H}_{bb}\rVert_{q}\lVert\mathcal{S}\rVert_{q}^{2}\lVert\partial_{\beta\phi}\mathcal{L}\rVert_{q}\\
 & =O_{p}(N^{-3+14\epsilon})
\end{align*}
Finally, for term 8 we have
\begin{align*}
\lVert\sum_{g,h}\mathcal{S}'\mathcal{H}^{-1}\mathcal{H}_{b}\mathcal{H}^{-1}\mathcal{H}_{s_{g}s_{h}}\mathcal{H}^{-1}\mathcal{S}\mathcal{S}_{g}\mathcal{S}_{h}\rVert & \leq\lVert\sum_{g,h}\mathcal{S}'\mathcal{\bar{H}}^{-1}\mathcal{\bar{H}}_{b}\mathcal{\bar{H}}^{-1}\mathcal{H}_{s_{g}s_{h}}\mathcal{H}^{-1}\mathcal{S}\mathcal{S}_{g}\mathcal{S}_{h}\rVert\\
 & +\lVert\sum_{g,h}\mathcal{S}'(\mathcal{H}^{-1}\mathcal{H}_{b}\mathcal{H}^{-1}-\mathcal{\bar{H}}^{-1}\mathcal{\bar{H}}_{b}\mathcal{\bar{H}}^{-1})\mathcal{H}_{s_{g}s_{h}}\mathcal{H}^{-1}\mathcal{S}\mathcal{S}_{g}\mathcal{S}_{h}\rVert\\
 & \leq N^{1-\frac{1}{q}}O_{p}(N^{-1})\lVert\mathcal{S}\rVert_{q}^{4}\lVert\mathcal{H}_{ss}\rVert_{q}\\
 & =O_{p}(N^{-2+\frac{3}{q}+4\epsilon})=O_{p}(N^{-2+10\epsilon})
\end{align*}
\end{proof}

\section{\label{sec:bounds_ind}Bounds on individual components}
This section provides bounds on the terms $\mathcal{P}$, $\mathcal{R}$, $\mathcal{F}$, $\mathcal{G}$, $W$, and $\mathcal{H}$, and their derivatives. These form the basis for bounding the terms in the asymptotic expansion. The bounds follow from the definition of the terms (see Section \ref{sec:expressions_ind}) as well as the bounds presented in the previous section.

\subsection{Bounds on $\mathcal{F}, \mathcal{P}, \mathcal{R}$ terms}
\begin{lem}
\label{lem:F}Let Assumptions 1 and 2 hold. Then we have, for $s,t\in\{1,2,3\}$
and $s+t\leq5$, 
\begin{align*}
\lVert\mathcal{F}^{s,t}\rVert & =O_{p}(N)\\
\lVert\mathcal{F}^{s,(r),t}\rVert & =O_{p}(N)\\
\lVert\mathcal{F}_{b}^{s,t}\rVert & =O_{p}(1)\\
\lVert\mathcal{F}_{b}^{s,(r),t}\rVert & =O_{p}(1)\\
\lVert\mathcal{S}'\mathcal{F}_{s}^{s,t}\rVert & =O_{p}(1)\\
\lVert\mathcal{F}_{bb}^{s,t}\rVert & =O_{p}(N^{-1+8\epsilon})\\
\lVert\mathcal{S}'\mathcal{F}_{ss'}\mathcal{S}\rVert & =O_{p}(1)
\end{align*}
\end{lem}
\begin{proof}
The first two results follow directly from Lemma \ref{lem:HEH_bound}.
For the first derivatives, we have
\begin{align*}
\lVert\mathcal{F}_{b}^{s,t}\rVert & \leq\lVert W^{-1}\rVert\Big(\lVert\mathcal{F}^{s+1,t}\rVert+\lVert\mathcal{F}^{1,(s),t}\rVert+\lVert\mathcal{F}^{s,t+1}\rVert+\lVert\mathcal{F}^{s,(t),1}\rVert\Big)\\
 & \quad+\lVert(\partial_{\beta^{s}\phi'}\mathcal{L})\mathcal{H}^{-1}\mathcal{H}_{b}\mathcal{H}^{-1}(\partial_{\beta^{t}\phi}\mathcal{L})\rVert\\
 & =O_{p}(1)
\end{align*}
from Lemma \ref{lem:tensors2} (the result for $\mathcal{F}_{b}^{s,(r),t}$
follows similarly). Also
\begin{align*}
\lVert\mathcal{S}'\mathcal{F}_{s}^{s,t}\rVert & \leq\lVert W^{-1}\rVert\lVert\mathcal{S}'\big(\mathcal{G}^{1,s+1}+\mathcal{G}\mathcal{E}^{s}\mathcal{H}^{-1}\big)(\partial_{\beta^{t}\phi}\mathcal{L})\rVert+\lVert\mathcal{S}'\mathcal{H}^{-1}\mathcal{E}^{s}\mathcal{H}^{-1}(\partial_{\beta^{t}\phi}\mathcal{L})\rVert\\
 & \quad+\lVert W^{-1}\rVert\lVert\mathcal{S}'\big(\mathcal{G}^{t+1,1}+\mathcal{H}^{-1}\mathcal{E}^{t}\mathcal{G}\big)(\partial_{\beta^{s}\phi}\mathcal{L})\rVert+\lVert\mathcal{S}'\mathcal{H}^{-1}\mathcal{E}^{t}\mathcal{H}^{-1}(\partial_{\beta^{s}\phi'}\mathcal{L})\rVert\\
 & \quad+\lVert(\partial_{\beta^{s}\phi'}\mathcal{L})\mathcal{H}^{-1}\sum_{g}\mathcal{H}_{s_{g}}\mathcal{S}_{g}\mathcal{H}^{-1}(\partial_{\beta^{t}\phi}\mathcal{L})\rVert\\
 & =O_{p}(1)
\end{align*}
which follows from results in Lemma \ref{lem:element1} and the bound

\begin{align*}
\lVert &(\partial_{\beta^{s}\phi'}\mathcal{L})\mathcal{H}^{-1}\sum_{g}\mathcal{H}_{s_{g}}\mathcal{S}_{g}\mathcal{H}^{-1}(\partial_{\beta^{t}\phi}\mathcal{L})\rVert \\
& \leq\lVert W^{-1}(\partial_{\beta\phi'}\mathcal{L})\mathcal{H}^{-1}\mathcal{S}\cdot(\partial_{\beta^{s}\phi'}\mathcal{L})\mathcal{H}^{-1}(\partial_{\beta\phi\phi'}\mathcal{L})\mathcal{H}^{-1}(\partial_{\beta^{t}\phi}\mathcal{L})\rVert\\
 & +\lVert(\partial_{\beta^{s}\phi'}\mathcal{L})\mathcal{H}^{-1}\sum_{f}(\partial_{\phi\phi'\phi_{f}}\mathcal{L})[\mathcal{H}^{-1}\mathcal{S}]_{f}\mathcal{H}^{-1}(\partial_{\beta^{t}\phi}\mathcal{L})\rVert\\
 & +\lVert W^{-1}(\partial_{\beta\phi'}\mathcal{L})\mathcal{H}^{-1}\mathcal{S}\cdot(\partial_{\beta^{s}\phi'}\mathcal{L})\mathcal{H}^{-1}\sum_{f}(\partial_{\phi\phi'\phi_{f}}\mathcal{L})\big[(\partial_{\beta\phi'}\mathcal{L})\mathcal{H}^{-1}\big]_{f}\mathcal{H}^{-1}(\partial_{\beta^{t}\phi}\mathcal{L})\rVert\\
 & =O_{p}(1)
\end{align*}
where we use Lemma \ref{lem:HEH_bound} on the term $\sum_{f}(\partial_{\phi\phi'\phi_{f}}\bar{\mathcal{L}})\big[(\partial_{\beta\phi'}\bar{\mathcal{L}})\bar{\mathcal{H}}^{-1}\big]_{f}$ and apply Lemma \ref{lem:element1}.

The result for $\mathcal{F}_{bb}^{s,t}$ follows similarly by applying
results in Lemmas \ref{lem:tensor_bounds} and \ref{lem:tensors2}
along with Lemma \ref{lem:HEH_bound} (note that Lemma \ref{lem:HEH_bound}
may be applied repeatedly since the product of two matrices with $O_{p}(1)$
diagonal and $O_{p}(N^{-1})$ off-diagonals has this same property).
The final result follows from applications of the same bounds to the
expression for $\mathcal{F}_{ss}$.
\end{proof}

\begin{lem}
	\label{lem:R}Let Assumptions 1 and 2 hold. Then we have, for $a\in\{1,2,3\}$, 
\begin{align*}
	\lVert\mathcal{R}^{a}\rVert_{q} & =O_{p}(N^{2\epsilon})\\
	\lVert\mathcal{R}_{b}^{a}\rVert_{q} & =O_{p}(N^{-1+6\epsilon})\\
	\lVert\mathcal{R}_{s}^{a}\rVert_{q} & =O_{p}(N^{4\epsilon})\\
	\lVert\mathcal{R}_{bs}^{a}\rVert_{q} & =O_{p}(N^{-1+8\epsilon})\\
	\lVert\mathcal{R}_{ss}^{a}\rVert_{q} & =O_{p}(N^{6\epsilon})\\
	\lVert\mathcal{R}_{bss}^{a}\rVert_{q} & =O_{p}(N^{-1+10\epsilon})
\end{align*}
\end{lem}
\begin{proof}
Application of the individual bounds shown in this section along with the Cauchy-Schwarz and triangle inequalities give the results. For example,
\begin{align*}
	\lVert\mathcal{R}_{s}^{a}\rVert_{q} & \leq\Big(\lVert W^{-1}\rVert_{q}\lVert\mathcal{R}^{1}\rVert_{q}\lVert\mathcal{R}^{a+1}\rVert_{q}\\
	& +\lVert\mathcal{E}^{a}\rVert_{q}\lVert\mathcal{H}^{-1}\rVert_{q}^{2}+\lVert W^{-1}\rVert_{q}\lVert\mathcal{R}^{1}\rVert_{q}^{2}\lVert\mathcal{E}^{a}\rVert_{q}\lVert\mathcal{H}^{-1}\rVert_{q}\\
	& +\lVert\mathcal{R}^{a}\rVert_{q}\lVert\mathcal{H}^{-1}\rVert_{q}\lVert\mathcal{H}_{s}\rVert_{q}\Big)\\
	& =O_{p}(N^{4\epsilon})
\end{align*}	
\end{proof}

\begin{lem}
	\label{lem:P}Let Assumptions 1 and 2 hold. Then we have, for $a,r\in\{1,2,3\}$, 
\begin{align*}
	\lVert\mathcal{P}^{r}\rVert_{q} & =O_{p}(N^{2\epsilon}) & \lVert\mathcal{P}^{(a,r)}\rVert_{q} & =O_{p}(N^{4\epsilon})\\
	\lVert\mathcal{P}_{b}^{r}\rVert_{q} & =O_{p}(N^{-1+6\epsilon}) & \lVert\mathcal{P}_{b}^{(a,r)}\rVert_{q} & =O_{p}(N^{-1+8\epsilon})\\
	\lVert\mathcal{P}_{s}^{r}\rVert_{q} & =O_{p}(N^{4\epsilon}) & \lVert\mathcal{P}_{s}^{(a,r)}\rVert_{q} & =O_{p}(N^{6\epsilon})\\
	\lVert\mathcal{P}_{bs_{}}^{r}\rVert_{q} & =O_{p}(N^{-1+8\epsilon}) & \lVert\mathcal{P}_{bs}^{(a,r)}\rVert & =O_{p}(N^{-1+10\epsilon})\\
	\lVert\mathcal{P}_{ss}^{r}\rVert_{q} & =O_{p}(N^{6\epsilon}) & \lVert\mathcal{P}_{ss}^{(a,r)}\rVert_{q} & =O_{p}(N^{8\epsilon})\\
	\lVert\mathcal{P}_{bss}^{r}\rVert_{q} & =O_{p}(N^{-1+10\epsilon}) & \lVert\mathcal{P}_{bss}^{(a,r)}\rVert_{q} & =O_{p}(N^{-1+12\epsilon})
\end{align*}
\begin{align*}
	\lVert\mathcal{P}_{s_{}}^{r}\rVert_{q} & \leq\lVert W^{-1}\rVert_{q}\lVert\mathcal{R}^{1}\rVert_{q}\lVert\mathcal{P}_{}^{r+1}\rVert_{q}\\
	& +\lVert\partial_{\beta^{r}\phi^{4}}\mathcal{L}\rVert_{q}\lVert\mathcal{H}^{-1}\rVert_{q}^{2}\\
	& +\lVert W^{-1}\rVert_{q}\lVert\partial_{\beta^{r}\phi^{4}}\mathcal{L}\rVert_{q}\lVert\mathcal{H}^{-1}\rVert_{q}\lVert\mathcal{R}^{1}\rVert_{q}^{2}\\
	& +\lVert\partial_{\beta^{r}\phi^{3}}\mathcal{L}\rVert_{q}\lVert\mathcal{H}^{-1}\rVert_{q}^{2}\lVert\mathcal{H}_{s}\rVert_{q}\\
	& =O_{p}(N^{4\epsilon})
\end{align*}
\end{lem}
\begin{proof}
	As above, application of the individual bounds shown in this section along with the Cauchy-Schwarz and triangle inequalities give the results. For example,
\begin{align*}
	\lVert\mathcal{P}_{s_{}}^{r}\rVert_{q} & \leq\lVert W^{-1}\rVert_{q}\lVert\mathcal{R}^{1}\rVert_{q}\lVert\mathcal{P}_{}^{r+1}\rVert_{q}\\
	& +\lVert\partial_{\beta^{r}\phi^{4}}\mathcal{L}\rVert_{q}\lVert\mathcal{H}^{-1}\rVert_{q}^{2}\\
	& +\lVert W^{-1}\rVert_{q}\lVert\partial_{\beta^{r}\phi^{4}}\mathcal{L}\rVert_{q}\lVert\mathcal{H}^{-1}\rVert_{q}\lVert\mathcal{R}^{1}\rVert_{q}^{2}\\
	& +\lVert\partial_{\beta^{r}\phi^{3}}\mathcal{L}\rVert_{q}\lVert\mathcal{H}^{-1}\rVert_{q}^{2}\lVert\mathcal{H}_{s}\rVert_{q}\\
	& =O_{p}(N^{4\epsilon})
\end{align*}
\end{proof}

\subsection{Bounds on $W$ terms}
\begin{lem}
\label{lem:W}Let Assumptions 1 and 2 hold. Then
\begin{align*}
\lVert W\rVert & =O_{p}(N) & \lVert W_{b}\rVert & =O_{p}(N^{4\epsilon})\\
\lVert\mathcal{S}'W_{s}\rVert & =O_{p}(1) & \lVert W_{s}\rVert_{q} &=O_{p}(N^{1/q})\\
\lVert W_{bb}\rVert & =O_{p}(N^{-1+8\epsilon}) & \lVert W_{bs}\rVert_{q} &=O_{p}(N^{-1+6\epsilon})\\
\lVert\mathcal{S}'W_{bs}\rVert & =O_{p}(N^{-1+4\epsilon}) & \lVert W_{ss}\rVert_{q} & =O_{p}(N^{6\epsilon})\\
\lVert W_{bss}\rVert_{q} & =O_{p}(N^{-1+12\epsilon})
\end{align*}
\end{lem}
\begin{proof}
The first result follows from the expression for $W$ and the result
in Lemma \ref{lem:F}.
\begin{align*}
\lVert W_{b}\rVert & \leq\lVert W^{-1}\rVert\lVert\partial_{\beta\beta\beta}\mathcal{L}\rVert+\lVert W^{-1}\rVert\lVert\mathcal{F}^{2,1}\rVert+\lVert\mathcal{F}_{b}\rVert\\
 & =O_{p}(N^{4\epsilon})
\end{align*}
For the next result we have
\begin{align*}
\lVert\mathcal{S}'W_{s}\rVert & \leq\lVert W^{-1}\partial_{\beta\beta\beta}\mathcal{L}\rVert\lVert\mathcal{S}'\mathcal{H}^{-1}(\partial_{\beta\phi}\mathcal{L})\rVert+\lVert\mathcal{S}'\Big[\mathcal{H}^{-1}+W^{-1}\mathcal{G}\big](\partial_{\beta\beta\phi'}\mathcal{L})\rVert+\lVert\mathcal{S}'\mathcal{F}_{s}\rVert\\
 & =O_{p}(1)
\end{align*}
Similarly, the remaining results simply follow from the definitions
of the terms and the bounds on individual components that have already
been derived.
\end{proof}

\subsection{Bounds on $\mathcal{H}$ terms}
\begin{lem}
\label{lem:H}Let Assumptions 1 and 2 hold. Then,\\ \\
(i) $\lVert\mathcal{H}_{b}\rVert  =O_{p}(N^{-1+4\epsilon})$ \\
(ii) $\lVert\mathcal{H}_{s'}\mathcal{S}\rVert  =O_{p}(N^{-\frac{1}{2}+2\epsilon})$ and $\lVert\mathcal{H}_{s}\rVert_{q}=O_{p}(N^{2\epsilon})$ \\
(iii) $\lVert\mathcal{H}_{bb}\rVert  =O_{p}(N^{-2+8\epsilon})$ and $\lVert\mathcal{S}'\mathcal{H}^{-1}\mathcal{H}_{bb}\mathcal{H}^{-1}(\partial_{\beta\phi}\mathcal{L})\rVert=O_{p}(N^{-2+4\epsilon})$\\
(iv) $\lVert\sum_{g}\mathcal{S}'\mathcal{H}^{-1}\mathcal{H}_{bs_{g}}\mathcal{H}^{-1}\mathcal{S}\mathcal{S}_{g}\rVert  =O_{p}(N^{-\frac{3}{2}+2\epsilon})$ and $\lVert\mathcal{H}_{bs}\rVert_{q}=O_{p}(N^{-1+4\epsilon})$\\
(v) $\lVert\mathcal{H}_{ss}\rVert_{q} =O_{p}(N^{4\epsilon})$\\
(vi) $\lVert\mathcal{S}'\mathcal{H}^{-1}\mathcal{H}_{bbb}\mathcal{H}^{-1}\mathcal{S}\rVert =O_{p}(N^{-3+14\epsilon})$ and $\lVert\mathcal{H}_{bbb}\rVert_{q}=O_{p}(N^{-3+12\epsilon})$\\
(vii) $\lVert\sum_{g}\mathcal{S}'\mathcal{H}^{-1}\mathcal{H}_{bbs_{g}}\mathcal{H}^{-1}\mathcal{S}\mathcal{S}_{g}\rVert  =O_{p}(N^{-5/2+10\epsilon})$\\
(viii) $\lVert\mathcal{H}_{bss}\rVert_{q}=O_{p}(N^{-1+8\epsilon})$ and $\lVert\mathcal{H}_{bbs}\rVert_{q}=O_{p}(N^{-2+8\epsilon})$\\
(ix) $\lVert\mathcal{H}_{bsss}\rVert_{q}  =O_{p}(N^{-1+12\epsilon})$ and $\lVert\mathcal{H}_{sss}\rVert_{q}=O_{p}(N^{6\epsilon})$
\end{lem}
\begin{proof}
For (i), application of Lemma \ref{lem:tensor_bounds} and the fact that $\frac{1}{N}W>0$
by assumption gives.
\begin{align*}
\lVert\mathcal{H}_{b}\rVert & \leq\lVert W^{-1}\rVert\big(\lVert\partial_{\beta\phi\phi'}\mathcal{L}\rVert+\lVert\sum_{f}(\partial_{\phi\phi'\phi_{f}}\mathcal{L})\big[\mathcal{H}^{-1}(\partial_{\beta\phi}\mathcal{L})\big]_{f}\rVert\big)\\
 & =O_{p}(N^{-1})\big(O_{p}(N^{2\epsilon})+O_{p}(N^{4\epsilon})\big)=O_{p}(N^{-1+4\epsilon})
\end{align*}

For (ii), we have 
\begin{align*}
\sum_{g}\mathcal{H}_{s_{g}}\mathcal{S}_{g} & =W^{-1}(\partial_{\beta\phi'}\mathcal{L})\mathcal{H}^{-1}\mathcal{S}(\partial_{\beta\phi\phi'}\mathcal{L})\\
 & +\sum_{f}(\partial_{\phi\phi'\phi_{f}}\mathcal{L})[\mathcal{H}^{-1}\mathcal{S}]_{f}\\
 & +W^{-1}\sum_{f}(\partial_{\phi\phi'\phi_{f}}\mathcal{L})\big[(\partial_{\beta\phi'}\mathcal{L})\mathcal{H}^{-1}\big]_{f}(\partial_{\beta\phi'}\mathcal{L})\mathcal{H}^{-1}\mathcal{S}
\end{align*}
and hence, using $\lVert W^{-1}\rVert =O_p(N^{-1})$, an applying Lemma \ref{lem:S6} (iii) and Lemma \ref{lem:element1} we get
\begin{align*}
\lVert\sum_{g}\mathcal{H}_{s_{g}}\mathcal{S}_{g}\rVert & \leq\lVert W^{-1}\rVert\lVert(\partial_{\beta\phi'}\mathcal{L})\mathcal{H}^{-1}\mathcal{S}\rVert\lVert\partial_{\beta\phi\phi'}\mathcal{L}\rVert+\lVert\sum_{f}(\partial_{\phi\phi'\phi_{f}}\mathcal{L})[\mathcal{H}^{-1}S]_{f}\rVert\\
 & +\lVert W^{-1}\rVert\lVert\sum_{f}(\partial_{\phi\phi'\phi_{f}}\mathcal{L})\big[(\partial_{\beta\phi'}\mathcal{L})\mathcal{H}^{-1}\big]_{f}\rVert\lVert(\partial_{\beta\phi'}\mathcal{L})\mathcal{H}^{-1}\mathcal{S}\rVert\\
 & =O_{p}(N^{-1+2\epsilon})+O_{p}(N^{-1/2+2\epsilon})+O_{p}(N^{-1+4\epsilon})\\
 & =O_{p}(N^{-1/2+2\epsilon})
\end{align*}
The second statement follows from Lemma \ref{lem:S6} (iii) and the definition of $\mathcal{H}_s$. 

For (iii), we have from the expression for $\mathcal{H}_{bb}$ that
\begin{align*}
	\lVert \mathcal{H}_{bb}\rVert  & \leq \lVert W^{-1}\rVert \lVert W_{b} \lVert \mathcal{H}_{b}+ \lVert W^{-1} \lVert \mathcal{E}_{b}^{1}\rVert\\
	&+ \lVert W^{-2}\rVert   \lVert  \sum_{f}(\partial_{\beta\phi\phi'\phi_{f}}\mathcal{L})\big[\mathcal{H}^{-1}(\partial_{\beta\phi}\mathcal{L})\big]_{f}\rVert\\
	& +\lVert W^{-2} \rVert \lVert \sum_{e,f}(\partial_{\phi\phi'\phi_{e}\phi_{f}}\mathcal{L})\big[\mathcal{H}^{-1}(\partial_{\beta\phi}\mathcal{L})\big]_{e}\big[\mathcal{H}^{-1}(\partial_{\beta\phi}\mathcal{L})\big]_{f}\rVert\\
	&  +\lVert W^{-1}\rVert \lVert\sum_{f}(\partial_{\phi\phi'\phi_{f}}\mathcal{L})\big[\mathcal{H}^{-1}\mathcal{H}_{b}\mathcal{H}^{-1}(\partial_{\beta\phi}\mathcal{L})\big]_{f}\rVert\\
	&  +\lVert W^{-2} \rVert\lVert\sum_{f}(\partial_{\phi\phi'\phi_{f}}\mathcal{L})\big[\mathcal{H}^{-1}(\partial_{\beta\beta\phi}\mathcal{L})\big]_{f}\rVert\\
	&  +\lVert W^{-2}\rVert \lVert\sum_{f}(\partial_{\phi\phi'\phi_{f}}\mathcal{L})\big[\mathcal{H}^{-1}\mathcal{E}^{1}\mathcal{H}^{-1}(\partial_{\beta\phi}\mathcal{L})\big]_{f}\rVert\\
	&=O_p(N^{-2+8\epsilon})
\end{align*}
where the final line comes frrom applying Lemmas \ref{lem:tensors2} and \ref{lem:tensors2} as well as \ref{lem:W}. Similarly, each remaining term can be bound using the lemmas in Section \ref{sec:basic_lemmas} and this section, the expressions given for each term and application of the Cauchy-Schwarz and triangle inequalities.
\end{proof}

\subsection{Bounds on $\mathcal{G}$ terms}
\begin{lem}
\label{lem:G}Let Assumptions 1 and 2 hold. Then,$s+t \leq 5$\\ \\
(i) $\lVert\mathcal{S}'\mathcal{G}^{s,t}\mathcal{S}\rVert  =O_{p}(1)$\\
(ii) $\lVert\mathcal{S}'\mathcal{G}_{b}^{s,t}\mathcal{S}\rVert =O_{p}(N^{-1})$\\
(iii) $\lVert\sum_{g}\mathcal{S}'\mathcal{G}_{s_{g}}^{s,t}\mathcal{S}\mathcal{S}_{g}\rVert  =O_{p}(N^{-\frac{1}{2}+4\epsilon})$\\
(iv) $\lVert\mathcal{S}'\mathcal{G}_{bb}\mathcal{S}\rVert  =O_{p}(N^{-2+4\epsilon})$\\
(v) $\lVert\sum_{g}\mathcal{S}'\mathcal{G}_{bs_{g}}\mathcal{S}\mathcal{S}_{g}\rVert  =O_{p}(N^{-\frac{3}{2}+4\epsilon})$\\
(vi) $\lVert\mathcal{S}'\mathcal{G}_{bbb}\mathcal{S}\rVert  =O_{p}(N^{-\frac{5}{2}+14\epsilon})$
\end{lem}

\begin{proof}
(i) Writing $\lVert\mathcal{S}'\mathcal{G}^{s,t}\mathcal{S}\rVert\leq\lVert\mathcal{S}'\mathcal{H}^{-1}(\partial_{\beta^s\phi}\mathcal{L})\rVert\lVert\mathcal{S}'\mathcal{H}^{-1}(\partial_{\beta^t\phi}\mathcal{L})\rVert$,
the result follows from Lemma \ref{lem:element1}. For (ii) we have
\begin{align*}
\lVert\mathcal{S}'\mathcal{G}_{b}\mathcal{S}\rVert & \leq\lVert\mathcal{S}'\mathcal{H}^{-1}\mathcal{H}_{b}\mathcal{G}\mathcal{S}\rVert+\lVert\mathcal{S}'\mathcal{G}\mathcal{H}_{b}\mathcal{H}^{-1}\mathcal{S}\rVert+\lVert W^{-1}\mathcal{S}'\mathcal{G}^{2,1}\mathcal{S}\rVert\\
 & +\lVert W^{-1}\mathcal{S}'\mathcal{H}^{-1}\mathcal{E}\mathcal{G}\mathcal{S}\rVert+\lVert W^{-1}\mathcal{S}'\mathcal{G}^{1,2}\mathcal{S}\rVert+\lVert W^{-1}\mathcal{S}'\mathcal{G}\mathcal{E}\mathcal{H}^{-1}\mathcal{S}\rVert\\
 & =O_{p}(N^{-1})
\end{align*}
where the final line applies results from Lemma \ref{lem:element1} (we show for $s=t=1$ here but results apply for $s,t$ as in the lemma),
e.g.
\begin{align*}
\rVert\mathcal{S}'\mathcal{H}^{-1}\mathcal{H}_{b}\mathcal{G}\mathcal{S}\rVert & \leq\rVert\mathcal{S}'\mathcal{H}^{-1}\mathcal{H}_{b}\mathcal{H}^{-1}(\partial_{\beta\phi'}\mathcal{L})\rVert\lVert(\partial_{\beta\phi'}\mathcal{L})\mathcal{H}^{-1}\mathcal{S}\rVert\\
 & \leq\rVert\mathcal{S}'\mathcal{H}^{-1}\mathcal{\bar{H}}_{b}\mathcal{H}^{-1}(\partial_{\beta\phi'}\mathcal{L})\rVert O_{p}(1)+o_{p}(1)\\
 & =O_{p}(N^{-1})
\end{align*}

For (iii)
\begin{align*}
\lVert\sum_{g}\mathcal{S}'\mathcal{G}_{s_{g}}\mathcal{S}\mathcal{S}_{g}\rVert & \leq\lVert\sum_{g}\mathcal{S}'\mathcal{H}^{-1}\mathcal{H}_{s_{g}}\mathcal{G}\mathcal{S}\mathcal{S}_{g}\rVert+\lVert\sum_{g}\mathcal{S}'\mathcal{G}\mathcal{H}_{s_{g}}\mathcal{H}^{-1}\mathcal{S}\mathcal{S}_{g}\rVert\\
 & +\lVert W^{-1}\mathcal{S}'(\mathcal{G}^{2,1}+\mathcal{G}^{1,2})\mathcal{S}\rVert\lVert\mathcal{S}'\mathcal{H}^{-1}(\partial_{\beta\phi}\mathcal{L})\rVert\\
 & +\lVert\mathcal{S}'\mathcal{H}^{-1}\mathcal{E}^{1}\mathcal{H}^{-1}\mathcal{S}\lVert\rVert(\partial_{\beta^{t}\phi'}\mathcal{L})\mathcal{H}^{-1}\mathcal{S}\rVert\\
 & +\lVert\mathcal{S}'\mathcal{H}^{-1}(\partial_{\beta^{s}\phi}\mathcal{L})\rVert\lVert\mathcal{S}'\mathcal{H}^{-1}\mathcal{E}^{1}\mathcal{H}^{-1}\mathcal{S}\rVert\\
 & +\lVert W^{-1}\mathcal{S}'\mathcal{H}^{-1}\mathcal{E}^{1}\mathcal{G}\mathcal{S}\rVert\lVert\mathcal{S}'\mathcal{H}^{-1}(\partial_{\beta\phi}\mathcal{L})\rVert\\
 & +\lVert W^{-1}\mathcal{S}'\mathcal{G}\mathcal{E}^{1}\mathcal{H}^{-1}\mathcal{S}\rVert\lVert\mathcal{S}'\mathcal{H}^{-1}(\partial_{\beta\phi}\mathcal{L})\rVert\\
 & =O_{p}(N^{2\epsilon})+O_{p}(N^{2\epsilon})+O_{p}(N^{-1})+O_{p}(1)\\
 & =O_{p}(N^{2\epsilon})
\end{align*}
by application of Lemma \ref{lem:element1} and the bounds on individual
components, e.g.
\begin{align*}
\lVert\sum_{g}\mathcal{S}'\mathcal{H}^{-1}\mathcal{H}_{s_{g}}\mathcal{G}\mathcal{S}\mathcal{S}_{g}\rVert & \leq N^{1-\frac{2}{q}}\lVert\mathcal{H}_{s}\rVert_{q}\lVert\mathcal{H}^{-1}\rVert_{q}\lVert\mathcal{S}\rVert_{q}^{3}\\
 & =O_{p}(N^{-\frac{1}{2}+\frac{1}{q}+2\epsilon})=O_{p}(N^{-\frac{1}{2}+4\epsilon})
\end{align*}

The remaining components can be bounded similarly using Lemmas \ref{lem:element1} to \ref{lem:tensors2}, but are not shown here due to the length of the expressions for second and third order derivatives of $\mathcal{G}$.
\end{proof}

\section{\label{sec:full_expansion} Bounds for expansion terms}
Finally, we make use of the bounds in the previous section to give bounds on the asymptotic expansion terms, which justifies the expansion in \ref{sec:beta_expansion}. All results follow through applcation of the expressions derived in Section \ref{sec:expressions_ind},
along with the bounds from Section \ref{sec:bounds_ind}, and application
of the Cauchy-Schwarz/triangle inequalities.

\subsection{Third derivative terms}

\begin{align*}
\lVert\partial_{bbb}\mathcal{L}^{*}\rVert & \leq\lVert W^{-2}\rVert\lVert W_{b}\rVert\\
 & =O_{p}(N^{-2+4\epsilon})\\
\lVert(\partial_{bbs}\mathcal{L}^{*})\mathcal{S}\rVert & \leq\lVert W^{-2}\rVert\lVert W_{s}\mathcal{S}\rVert\\
 & =O_{p}(N^{-2})\\
\lVert\mathcal{S}'(\partial_{bss'}\mathcal{L}^{*})\mathcal{S}\rVert & \leq\lVert\mathcal{S}'\mathcal{H}^{-1}\mathcal{H}_{b}\mathcal{H}^{-1}\mathcal{S}\rVert-\lVert W^{-2}\rVert\lVert W_{b}\rVert\lVert\mathcal{S}'\mathcal{G}\mathcal{S}\rVert+\lVert W^{-1}\rVert\lVert\mathcal{S}'\mathcal{G}_{b}\mathcal{S}\rVert\\
 & =O_{p}(N^{-1+4\epsilon})+O_{p}(N^{-2+4\epsilon})+O_{p}(N^{-\frac{3}{2}+4\epsilon})\\
 & =O_{p}(N^{-1+4\epsilon})\\
\lVert\sum_{g}\mathcal{S}'(\partial_{ss's_{g}}\mathcal{L}^{*})\mathcal{S}\mathcal{S}_{g}\rVert & =\lVert\sum_{g}\mathcal{S}'\mathcal{H}^{-1}\mathcal{H}_{s_{g}}\mathcal{H}^{-1}\mathcal{S}\mathcal{S}_{g}\rVert-\lVert W^{-2}\rVert\lVert\sum_{g}W_{s_{g}}\mathcal{S}_{g}\rVert\lVert\mathcal{S}'\mathcal{G}\mathcal{S}\rVert \\
&+\lVert W^{-1}\rVert\lVert\sum_{g}\mathcal{S}'\mathcal{G}_{s_{g}}\mathcal{S}\mathcal{S}_{g}\rVert\\
 & =O_{p}(N^{-1/2+2\epsilon})+O_{p}(N^{-3/2+2\epsilon})+O_{p}(N^{-1+2\epsilon})\\
 & =O_{p}(N^{-1/2+2\epsilon})
\end{align*}

\subsection{Fourth derivative terms}
Using the bounds provided in Section \ref{sec:bounds_ind} we find

\begin{align*}
\lVert\partial_{b^{4}}\mathcal{L}^{*}\rVert & \leq2\lVert W^{-3}\rVert\lVert W_{b}^{2}\rVert+\lVert W^{-2}\lVert\lVert W_{bb}\rVert\\
 & =O_{p}(N^{-3+8\epsilon})=o_{p}(N^{-2})\\
\lVert(\partial_{bbbs'}\mathcal{L}^{*})\mathcal{S}\rVert & \leq2\lVert W^{-3}\rVert\lVert W_{b}\rVert\lVert W_{s'}\mathcal{S}\rVert+\lVert W^{-2}\rVert\lVert W_{bs'}\mathcal{S}\rVert\\
 & =O_{p}(N^{-\frac{5}{2}+4\epsilon})+O_{p}(N^{-3+4\epsilon})\\
 & =O_{p}(N^{-\frac{5}{2}+4\epsilon})=o_{p}(N^{-2})\\
\lVert\mathcal{S}'(\partial_{bbss'}\mathcal{L}_{(1)}^{*})\mathcal{S}\rVert & \leq\lVert2\mathcal{S}'\mathcal{H}^{-1}\mathcal{H}_{b}\mathcal{H}^{-1}\mathcal{H}_{b}\mathcal{H}^{-1}\mathcal{S}\rVert+\lVert\mathcal{S}'\mathcal{H}^{-1}\mathcal{H}_{bb}\mathcal{H}^{-1}\mathcal{S}\rVert\\
 & =O_{p}(N^{-2})+O_{p}(N^{-2+8\epsilon})\\
\lVert\mathcal{S}'(\partial_{bbss'}\mathcal{L}_{(2)}^{*})\mathcal{S}\rVert & \leq\lVert2\mathcal{S}'W^{-3}W_{b}^{2}\mathcal{G}\mathcal{S}\rVert+\lVert\mathcal{S}'W^{-2}W_{bb}\mathcal{G}\mathcal{S}\rVert+\lVert\mathcal{S}'W^{-2}W_{b}\mathcal{G}_{b}\mathcal{S}\rVert\\
 & +\lVert\mathcal{S}'W^{-1}\mathcal{G}_{bb}\mathcal{S}\rVert\\
 & =O_{p}(N^{-3+8\epsilon})+O_{p}(N^{-3+8\epsilon})+O_{p}(N^{-3+4\epsilon})=o_{p}(N^{-2})\\
\lVert\sum_{g}\mathcal{S}'\partial_{bss's_{g}}\mathcal{L}_{(1)}^{*}\mathcal{S}\mathcal{S}_{g}\rVert & \leq\lVert\sum_{g}\mathcal{S}'\mathcal{H}^{-1}\mathcal{H}_{b}\mathcal{H}^{-1}\mathcal{H}_{s_{g}}\mathcal{H}^{-1}\mathcal{S}\mathcal{S}_{g}\rVert+\lVert\sum_{g}\mathcal{S}'\mathcal{H}^{-1}\mathcal{H}_{s_{g}}\mathcal{H}^{-1}\mathcal{H}_{b}\mathcal{H}^{-1}\mathcal{S}\mathcal{S}_{g}\rVert\\
 & +\lVert\sum_{g}\mathcal{S}'\mathcal{H}^{-1}\mathcal{H}_{bs_{g}}\mathcal{H}^{-1}\mathcal{S}\mathcal{S}_{g}\rVert\\
 & =O_{p}(N^{-\frac{3}{2}+4\epsilon})+O_{p}(N^{-\frac{3}{2}+2\epsilon})+O_{p}(N^{-\frac{3}{2}+2\epsilon})\\
\lVert\sum_{g}\mathcal{S}'\partial_{bss's_{g}}\mathcal{L}_{(2)}^{*}\mathcal{S}\mathcal{S}_{g}\rVert & \leq\lVert2\sum_{g}W^{-3}W_{b}W_{s_{g}}\mathcal{S}'\mathcal{G}\mathcal{S}\mathcal{S}_{g}\rVert+\lVert\sum_{g}W^{-2}W_{bs_{g}}\mathcal{S}'\mathcal{G}\mathcal{S}\mathcal{S}_{g}\rVert\\
 & +\lVert\sum_{g}W^{-2}W_{s_{g}}\mathcal{S}'\mathcal{G}_{b}\mathcal{S}\mathcal{S}_{g}\rVert\\
 & +\lVert\sum_{g}W^{-2}W_{b}\mathcal{S}'\mathcal{G}_{s_{g}}\mathcal{S}\mathcal{S}_{g}\rVert+\lVert\sum_{g}W^{-1}\mathcal{S}'\mathcal{G}_{bs_{g}}\mathcal{S}\mathcal{S}_{g}\rVert\\
 & =O_{p}(N^{-\frac{5}{2}+6\epsilon})+O_{p}(N^{-3+4\epsilon})+O_{p}(N^{-\frac{5}{2}+2\epsilon})\\
 & +O_{p}(N^{-\frac{5}{2}+6\epsilon})+O_{p}(N^{-\frac{5}{2}+6\epsilon})\\
 & =o_{p}(N^{-2})
\end{align*}

\subsection{Fifth derivative terms}

Application of the bounds provided in Section \ref{sec:bounds_ind}
gives

\begin{align*}
\lVert\mathcal{S}'\partial_{bbbss'}\mathcal{L}_{(1)}^{*}\mathcal{S}\rVert & \leq\lVert6\mathcal{S}'\mathcal{H}^{-1}\mathcal{H}_{b}\mathcal{H}^{-1}\mathcal{H}_{b}\mathcal{H}^{-1}\mathcal{H}_{b}\mathcal{H}^{-1}\mathcal{S}\rVert\\
 & +3\lVert\mathcal{S}'\mathcal{H}^{-1}\mathcal{H}_{bb}\mathcal{H}^{-1}\mathcal{H}_{b}\mathcal{H}^{-1}\mathcal{S}\rVert+3\lVert\mathcal{S}'\mathcal{H}^{-1}\mathcal{H}_{b}\mathcal{H}^{-1}\mathcal{H}_{bb}\mathcal{H}^{-1}\mathcal{S}\rVert\\
 & +\lVert\mathcal{S}'\mathcal{H}^{-1}\mathcal{H}_{bbb}\mathcal{H}^{-1}\mathcal{S}\rVert\\
 & =O_{p}(N^{-3})+O_{p}(N^{-3+12\epsilon})+O_{p}(N^{-3+14\epsilon})=o_{p}(N^{-2})
\end{align*}
\begin{align*}
\lVert\sum_{g}\mathcal{S}'\partial_{bbss's_{g}}\mathcal{L}_{(1)}^{*}\mathcal{S}\mathcal{S}_{g}\rVert & =6\lVert\sum_{g}\mathcal{S}'\mathcal{H}^{-1}\mathcal{H}_{s_{g}}\mathcal{H}^{-1}\mathcal{H}_{b}\mathcal{H}^{-1}\mathcal{H}_{b}\mathcal{H}^{-1}\mathcal{S}\mathcal{S}_{g}\rVert\\
 & +4\lVert\sum_{g}\mathcal{S}'\mathcal{H}^{-1}\mathcal{H}_{bs_{g}}\mathcal{H}^{-1}\mathcal{H}_{b}\mathcal{H}^{-1}\mathcal{S}\mathcal{S}_{g}\rVert\\
 & +2\lVert\sum_{g}\mathcal{S}'\mathcal{H}^{-1}\mathcal{H}_{s_{g}}\mathcal{H}^{-1}\mathcal{H}_{bb}\mathcal{H}^{-1}\mathcal{S}\mathcal{S}_{g}\rVert+\\
 & +\lVert\sum_{g}\mathcal{S}'\mathcal{H}^{-1}\mathcal{H}_{bbs_{g}}\mathcal{H}^{-1}\mathcal{S}\mathcal{S}_{g}\rVert\\
 & =O_{p}(N^{-\frac{5}{2}+10\epsilon})+O_{p}(N^{-\frac{5}{2}+6\epsilon})+O_{p}(N^{-\frac{5}{2}+10\epsilon})\\
 & =o_{p}(N^{-2})
\end{align*}
\begin{align*}
\lVert\sum_{g,h}\mathcal{S}'\partial_{bss's_{g}s_{h}}\mathcal{L}_{(1)}^{*}\mathcal{S}\mathcal{S}_{g}\mathcal{S}_{h}\rVert & \leq6\lVert\sum_{g,h}\mathcal{S}'\mathcal{H}^{-1}\mathcal{H}_{s_{h}}\mathcal{H}^{-1}\mathcal{H}_{b}\mathcal{H}^{-1}\mathcal{H}_{s_{g}}\mathcal{H}^{-1}\mathcal{S}\mathcal{S}_{g}\mathcal{S}_{h}\rVert\\
 & +4\lVert\sum_{g,h}\mathcal{S}'\mathcal{H}^{-1}\mathcal{H}_{bs_{h}}\mathcal{H}^{-1}\mathcal{H}_{s_{g}}\mathcal{H}^{-1}\mathcal{S}\mathcal{S}_{g}\mathcal{S}_{h}\rVert\\
 & +2\lVert\sum_{g,h}\mathcal{S}'\mathcal{H}^{-1}\mathcal{H}_{b}\mathcal{H}^{-1}\mathcal{H}_{s_{g}s_{h}}\mathcal{H}^{-1}\mathcal{S}\mathcal{S}_{g}\mathcal{S}_{h}\rVert\\
 & +\lVert\sum_{g,h}\mathcal{S}'\mathcal{H}^{-1}\mathcal{H}_{bs_{g}s_{h}}\mathcal{H}^{-1}\mathcal{S}\mathcal{S}_{g}\mathcal{S}_{h}\rVert\\
 & =O_{p}(N^{-2+8\epsilon})+O_{p}(N^{-2+4\epsilon})+O_{p}(N^{-2+10\epsilon})+O_{p}(N^{-2+10\epsilon})
\end{align*}

\subsection{Sixth derivative terms}

\begin{align*}
\lVert\sum_{f,g,h}\mathcal{S}'\partial_{bss's_{f}s_{g}s_{h}}\mathcal{L}_{(1)}^{*}\mathcal{S}\mathcal{S}_{f}\mathcal{S}_{g}\mathcal{S}_{h}\rVert & \leq C\Big(\lVert\sum_{f,g,h}\mathcal{S}'\mathcal{H}^{-1}\mathcal{H}_{s_{f}}\mathcal{H}^{-1}\mathcal{H}_{b}\mathcal{H}^{-1}\mathcal{H}_{s_{g}}\mathcal{H}^{-1}\mathcal{H}_{s_{h}}\mathcal{H}^{-1}\mathcal{S}\mathcal{S}_{f}\mathcal{S}_{g}\mathcal{S}_{h}\rVert\\
 & +\lVert\sum_{f,g,h}\mathcal{S}'\mathcal{H}^{-1}\mathcal{H}_{bs_{f}}\mathcal{H}^{-1}\mathcal{H}_{s_{g}}\mathcal{H}^{-1}\mathcal{H}_{s_{h}}\mathcal{H}^{-1}\mathcal{S}\mathcal{S}_{f}\mathcal{S}_{g}\mathcal{S}_{h}\rVert\\
 & +\lVert\sum_{f,g,h}\mathcal{S}'\mathcal{H}^{-1}\mathcal{H}_{b}\mathcal{H}^{-1}\mathcal{H}_{s_{f}s_{g}}\mathcal{H}^{-1}\mathcal{H}_{s_{h}}\mathcal{H}^{-1}\mathcal{S}\mathcal{S}_{f}\mathcal{S}_{g}\mathcal{S}_{h}\rVert\\
 & +\lVert\sum_{f,g,h}\mathcal{S}'\mathcal{H}^{-1}\mathcal{H}_{bs_{g}s_{f}}\mathcal{H}^{-1}\mathcal{H}_{s_{h}}\mathcal{H}^{-1}\mathcal{S}\mathcal{S}_{f}\mathcal{S}_{g}\mathcal{S}_{h}\rVert\\
 & +\lVert\sum_{f,g,h}\mathcal{S}'\mathcal{H}^{-1}\mathcal{H}_{bs_{g}}\mathcal{H}^{-1}\mathcal{H}_{s_{h}s_{f}}\mathcal{H}^{-1\mathcal{S}\mathcal{S}_{f}\mathcal{S}_{g}\mathcal{S}_{h}}\rVert\\
 & +\lVert\sum_{f,g,h}\mathcal{S}'\mathcal{H}^{-1}\mathcal{H}_{b}\mathcal{H}^{-1}\mathcal{H}_{s_{f}s_{g}s_{h}}\mathcal{H}^{-1}\mathcal{S}\mathcal{S}_{f}\mathcal{S}_{g}\mathcal{S}_{h}\rVert\\
 & +\lVert\sum_{f,g,h}\mathcal{S}'\mathcal{H}^{-1}\mathcal{H}_{s_{f}s_{g}s_{h}}\mathcal{H}^{-1}\mathcal{S}\mathcal{S}_{f}\mathcal{S}_{g}\mathcal{S}_{h}\rVert\Big)\\
 & =o_{p}(N^{-2})
\end{align*}

\section{Jackknife expansion } \label{SA:expansion_betak}
\subsection{\label{subsec:leaveout_expansion_beta}Asymptotic expansion in leave-out
samples}

In order to derive results for the jackknife estimator for $\beta$, we first establish
the asymptotic expansion for $\widehat{\beta}_{(k)}$,
i.e. the leave-out sample parameter estimate. The expansions can
be derived identically to the full sample estimators, replacing $\ell_{ij}$
in the objective function with $\frac{N-1}{N-2}\ell_{ij}1_{ij}^{k}$
where $1_{ij}^{k}$ is an indicator variable that is equal to one
whenever the observation $(i,j)$ is included in the $k$-th leave-out
sample, and is zero when that observation has been dropped. In comparing
the asymptotic expansions of the full-sample and leave-out sample
estimators, we will replace $\mathcal{H}^{-1}$ and $W_{N}^{-1}$
terms with their conditional expectations. For this purpose, we first
state a new version of Lemma \ref{lem:H_tilde} for the leave-out
sample.
\begin{lem}
\label{lem:HW_approx_leaveout}Let Assumptions 1 and 2 hold, and let
$\mathcal{H}_{(k)}=-\frac{1}{N-2}\sum_{i}\sum_{j\ne i}\partial_{\phi\phi'}\ell_{ij}1_{ij}^{k}$.
Then, for $s=0,1,2,3$, $t=2,3,4,5$ and $s+t\leq6$
\begin{align*}
\lVert\mathcal{H}_{(k)}-\bar{\mathcal{H}}\rVert & =O_{p}(N^{-\frac{1}{2}+2\epsilon})\\
\lVert\partial_{\beta^{s}\phi^{t}}\mathcal{L}_{(k)}-\partial_{\beta^{s}\phi^{t}}\bar{\mathcal{L}}\rVert & =O_{p}(N^{-\frac{1}{2}+2\epsilon})
\end{align*}
and 
\begin{align*}
\lVert\mathcal{H}_{(k)}^{-1}-\bar{\mathcal{H}}^{-1}(\tilde{\mathcal{H}}_{(k)}\bar{\mathcal{H}}^{-1})^{k}\rVert & =O_{p}(N^{-\frac{k+1}{2}+2(k+1)\epsilon})\\
\lVert W_{N,(k)}^{-1}-\bar{W}_{N}^{-1}(\tilde{W}_{N,(k)}\bar{W}_{N}^{-1})^{k}\rVert & =O_{p}(N^{-\frac{k+1}{2}+2(k+1)\epsilon})
\end{align*}
\end{lem}
\begin{proof}
Following the proof for the full-sample matrix, first decompose the
second derivative matrix as
\[
\lVert\mathcal{H}_{(k)}-\bar{\mathcal{H}}\rVert\leq\lVert\partial_{\alpha\alpha}\mathcal{L}_{(k)}-\partial_{\alpha\alpha}\bar{\mathcal{L}}\rVert+2\lVert\partial_{\alpha\gamma}\mathcal{L}_{(k)}-\partial_{\alpha\gamma}\bar{\mathcal{L}}\rVert+\lVert\partial_{\gamma\gamma}\mathcal{L}_{(k)}-\partial_{\gamma\gamma}\bar{\mathcal{L}}\rVert
\]
Define $\partial_{\pi^{2}}\tilde{\ell}_{(k),ij}=\frac{N-1}{N-2}\partial_{\pi^{2}}\ell_{ij}1_{ij}^{k}-\partial_{\pi^{2}}\bar{\ell}_{ij}$.
To bound the first term, first note that
\begin{align*}
\bar{E}\Big[\max_{i}\Big(\frac{1}{N-1}\sum_{j\ne i}\partial_{\pi^{2}}\tilde{\ell}_{(k),ij}\Big)^{q}\Big] & \leq\frac{1}{(N-1)^{q}}\sum_{i}\bar{E}\Big[\Big(\sum_{j\ne i}\partial_{\pi^{2}}\tilde{\ell}_{(k),ij}\Big)^{q}\Big]\\
 & \leq C\frac{1}{(N-1)^{q}}\sum_{i}\bar{E}\Big[\Big(\sum_{j\ne i}\partial_{\pi^{2}}\tilde{\ell}_{ij}1_{ij}^{k}\Big)^{q}\Big] \\
 &+C\frac{1}{(N-1)^{q}}\sum_{i}\bar{E}\Big[\Big(\sum_{j\ne i}\partial_{\pi^{2}}\bar{\ell}_{ij}(1-1_{ij}^{k})\Big)^{q}\Big]\\
 & \leq C\frac{1}{(N-1)^{q}}\sum_{i}\bar{E}\Big[\Big(\partial_{\pi^{2}}\bar{\ell}_{ii_{k}^{*}}\Big)^{q}\Big]+O_{p}(N^{1-q/2})\\
 & =O_{p}(N^{1-q/2})
\end{align*}
where $i_{k}^{*}$ is the receiver such that $1_{ii_{k}^{*}}^{k}=0$.
Then we can apply the same steps as in the proof of Lemma \ref{lem:H_tilde}
to give
\begin{align*}
\bar{E}\lVert\partial_{\alpha\alpha}\mathcal{L}_{(k)}-\partial_{\alpha\alpha}\bar{\mathcal{L}}\rVert^{q} & =O_{p}(N^{1-q/2})
\end{align*}
and hence $\lVert\partial_{\alpha\alpha}\mathcal{L}_{(k)}-\partial_{\alpha\alpha}\bar{\mathcal{L}}\rVert=O_{p}(N^{-\frac{1}{2}+\frac{1}{q}})$
and similarly for $\lVert\partial_{\gamma\gamma}\mathcal{L}_{(k)}-\partial_{\gamma\gamma}\bar{\mathcal{L}}\rVert$.
The bound $\lVert\partial_{\alpha\gamma}\mathcal{L}_{(k)}-\partial_{\alpha\gamma}\bar{\mathcal{L}}\rVert=O_{p}(N^{-\frac{1}{2}+\frac{1}{q}})$
follows similarly, which gives $\lVert\mathcal{H}_{(k)}-\bar{\mathcal{H}}\rVert=O_{p}(N^{-\frac{1}{2}+\frac{1}{q}})=O_{p}(N^{-\frac{1}{2}+2\epsilon})$.
The second result follows in the same way. We similarly show that
\begin{align*}
\lVert\partial_{\beta\phi'}\tilde{\mathcal{L}}_{(k)}\rVert & \leq\lVert\frac{1}{N-1}\sum_{i}\sum_{j\ne i}\partial_{\beta\phi'}\tilde{\ell}_{ij}1_{ij}^{k}\rVert+\lVert\frac{1}{N-1}\sum_{i}\partial_{\beta\phi'}\bar{\ell}_{(k),ii_{k}^{*}}\rVert\\
 & =O_{p}(\lVert\partial_{\beta\phi}\tilde{\mathcal{L}}\rVert)
\end{align*}
since $\big[\frac{1}{N-1}\sum_{i}\partial_{\beta\phi'}\bar{\ell}_{(k),ii_{k}^{*}}\big]_{s}=\frac{1}{N-1}\partial_{\beta\pi}\bar{\ell}_{(k),ss_{k}^{*}}$
for $s=1,\dots,N$ and $\frac{1}{N-1}\partial_{\beta\pi}\bar{\ell}_{(k),s_{k}^{\dagger}s}$
(where $s_{k}^{\dagger}$ is the sender for which $(s_{k}^{\dagger},s)$
is dropped in leave-out sample $k$) for $s=N+1,\dots,2N$. Then,
using these results and applying the same steps as in Lemma \ref{lem:H1_approx},
we can show that the same approximations hold in the leave-out samples.
\end{proof}
Using this result, the first-order expansion for $\widehat{\beta}_{(k)}$
can be shown to be given by 
\begin{align*}
NW_{N,(k)}(\widehat{\beta}_{(k)}-\beta) & =(\partial_{\beta}\mathcal{L}_{(k)})+(\partial_{\beta\phi'}\mathcal{L}_{(k)})\mathcal{H}_{(k)}^{-1}\mathcal{S}_{(k)}+\frac{1}{2}\mathcal{S}_{(k)}'\mathcal{H}_{(k)}^{-1}(\partial_{\beta\phi\phi'}\mathcal{L}_{(k)})\mathcal{H}_{(k)}^{-1}\mathcal{S}_{(k)}\\
 & \quad+\frac{1}{2}\mathcal{S}_{(k)}'\mathcal{H}_{(k)}^{-1}\sum_{f}(\partial_{\phi\phi'\phi_{f}}\mathcal{L}_{(k)})\big[\mathcal{H}_{(k)}^{-1}(\partial_{\beta\phi}\mathcal{L}_{(k)})\big]_{f}\mathcal{H}_{(k)}^{-1}\mathcal{S}_{(k)}+o_{p}(1)\\
 & =(\partial_{\beta}\mathcal{L}_{(k)})+(\partial_{\beta\phi'}\mathcal{\bar{L}})\mathcal{\bar{H}}^{-1}\mathcal{S}_{(k)}+(\partial_{\beta\phi'}\mathcal{\bar{L}})\mathcal{\bar{H}}^{-1}\tilde{\mathcal{H}}_{(k)}\mathcal{\bar{H}}^{-1}\mathcal{S}_{(k)}\\
 & \quad+\frac{1}{2}\mathcal{S}_{(k)}'\bar{\mathcal{H}}^{-1}(\partial_{\beta\phi\phi'}\bar{\mathcal{L}})\bar{\mathcal{H}}^{-1}\mathcal{S}_{(k)}\\
 & \quad+\frac{1}{2}\mathcal{S}_{(k)}'\bar{\mathcal{H}}^{-1}\sum_{f}(\partial_{\phi\phi'\phi_{f}}\bar{\mathcal{L}})\big[\bar{\mathcal{H}}^{-1}(\partial_{\beta\phi}\bar{\mathcal{L}})\big]_{f}\bar{\mathcal{H}}^{-1}\mathcal{S}_{(k)}+o_{p}(1)
\end{align*}
Since $\lVert W_{N,(k)}-\bar{W}_{N}\rVert=o_{p}(1)$ by Lemma \ref{lem:HW_approx_leaveout},
the same expansion up to first order applies to $N\bar{W}_{N}(\widehat{\beta}_{(k)}-\beta)$.

\subsection{Jackknife results for higher-order terms}

In the main appendix, the first-order terms of the jackknife estimator
$\widehat{\beta}_{J}$ are derived. Here we show that the remaining
terms up to $O_{p}(N^{-1})$ in the expansion for $N(\widehat{\beta}_{J}-\beta)$
are in fact $o_{p}(1)$ and so do not affect the asymptotic distribution
of the estimator. As can be seen in Section \ref{sec:full_expansion},
there is an extremely large number of terms in the asymptotic expansion
that must be considered. However, inspection of the terms shows that
they share a common structure, that is, they can be expressed as V-statistics
of a certain order that depend on sums of the derivatives of $\ell_{ij}$
up to sixth order. Here we prove the result for an example term, and
discuss how this same proof can be used to show that the remaining
terms will also be $o_{p}(1)$.

Consider the expansion term $\mathcal{S}'(\partial_{bbss'}\mathcal{L}_{(1)}^{*})\mathcal{S}\mathcal{S}_{\beta}$,
which contains the term
\[
\mathcal{S}'\mathcal{H}^{-1}\mathcal{H}_{bb}\mathcal{H}^{-1}\mathcal{S}\mathcal{S}_{\beta}
\]
Expanding out this terms gives
\begin{align*}
\mathcal{S}'\mathcal{H}^{-1}\mathcal{H}_{bb}\mathcal{H}^{-1}\mathcal{S}\mathcal{S}_{\beta} & =-W^{-2}W_{b}\mathcal{S}'\mathcal{H}^{-1}\sum_{f}(\partial_{\phi\phi'\phi_{f}}\mathcal{L})\big[\mathcal{H}^{-1}(\partial_{\beta\phi}\mathcal{L})\big]_{f}\mathcal{H}^{-1}\mathcal{S}\mathcal{S}_{\beta}\\
 & -W^{-2}W_{b}\mathcal{S}'\mathcal{H}^{-1}\mathcal{E}\mathcal{H}^{-1}\mathcal{S}\mathcal{S}_{\beta}\\
 & +W^{-2}\mathcal{S}'\mathcal{H}^{-1}\mathcal{E}^{2}\mathcal{H}^{-1}\mathcal{S}\mathcal{S}_{\beta}\\
 & +W^{-2}\mathcal{S}'\mathcal{H}^{-1}\sum_{f}(\partial_{\beta\phi\phi'\phi_{f}}\mathcal{L})\big[\mathcal{H}^{-1}(\partial_{\beta\phi}\mathcal{L})\big]_{f}\mathcal{H}^{-1}\mathcal{S}\mathcal{S}_{\beta}\\
 & -W^{-2}\mathcal{S}'\mathcal{H}^{-1}\sum_{f}(\partial_{\beta\phi\phi'\phi_{f}}\mathcal{L})\big[\mathcal{H}^{-1}(\partial_{\beta\phi}\mathcal{L})\big]_{f}\mathcal{H}^{-1}\mathcal{S}\mathcal{S}_{\beta}\\
 & -W^{-2}\mathcal{S}'\mathcal{H}^{-1}\sum_{e,f}(\partial_{\phi\phi'\phi_{e}\phi_{f}}\mathcal{L})\big[\mathcal{H}^{-1}(\partial_{\beta\phi}\mathcal{L})\big]_{e}\big[\mathcal{H}^{-1}(\partial_{\beta\phi}\mathcal{L})\big]_{f}\mathcal{H}^{-1}\mathcal{S}\mathcal{S}_{\beta}\\
 & -W^{-2}\mathcal{S}'\mathcal{H}^{-1}\sum_{f}(\partial_{\phi\phi'\phi_{f}}\mathcal{L})\big[\mathcal{H}^{-1}\mathcal{E}\mathcal{H}^{-1}(\partial_{\beta\phi}\mathcal{L})\big]_{f}\mathcal{H}^{-1}\mathcal{S}\mathcal{S}_{\beta}\\
 & -W^{-2}\mathcal{S}'\mathcal{H}^{-1}\sum_{e,f}(\partial_{\phi\phi'\phi_{f}}\mathcal{L})\big[\mathcal{H}^{-1}(\partial_{\phi\phi'\phi_{e}}\mathcal{L})\mathcal{H}^{-1}(\partial_{\beta\phi}\mathcal{L})\big]_{f}\big[\mathcal{H}^{-1}(\partial_{\beta\phi}\mathcal{L})\big]_{e}\mathcal{H}^{-1}\mathcal{S}\mathcal{S}_{\beta}\\
 & -W^{-2}\mathcal{S}'\mathcal{H}^{-1}\sum_{f}(\partial_{\phi\phi'\phi_{f}}\mathcal{L})\big[\mathcal{H}^{-1}(\partial_{\beta\beta\phi}\mathcal{L})\big]_{f}\mathcal{H}^{-1}\mathcal{S}\mathcal{S}_{\beta}\\
 & -W^{-2}\mathcal{S}'\mathcal{H}^{-1}\sum_{f}(\partial_{\phi\phi'\phi_{f}}\mathcal{L})\big[\mathcal{H}^{-1}\mathcal{E}\mathcal{H}^{-1}(\partial_{\beta\phi}\mathcal{L})\big]_{f}\mathcal{H}^{-1}\mathcal{S}\mathcal{S}_{\beta}
\end{align*}
Take the first term in this expression. We can replace $W$ with $\bar{W}+(W-\bar{W})$,
and similarly for $W_{b}$, $\mathcal{H}^{-1}$ and $\partial_{\phi\phi'\phi_{f}}\mathcal{L}$
to give
\[
\bar{W}^{-2}\bar{W}_{b}\mathcal{S}'\bar{\mathcal{H}}^{-1}\sum_{f}(\partial_{\phi\phi'\phi_{f}}\bar{\mathcal{L}})\big[\bar{\mathcal{H}}^{-1}(\partial_{\beta\phi}\mathcal{L})\big]_{f}\mathcal{\bar{H}}^{-1}\mathcal{S}\mathcal{S}_{\beta}+o_{p}(N^{-2})
\]
This result holds identically for the leave-out samples, replacing
$\mathcal{S}$ with $\mathcal{S}_{(k)}$ and similarly for $\partial_{\beta\phi}\mathcal{L}$
and $\mathcal{S}_{\beta}$. Ignoring the $\bar{W}^{-2}\bar{W}_{b}$
term for the moment, which will not be affected by the jackknifing,
we can decompose the above sum further into the components of $\phi=(\alpha,\gamma)$,
for example the first of these terms would be
\begin{align*}
\mathcal{S}_{\alpha}' & \bar{\mathcal{H}}_{\alpha\alpha}^{-1}\sum_{f=1}^{N}(\partial_{\alpha\alpha'\alpha_{f}}\bar{\mathcal{L}})\big[\bar{\mathcal{H}}_{\alpha\alpha}^{-1}(\partial_{\beta\alpha}\mathcal{L})+\bar{\mathcal{H}}_{\alpha\gamma}^{-1}(\partial_{\beta\gamma}\mathcal{L})\big]_{f}\mathcal{\bar{H}}_{\alpha\alpha}^{-1}\mathcal{S}_{\alpha}\mathcal{S}_{\beta}\\
= & \mathcal{S}_{\beta}\sum_{i}\sum_{j}\sum_{s}\sum_{t}\mathcal{S}_{\alpha_{i}}[\bar{\mathcal{H}}_{\alpha\alpha}^{-1}]_{ij}[\partial_{\alpha\alpha'\alpha_{j}}\bar{\mathcal{L}}]_{jj}\big([\bar{\mathcal{H}}_{\alpha\alpha}^{-1}]_{jt}(\partial_{\beta\alpha_{t}}\mathcal{L})+[\bar{\mathcal{H}}_{\alpha\gamma}^{-1}]_{jt}(\partial_{\beta\gamma_{t}}\mathcal{L})\big)[\mathcal{\bar{H}}_{\alpha\alpha}^{-1}]_{js}\mathcal{S}_{\alpha_{s}}\\
= & \mathcal{S}_{\beta}\frac{1}{(N-1)^{3}}\sum_{j}\sum_{i,i'\ne i}\sum_{s,s'\ne s}\sum_{t,t'\ne t}[\bar{\mathcal{H}}_{\alpha\alpha}^{-1}]_{ij}[\mathcal{\bar{H}}_{\alpha\alpha}^{-1}]_{js}[\partial_{\alpha\alpha'\alpha_{j}}\bar{\mathcal{L}}]_{jj} \\
&\qquad\times (\partial_{\pi}\ell_{i,i'})(\partial_{\pi}\ell_{s,s'})\big([\bar{\mathcal{H}}_{\alpha\alpha}^{-1}]_{jt}(\partial_{\beta\pi}\ell_{t,t'})+[\bar{\mathcal{H}}_{\alpha\gamma}^{-1}]_{jt}(\partial_{\beta\pi}\ell_{t',t})\big)\\
= & \frac{1}{(N-1)^{4}}\sum_{j}\sum_{i,i'\ne i}\sum_{s,s'\ne s}\sum_{t,t'\ne t}\sum_{r,r'\ne r}[\bar{\mathcal{H}}_{\alpha\alpha}^{-1}]_{ij}[\mathcal{\bar{H}}_{\alpha\alpha}^{-1}]_{js}[\partial_{\alpha\alpha'\alpha_{j}}\bar{\mathcal{L}}]_{jj} \\
&\qquad\times (\partial_{\pi}\ell_{i,i'})(\partial_{\pi}\ell_{s,s'})\big([\bar{\mathcal{H}}_{\alpha\alpha}^{-1}]_{jt}(\partial_{\beta\pi}\ell_{t,t'})+[\bar{\mathcal{H}}_{\alpha\gamma}^{-1}]_{jt}(\partial_{\beta\pi}\ell_{t',t})\big)(\partial_{\beta}\ell_{r,r'})
\end{align*}
Define
\begin{align*}
Q_{ist}^{\alpha} & =\sum_{j}[\bar{\mathcal{H}}_{\alpha\alpha}^{-1}]_{ij}[\mathcal{\bar{H}}_{\alpha\alpha}^{-1}]_{js}[\partial_{\alpha\alpha'\alpha_{j}}\bar{\mathcal{L}}]_{jj}[\bar{\mathcal{H}}_{\alpha\alpha}^{-1}]_{jt}\\
Q_{ist}^{\gamma} & =\sum_{j}[\bar{\mathcal{H}}_{\alpha\alpha}^{-1}]_{ij}[\mathcal{\bar{H}}_{\alpha\alpha}^{-1}]_{js}[\partial_{\alpha\alpha'\alpha_{j}}\bar{\mathcal{L}}]_{jj}[\bar{\mathcal{H}}_{\alpha\gamma}^{-1}]_{jt}
\end{align*}
and note that, by Lemma \ref{lem:H_approx}, we have $\max_{i\ne s\ne t}Q_{ist}=O_{p}(N^{-2})$,
while the max such that two of $(i,s,t)$ are equal is $O_{p}(N^{-1})$
and $Q_{iii}=O_{p}(1)$. We can write the term now as
\begin{align*}
 & \frac{1}{(N-1)^{4}}\sum_{i,i'\ne i}\sum_{s,s'\ne s}\sum_{t,t'\ne t}\sum_{r,r'\ne r}Q_{ist}^{\alpha}(\partial_{\pi}\ell_{i,i'})(\partial_{\pi}\ell_{s,s'})(\partial_{\beta\pi}\ell_{t,t'})(\partial_{\beta}\ell_{r,r'})\\
 & +\frac{1}{(N-1)^{4}}\sum_{i,i'\ne i}\sum_{s,s'\ne s}\sum_{t,t'\ne t}\sum_{r,r'\ne r}Q_{ist}^{\gamma}(\partial_{\pi}\ell_{i,i'})(\partial_{\pi}\ell_{s,s'})(\partial_{\beta\pi}\ell_{t',t})(\partial_{\beta}\ell_{r,r'})
\end{align*}
which is the sum of two V-statistic type terms, of fourth order. We
now consider the effect of the jackknife operation on these terms.
Consider the first of the terms (the result will clearly be identical
for both terms) and note that $Q_{ist}^{\alpha}$ is fixed across
leave-out samples, so that the $k$-th leave-out version of this term
is
\[
\frac{1}{(N-2)^{4}}\sum_{i,i'\ne i}\sum_{s,s'\ne s}\sum_{t,t'\ne t}\sum_{r,r'\ne r}Q_{ist}^{\alpha}(\partial_{\pi}\ell_{i,i'})(\partial_{\pi}\ell_{s,s'})(\partial_{\beta\pi}\ell_{t,t'})(\partial_{\beta}\ell_{r,r'})1_{i,i'}^{k}1_{s,s'}^{k}1_{t,t'}^{k}1_{r,r'}^{k}
\]

The compute the average over the $N-1$ leave-out samples first define
\[
n(i,i',s,s',t,t',r,r')=N-1-\sum_{k}1_{i,i'}^{k}1_{s,s'}^{k}1_{t,t'}^{k}1_{r,r'}^{k}
\]
which counts the number of leave-out samples in which all four observations
appear -- this number depends on whether the four observations appear
in the same set $\mathcal{I}_{k}$, or in two, three of four different
sets. We can write the average as
\begin{align*}
\frac{1}{N-1} & \sum_{k}\frac{1}{(N-2)^{4}}\sum_{i,i'\ne i}\sum_{s,s'\ne s}\sum_{t,t'\ne t}\sum_{r,r'\ne r}Q_{ist}^{\alpha}(\partial_{\pi}\ell_{i,i'})(\partial_{\pi}\ell_{s,s'})(\partial_{\beta\pi}\ell_{t,t'})(\partial_{\beta}\ell_{r,r'})1_{i,i'}^{k}1_{s,s'}^{k}1_{t,t'}^{k}1_{r,r'}^{k}\\
 & =\frac{1}{(N-1)(N-2)^{4}}\sum_{i,i'\ne i}\sum_{s,s'\ne s}\sum_{t,t'\ne t}\sum_{r,r'\ne r}Q_{ist}^{\alpha}(\partial_{\pi}\ell_{i,i'})(\partial_{\pi}\ell_{s,s'})(\partial_{\beta\pi}\ell_{t,t'})(\partial_{\beta}\ell_{r,r'}) \\
 &\qquad\times \big(N-1-n(i,i',s,s',t,t',r,r')\big)
\end{align*}
The jackknifed term is then equal to
\begin{align*}
\vert\mathcal{J}\vert= & \lvert\frac{N-1}{(N-1)^{4}}\sum_{i,i'\ne i}\sum_{s,s'\ne s}\sum_{t,t'\ne t}\sum_{r,r'\ne r}Q_{ist}^{\alpha}(\partial_{\pi}\ell_{i,i'})(\partial_{\pi}\ell_{s,s'})(\partial_{\beta\pi}\ell_{t,t'})(\partial_{\beta}\ell_{r,r'})\\
- & \frac{N-2}{(N-1)(N-2)^{4}}\sum_{i,i'\ne i}\sum_{s,s'\ne s}\sum_{t,t'\ne t}\sum_{r,r'\ne r}Q_{ist}^{\alpha}(\partial_{\pi}\ell_{i,i'})(\partial_{\pi}\ell_{s,s'})(\partial_{\beta\pi}\ell_{t,t'})(\partial_{\beta}\ell_{r,r'}) \\
&\qquad\times \big(N-1-n(i,i',s,s',t,t',r,r')\big)\rvert\\
\leq & \frac{C}{N^{3}}\sum_{j=1}^{4}\lvert\sum_{i,i'\ne i}\sum_{s,s'\ne s}\sum_{t,t'\ne t}\sum_{r,r'\ne r}Q_{ist}^{\alpha}(\partial_{\pi}\ell_{i,i'})(\partial_{\pi}\ell_{s,s'})(\partial_{\beta\pi}\ell_{t,t'})(\partial_{\beta}\ell_{r,r'})1\{n(i,i',s,s',t,t',r,r')=j\}\rvert
\end{align*}
since $(\frac{1}{(N-1)^{3}}-\frac{N-1-n(i,i',s,s',t,t',r,r')}{(N-1)(N-2)^{3}})=O(N^{-3})$.

To bound the summation, we decompose the sum based on the order of
$Q_{ist}^{\alpha}$, using the shorthand notation $1_{n=j}=1\{n(i,i',s,s',t,t',r,r')=j\}$.
For some $j$ we have
\begin{align*}
\vert & \sum_{i,i'\ne i}\sum_{s,s'\ne s}\sum_{t,t'\ne t}\sum_{r,r'\ne r}Q_{ist}^{\alpha}(\partial_{\pi}\ell_{i,i'})(\partial_{\pi}\ell_{s,s'})(\partial_{\beta\pi}\ell_{t,t'})(\partial_{\beta}\ell_{r,r'})1_{n=j}\vert\\
 & \leq\vert\sum_{i,i'\ne i}\sum_{s\ne i,s'\ne i}\sum_{t\ne(i,s),t'\ne t}\sum_{r,r'\ne r}Q_{ist}^{\alpha}(\partial_{\pi}\ell_{i,i'})(\partial_{\pi}\ell_{s,s'})(\partial_{\beta\pi}\ell_{t,t'})(\partial_{\beta}\ell_{r,r'})1_{n=j}\vert\\
 & +\vert\sum_{i,i'\ne i}\sum_{s'\ne i}\sum_{t\ne i,t'\ne t}\sum_{r,r'\ne r}Q_{iit}^{\alpha}(\partial_{\pi}\ell_{i,i'})(\partial_{\pi}\ell_{i,s'})(\partial_{\beta\pi}\ell_{t,t'})(\partial_{\beta}\ell_{r,r'})1_{n=j}\vert\\
 & +\vert\sum_{i,i'\ne i}\sum_{s\ne i,s'\ne s}\sum_{t'\ne i}\sum_{r,r'\ne r}Q_{isi}^{\alpha}(\partial_{\pi}\ell_{i,i'})(\partial_{\pi}\ell_{s,s'})(\partial_{\beta\pi}\ell_{i,t'})(\partial_{\beta}\ell_{r,r'})1_{n=j}\vert\\
 & +\vert\sum_{i,i'\ne i}\sum_{s\ne i,s'\ne s}\sum_{t'\ne s}\sum_{r,r'\ne r}Q_{iss}^{\alpha}(\partial_{\pi}\ell_{i,i'})(\partial_{\pi}\ell_{s,s'})(\partial_{\beta\pi}\ell_{s,t'})(\partial_{\beta}\ell_{r,r'})1_{n=j}\vert\\
 & +\vert\sum_{i,i'\ne i}\sum_{s'\ne i}\sum_{t'\ne i}\sum_{r,r'\ne r}Q_{iii}^{\alpha}(\partial_{\pi}\ell_{i,i'})(\partial_{\pi}\ell_{i,s'})(\partial_{\beta\pi}\ell_{i,t'})(\partial_{\beta}\ell_{r,r'})1_{n=j}\vert
\end{align*}
The first term is
\begin{align*}
\bar{E}\Big[ & \Big(\sum_{i,i'\ne i}\sum_{s\ne i,s'\ne i}\sum_{t\ne(i,s),t'\ne t}\sum_{r,r'\ne r}Q_{ist}^{\alpha}(\partial_{\pi}\ell_{i,i'})(\partial_{\pi}\ell_{s,s'})(\partial_{\beta\pi}\ell_{t,t'})(\partial_{\beta}\ell_{r,r'})1_{n=j}\Big)^{2}\Big]\\
= & \sum_{i,i'\ne i}\sum_{s\ne i,s'\ne i}\sum_{t\ne(i,s),t'\ne t}\sum_{r,r'\ne r}\sum_{j,j'\ne j}\sum_{a\ne j,a'\ne j}\sum_{b\ne(j,a),b'\ne b}\sum_{c,c'\ne c}Q_{ist}^{\alpha}Q_{jab}^{\alpha}\\
 & \times\bar{E}\big[(\partial_{\pi}\ell_{i,i'})(\partial_{\pi}\ell_{s,s'})(\partial_{\beta\pi}\ell_{t,t'})(\partial_{\beta}\ell_{r,r'})(\partial_{\pi}\ell_{j,j'})(\partial_{\pi}\ell_{a,a'})(\partial_{\beta\pi}\ell_{b,b'})(\partial_{\beta}\ell_{c,c'})\big]1_{n=j}1_{n'=j}
\end{align*}
Since we have $\max_{i\ne s\ne t}Q_{ist}=O_{p}(N^{-2})$, and $\bar{E}[\partial_{\pi}\ell_{ii'}]=\bar{E}[\partial_{\beta}\ell_{ii'}]=0$,
this term is $O_{p}(N^{6})$. Similarly, for the final term we have
\begin{align*}
\bar{E}\Big[\Big( & \sum_{i,i'\ne i}\sum_{s'\ne i}\sum_{t'\ne i}\sum_{r,r'\ne r}Q_{iii}^{\alpha}(\partial_{\pi}\ell_{i,i'})(\partial_{\pi}\ell_{i,s'})(\partial_{\beta\pi}\ell_{i,t'})(\partial_{\beta}\ell_{r,r'})1_{n=j}\Big)^{2}\Big]\\
= & \sum_{i,i'\ne i}\sum_{s'\ne i}\sum_{t'\ne i}\sum_{r,r'\ne r}\sum_{j,j'\ne j}\sum_{a'\ne j}\sum_{b'\ne j}\sum_{c,c'\ne c}Q_{iii}^{\alpha}Q_{jjj}^{\alpha}\\
 & \times\bar{E}\big[(\partial_{\pi}\ell_{i,i'})(\partial_{\pi}\ell_{i,s'})(\partial_{\beta\pi}\ell_{i,t'})(\partial_{\beta}\ell_{r,r'})(\partial_{\pi}\ell_{j,j'})(\partial_{\pi}\ell_{j,a'})(\partial_{\beta\pi}\ell_{j,b'})(\partial_{\beta}\ell_{c,c'})\big]\\
= & O_{p}(N^{8})
\end{align*}
and hence the final term is $O_{p}(N^{4})$. The remaining terms can
be shown to also be $O_{p}(N^{4})$. This gives $\mathcal{J}=O_{p}(N)$.
Nearly identical steps apply to the remaining components of the term
$\mathcal{S}'\bar{\mathcal{H}}^{-1}\sum_{f}(\partial_{\phi\phi'\phi_{f}}\bar{\mathcal{L}})\big[\bar{\mathcal{H}}^{-1}(\partial_{\beta\phi}\mathcal{L})\big]_{f}\mathcal{\bar{H}}^{-1}\mathcal{S}\mathcal{S}_{\beta}$
and adding back the terms $\bar{W}^{-2}\bar{W}_{b}$, we find that
the jackknife operator applied to the full term satisfies
\[
\mathcal{J}\Big[\bar{W}^{-2}\bar{W}_{b}\mathcal{S}'\bar{\mathcal{H}}^{-1}\sum_{f}(\partial_{\phi\phi'\phi_{f}}\bar{\mathcal{L}})\big[\bar{\mathcal{H}}^{-1}(\partial_{\beta\phi}\mathcal{L})\big]_{f}\mathcal{\bar{H}}^{-1}\mathcal{S}\mathcal{S}_{\beta}\Big]=o_{p}(1).
\]
Each of the remaining terms in $\mathcal{S}'\mathcal{H}^{-1}\mathcal{H}_{bb}\mathcal{H}^{-1}\mathcal{S}\mathcal{S}_{\beta}$
can similarly be shown to have the same V-statistic like structure,
up to fifth-order. Inspection of the expressions in Section \ref{sec:full_expansion}
shows that an essentially identical proof can be used for any of these
terms. That is, we first replace matrices with expansions around their
conditional expectations, and the expand the terms to give sums of
products of terms like the above. The jackknifed versions of these
sums can then straightforwardly be shown to have the same asymptotic
order as the original terms, and hence each of the jackknifed terms
is $o_{p}(1)$ as required.

\subsection{Results for weighted jackknife} \label{subsec:weighted_proof}

In order to complete the proof of Theorem 1 in Section \ref{subsec:thm1proof},
we must show:

(1) $(N-1)\lVert W_{J}-W_{N}\rVert=o_{p}(1)$

(2) $(N-1)(\widehat{W}_{J}-W_{J})(\widehat{\beta}-\beta)-(N-2)\frac{1}{N-1}\sum_{k}(\widehat{W}_{(k)}-W_{(k)})(\widehat{\beta}_{(k)}-\beta)=o_{p}(N^{-1})$

(3) $\lVert \widehat{W}_J - \overline{W}_N\rVert=o_p(1)$

\subsubsection*{(1) $(N-1)(W_{J}-W_{N})=o_{p}(1)$}

Expanding out this term we get
\begin{align*}
&(N-1)(W_{J}-W_{N}) \\
&=(N-1)\frac{1}{N}\Big\{\frac{1}{N-1}\sum_{k}\partial_{\beta\beta}\mathcal{L}_{(k)}\\
&+\frac{1}{N-1}\sum_{k}(\partial_{\beta\phi'}\mathcal{L}_{(k)})\mathcal{H}_{(k)}^{-1}(\partial_{\beta\phi}\mathcal{L}_{(k)})-\partial_{\beta\beta}\mathcal{L}-(\partial_{\beta\phi'}\mathcal{L})\mathcal{H}^{-1}(\partial_{\beta\phi}\mathcal{L})\Big\}\\
 & =\frac{N-1}{N}\Big\{\frac{1}{N-1}\sum_{k}(\partial_{\beta\phi'}\mathcal{L}_{(k)})\mathcal{H}_{(k)}^{-1}(\partial_{\beta\phi}\mathcal{L}_{(k)})-(\partial_{\beta\phi'}\mathcal{L})\mathcal{H}^{-1}(\partial_{\beta\phi}\mathcal{L})\Big\}\\
 & =\frac{N-1}{N}\Big\{\frac{1}{N-1}\sum_{k}(\partial_{\beta\phi'}\mathcal{L}_{(k)})\Big(\mathcal{\bar{H}}^{-1}-\mathcal{\bar{H}}^{-1}\tilde{\mathcal{H}}_{(k)}\mathcal{\bar{H}}^{-1}+\mathcal{\bar{H}}^{-1}\tilde{\mathcal{H}}_{(k)}\mathcal{\bar{H}}^{-1}\tilde{\mathcal{H}}_{(k)}\mathcal{\bar{H}}^{-1}\Big)(\partial_{\beta\phi}\mathcal{L}_{(k)})\\
 &-(\partial_{\beta\phi'}\mathcal{L})\Big(\mathcal{\bar{H}}^{-1}-\mathcal{\bar{H}}^{-1}\tilde{\mathcal{H}}\mathcal{\bar{H}}^{-1}+
 \mathcal{\bar{H}}^{-1}\tilde{\mathcal{H}}\mathcal{\bar{H}}^{-1}\tilde{\mathcal{H}}\mathcal{\bar{H}}^{-1}\Big)(\partial_{\beta\phi}\mathcal{L})\Big\} +o_p(1)
\end{align*}
where the remainder term is $o_p(1)$ since $\lVert\mathcal{H}^{-1}-\mathcal{\bar{H}}^{-1}+\mathcal{\bar{H}}^{-1}\tilde{\mathcal{H}}\mathcal{\bar{H}}^{-1}-\mathcal{\bar{H}}^{-1}\tilde{\mathcal{H}}\mathcal{\bar{H}}^{-1}\tilde{\mathcal{H}}\mathcal{\bar{H}}^{-1}\rVert=O_{p}(N^{-\frac{3}{2}+6\epsilon})$
by Lemma \ref{lem:H1_approx}, $\lVert\partial_{\beta\phi'}\mathcal{L}\rVert=O_{p}(N^{1/2})$,
and similarly for the leaveout versions.

Further expanding out the first of these approximation terms gives
\begin{align*}
\frac{1}{N-1} & \sum_{k}(\partial_{\beta\phi'}\mathcal{L}_{(k)})\mathcal{\bar{H}}^{-1}(\partial_{\beta\phi}\mathcal{L}_{(k)})-(\partial_{\beta\phi'}\mathcal{L})\mathcal{\bar{H}}^{-1}(\partial_{\beta\phi}\mathcal{L})\\
= & \frac{1}{N-1}\sum_{k}\frac{1}{(N-2)^{2}}\sum_{i}\sum_{j\ne i}\sum_{s}\sum_{t\ne s}\Gamma_{isjt}(\partial_{\beta\pi}\ell_{ij})(\partial_{\beta\pi}\ell_{st})1_{ij}^{k}1_{st}^{k}\\
 & -\frac{1}{(N-1)^{2}}\sum_{i}\sum_{j\ne i}\sum_{s}\sum_{t\ne s}\Gamma_{isjt}(\partial_{\beta\pi}\ell_{ij})(\partial_{\beta\pi}\ell_{st})\\
= & \frac{1}{(N-1)^{2}}\sum_{i}\sum_{j\ne i}\sum_{s}\sum_{t\ne s}\Gamma_{isjt}(\partial_{\beta\pi}\ell_{ij})(\partial_{\beta\pi}\ell_{st})\Big(\frac{(N-1)}{(N-2)^{2}}\sum_{k}1_{ij}^{k}1_{st}^{k}-1\Big)\\
= & \frac{1}{(N-1)^{2}(N-2)}\sum_{i}\sum_{j\ne i}\sum_{s}\sum_{t\ne s}\Gamma_{isjt}(\partial_{\beta\pi}\ell_{ij})(\partial_{\beta\pi}\ell_{st})I_{(ij)(st)}^{1}\\
 & -\frac{1}{(N-1)^{2}(N-2)^{2}}\sum_{i}\sum_{j\ne i}\sum_{s}\sum_{t\ne s}\Gamma_{isjt}(\partial_{\beta\pi}\ell_{ij})(\partial_{\beta\pi}\ell_{st})(1-I_{(ij)(st)}^{1}) \\
 &=o_p(1)
\end{align*}
since $\Gamma_{isjt}$ is $O_p(1)$ only when either $i=s$ or $j=t$, and is $O_p(N^{-1})$  otherwise.
Expanding out the next term in the approximation gives
\begin{align*}
\frac{1}{N-1} & \sum_{k}(\partial_{\beta\phi'}\mathcal{L}_{(k)})\mathcal{\bar{H}}^{-1}\tilde{\mathcal{H}}_{(k)}\mathcal{\bar{H}}^{-1}(\partial_{\beta\phi}\mathcal{L}_{(k)})-(\partial_{\beta\phi'}\mathcal{L})\mathcal{\bar{H}}^{-1}\tilde{\mathcal{H}}\mathcal{\bar{H}}^{-1}(\partial_{\beta\phi}\mathcal{L})\\
= & \frac{1}{N-1}\sum_{k}\frac{1}{(N-2)^{3}}\sum_{i,j,s}\sum_{i'\ne i}\sum_{j'\ne j}\sum_{s'\ne s}\Gamma_{ii'jj'}\Gamma_{jj'ss'}(\partial_{\beta\pi}\ell_{ii'})(\partial_{\beta\pi}\ell_{ss'})(\partial_{\pi^{2}}\ell_{jj'})1_{ii'}^{k}1_{jj'}^{k}1_{ss'}^{k}\\
 & -\frac{1}{(N-1)^{3}}\sum_{i,j,s}\sum_{i'\ne i}\sum_{j'\ne j}\sum_{s'\ne s}\Gamma_{ii'jj'}\Gamma_{jj'ss'}(\partial_{\beta\pi}\ell_{ii'})(\partial_{\beta\pi}\ell_{ss'})(\partial_{\pi^{2}}\ell_{jj'})\\
= & \frac{1}{(N-1)^{3}}\sum_{i,j,s}\sum_{i'\ne i}\sum_{j'\ne j}\sum_{s'\ne s}\Gamma_{ii'jj'}\Gamma_{jj'ss'}(\partial_{\beta\pi}\ell_{ii'})(\partial_{\beta\pi}\ell_{ss'})(\partial_{\pi^{2}}\ell_{jj'})\Big(\frac{(N-1)^{2}}{(N-2)^{3}}\sum_{k}1_{ii'}^{k}1_{jj'}^{k}1_{ss'}^{k}-1\Big)\\
= & O(N^{-1})\frac{1}{(N-1)^{3}}\sum_{i,j,s}\sum_{i'\ne i}\sum_{j'\ne j}\sum_{s'\ne s}\Gamma_{ii'jj'}\Gamma_{jj'ss'}(\partial_{\beta\pi}\ell_{ii'})(\partial_{\beta\pi}\ell_{ss'})(\partial_{\pi^{2}}\ell_{jj'})I_{(ii')(ss')(jj')}^{1}\\
 & +O(N^{-1})\frac{1}{(N-1)^{3}}\sum_{i,j,s}\sum_{i'\ne i}\sum_{j'\ne j}\sum_{s'\ne s}\Gamma_{ii'jj'}\Gamma_{jj'ss'}(\partial_{\beta\pi}\ell_{ii'})(\partial_{\beta\pi}\ell_{ss'})(\partial_{\pi^{2}}\ell_{jj'})I_{(ii')(ss')(jj')}^{2}\\
 & +O(N^{-2})\frac{1}{(N-1)^{3}}\sum_{i,j,s}\sum_{i'\ne i}\sum_{j'\ne j}\sum_{s'\ne s}\Gamma_{ii'jj'}\Gamma_{jj'ss'}(\partial_{\beta\pi}\ell_{ii'})(\partial_{\beta\pi}\ell_{ss'})(\partial_{\pi^{2}}\ell_{jj'})I_{(ii')(ss')(jj')}^{3}\\
= & o_{p}(1)
\end{align*}
where the third equality follows from the fact that $\big(\frac{(N-1)^{2}}{(N-2)^{3}}\sum_{k}1_{ii'}^{k}1_{jj'}^{k}1_{ss'}^{k}-1\big)$ is $O(N^{-1})$ whenever $\sum_{k}1_{ii'}^{k}1_{jj'}^{k}1_{ss'}^{k}$ is equal to $(N-2)$ or $(N-3)$, but is $O(N^{-2})$ when the three observations span three different $\mathcal{I}_k$, while the
final equality is due to the fact that: (1)
$\Gamma_{ii'jj'}\Gamma_{jj'ss'}$ is $O_{p}(1)$ only if one of $i=j,i'=j'$ holds and one of $j=s,j'=s'$ holds, and is $O_{p}(N^{-1})$ if only
one of the two holds, and is $O_{p}(N^{-2})$ if neither hold, and (2) when $I^1_{(ii')(ss')(jj')}=1$, the indices $s'$ and $j'$ are fixed given $i'$, while for $I^2_{(ii')(ss')(jj')}=1$, choosing two of these fixes the third. Similar steps show that
\[
\frac{1}{N-1}\sum_{k}(\partial_{\beta\phi'}\mathcal{L}_{(k)})\mathcal{\bar{H}}^{-1}\tilde{\mathcal{H}}_{(k)}\mathcal{\bar{H}}^{-1}\tilde{\mathcal{H}}_{(k)}\mathcal{\bar{H}}^{-1}(\partial_{\beta\phi}\mathcal{L}_{(k)})-(\partial_{\beta\phi'}\mathcal{L})\mathcal{\bar{H}}^{-1}\tilde{\mathcal{H}}\mathcal{\bar{H}}^{-1}\tilde{\mathcal{H}}\mathcal{\bar{H}}^{-1}(\partial_{\beta\phi}\mathcal{L})=o_{p}(1)
\]
This gives the required result.

\subsubsection*{(2) $(N-1)(\widehat{W}_{J}-W_{J})(\widehat{\beta}-\beta)-(N-2)\frac{1}{N-1}\sum_{k}(\widehat{W}_{(k)}-W_{(k)})(\widehat{\beta}_{(k)}-\beta)=o_{p}(N^{-1})$}

We first provide a first-order expansion for $\widehat{W}$. Recall
that we can write $\widehat{W}=\big(\partial_{bb}\mathcal{L}^{*}(0,0)\big)^{-1}$,
and so letting $\mathcal{W}^{*}(b,s)=\big(\partial_{bb}\mathcal{L}^{*}(b,s)\big)^{-1}$
and $\mathcal{W}^{*}=\big(\partial_{bb}\mathcal{L}^{*}(\mathcal{S}_{\beta},\mathcal{S})\big)^{-1}$
we can expand to give
\begin{align*}
\widehat{W} & =\mathcal{W}^{*}+(\partial_{b}\mathcal{W}^{*})\mathcal{S_{\beta}}+(\partial_{s'}\mathcal{W}^{*})\mathcal{S}\\
 & +\frac{1}{2}\mathcal{S}'(\partial_{ss'}\mathcal{W}^{*})\mathcal{S}+o_{p}(N^{-1})\\
\widehat{W}_{N}-W_{N} & =\frac{1}{N}(\partial_{b}\mathcal{W}^{*})\mathcal{S_{\beta}}+\frac{1}{N}(\partial_{s'}\mathcal{W}^{*})\mathcal{S}+\frac{1}{2}\frac{1}{N}\mathcal{S}'(\partial_{ss'}\mathcal{W}^{*})\mathcal{S}+o_{p}(N^{-1})
\end{align*}
Up to first order we find
\begin{align*}
N(\widehat{W}_{N}-W_{N}) & =(\partial_{b}\mathcal{W}^{*})\mathcal{S_{\beta}}+(\partial_{s'}\mathcal{W}^{*})\mathcal{S}+\frac{1}{2}\mathcal{S}'(\partial_{ss'}\mathcal{W}^{*})\mathcal{S}+o_{p}(1)\\
 & =\bar{W}_{b}\mathcal{S_{\beta}}+\bar{W}_{s'}\mathcal{S}+\bar{W}^{2}\mathcal{S}'\mathcal{\bar{H}}^{-1}\mathcal{\bar{H}}_{b}\mathcal{\bar{H}}^{-1}\mathcal{\bar{H}}_{b}\mathcal{\bar{H}}^{-1}\mathcal{S}\\
 & \quad-\frac{1}{2}\bar{W}^{2}\mathcal{S}'\mathcal{\bar{H}}^{-1}\mathcal{\bar{H}}_{bb}\mathcal{\bar{H}}^{-1}\mathcal{S}+o_{p}(1)
\end{align*}
and similarly for the leaveout versions $\widehat{W}_{(k)}$. We can
then write
\begin{align*}
N(\widehat{W}_{J}-W_{J}) &= N\frac{1}{N-1}\sum_k (\widehat{W}_{(k)}-W_{(k)}) \\ &=\frac{1}{N-1}\sum_{k}\Big(\bar{W}_{b}\mathcal{S}_{\beta,(k)}+\bar{W}_{s'}\mathcal{S}_{(k)}+\bar{W}^{2}\mathcal{S}_{(k)}'\mathcal{\bar{H}}^{-1}\mathcal{\bar{H}}_{b}\mathcal{\bar{H}}^{-1}\mathcal{\bar{H}}_{b}\mathcal{\bar{H}}^{-1}\mathcal{S}_{(k)}\\
 & \quad-\frac{1}{2}\bar{W}^{2}\mathcal{S}_{(k)}'\mathcal{\bar{H}}^{-1}\mathcal{\bar{H}}_{bb}\mathcal{\bar{H}}^{-1}\mathcal{S}_{(k)}\Big)+o_{p}(1)\\
 &= \frac{1}{N-1}\sum_k \big( V_{1,(k)} + V_{2,(k)} + V_{3,(k)} + V_{4,(k)} \big)+o_p(1) \\
N(\widehat{\beta}-\beta) & =\bar{W}_{N}^{-1}\big(U^{(0)}+U^{(1)}\big)+o_{p}(1)
\end{align*}
Using these expansions (and the equivalent leaveout versions) we can write
\begin{align*}
	&(N-1)(\widehat{W}_{J}-W_{J})(\widehat{\beta}-\beta)-(N-2)\frac{1}{N-1}\sum_{k}(\widehat{W}_{(k)}-W_{(k)})(\widehat{\beta}_{(k)}-\beta) \\
	&= \frac{N-1}{N^2} \frac{1}{N-1}\sum_k \big( V_{1,(k)} + V_{2,(k)} + V_{3,(k)} + V_{4,(k)} \big)\bar{W}_{N}^{-1}\big(U^{(0)}+U^{(1)}\big) \\
	&\quad - \frac{N-2}{N^2} \frac{1}{N-1}\sum_k \big( V_{1,(k)} + V_{2,(k)} + V_{3,(k)} + V_{4,(k)} \big)\bar{W}_{N}^{-1}\big(U^{(0)}_{(k)}+U^{(1)}_{(k)}\big) + o_p(N^{-1})
\end{align*}

Following similar steps to that of the higher-order jackknife expansion,
we can show that each of the terms in the product $N(\widehat{W}_{J}-W_{J})\times(\widehat{\beta}-\beta)$
is $o_{p}(1)$ when jackknifed. For example, take the first term in
the expansion of both $N(\widehat{W}_{J}-W_{J})$ and $(\widehat{\beta}-\beta)$
and apply the jackknife to give
\begin{align*}
	(N-1) & \frac{1}{N}\frac{1}{N-1}\sum_{k}V_{1,(k)}\frac{1}{N}\bar{W}_{N}^{-1}U^{(0)}-(N-2)\frac{1}{N-1}\sum_{k}\frac{1}{N}V_{1,(k)}\frac{1}{N}\bar{W}_{N}^{-1}U_{(k)}^{(0)}\\
	= & \frac{1}{N^{2}}(N-1)\bar{W}_{b}\mathcal{S}_{\beta}\bar{W}_{N}^{-1}\big(\partial_{\beta}\mathcal{L}+(\partial_{\beta\phi'}\bar{\mathcal{L}})\bar{\mathcal{H}}^{-1}\mathcal{S}\big)\\
	& -\frac{1}{N^{2}}(N-2)\frac{1}{N-1}\sum_{k}\bar{W}_{b}\mathcal{S}_{\beta,(k)}\bar{W}_{N}^{-1}\big(\partial_{\beta}\mathcal{L}_{(k)}+(\partial_{\beta\phi'}\bar{\mathcal{L}})\bar{\mathcal{H}}^{-1}\mathcal{S}_{(k)}\big)\\
	&= \frac{1}{N^2} \mathbf{J}\big[ \bar{W}_{b}\bar{W}_{N}^{-1}\mathcal{S}_{\beta}\big(\partial_{\beta}\mathcal{L}+(\partial_{\beta\phi'}\bar{\mathcal{L}})\bar{\mathcal{H}}^{-1}\mathcal{S}\big) \big]\\
	&= o_p(N^{-2})
\end{align*}
where the final line follows from application of Lemma \ref{lem:jack_r0} in the main appendix. Similarly, each of the remaining terms can be shown to be jackknife versions of products of up to four random sums, all of which are $o_p(N^{-1})$, from which the result follows.

\paragraph*{(3) $\lVert \widehat{W}_J - \overline{W}_N\rVert=o_p(1)$}

For the final result, we can apply: (a) the first result $\lVert W_{J}-W_{N}\rVert=o_p (1)$, (b) $\lVert \widehat{W}_{J}-W_{J}\rVert=o_p (1)$ (which following from parameter consistency and the fact that $W$ is a smooth function of parameters), and (c) $\lVert W_{N}-\overline{W}_{N}\rVert=o_p (1)$ from Lemma \ref{lem:H1_approx}, along with the triangle inequality to give the result.

\section{Asymptotic expansion for average effect} \label{app:avg_effect}

\subsection{Expansion for $\widehat{\phi}$}

In order to derive an expansion for the average effect parameter,
we first give an expansion for the fixed effect parameters $\phi=(\alpha',\gamma')'$.
Since we have $\partial_{s}\mathcal{L}^{*}(0,0)=-\widehat{\phi}$
and $\partial_{s}\mathcal{L}^{*}(\mathcal{S}_{\beta},\mathcal{S})=-\phi_{0}$,
we take a Taylor expansion to give

\begin{align*}\widehat{\phi}-\phi_{0} & =(\partial_{bs}\mathcal{L}^{*})\mathcal{S}_{\beta}+(\partial_{ss'}\mathcal{L}^{*})\mathcal{S}\\
	& -\frac{1}{2}(\partial_{bbs}\mathcal{L}^{*})\mathcal{S}_{\beta}^{2}\\
	& -(\partial_{bss'}\mathcal{L}^{*})\mathcal{S}\mathcal{S}_{\beta}\\
	& -\frac{1}{2}\sum_{g}(\partial_{ss's_{g}}\mathcal{L}^{*})\mathcal{S}\mathcal{S}_{g}\\
	& +\frac{1}{2}\sum_{g}(\partial_{bss's_{g}}\mathcal{L}_{(1)}^{*})\mathcal{S}_{g}\mathcal{S}\mathcal{S}_{\beta}\\
	& +\frac{1}{6}\sum_{f,g}(\partial_{ss's_{f}s_{g}}\mathcal{L}_{(1)}^{*})\mathcal{S}_{f}\mathcal{S}_{g}\mathcal{S}\\
	& -\frac{1}{24}\sum_{e,f,g}(\partial_{ss's_{e}s_{f}s_{g}}\mathcal{L}_{(1)}^{*})\mathcal{S}\mathcal{S}_{e}\mathcal{S}_{f}\mathcal{S}_{g}\\
	& +\tilde{R}_{\phi}
\end{align*}

where the remainder is

\begin{align*}\tilde{R}_{\phi} & =\frac{1}{6}\big(\partial_{bbbs}\mathcal{L}^{*}(\bar{b},\bar{s})\big)\mathcal{S}_{\beta}^{3}\\
	& +\frac{1}{2}\big(\partial_{bbss'}\mathcal{L}^{*}(\bar{b},\bar{s})\big)\mathcal{S}\mathcal{S}_{\beta}^{2}\\
	& +\frac{1}{2}\sum_{g}\big(\partial_{bss's_{g}}\mathcal{L}_{(2)}^{*}(\bar{b},\bar{s})\big)\mathcal{S}\mathcal{S}_{g}\mathcal{S}_{\beta}\\
	& +\frac{1}{6}\sum_{f,g}\big(\partial_{ss's_{f}s_{g}}\mathcal{L}_{(2)}^{*}(\bar{b},\bar{s})\big)\mathcal{S}\mathcal{S}_{f}\mathcal{S}_{g}\\
	& -\frac{1}{4}\sum_{g}\big(\partial_{bbss's_{g}}\mathcal{L}_{(1)}^{*}(\bar{b},\bar{s})\big)\mathcal{S}\mathcal{S}_{g}\mathcal{S}_{\beta}^{2}\\
	& -\frac{1}{6}\sum_{g,h}\big(\partial_{bss's_{g}s_{h}}\mathcal{L}_{(1)}^{*}(\bar{b},\bar{s})\big)\mathcal{S}\mathcal{S}_{g}\mathcal{S}_{h}\mathcal{S}_{\beta}\\
	& +\frac{1}{120}\sum_{e,f,g,h}(\partial_{ss's_{e}s_{f}s_{g}s_{h}}\mathcal{L}_{(1)}^{*})\mathcal{S}\mathcal{S}_{e}\mathcal{S}_{f}\mathcal{S}_{g}\mathcal{S}_{h}
\end{align*}

\begin{lem}
	Let Assumptions 1 and 2 hold. Then the remainder term in the expansion
	for $\widehat{\phi}$ satisfies
	\begin{align*}
		\lVert\tilde{R}_{\phi}\rVert_{q} & = O_{p}(N^{-5/2+18\epsilon})
	\end{align*}
\end{lem}
\begin{proof}
	The result follows from application of the bounds in \textbackslash ref\{sec:bounds\_ind\}
	to the expressions derived in \textbackslash ref\{sec:expressions\_ind\}.
	For the final term, we have that
	
	\begin{align*}\lVert\partial_{ss's_{e}s_{f}s_{g}s_{h}}\mathcal{L}_{(1)}^{*}\rVert_{q} & \leq36\lVert\mathcal{H}^{-1}\mathcal{H}_{s_{e}}\mathcal{H}^{-1}\mathcal{H}_{s_{f}}\mathcal{H}^{-1}\mathcal{H}_{s_{g}}\mathcal{H}^{-1}\mathcal{H}_{s_{h}}\mathcal{H}^{-1}\rVert_{q}\\
		& +9\lVert\mathcal{H}^{-1}\mathcal{H}_{s_{e}s_{f}}\mathcal{H}^{-1}\mathcal{H}_{s_{g}}\mathcal{H}^{-1}\mathcal{H}_{s_{h}}\mathcal{H}^{-1}\rVert_{q}\\
		& +9\lVert\mathcal{H}^{-1}\mathcal{H}_{s_{f}}\mathcal{H}^{-1}\mathcal{H}_{s_{e}s_{g}}\mathcal{H}^{-1}\mathcal{H}_{s_{h}}\mathcal{H}^{-1}\rVert_{q}\\
		& +9\lVert\mathcal{H}^{-1}\mathcal{H}_{s_{f}}\mathcal{H}^{-1}\mathcal{H}_{s_{g}}\mathcal{H}^{-1}\mathcal{H}_{s_{e}s_{h}}\mathcal{H}^{-1}\rVert_{q}\\
		& +6\lVert\mathcal{H}^{-1}\mathcal{H}_{s_{e}}\mathcal{H}^{-1}\mathcal{H}_{s_{f}s_{g}}\mathcal{H}^{-1}\mathcal{H}_{s_{h}}\mathcal{H}^{-1}\rVert_{q}\\
		& +2\lVert\mathcal{H}^{-1}\mathcal{H}_{s_{e}s_{f}s_{g}}\mathcal{H}^{-1}\mathcal{H}_{s_{h}}\mathcal{H}^{-1}\rVert_{q}\\
		& +2\lVert\mathcal{H}^{-1}\mathcal{H}_{s_{f}s_{g}}\mathcal{H}^{-1}\mathcal{H}_{s_{e}s_{h}}\mathcal{H}^{-1}\rVert_{q}\\
		& +6\lVert\mathcal{H}^{-1}\mathcal{H}_{s_{e}}\mathcal{H}^{-1}\mathcal{H}_{s_{g}}\mathcal{H}^{-1}\mathcal{H}_{s_{f}s_{h}}\mathcal{H}^{-1}\rVert_{q}\\
		& +2\lVert\mathcal{H}^{-1}\mathcal{H}_{s_{e}s_{g}}\mathcal{H}^{-1}\mathcal{H}_{s_{f}s_{h}}\mathcal{H}^{-1}\rVert_{q}\\
		& +2\lVert\mathcal{H}^{-1}\mathcal{H}_{s_{g}}\mathcal{H}^{-1}\mathcal{H}_{s_{e}s_{f}s_{h}}\mathcal{H}^{-1}\rVert_{q}\\
		& +6\lVert\mathcal{H}^{-1}\mathcal{H}_{s_{e}}\mathcal{H}^{-1}\mathcal{H}_{s_{f}}\mathcal{H}^{-1}\mathcal{H}_{s_{g}s_{h}}\mathcal{H}^{-1}\rVert_{q}\\
		& +2\lVert\mathcal{H}^{-1}\mathcal{H}_{s_{e}s_{f}}\mathcal{H}^{-1}\mathcal{H}_{s_{g}s_{h}}\mathcal{H}^{-1}\rVert_{q}\\
		& +2\lVert\mathcal{H}^{-1}\mathcal{H}_{s_{f}}\mathcal{H}^{-1}\mathcal{H}_{s_{e}s_{g}s_{h}}\mathcal{H}^{-1}\rVert_{q}\\
		& +2\lVert\mathcal{H}^{-1}\mathcal{H}_{s_{e}}\mathcal{H}^{-1}\mathcal{H}_{s_{f}s_{g}s_{h}}\mathcal{H}^{-1}\rVert_{q}\\
		& +\lVert\mathcal{H}^{-1}\mathcal{H}_{s_{e}s_{f}s_{g}s_{h}}\mathcal{H}^{-1}\rVert_{q}
	\end{align*}
	
	so that $\lVert\partial_{ssssss}\mathcal{L}_{(1)}^{*}\rVert_{q}=O_{p}(N^{8\epsilon})$.
	We then have
	\begin{align*}
		\lVert\sum_{e,f,g,h}\big(\partial_{ss's_{e}s_{f}s_{g}s_{h}}\mathcal{L}_{(1)}^{*}(\bar{b},\bar{s})\big)\mathcal{S}\mathcal{S}_{e}\mathcal{S}_{f}\mathcal{S}_{g}\mathcal{S}_{h}\rVert_{q} & \leq\lVert\partial_{sssss}\mathcal{L}^{*}\rVert_{q}\lVert\mathcal{S}\rVert_{q}^{5}\\
		& =O_{p}(N^{-5/2+18\epsilon})
	\end{align*}
\end{proof}

\subsection{\label{subsec:avg_effect_bounds}Bounds for derivation of expansion}

In order to provide bounds on the terms in the asymptotic expansion
for the average effect, we first state some additional bounds on terms
that are used in the expansion for $\widehat{\phi}$. The first set
of bounds shows that $N\partial_{\phi}\bar{\Delta}$ has $O_{p}(1)$
elements, while $N\partial_{\phi\phi'}\bar{\Delta}$ is a matrix
that satisfies the conditions in Lemma \ref{lem:HAH_bound}. In addition,
we show that $N\partial_{\phi}\tilde{\Delta}$ satisfies the conditions
in Lemma \ref{lem:S6}, while $N\partial_{\phi\phi'}\tilde{\Delta}$
satisfies a bound like that in Lemma \ref{lem:H_tilde}.

The next lemmas are used to bound the jackknifed versions of the new
forms of product terms appearing in the expansion for the average
effects. 
\begin{lem} \label{lem:Delta_12}
	Let $\lambda$ be a set of $r$ observations
	$(i,j)$ involving $p$ unique agents, and $\Lambda_{N}$ be the collection
	of all such $\lambda$ formed by permuting the agents in $\lambda$.
	Then, under Assumption 3,
	
	(i) $\partial_{\phi}\bar{\Delta}$ has $O_{p}(N^{-1})$ elements
	
	(ii) $\partial_{\alpha\alpha'}\bar{\Delta}$, $\partial_{\alpha\gamma'}\bar{\Delta}$,
	$\partial_{\gamma\alpha'}\bar{\Delta}$, and $\partial_{\gamma\gamma'}\bar{\Delta}$
	each have $O_{p}(N^{-1})$ diagonal elements and $O_{p}(N^{-2})$
	off-diagonal elements
\end{lem}
\begin{proof}
	Let $\lambda_{\alpha}$ denote the set of $p_{\alpha}$ sender agents
	in the observations within $\lambda$, and $\lambda_{\gamma}$ the
	set of $p_{\gamma}$ receiving agents. There are $\vert\Lambda_{N}\vert=\frac{N!}{(N-p)!}$
	ways of selecting the $p$ agents in $\lambda$. Among these permutations,
	agent $i$ is a sender $p_{\alpha}\frac{(N-1)!}{(N-p)!}$ times, while
	node $j$ is the receiver $p_{\gamma}\frac{(N-1)!}{(N-p)!}$ times.
	Using this, the first derivatives of $\bar{\Delta}$ with respect
	to the fixed effects are
	\begin{align*}
		\partial_{\alpha_{i}}\bar{\Delta} & =\frac{1}{\vert\Lambda_{N}\vert}\sum_{\lambda:i\in\lambda_{\alpha}}\partial_{\alpha_{i}}\bar{m}_{\lambda}=O_{p}(N^{-1})\\
		\partial_{\gamma_{i}}\bar{\Delta} & =\frac{1}{\vert\Lambda_{N}\vert}\sum_{\lambda:i\in\lambda_{\gamma}}\partial_{\gamma_{i}}\bar{m}_{\lambda}=O_{p}(N^{-1})
	\end{align*}
	where the $O_{p}(N^{-1})$ statements come from the fact that $p_{\alpha}\frac{(N-1)!}{(N-p)!}/\frac{N!}{(N-p)!}=p_{\alpha}/N$.
	An identical result applies to the diagonal elements of $\partial_{\phi\phi'}\bar{\Delta}$,
	i.e. $\partial_{\alpha_{i}\alpha_{i}}\bar{\Delta}=O_{p}(N^{-1})$
	and $\partial_{\gamma_{j}\gamma_{j}}\bar{\Delta}=O_{p}(N^{-1})$
	since they are sums over the same sets of $\lambda$. Also, if the
	presence of $i$ as a sender agent implies that $i$ is also a receiver
	in $\lambda$ (e.g. the cyclic triangle $\{(i,j),(j,k),(k,i)\}$)
	then it will be the case that $\partial_{\alpha_{i}\gamma_{i}}\bar{\Delta}=O_{p}(N^{-1})$
	also (if this is not true it will be lower order).
	
	Next, consider the off-diagonal components of $\partial_{\phi\phi'}\bar{\Delta}$.
	If $p_{\alpha}=1$ then $\partial_{\alpha_{i}\alpha_{j}}\bar{\Delta}=0$,
	otherwise, there are ${p_{\alpha} \choose 2}\frac{(N-2)!}{(N-p)!}$
	permutations that contain both $i$ and $j$ as senders. Similarly,
	for $p_{\gamma}\geq2$, there are ${p_{\gamma} \choose 2}\frac{(N-2)!}{(N-p)!}$
	permutations that contain both $i$ and $j$ as receivers. Finally,
	there are \emph{at most} $p_{\alpha}p_{\gamma}\frac{(N-2)!}{(N-p)!}$
	permutations in which $i$ is a sender and $j$ a receiver (this is
	an upper bound since with $i$ in a particular sender position, not
	all receiver positions may be valid for $j$). This, along with Assumption
	3, gives the results
	\begin{align*}
		\partial_{\alpha_{i}\alpha_{j}}\bar{\Delta} & =O_{p}(N^{-2})\\
		\partial_{\alpha_{i}\gamma_{j}}\bar{\Delta} & =O_{p}(N^{-2})\\
		\partial_{\gamma_{i}\gamma_{j}}\bar{\Delta} & =O_{p}(N^{-2})
	\end{align*}
	which demonstrates the lemma.
\end{proof}
\begin{lem}
	\label{lem:delta_bounds}Let Assumptions 1, 2 and 3 hold, for $s=\{0,1,2,3\}$
	\begin{align*}
		N\partial_{\beta^{s}\alpha_{i}}\tilde{\Delta} & =O_{p}(N^{-1/2})\\
		\max_{i}\vert N\partial_{\beta^{s}\alpha_{i}}\tilde{\Delta}\vert & =O_{p}(N^{-1/2+2\epsilon})
	\end{align*}
	and hence, 
	\begin{align*}
		\lVert\partial_{\beta^{s}\phi}\Delta\rVert_{q} & =O_{p}(N^{-1+2\epsilon})\\
		\lVert\partial_{\beta^{s}\phi}\tilde{\Delta}\rVert_{q} & =O_{p}(N^{-\frac{3}{2}+2\epsilon})\\
		\lVert\partial_{\beta^{s}\phi\phi'}\tilde{\Delta}\rVert_{q} & =O_{p}(N^{-\frac{3}{2}+4\epsilon})\\
		\lVert\partial_{\beta^{s}\phi\phi'}\Delta\rVert_{q} & =O_{p}(N^{-1+2\epsilon})\\
		\lVert\partial_{\beta\phi\phi\phi}\tilde{\Delta}\rVert_{q} & =O_{p}(N^{-\frac{3}{2}+4\epsilon})\\
		\lVert\partial_{\beta\phi\phi\phi}\Delta\rVert_{q} & =O_{p}(N^{-1+2\epsilon})\\
		\lVert\partial_{\phi\phi\phi\phi}\tilde{\Delta}\rVert_{q} & =O_{p}(N^{-\frac{3}{2}+4\epsilon})\\
		\lVert\partial_{\phi\phi\phi\phi}\Delta\rVert_{q} & =O_{p}(N^{-1+2\epsilon}) \\
			\lVert\partial_{\phi\phi\phi\phi\phi}\Delta\rVert_{q} & =O_{p}(N^{-1+2\epsilon})
	\end{align*}
\end{lem}
\begin{proof}
	For the first claim, note that
	\[
	\bar{E}\big[(N\partial_{\beta^{s}\alpha_{i}}\tilde{\Delta})^{q}\big]=\frac{N^{q}}{\vert\Lambda_{N}\vert^{q}}\sum_{\lambda_{1}:i\in\lambda_{\alpha}}\cdots\sum_{\lambda_{q}:i\in\lambda_{\alpha}}\bar{E}[(\partial_{\beta^{s}\alpha_{i}}\tilde{m}_{\lambda_{1}})\cdots(\partial_{\beta^{s}\alpha_{i}}\tilde{m}_{\lambda_{q}})]
	\]
	Since terms are independent unless they share a common dyad, each
	of the $\lambda_{j}$ must share a dyad with at least one other $\lambda_{k}$.
	The number of $\lambda$ that contain a particular dyad is of order
	$N^{-2}$ smaller than $\vert\Lambda_{N}\vert$ and hence the above
	term must be at most $O_{p}(N^{-q/2})$, and hence $N\partial_{\beta^{s}\alpha_{i}}\tilde{\Delta}=O_{p}(N^{-1/2})$.
	This also implies
	\begin{align*}
		\bar{E}\big[\max_{i}\vert N\partial_{\beta^{s}\alpha_{i}}\tilde{\Delta}\vert^{q}\big] & \leq\sum_{i}\bar{E}\big[\vert N\partial_{\beta^{s}\alpha_{i}}\tilde{\Delta}\vert^{q}\big]\\
		& =O_{p}(N^{1-q/2})\\
		\max_{i}\vert N\partial_{\beta^{s}\alpha_{i}}\tilde{\Delta}\vert & =O_{p}(N^{-\frac{1}{2}+\frac{1}{q}})=O_{p}(N^{-\frac{1}{2}+2\epsilon})
	\end{align*}
	
	The next results follow from the first two by
	\begin{align*}
		\bar{E}\big[\lVert\partial_{\beta^{s}\phi}\Delta\rVert_{q}^{q}\big] & \leq\bar{E}\big[\sum_{i}\vert\frac{1}{\vert\Lambda_{N}\vert}\sum_{\lambda\in\lambda_{\alpha}}m_{\beta^{s}\alpha_{i}}(\lambda)\vert^{q}\big]+\bar{E}\big[\sum_{i}\vert\frac{1}{\vert\Lambda_{N}\vert}\sum_{\lambda\in\lambda_{\gamma}}m_{\beta^{s}\gamma_{i}}(\lambda)\vert^{q}\big]\\
		& \leq\sum_{i}\frac{\vert\Lambda_{\alpha}\vert^{q-1}}{\vert\Lambda_{N}\vert^{q}}\sum_{\lambda\in\lambda_{\alpha}}\bar{E}\big[\vert m_{\beta^{s}\alpha_{i}}(\lambda)\vert^{q}\big]+\sum_{i}\frac{\vert\Lambda_{\alpha}\vert^{q-1}}{\vert\Lambda_{N}\vert}\sum_{\lambda\in\lambda_{\gamma}}\bar{E}\big[\vert m_{\beta^{s}\gamma_{i}}(\lambda)\vert^{q}\big]\\
		& =O_{p}(N^{1-q})\\
		\lVert\partial_{\beta^{s}\phi}\Delta\rVert_{q} & =O_{p}(N^{-1+\frac{1}{q}})
	\end{align*}
	and
	\begin{align*}
		\bar{E}\big[\lVert\partial_{\beta^{s}\phi}\tilde{\Delta}\rVert_{q}^{q}\big] & \leq\sum_{i}\bar{E}\big[\vert\partial_{\beta^{s}\alpha_{i}}\tilde{\Delta}\vert^{q}\big]+\sum_{i}\bar{E}\big[\vert\partial_{\beta^{s}\gamma_{i}}\tilde{\Delta}\vert^{q}\big]\\
		& =O_{p}(N^{1-\frac{3q}{2}})\\
		\lVert\partial_{\beta^{s}\phi}\tilde{\Delta}\rVert_{q} & =O_{p}(N^{-\frac{3}{2}+\frac{1}{q}})
	\end{align*}
	
	For the second result, we focus on the block $\partial_{\beta^{s}\alpha\alpha'}\tilde{\Delta}$,
	with other blocks shown similarly. First, we have
	\[
	\bar{E}\big[(N\partial_{\beta^{s}\alpha_{i}\alpha_{j}}\tilde{\Delta})^{q}\big]=\frac{N^{q}}{\vert\Lambda_{N}\vert^{q}}\sum_{\lambda_{1}:\{i,j\}\in\lambda_{\alpha}}\cdots\sum_{\lambda_{q}:\{i,j\}\in\lambda_{\alpha}}\bar{E}[(\partial_{\beta^{s}\alpha_{i}\alpha_{j}}\tilde{m}_{\lambda_{1}})\cdots(\partial_{\beta^{s}\alpha_{i}\alpha_{j}}\tilde{m}_{\lambda_{q}})]
	\]
	Again, terms are independent unless they share a common dyad. Given
	this, and the fact that there are $O(N^{-2}\vert\Lambda_{N}\vert)$
	$\lambda$ that contain both $\alpha_{i}$ and $\alpha_{j}$ (and
	of order $N$ fewer that share a common dyad), the above summation
	is $O_{p}(N^{-\frac{3}{2}q})$. Using this result, we then have
	\begin{align*}
		\bar{E}\big[(\max_{i,j\ne i}N\partial_{\beta^{s}\alpha_{i}\alpha_{j}}\tilde{\Delta})^{q}\big] & \leq\sum_{i}\sum_{j\ne i}\bar{E}\big[(N\partial_{\beta^{s}\alpha_{i}\alpha_{j}}\tilde{\Delta})^{q}\big]\\
		& =O_{p}(N^{2-\frac{3}{2}q})
	\end{align*}
	and so $\max_{i,j\ne i}(N\partial_{\beta^{s}\alpha_{i}\alpha_{j}}\tilde{\Delta})=O_{p}(N^{-\frac{3}{2}+2/q})$.
	Similarly to the second statement of this lemma, we can show $\max_{i}(N\partial_{\beta^{s}\alpha_{i}\alpha_{i}}\tilde{\Delta})=O_{p}(N^{-\frac{1}{2}+1/q})$.
	These bounds then imply (using Lemma S.4 in FW16)
	\begin{align*}
		\lVert\partial_{\beta^{s}\alpha\alpha'}\tilde{\Delta}\rVert_{q} & \leq\lVert\partial_{\beta^{s}\alpha\alpha'}\tilde{\Delta}\rVert_{\infty}\\
		& =\max_{i}\vert\sum_{j}\partial_{\beta^{s}\alpha_{i}\alpha_{j}}\tilde{\Delta}\vert\\
		& \leq\max_{i}\vert\partial_{\beta^{s}\alpha_{i}\alpha_{i}}\tilde{\Delta}\vert+N\max_{i\ne j}\vert\partial_{\beta^{s}\alpha_{i}\alpha_{j}}\tilde{\Delta}\vert\\
		& =O_{p}(N^{-\frac{3}{2}+\frac{2}{q}})
	\end{align*}
	and similarly for the remaining blocks of $\partial_{\beta^{s}\phi\phi'}\tilde{\Delta}$.
	Similarly, is is straightforward to show that $\bar{E}\big[(\partial_{\alpha_{i}\alpha_{i}}\Delta)^{q}\big]=O_{p}(N^{-q})$
	and $\bar{E}\big[(\sum_{j\ne i}\vert\partial_{\alpha_{i}\alpha_{j}}\Delta\vert)^{q}\big]=O_{p}(N^{-q})$
	for $i\ne j$, and hence
	\begin{align*}
		\max_{i}\vert\partial_{\alpha_{i}\alpha_{i}}\Delta\vert & =O_{p}(N^{-1+\frac{1}{q}})\\
		\max_{i}\sum_{j\ne i}\vert\partial_{\alpha_{i}\alpha_{j}}\Delta\vert & =O_{p}(N^{-1+\frac{1}{q}})
	\end{align*}
	from which the final result follows, since $\lVert\partial_{\alpha\alpha}\Delta\rVert_{q}\leq\lVert\partial_{\alpha\alpha}\Delta\rVert_{\infty}$,
	and the same can be shown for the remaining components of $\partial_{\phi\phi'}\Delta$.
	Finally, the result for $\partial_{\beta\phi\phi\phi}\mathcal{L}$
	and $\partial_{\phi\phi\phi\phi}\mathcal{L}$, we have
	\begin{align*}
		\lVert\partial_{\beta\phi\phi\phi}\Delta\rVert_{q} & =\lVert\sum_{g}(\partial_{\phi\phi\phi_{g}}\Delta)\rVert_{q}\\
		& \leq\lVert\sum_{g}(\partial_{\phi\phi\phi_{g}}\Delta)\rVert_{\infty}\\
		& =\max_{e}\vert\sum_{f}\sum_{g}(\partial_{\phi_{e}\phi_{f}\phi_{g}}\Delta)\vert\\
		& \leq\max_{e}\vert(\partial_{\phi_{e}\phi_{e}\phi_{e}}\Delta)\vert+\max_{e}\vert\sum_{f\ne e}(\partial_{\phi_{e}\phi_{f}\phi_{f}}\Delta)\vert\\
		& +\max_{e}\vert\sum_{f\ne e}(\partial_{\phi_{e}\phi_{e}\phi_{f}}\Delta)\vert+\max_{e}\vert\sum_{f\ne g\ne e}(\partial_{\phi_{e}\phi_{f}\phi_{g}}\Delta)\vert\\
		& =O_{p}(N^{-1+1/q})
	\end{align*}
	since we have that derivatives with respect to $j$ distinct indices
	in $\phi$ result in a sum that is $O_{p}(N^{-j})$. The bound for
	$\partial_{\phi\phi\phi\phi}\mathcal{L}$ can be shown similarly. The results for $\partial_{\beta\phi\phi\phi}\Delta$ and $\partial_{\phi\phi\phi\phi}\Delta$ follow identically to $\partial_{\beta\phi\phi}\Delta$.
\end{proof}

\subsection{Asymptotic expansion}
An expansion of the average effects estimator is

\begin{align*}
	\widehat{\Delta}-\Delta & =(\partial_{\beta}\Delta)(\widehat{\beta}-\beta_{0})\\
	& +(\partial_{\phi'}\Delta)(\widehat{\phi}-\phi_{0})\\
	& \quad+\frac{1}{2}(\partial_{\beta\beta}\Delta)(\widehat{\beta}-\beta_{0})^{2}\\
	& +(\partial_{\beta\phi'}\Delta)(\widehat{\phi}-\phi_{0})(\widehat{\beta}-\beta_{0})\\
	& \quad+\frac{1}{2}(\widehat{\phi}-\phi_{0})'(\partial_{\phi\phi'}\Delta)(\widehat{\phi}-\phi_{0})\\
	& \quad+\frac{1}{2}(\widehat{\phi}-\phi_{0})'(\partial_{\beta\phi\phi'}\Delta)(\widehat{\phi}-\phi_{0})(\widehat{\beta}-\beta_{0})\\
	& \quad+\frac{1}{6}\sum_{g}(\widehat{\phi}-\phi_{0})'(\partial_{\phi\phi'\phi_{g}}\Delta)(\widehat{\phi}-\phi_{0})(\widehat{\phi}_{g}-\phi_{0,g})\\
	& \quad+\frac{1}{24}\sum_{g,h}(\widehat{\phi}-\phi_{0})'\big(\partial_{\phi\phi'\phi_{g}\phi_{h}}\Delta\big)(\widehat{\phi}-\phi_{0})(\widehat{\phi}_{g}-\phi_{0,g})(\widehat{\phi}_{h}-\phi_{0,h})\\
	& \quad+\tilde{R}_{\Delta}
\end{align*}

where the remainder term satisfies
\begin{align*}\tilde{R}_{\Delta} & =\frac{1}{6}\big(\partial_{\beta\beta\beta}\Delta(\bar{\beta},\bar{\phi})\big)(\widehat{\beta}-\beta_{0})^{3}\\
	& +\frac{1}{2}(\partial_{\beta\beta\phi'}\Delta)(\widehat{\phi}-\phi_{0})(\widehat{\beta}-\beta_{0})^{2}\\
	& \quad+\frac{1}{6}\big(\partial_{\beta\beta\beta\phi'}\Delta(\bar{\beta},\bar{\phi})\big)(\widehat{\phi}-\phi_{0})(\widehat{\beta}-\beta_{0})^{3}\\
	& \quad+\frac{1}{4}(\widehat{\phi}-\phi_{0})'\big(\partial_{\beta\beta\phi\phi'}\Delta(\bar{\beta},\bar{\phi})\big)(\widehat{\phi}-\phi_{0})(\widehat{\beta}-\beta_{0})^{2}\\
	& \quad+\frac{1}{6}\sum_{g}(\widehat{\phi}-\phi_{0})'\big(\partial_{\beta\phi\phi'\phi_{g}}\Delta(\bar{\beta},\bar{\phi})\big)(\widehat{\phi}-\phi_{0})(\widehat{\phi}_{g}-\phi_{0,g})(\widehat{\beta}-\beta_{0})\\
	& \quad+\frac{1}{120}\sum_{f,g,h}(\widehat{\phi}-\phi_{0})'\big(\partial_{\phi\phi'\phi_{f}\phi_{g}\phi_{h}}\Delta(\bar{\beta},\bar{\phi})\big)(\widehat{\phi}-\phi_{0})(\widehat{\phi}_{f}-\phi_{0,f})(\widehat{\phi}_{g}-\phi_{0,g})(\widehat{\phi}_{h}-\phi_{0,h})\\
	& =o_{p}(N^{-2})
\end{align*}

Substitution of the expansion for $\widehat{\beta}$ up to order $N^{-2}$
and the expansion for $\widehat{\phi}$ shown above then gives an expansion with remainder $o_p(N^{-2})$, since for example
\begin{align*}
		\lVert(\partial_{\phi'}\Delta) \tilde{R}_\phi \rVert_{q} & \leq N^{1-2/q}\lVert\partial_{\phi'}\Delta\rVert_{q}\lVert\tilde{R}_\phi\rVert_{q}\\
		& =O_p(N^{-1/q})\rVert_{q}\lVert\tilde{R}_\phi\rVert_{q}=O_p(N^{-5/2+16\epsilon})=o_{p}(N^{-2})	
\end{align*}

It follows that the expansion up to first-order is 

\begin{align*}
N(\widehat{\Delta}-\Delta) & =(\partial_{\beta}\bar{\Delta})N(\widehat{\beta}-\beta_{0})+N(\partial_{\phi'}\Delta)(\partial_{bs}\mathcal{L}^{*})\mathcal{S}_{\beta}\\
 & \quad+N(\partial_{\phi'}\Delta)(\partial_{ss'}\mathcal{L}^{*})\mathcal{S}-\frac{1}{2}N(\partial_{\phi'}\Delta)\sum_{g}(\partial_{ss's_{g}}\mathcal{L}^{*})\mathcal{S}\mathcal{S}_{g}\\
 & \quad+\frac{1}{2}N\mathcal{S}'(\partial_{ss'}\mathcal{L}^{*})(\partial_{\phi\phi'}\Delta)(\partial_{ss'}\mathcal{L}^{*})\mathcal{S}+o_{p}(1)\\
\\
 & =(\partial_{\beta}\bar{\Delta})\bar{W}_{N}^{-1}\big(U^{(0)}+U^{(1)}\big)\\
 & \quad+\bar{W}_{N}^{-1}(\partial_{\beta\phi'}\bar{\mathcal{L}})\bar{\mathcal{H}}^{-1}(\partial_{\phi}\bar{\Delta})\mathcal{S}_{\beta}\\
 & \quad+N(\partial_{\phi}\bar{\Delta})\mathcal{\bar{H}}^{-1}\mathcal{S}+N(\partial_{\phi}\tilde{\Delta})\mathcal{\bar{H}}^{-1}\mathcal{S}+N(\partial_{\phi}\bar{\Delta})\mathcal{\bar{H}}^{-1}\tilde{\mathcal{H}}\mathcal{\bar{H}}^{-1}\mathcal{S}\\
 & \quad+W_{N}^{-1}(\partial_{\phi}\bar{\Delta})\mathcal{\bar{H}}^{-1}(\partial_{\beta\phi}\bar{\mathcal{L}})\Big((\partial_{\beta\phi'}\bar{\mathcal{L}})\mathcal{\bar{H}}^{-1}\mathcal{S}+(\partial_{\beta\phi'}\tilde{\mathcal{L}})\mathcal{\bar{H}}^{-1}\mathcal{S}+(\partial_{\beta\phi'}\bar{\mathcal{L}})\mathcal{\bar{H}}^{-1}\tilde{\mathcal{H}}\mathcal{\bar{H}}^{-1}\mathcal{S}\Big)\\
 & \quad+\frac{1}{2}N\mathcal{S}'\bar{\mathcal{H}}^{-1}\sum_{f}(\partial_{\phi\phi'\phi_{f}}\bar{\mathcal{L}})[\mathcal{\bar{H}}^{-1}(\partial_{\phi'}\bar{\Delta})]_{f}\mathcal{\bar{H}}^{-1}\mathcal{S}\\
 & \quad+\frac{1}{2}\bar{W}^{-1}(\partial_{\phi'}\bar{\Delta})\mathcal{\bar{H}}^{-1}(\partial_{\beta\phi}\mathcal{\bar{L}})\mathcal{S}'\bar{\mathcal{H}}^{-1}\sum_{f}(\partial_{\phi\phi'\phi_{f}}\bar{\mathcal{L}})[\mathcal{\bar{H}}^{-1}(\partial_{\phi'}\bar{\Delta})]_{f}\mathcal{\bar{H}}^{-1}\mathcal{S}\\
 & \quad+\frac{1}{2}N\mathcal{S}'\bar{\mathcal{H}}^{-1}(\partial_{\phi\phi'}\bar{\Delta})\bar{\mathcal{H}}^{-1}\mathcal{S}+o_{p}(1)\\
\\
 & =\Big((\partial_{\beta}\bar{\Delta})+(\partial_{\beta\phi'}\bar{\mathcal{L}})\bar{\mathcal{H}}^{-1}(\partial_{\phi}\bar{\Delta})\Big)\bar{W}_{N}^{-1}\big(U^{(0)}+U^{(1)}\big)\\
 & \quad+N(\partial_{\phi}\bar{\Delta})\mathcal{\bar{H}}^{-1}\mathcal{S}+N(\partial_{\phi}\tilde{\Delta})\mathcal{\bar{H}}^{-1}\mathcal{S}-N(\partial_{\phi}\bar{\Delta})\mathcal{\bar{H}}^{-1}\tilde{\mathcal{H}}\mathcal{\bar{H}}^{-1}\mathcal{S}\\
 & \quad+\frac{1}{2}N\mathcal{S}'\bar{\mathcal{H}}^{-1}\Big((\partial_{\phi\phi'}\bar{\Delta})+\sum_{f}(\partial_{\phi\phi'\phi_{f}}\bar{\mathcal{L}})[\mathcal{\bar{H}}^{-1}(\partial_{\phi'}\bar{\Delta})]_{f}\Big)\bar{\mathcal{H}}^{-1}\mathcal{S}+o_{p}(1)
\end{align*}

The expansion for the leave-out estimator follows similarly, replacing $m_\lambda$ with $\frac{N-1}{N-r-1} m_\lambda 1^k_\lambda$. Note that the same bounds on $(N\partial_{\beta^{s}\alpha_{i}}\tilde{\Delta})$ derivied in Lemma \ref{lem:delta_bounds} also apply to $(N\partial_{\beta^{s}\alpha_{i}}\tilde{\Delta}_{(k)})$, so that we may replace the $\partial_{\beta}\Delta_{(k)}$ and $\partial_{\alpha}\Delta_{(k)}$ terms with $\partial_{\beta}\bar{\Delta}$ and $\partial_{\alpha}\bar{\Delta}$ in the first-order expansion. This gives the expansion for $N(\widehat{\Delta}_{(k)} - \Delta_{(k)})$ as shown in Lemma 7.

\subsection{Jackknifing higher-order terms for average effect}

Analogously to the jackknife of the common parameter $\widehat{\beta},$
we demonstrate the effect of the jackknife for just one higher-order
term in the expansion for $\widehat{\Delta}$. Inspection of this
expansion shows that the structure of each of the terms is similar,
i.e. a V-statistic like term of some order, multiplied by $(\partial_{\phi'}\Delta)$
or $(\partial_{\phi\phi'}\Delta)$. The bound for the jackknife
version of each of these many terms is therefore highly similar to
the specific term we demonstrate here. Consider the expansion term
$(\partial_{\phi'}\Delta)(\partial_{bss'}\mathcal{L}_{(1)}^{*})\mathcal{S}\mathcal{S}_{\beta}$,
which contains the term
\[
(\partial_{\phi'}\Delta)\mathcal{H}^{-1}\mathcal{H}_{b}\mathcal{H}^{-1}\mathcal{S}\mathcal{S}_{\beta}
\]
Expanding out this terms gives
\begin{align*}
(\partial_{\phi'}\Delta)\mathcal{H}^{-1}\mathcal{H}_{b}\mathcal{H}^{-1}\mathcal{S}\mathcal{S}_{\beta} & =W^{-1}(\partial_{\phi'}\Delta)\mathcal{H}^{-1}(\partial_{\beta\phi\phi'}\mathcal{L})\mathcal{H}^{-1}\mathcal{S}\mathcal{S}_{\beta}\\
 & +W^{-1}(\partial_{\phi'}\Delta)\mathcal{H}^{-1}\sum_{f}(\partial_{\phi\phi'\phi_{f}}\mathcal{L})\big[\mathcal{H}^{-1}(\partial_{\beta\phi}\mathcal{L})\big]_{f}\mathcal{H}^{-1}\mathcal{S}\mathcal{S}_{\beta}
\end{align*}
Take the first term in this expression. Replacing $W$ with $\bar{W}+(W-\bar{W})$,
and similarly for $\mathcal{H}^{-1}$ and $\partial_{\beta\phi\phi'}\mathcal{L}$
we get
\[
\frac{1}{N}\bar{W}_{N}^{-1}(\partial_{\phi'}\Delta)\mathcal{\bar{H}}^{-1}(\partial_{\beta\phi\phi'}\bar{\mathcal{L}})\bar{\mathcal{H}}^{-1}\mathcal{S}\mathcal{S}_{\beta}+o_{p}(N^{-2})
\]
This result holds identically for the leave-out samples, replacing
$\mathcal{S}$ with $\mathcal{S}_{(k)}$ and similarly for $\partial_{\phi'}\Delta$
and $\mathcal{S}_{\beta}$. Ignoring the $\frac{1}{N}\bar{W}_{N}^{-1}$
term for the moment, which will not be affected by the jackknifing,
and letting $\bar{M}=\mathcal{\bar{H}}^{-1}(\partial_{\beta\phi\phi'}\bar{\mathcal{L}})\bar{\mathcal{H}}^{-1}$,
we can decompose the above sum further into the components of $\phi=(\alpha,\gamma)$,
for example the first of these terms would be
\begin{align*}
(\partial_{\alpha'}\Delta)\bar{M}_{\alpha\alpha}\mathcal{S}_{\alpha}\mathcal{S}_{\beta} & =\frac{1}{\lvert\Lambda_{N}\rvert}\frac{1}{(N-1)^{2}}\sum_{i}\sum_{\lambda:i\in\lambda_{\alpha}}\sum_{s}\sum_{t\ne s}\sum_{j}\sum_{l\ne j}M_{is}(\partial_{\alpha_{i}}m_{\lambda})(\partial_{\pi}\ell_{st})(\partial_{\beta}\ell_{jl})
\end{align*}
and note that, by Lemma \ref{lem:HAH_bound}, we have $\max_{i\ne s}\bar{M}_{is}=O_{p}(N^{-1})$,
and $\max_{i}\bar{M}_{ii}=O_{p}(1)$. This is again a V-statistic
like term, and we will show that the jackknifed version of this term
is $o_{p}(1)$. The $k$-th leave-out version of this term is
\begin{align*}
\frac{N-1}{N-r-1}&(\partial_{\alpha'}\Delta_{N,(k)})\bar{M}_{\alpha\alpha}\mathcal{S}_{\alpha,(k)}\mathcal{S}_{\beta,(k)}  \\
& =\frac{1}{\lvert\Lambda_{N}\rvert}\frac{N-1}{(N-2)^{2}(N-r-1)}\sum_{i}\sum_{\lambda:i\in\lambda_{\alpha}}\sum_{s}\sum_{t\ne s}\sum_{j}\sum_{l\ne j}M_{is}(\partial_{\alpha_{i}}m_{\lambda})(\partial_{\pi}\ell_{st})(\partial_{\beta}\ell_{jl})1_{\lambda}^{k}1_{st}^{k}1_{jl}^{k}
\end{align*}

The jackknifed term is then
\begin{align*}
\mathcal{J}= & (N-1)(\partial_{\alpha'}\Delta)\bar{M}_{\alpha\alpha}\mathcal{S}_{\alpha}\mathcal{S}_{\beta}-\frac{N-2}{N-1}\sum_{k}(\partial_{\alpha'}\Delta_{N,(k)})\bar{M}_{\alpha\alpha}\mathcal{S}_{\alpha,(k)}\mathcal{S}_{\beta,(k)}\\
= & \frac{1}{\lvert\Lambda_{N}\rvert}\frac{1}{N-1}\sum_{i}\sum_{\lambda:i\in\lambda_{\alpha}}\sum_{s}\sum_{t\ne s}\sum_{j}\sum_{l\ne j}M_{is}(\partial_{\alpha_{i}}m_{\lambda})(\partial_{\pi}\ell_{st})(\partial_{\beta}\ell_{jl})\\
 & -\frac{1}{\lvert\Lambda_{N}\rvert}\frac{1}{(N-2)(N-r-1)}\sum_{i}\sum_{\lambda:i\in\lambda_{\alpha}}\sum_{s}\sum_{t\ne s}\sum_{j}\sum_{l\ne j}M_{is}(\partial_{\alpha_{i}}m_{\lambda})(\partial_{\pi}\ell_{st})(\partial_{\beta}\ell_{jl})\big(\sum_{k}1_{\lambda}^{k}1_{st}^{k}1_{jl}^{k}\big)\\
= & \frac{1}{\lvert\Lambda_{N}\rvert}\sum_{i}\sum_{\lambda:i\in\lambda_{\alpha}}\sum_{s}\sum_{t\ne s}\sum_{j}\sum_{l\ne j}M_{is}(\partial_{\alpha_{i}}m_{\lambda})(\partial_{\pi}\ell_{st})(\partial_{\beta}\ell_{jl})\Big(\frac{1}{N-1}-\frac{\sum_{k}1_{\lambda}^{k}1_{st}^{k}1_{jl}^{k}}{(N-2)(N-r-1)}\Big)
\end{align*}
Note that, since $\lambda$ contains $r$ observations, the set $\{\lambda,(s,t),(j,l)\}$
spans at most $r+2$ different sets $\mathcal{I}_{k}$ and so $\sum_{k}1_{\lambda}^{k}1_{st}^{k}1_{jl}^{k}\in\{N-r-3,\dots,N-2\}$,
depending on how many of the $\mathcal{I}_{k}$ the observations are
contained in. Then, letting $1_{n}$ be an indicator variable that
is equal to one when the set $\{\lambda,(s,t),(j,l)\}$ covers $n$
different $\mathcal{I}_{k},$ we can write
\begin{align*}
\vert\mathcal{J}\vert & \leq\sum_{n=1}^{r+2}\vert\frac{1}{\lvert\Lambda_{N}\rvert}\sum_{i}\sum_{\lambda:i\in\lambda_{\alpha}}\sum_{s}\sum_{t\ne s}\sum_{j}\sum_{l\ne j}M_{is}(\partial_{\alpha_{i}}m_{\lambda})(\partial_{\pi}\ell_{st})(\partial_{\beta}\ell_{jl})\Big(\frac{1}{N-1}-\frac{N-n-1}{(N-2)(N-r-1)}\Big)1_{n}\vert\\
 & \leq C\frac{1}{N^{2}}\sum_{n=1}^{r+2}\vert\frac{1}{\lvert\Lambda_{N}\rvert}\sum_{i}\sum_{\lambda:i\in\lambda_{\alpha}}\sum_{s}\sum_{t\ne s}\sum_{j}\sum_{l\ne j}M_{is}(\partial_{\alpha_{i}}m_{\lambda})(\partial_{\pi}\ell_{st})(\partial_{\beta}\ell_{jl})1_{n}\vert\\
 & \leq C\frac{1}{N^{2}}\sum_{n=1}^{r+2}\vert\frac{1}{\lvert\Lambda_{N}\rvert}\sum_{i}\sum_{\lambda:i\in\lambda_{\alpha}}\sum_{s}\sum_{t\ne s}\sum_{j}\sum_{l\ne j}M_{is}(\partial_{\alpha_{i}}m_{\lambda})(\partial_{\pi}\ell_{st})(\partial_{\beta}\ell_{jl})1_{n}\vert
\end{align*}
The second moment of the RHS is
\begin{align*}
&\bar{E}\Big[ \Big(\frac{1}{\lvert\Lambda_{N}\rvert}\sum_{i}\sum_{\lambda:i\in\lambda_{\alpha}}\sum_{s}\sum_{t\ne s}\sum_{j}\sum_{l\ne j}M_{is}(\partial_{\alpha_{i}}m_{\lambda})(\partial_{\pi}\ell_{st})(\partial_{\beta}\ell_{jl})1_{n}\Big)^{2}\Big]\\
&=  \frac{1}{\lvert\Lambda_{N}\rvert^{2}}\sum_{i,i',s,s',j,j'}\sum_{\substack{\lambda:i\in\lambda_{\alpha}\\
\lambda':i'\in\lambda_{\alpha}}} \sum_{t\ne s}\sum_{t'\ne s'}\sum_{l\ne j}\sum_{l'\ne j'}M_{is}M_{i's'}\bar{E}\Big[(\partial_{\alpha_{i}}m_{\lambda})(\partial_{\alpha_{i'}}m_{\lambda'})(\partial_{\pi}\ell_{st})(\partial_{\pi}\ell_{s't'})(\partial_{\beta}\ell_{jl})(\partial_{\beta}\ell_{j'l'})1_{n}1_{n'}\Big]\\
&=  \frac{1}{\lvert\Lambda_{N}\rvert^{2}}\sum_{i,i',j,j'}\sum_{\substack{\lambda:i\in\lambda_{\alpha}\\
\lambda':i'\in\lambda_{\alpha}}} \sum_{t\ne i}\sum_{t'\ne i'}\sum_{l\ne j}\sum_{l'\ne j'}M_{ii}M_{i'i'}\bar{E}\Big[(\partial_{\alpha_{i}}m_{\lambda})(\partial_{\alpha_{i'}}m_{\lambda'})(\partial_{\pi}\ell_{it})(\partial_{\pi}\ell_{i't'})(\partial_{\beta}\ell_{jl})(\partial_{\beta}\ell_{j'l'})1_{n}1_{n'}\Big]\\
 & +2\frac{1}{\lvert\Lambda_{N}\rvert^{2}}\sum_{i,i',j,j'}\sum_{\substack{\lambda:i\in\lambda_{\alpha}\\
\lambda':i'\in\lambda_{\alpha}}} \sum_{s\ne i}\sum_{\substack{t\ne s,t'\ne i'\\
l\ne j,l'\ne j'}}M_{is}M_{i'i'}\bar{E}\Big[(\partial_{\alpha_{i}}m_{\lambda})(\partial_{\alpha_{i'}}m_{\lambda'})(\partial_{\pi}\ell_{st})(\partial_{\pi}\ell_{i't'})(\partial_{\beta}\ell_{jl})(\partial_{\beta}\ell_{j'l'})1_{n}1_{n'}\Big]\\
 & +\frac{1}{\lvert\Lambda_{N}\rvert^{2}}\sum_{i,i',j,j'}\sum_{\substack{\lambda:i\in\lambda_{\alpha}\\
\lambda':i'\in\lambda_{\alpha}}} \sum_{s\ne i,s'\ne i'}\sum_{\substack{t\ne s,t'\ne s'\\
l\ne j,l'\ne j'}} M_{is}M_{i's'}\bar{E}\Big[(\partial_{\alpha_{i}}m_{\lambda})(\partial_{\alpha_{i'}}m_{\lambda'})(\partial_{\pi}\ell_{st})(\partial_{\pi}\ell_{s't'})(\partial_{\beta}\ell_{jl})(\partial_{\beta}\ell_{j'l'})1_{n}1_{n'}\Big]
\end{align*}
Since $\bar{E}[\partial_{\pi}\ell_{ij}]=\bar{E}[\partial_{\beta}\ell_{ij}]=0$,
while $\max_{i\ne s}\bar{M}_{is}=O_{p}(N^{-1})$, and $\max_{i}\bar{M}_{ii}=O_{p}(1)$,
we can conclude that the above sum is $O_{p}(N^{2})$, and hence $\vert\mathcal{J}\vert=O_{p}(N^{-1})$.
The same steps can be shown for the remaining elements of this term,
from which we then conclude that the jackknifed version of $(\partial_{\phi'}\Delta)\mathcal{H}^{-1}\mathcal{H}_{b}\mathcal{H}^{-1}\mathcal{S}\mathcal{S}_{\beta}$
is $O_{p}(N^{-2})$, and hence does not show up in the asymptotic
distribution of $\widehat{\Delta}_{J}$. Inspection of the expansion
for $\widehat{\Delta}$ shows that the remaining terms of the
expansion share this similar V-statistic like structure, and so similar
proofs could be applied to each of the terms.

This then gives the result
\[
(N-1)(\widehat{\Delta}-\Delta)-(N-2)\frac{1}{N-1}\sum_{k}(\widehat{\Delta}_{(k)}-\Delta_{(k)})=\frac{1}{N-1}\sum_{i}\sum_{j\ne i}h_{ij}+o_{p}(1).
\]

\end{document}